\documentclass[reqno,12pt] {article}
\usepackage{amsmath, amsfonts, amssymb,amsthm,hyperref,floatrow}
\usepackage{graphicx,color,dsfont,epsfig,caption,subcaption,wrapfig}
\usepackage{xcolor}
\usepackage{tikz}

\setlength{\oddsidemargin}{7mm}
\setlength{\evensidemargin}{7mm}
\setlength{\textwidth}{150mm}
\setlength{\headheight}{0mm}
\setlength{\headsep}{8mm}
\setlength{\topmargin}{0mm}
\setlength{\textheight}{220mm}
\hypersetup{
    linktoc=page,
    linkcolor=red,          
    citecolor=blue,        
    filecolor=blue,      
    urlcolor=cyan,
    colorlinks=true           
}

\newcommand{\R}{\mathbb{R}}

\newcommand{\N}{\mathbb{N}}
\newcommand{\E}{{\mathbb E}}

\renewcommand{\P}{\mathbb{P}}

\newcommand{\TT}{\mathsf{T}}

\newcommand{\wt}{\widetilde}
\RequirePackage{yhmath} 
 \renewcommand\widering[1]{\ring{#1}}
\newtheorem{thm}{Theorem}[section]
\newtheorem{prop}{Proposition}[section]

\newtheorem{cor}{Corollary}[section]
\newtheorem{lem}{Lemma}[section]

\newtheorem{rem}{Remark}[section]
\theoremstyle{definition}

\numberwithin{equation}{section}
 \newcommand{\Dim}{\mathrm{Dim\,}}  
  \newcommand{\dom}{\mathrm{dom\,}}  
 \definecolor{remi}{rgb}{1,0,0}

\begin{document}
\baselineskip 15pt

\title
{Hausdorff and packing spectra, large deviations, and free energy for branching random walks  in $\R^d$}

\author{Najmeddine Attia\footnote{INRIA Paris-Rocquencourt, Domaine de Voluceau, 78153 Le Chesnay cedex, France; najmeddine.attia@inria.fr}, Julien Barral \footnote{LAGA (UMR 7539), D\'epartement de Math\'ematiques, Institut Galil\'ee, Universit\'e
 Paris 13, Paris Sorbonne Cit\'e, 99 avenue Jean-Baptiste Cl\'ement , 93430  Villetaneuse, France; barral@math.univ-paris13.fr}\footnote{The authors were  supported by the French National Research Agency  Project ``DMASC''.
 They thank the referees for their constructive suggestions, in particular the addition of Theorem~1.4 as a consequence of the main results of the preliminary version. They also thank Dr Xiong Jin for his help in the figures elaboration.}}


\maketitle
\begin{abstract}
 Consider an $\R^d$-valued branching random walk (BRW) on a supercritical Galton Watson tree. Without any assumption on the distribution of this BRW we compute, almost surely and simultaneously, the Hausdorff and packing dimensions of the level sets $E(K)$ of infinite branches in the boundary of the tree  (endowed with its standard metric) along which  the averages of the BRW have a given closed connected set of limit points $K$. This goes beyond multifractal analysis, which only considers  those level sets when $K$ ranges in the set of singletons $\{\alpha\}$, $\alpha\in\R^d$. We also give a $0$-$\infty$ law for the Hausdorff and packing measures of the level sets $E(\{\alpha\})$, and compute the free energy of the associated logarithmically correlated random energy model in full generality.  Moreover, our results  complete the previous works on multifractal analysis by including the levels $\alpha$ which do not belong to the range of the gradient of the free energy. This covers in particular a situation until now badly understood, namely the case where a first order phase transition occurs. As a consequence of our study, we can also describe the whole singularity spectrum of Mandelbrot measures, as well as the associated free energy function (or $L^q$-spectrum),  when a first order phase transition occurs. 
  \end{abstract}
\newpage
\tableofcontents

\normalsize
\section{Introduction }

\subsection{Foreword about large deviations, multifractals, and phase transitions in directed polymers on the dyadic tree}\label{foreword}
Let $X_u$, $u\in \TT=\bigcup_{n\ge 1}\{0,1\}^{n}$, be a collection of independent copies of a non constant real valued random variable $X$, indexed by the nodes of the dyadic tree. The  branching random walk $S_nX(u_1\cdots u_n)=\sum_{k=1}^n X_{u_1\cdots u_k}$  is used by Derrida and Spohn \cite{DerSpo} to define the logarithmically correlated random energy model of the directed polymer on $\TT$  which assigns energy $E(u_1\cdots u_n)= S_nX(u_1\cdots u_n)$ to bond $u_1\cdots u_n$.  The associated partition function and free energy  at level $n$ are then defined as 
$$
Z_n(q)= \sum_{u\in\{0,1\}^n} e^{q S_nX(u)} \quad\text{and}\quad n^{-1}\log Z_n(q)\quad (q\in\R)
$$
respectively, and the associated family of Gibbs measures on $\partial \TT=\{0,1\}^{\N_+}$ is given by 
$$
\nu_{q,n}([u])=\frac{e^{q S_nX(u)}}{Z_n(q)}\quad (u\in\{0,1\}^n),
$$
the mass being uniformly distributed on the cylinder $[u]$ with respect to the uniform measure on $\partial\TT$, denoted by $\lambda$. 

The  asymptotic behavior of $n^{-1}\log Z_n(q)$ as $n\to\infty$ is closely related to the large deviations properties of the  random walk on each individual infinite branch, as well as  the multifractal behavior of $S_nX$ on $\partial \TT=\{0,1\}^{\N_+}$, which can be viewed as a geometric realization of  these properties.  This section describes what is well known, or as a motivation for our work what is not completely understood, about these relations, and their connections with possible types of phase transitions in this context. We will also describe the situation regarding the behavior of the partition functions and  Gibbs measures. 

Consider an individual infinite branch $u_1\cdots u_n\cdots$ of $\partial\TT$. When $\E(X)$ is defined, the strong law of large numbers yields $\lim_{n\to\infty}n^{-1}S_nX(u_1\cdots u_n)=\E(X)$. Then,  an application of  Fubini's theorem yields almost surely,  for $\lambda$-almost every infinite branch $u_1\cdots u_n\cdots$ the behavior   $\lim_{n\to\infty} n^{-1}S_nX(u_1\cdots u_n)=\E(X)$. On the  other hand, when the logarithmic generating function $L(q)=\log \E(e^{qX})$ of $X$ is finite  over some non trivial interval around 0, Cramer's theorem (and its extension by Bahadur and Zabell, see \cite{De-Zei}) ensures that for all $\alpha\in\R$, over any fixed infinite branch $u_1\cdots u_n\cdots$ of $\partial \TT$, one has
\begin{eqnarray*}
r(\alpha):=\inf_{q\in\R} L(q)-\alpha q&=&\lim_{\epsilon\to 0}\liminf_{n\to\infty} n^{-1}\log \P( |\alpha-n^{-1}\sum_{k=1}^n X_{u_1\cdots u_k}|\le \epsilon)\\
&=&\lim_{\epsilon\to 0}\limsup_{n\to\infty} n^{-1}\log \P( |\alpha-n^{-1}\sum_{k=1}^n X_{u_1\cdots u_k}|\le \epsilon).
\end{eqnarray*}
 This led B. Mandelbrot to anticipate \cite{BM3}, due to the rate of growth of the trees $\{0,1\}^{n}$, $n\ge 1$, that, almost surely, for all  $\alpha\in\R\setminus \{\E(X)\}$ such that $\log(2)+r(\alpha)\ge 0$, one must have the following ``logarithmic frequency'' for the branches $u=u_1\cdots u_n$ of $\{0,1\}^n$ over which $S_n(u)\approx n \alpha$:  setting 
 $$
 f(n,\alpha,\epsilon)=n^{-1}\log_2 \#\big \{u\in \{0,1\}^n: n^{-1} S_nX(u) \in [\alpha-\epsilon,\alpha+\epsilon])\big \},
 $$ 
\begin{equation}\label{LDB}
\lim_{\epsilon\to 0^+}\liminf_{n\to\infty} f(n,\alpha,\epsilon)=\lim_{\epsilon\to 0^+}\limsup_{n\to\infty} f(n,\alpha,\epsilon)=1+\frac{r(\alpha)}{\log(2)}.
\end{equation} 
Moreover, this large deviation property must have a geometric counterpart stemming from the existence of infinite branches in $\partial \TT$ over which one observes the different law of large numbers $\lim_{n\to\infty} n^{-1}\sum_{k=1}^n X_{u_1\cdots u_n}=\alpha$. Quantifying the size of the sets of such infinite branches, i.e.
$$
E(\alpha)=\Big \{u_1\cdots u_n\cdots\in\partial\TT: \lim_{n\to\infty} n^{-1}S_nX(u_1\cdots u_n)=\alpha\Big \}
$$
leads to the multifractal analysis of the branching random walk $S_nX$: if $\partial \TT $ is endowed with its standard ultrametric distance one can  prove, by using suitable Mandelbrot measures (see the definition below) carried by the sets $E(\alpha)$, that almost surely, for all the $\alpha$ taking the form $L'(q_\alpha)$ for some $q_\alpha\in\R$ and such that $\log(2)+r(\alpha)\ge 0$, 
one has
\begin{equation}\label{formula}
\dim E(\alpha)=1+\frac{r(\alpha)}{\log(2)},
\end{equation}
where $\dim$ stands for the Hausdorff dimension (see Section~\ref{classic} below for more details and references), and as a consequence \eqref{LDB} holds for $\alpha$.  One also has $\dim E(\alpha)=\inf\{P(q)-q\alpha:q\in \R\}/\log(2)$, where $P(q)$ is the free energy 
$$
P(q)=\limsup_{n\to\infty} n^{-1}\log Z_n(q),
$$
hence the so-called multifractal formalism holds at $\alpha$ (see \cite{FrPa,Halsey,Collet,BMP,Ben,Ols,Pey4} for introductions to multifractal formalisms). 

\medskip

What is known about the relations between $P(q)$, $L(q)$, and the convergence of $n^{-1}\log Z_n$ to $P$ can be precized as follows, see \cite{HoWa,CoKo1992,Franchi,Mol,OW} which more generally deal with the free energy associated with Mandelbrot measures (by symmetry, we only discuss the case $q\ge 0$). Either of the following three situations hold (see Figure~\ref{Fig2}):

\medskip

(1) There is no phase transition: $L$ is finite and twice differentiable over $\R_+$ and  $\log (2)+L(q)-qL'(q)> 0$ for all $q\in\R_+$, in which case $P(q)=\lim_{n\to\infty} n^{-1}\log Z_n(q)=\log (2) + L(q)$ over $\R_+$ almost surely.  

\medskip

(2)  There exists $q_c>0$ such that $\log (2)+L(q_c)-q_cL'(q_c)=0$. Then  $P(q)=\lim_{n\to\infty} n^{-1}\log Z_n(q)$ with $P(q)=\log (2) + L(q)$ over $[0,q_c]$ and $P(q)=q L'(q_c)=\frac{\log (2) + L(q_c)}{q_c}q$ over $[q_c,\infty)$ almost surely; consequently $P$ is differentiable, and twice differentiable everywhere except at $q_c$: There is a second order phase transition.  

\medskip

(3)  $q_c=\max\{q>0: L(q)<\infty\}<\infty$ and $\log (2)+L(q_c)-q_cL'(q_c)>0$; then $P(q)=\lim_{n\to\infty} n^{-1}\log Z_n(q)=\log (2) + L(q)$ over $[0,q_c]$ almost surely, and only a partial result is  obtained for $q>q_c$ in \cite{Mol}, where the almost sure convergence of $n^{-1}\log Z_n(q)$ to $\frac{\log (2) + L(q_c)}{q_c}q$ over $(q_c,\infty)$ is shown only along a deterministic subsequence. Moreover, the limit of $n^{-1}\log Z_n$ along some subsequence is twice differentiable everywhere on $\R_+\setminus \{q_c\}$ and not differentiable at $q_c$; this corresponds to a first order phase transition. We will prove that the almost sure simple convergence of $n^{-1}\log Z_n$ to the previous limit holds. 

\medskip

In cases (1) and (2)  the set of those $\alpha$ such that $E(\alpha)\neq\emptyset$ equals the closure of the range of $P'$ and \eqref{LDB} and \eqref{formula} hold for each such $\alpha$. In case of a first order phase transition nothing is known until now about the sets $E(\alpha)$ and the validity of \eqref{LDB} for the points $\alpha$ in $( L'(q_c^-),\frac{\log (2) + L(q_c)}{q_c}]$,  i.e. those $\alpha>  L'(q_c^-)$ for which $1+\frac{r(\alpha)}{\log(2)}\ge 0$ and one can nevertheless conjecture that both  \eqref{LDB} and \eqref{formula} still hold, though $\alpha$ is not of the form $L'(q)$. This will be also one of the achievements of the paper. 

\medskip

Now let us briefly recall that the asymptotic behavior of the partition function $Z_n(q)$ and Gibbs measure $\nu_{q,n}$ are now very well understood in cases (1) and (2) (see  \cite{DerSpo,DL,B2,webb11,aishi11,JW,ma11,BRV,BKNSW}): If no phase transition occurs, then Mandelbrot martingales theory shows that almost surely, all $q\in [0,\infty)$, $Z_n(q)/\E(Z_n(q))$ converges to a positive limit, namely the total mass of the Mandelbrot measure \cite{BM2,BM3} associated with $qS_nX - nL(q)$, i.e.  the weak limit $\mu_q$ of the measure-valued  martingale $\mu_{q,n}$ which assigns uniformly with respect to $\lambda$ the mass 
$$
\mu_{q,n}([u])= 2^{-n}e^{qS_nX(u)-nL(q)}
$$ 
to each cylinder $[u]$ of generation $n$. Also, the Gibbs measures $\nu_{q,n}$ converge weakly to $\mu_q/\|\mu_q\|$.  In presence of a the second order phase transition the so called freezing phenomenon at temperature $q_c^{-1}$ occurs:  after replacing $X$ by $q_c X- \log \E(e^{q_cX})-\log (2)$ to normalize the situation to $q_c=1$ and $\log(2)+L(1)=L'(1)=0$, one has: for $q\in [0,1)$, the same properties  as without phase transition hold;  at $q=1$, under weak assumptions, $\sqrt{n} Z_n(1)$ converges in probability to a multiple of $\|\mu_1\|$, where $\mu_1$ is the continuous critical Mandelbrot measure built from $S_nX$, that is the weak limit of the derivative martingale $  -\left (\frac{\mathrm {d} \mu_{q,n}}{\mathrm{d}q}\right )_{|q=1}$; moreover,  the associated Gibbs measures $\nu_{1,n}$ weakly converge in probability to $\mu_1/\|\mu_1\|$.  For $q>1$, under weak assumptions, $n^{3q/2} Z_n(q)$ converges in law (this is related to the fact that $\sup\{S_nX(u)+\frac{3}{2}\log(n): u\in\{0,1\}^n\}$ converges in law) to a positive random variable. This random variable can be written $L_{1/q}(\|\mu_1\|)$, where $L_{1/q}$ is a stable L\'evy subordinator of index $1/q$ independent of $\mu_1$, and the Gibbs measure $\nu_{q,n}$ weakly converge in law to an atomic measure, namely $\mu_q/\|\mu_q\|$, where $\mu_q$ is an independently scattered  $1/q$-stable random measure with control measure given by $\mu_1$ (the assumption $L(q+\epsilon)<\infty$ is used in \cite{BRV} in which the study of the fixed points of the smoothing transformation achieved in \cite{DL} is used, but it is not necessary anymore due to \cite[Theorem 6.1]{AlMe12}).
\medskip

One can wonder if in the case of a first order phase transition one has similarly a freezing phenomenon. After using the same change of variable $X:=q_c X- \log \E(e^{q_cX})-\log (2)$, for $q\in [0,1]$ it is easily seen that  the situation is similar to that without phase transition: Gibbs measures almost surely weakly converge to normalized continuous Mandelbrot measures. For $q>1$, it is natural to ask whether, under suitable assumptions, one has  properties such as: (1) there exists a (universal?) non negative  sequence $\epsilon_n$ converging to 0 such that $(a)$ $e^{q n \epsilon_n}Z_n(q)$ converges in law to a positive variable  and  $(b)$ $\sup\{S_nX(u)+n \epsilon_n: u\in\{0,1\}^n\}$ converges in law to some random variable (2) the Gibbs measures $\nu_{q,n}$ converge weakly in law (to a limit formally the same as in the second order phase transition case, except that $\mu_1$ would be the non degenerate Mandelbrot measure obtained as the limit of $\mu_{1,n}$?).  We will not further discuss these questions here.

\medskip

More generally, all the previous considerations and properties hold for branching random walks on supercritical Galton-Watson trees and their boundary. It is natural to also consider $\R^d$-valued branching random walks on such a tree (i.e. $d$ real valued branching random walks built simultaneously on the tree). This paper deals with the computation of the free energy in this case, as well as with a more general geometric question than multifractal analysis: the boundary of the tree  is partitioned into the sets of infinite branches along which  the averages of the walk have a given set of limit points in $\R^d$. We will measure, through their Hausdorff and packing dimensions or measures, the sizes of these sets when the boundary of the tree is endowed with the standard ultrametric distance.  Our approach is new and makes it possible to work without any restriction on the branching random walk distribution. This solves questions left open for the multifractal analysis of the branching random walk averages or Mandelbrot measures, as well as for the free energy functions asymptotic behavior (now  phase transitions can occur along submanifolds of dimension $d-1$ when $d\ge 2$).  In particular we can fully describe the cases where first order phase transitions occur. This yields new multifractal behaviors associated with statistically self-similar measures.

\subsection{Setting and statement of the main results}

Let $(N,X_1,X_2,\ldots)$ be a random vector taking values in $\N_+\times (\R^d)^{\N_+}$. Then consider $\{(N_{u0},X_{u1},X_{u2}),\ldots)\}_u$  a family of independent copies of $(N,X_1,X_2,\ldots)$ indexed by the finite sequences $u=u_1\cdots u_n$, $n\ge 0$, $u_i\in\N_+$ ($n=0$ corresponds to the empty sequence denoted $\emptyset$), and let $\TT$ be the Galton-Watson tree with defining elements $\{N_u\}$: we have $\emptyset \in \TT$ and, if $u\in \TT$ and $i\in\N_+$ then $ui$, the concatenation of $u$ and $i$, belongs to $\TT$ if and only if $1\le i\le N_u$. Similarly, for each $u\in \bigcup_{n\ge 0} \N_+^n$, denote by $\TT(u)$ the Galton-Watson tree rooted at $u$ and defined by the $\{N_{uv}\}$, $v\in \bigcup_{n\ge 0} \N_+^n$.

If $\wt \TT$ is another tree contained in $\TT$, for all $n\ge 1$ $\wt \TT_n$ stands for $\wt\TT\cap \N_+^n$. 

\medskip

For each $u\in  \bigcup_{n\ge 0} \N_+^n$, we denote by $|u|$ its length, i.e. the number of letters of $u$, and $[u]$ the cylinder $u\cdot{\N_+}^{\N_+}$, i.e. the set of those $t\in{\N_+}^{\N_+}$ such that $t_1t_2\cdots t_{|u|}=u$. If $t\in  {\N_+}^{\N_+}$, we set $|t|=\infty$, and the set of prefixes of $t$ consists of $\{\emptyset\}\cup\{t_1t_2\cdots t_n: \ n\ge 1\}\cup\{t\}$. Also we set $t_{|n}=t_1t_2\cdots t_n$ if $n\ge 1$ and $t_{|0}=\emptyset$. 

\medskip

The probability space over which the previous random variables are built  is denoted $(\Omega,\mathcal A, \mathbb P)$, and the expectation with respect to $\mathbb P$ is denoted $\mathbb E$.

\medskip

We assume that $\mathbb E(N)>1$ so that the Galton-Watson tree is supercritical. Without loss of generality, we also assume that the probability of extinction equals 0, so that $\mathbb P(N\ge 1)= 1$.

\medskip

The boundary of $\TT$ is the subset of $\N_+^{\N_+}$ defined as 
$$
\partial\TT=\bigcap_{n\ge 1}\bigcup_{u\in \TT_n}[u],
$$
where $\TT_n=\TT\cap \N_+^n$. 

$\N_+^{\N_+}$ is endowed with the standard ultrametric distance 
$$
d_1:(s,t)\mapsto \exp (-|s\land t|),
$$
where $s\land t$ stands for the longest common prefix of $s$ and $t$, and with the convention that $\exp(-\infty)=0$. Endowed with the induced distance the set $\partial \TT$ is almost surely compact.

Recalls about Hausdorff and packing dimensions for sets and measures are given in Section~\ref{HandP}.  The random set $\partial \TT$ has finite Hausdorff (and packing) dimension if and only if $\E(N)<\infty$.

\medskip

We are both interested in the asymptotic behavior of the branching random walk 
$$
S_nX(t)=\sum_{k=1}^nX_{t_1\cdots t_k} \quad (t\in \partial \TT)
$$
and the associated free energy function
\begin{equation}\label{Pn}
P_n(q)=n^{-1} \log \sum_{u\in\TT_n} \exp (\langle q|S_nX(u)\rangle)\quad (q\in\R^d).
\end{equation}
Since $S_nX(t)$ depends on $t_1\cdots t_n$ only, we also denote by $S_nX(u)$ the constant value of $S_nX(\cdot)$ over $[u]$ whenever $u\in \TT_n$. 

\medskip

The vector space $\R^d$ is endowed with the canonical scalar product and the associated Euclidean norm respectively denoted $\langle\cdot|\cdot\rangle$ and $\|\cdot \|$. 

\medskip

The multifractal analysis of $S_nX$ is a first natural consideration.  It consists in computing the Hausdorff and packing dimensions of the sets 
$$
E_X(\alpha)=\Big\{t\in \partial\TT: \lim_{n\to\infty} \frac{S_nX(t)}{n}=\alpha\Big \},\quad (\alpha\in\R^d).
$$ 

\medskip

Without loss of generality we assume the following property about $X$: 

\begin{equation}\label{pP}
 \text{$\not\exists\ (h,c)\in(\R^d\setminus\{0\})\times \R$,\ $\langle h|X_i\rangle =c \quad \forall\ 1\le i\le N$ a.\,s.}
\end{equation}
that is there is no strict affine subspace of $\R^d$ which contains all the vectors $X_i$, $1\le i\le N$, almost surely. Indeed, if \eqref{pP} fails, the $X_i$, $1\le i\le N$, belong to the same affine hyperplane so that using a translation we can reduce the study to the case of  $\R^{d-1}$  valued random variables (with the convention $\R^{0}=\{0\}$). 

When $E(N\log N)<\infty$ and $\mathbb{E}\big (\sum_{i=1}^N\|X_i\|\big )<\infty$,  considering the branching measure on $\partial \TT$ (see the measure $\mu_0$ below in \eqref{defmuq})  makes it possible to show, by an application of the strong law of large numbers, a generalization of the result mentioned above on the dyadic tree, namely  that $E_X(\alpha_0)$ is of full Hausdorff dimension in $\partial \TT$, where $\alpha_0=\frac{\mathbb{E}\big (\sum_{i=1}^NX_i\big )}{\E(N)}$ (see \cite{KP,LiuR}).  It is worth quantifying the existence of other infinite branches over which such a law of large numbers holds, with different values of $\alpha$, or the existence of infinite branches over which the averages of $S_nX(t)$ has a given set of limit points. We need new notations to discuss these questions. 

\medskip

The domain of a convex (resp. concave) function $f$ from $\R^d$ to $\R_+\cup\{-\infty,\infty\}$, i.e. the convex set of those $q\in\R^d$ such that $f(q)<\infty$ (resp. $>-\infty$), is denoted $\dom f$. Its relative interior and relative boundary are denoted $\mathrm{ri} (\dom f)$ and $\mathrm{rel}\, \partial (\dom f)$ respectively.   When $\dom f\neq \emptyset$ and $f$ takes only values different from $-\infty$ (resp. $\infty$), $f$ is said to be proper. Then, the closure of $f$, denoted $\mathrm{cl}(f)$, is the greatest (resp. smallest) lower (resp. upper) semi-continuous function dominated by (resp. dominating) $f$, and equals $\liminf_{q'\to q}f(q')$ (resp. $\limsup_{q'\to q}f(q')$) at each $q\in\R^d$. 

\medskip

For any function $f:\R^d\to\R\cup\{\infty\}$ and any $\alpha\in\R^d$ define the Legendre-Fenchel transform of $f$ at $\alpha$ by 
$$
f^*(\alpha)=\inf\{f(q) -\langle q|\alpha\rangle: q\in\R^d\} \in \R\cup \{-\infty,\infty\}
$$
(notice that $f^*$ is everywhere equal to $\infty$ if $f$ is, while it is everywhere $<\infty$ if $f$ is finite at some point). By construction $f^*$ is concave and upper semi-continuous, so that our definition, which is convenient to express the values of some Hausdorff and packing dimensions, does not follow the standard ``convex'' convention.  According to the duality results \cite[Theorem 12.2, Corollary 12.2.2]{Roc}, if $f$ is a closed convex function, then $f^*$ is a closed concave function and we have 
\begin{equation}\label{duality}
\begin{split}
f^*(\alpha)&=\inf\{f(q)-\langle q|\alpha\rangle: q\in  \mathrm{ri}(\dom f^*)\} \quad \forall \ \alpha\in\R^d,\\  f(q)&=\sup\{\langle q|\alpha\rangle +f^*(\alpha): \alpha\in \mathrm{ri}(\dom f^*)\} \quad \forall \ q\in\R^d
\end{split}.
\end{equation}

For $q\in\R^d$ let 
$$
S(q)=\sum_{i=1}^N \exp (\langle q|X_i\rangle).
$$
Then define the lower semi-continuous function
\begin{equation}\label{wtP}
\wt P:q\in\R^d\mapsto \displaystyle \log \mathbb{E}(S(q))\in \R\cup\{\infty\}.
\end{equation}
Due to \eqref{pP}, the function $\wt P$  is strictly convex on $\dom \wt P$ whenever this set is neither empty nor  a singleton (see Remark~\ref{sc}). We define the closed convex set
\begin{equation}\label{IX}
I=\{\alpha\in\R^d: \widetilde P^*(\alpha)\ge 0\}. 
\end{equation}

After the general properties mentioned above, the function $\wt P^*$ is concave and upper semi-continuous and we always have 
$$
\wt P(0)= \sup\{\wt P^*(\alpha): \alpha\in\R^d\}=\log (\E(N))\in (0,\infty],
$$
(this is trivial if $\dom \wt P=\emptyset$, and follows from \eqref{duality} otherwise). We will see that the set $I$ has a non-empty interior over which $\wt P^*$ is positive. This allows us to assume, without loss of generality, that $0\in \widering I$, by replacing if necessary the $X_i$ by $X_i-\alpha$ with  $\alpha\in \widering I$.

\medskip

We now describe the main ideas used in previous approach to  the multifractal analysis of $S_nX$ and the kind of results they lead to. Then we will state and comment our main results. 

\subsubsection{Description of previous approach to the multifractal analysis of $S_nX$}\label{classic}

In order to describe the nature of the standard approach to the multifractal analysis problem, let us assume momentarily that
\begin{equation}\label{finiteness0}
\forall\ q\in\R^d,\ \E(S(q))<\infty \text{ and } \exists\ \gamma>1,\ \E(S(q')^\gamma)<\infty \text{ in a neighborhood of $q$} .
\end{equation}
In particular, the logarithmic moment generating function  $\widetilde P$  is finite and strictly convex over $\R^d$. We will see that  the domain of those $\alpha$ for which $E_X(\alpha)\neq \emptyset $ in this case is  the convex compact set $I$.  

Let 
\begin{equation}\label{J}
J=\{q\in\R^d: \wt P(q)-\langle q|\nabla \wt P(q)\rangle >0\};
\end{equation}
notice that $\nabla \wt P(J)\subset I$.

\medskip

A natural approach to the Hausdorff and packing dimensions of $E_X(\alpha)$ consists in taking $\alpha\in I$ such that $\alpha=\nabla \widetilde P(q)$ for some $q\in\R^d$ and  $\widetilde P^*(\alpha)>0$ (we will see in the proof of Proposition~\ref{detI} that this is equivalent to taking $\alpha\in \widering I$). Then, one considers the associated Mandelbrot measure (\cite{BM2,BM3}) on $\partial\TT$ defined as 
\begin{equation}\label{defmuq}
\mu_{q}([u])= \exp( \langle q|S_nX(u) \rangle-n \widetilde P(q)) Z(q, u), \quad (u\in\TT_n),
\end{equation}
where 
$$
Z(q, u)=\lim_{p\to\infty}\sum_{v\in \TT_p(u)}  \exp( \langle q|(S_{n+p}X(u\cdot v)-S_{n}X(u))\rangle-p \widetilde P(q))
$$ 
and here and afterwards we simply denote $[u]\cap \partial \TT$ by $[u]$. Under our assumptions, the fact that $\widetilde P^*(\alpha)>0$ implies that this measure is almost surely positive \cite{KP,Biggins1,Lyons} (while $\mu_{q}=0$ almost surely if $\widetilde P^*(\alpha)\le 0$).  Moreover,  $\mu_{q}$ is carried by $E_X(\alpha)$ and is exact dimensional (see Section~\ref{HandP} for this notion) with  $\dim \mu_q=\Dim\mu_q=\wt P^*(\alpha)$ (see \cite{KP,K2} for the dimension of Mandelbrot measures), so that $\dim E_X(\alpha)\ge \widetilde P^*(\alpha)$. Then, a simple covering argument yields $\dim E_X(\alpha)=\widetilde P^*(\alpha)$. This approach holds for each $\alpha\in\widering I$ almost surely, and it has been followed in the case $d=1$ for the multifractal analysis of Mandelbrot measures, especially in \cite{Kahane1991',HoWa,Ol1,Falc,Mol,jul2,Biggins3}, where more general metric than $d_1$ are considered to get results for geometric realizations of these measures in $\R^n$.  Nevertheless, it is possible to strengthen the result to get, with probability 1,  $\dim E_X(\alpha)=\widetilde P^*(\alpha)$ for all $\alpha\in \widering I$. This is done in  \cite{B2} for the case $d=1$ on Galton Watson trees with bounded branching number, and in \cite{A} for the present context:
\medskip

\noindent  
{\bf Theorem }(\cite{A}){\bf .} \textit{Assume~\eqref{finiteness0} holds. With probability $1$, for all $  \alpha\in \widering I$, $\dim E_X (\alpha)=\wt P^*(\alpha)>0$; in particular, $E_X(\alpha) \neq \emptyset$.}

\medskip

We mention that since   $\Dim E_X (\alpha)\le\wt P^*(\alpha)$, this result also yields the packing dimensions simultaneously. The method used to prove this theorem requires to simultaneously building the measures  $\mu_{q}$ and computing their Hausdorff dimension; it extensively uses techniques combining analytic functions in several variables theory and large deviations estimates.  However, this approach is unable to cover the levels $\alpha\in \partial I$. When $d=1$, this boundary consists of two points, and the question has been solved partially in \cite{B2} and completely in \cite{BJ} in the case of homogeneous trees: when $\alpha\in\partial I$ takes the form $\widetilde P'(q)$ for $q\in\R$ such that  $\widetilde P^*(\widetilde P'(q))=0$, one substitutes to $\mu_q$ a measure that is naturally deduced from the fixed points of the associated smoothing transformation in the ``boundary case'' (see \cite{BiKy,B2}); when $\alpha\in\partial I$ and $\alpha=\lim_{q\to\infty}\widetilde P'(q)$ (resp. $\lim_{q\to-\infty}\widetilde P'(q)$), a ``concatenation'' method  is used in \cite{BJ} to build an ``inhomogeneous'' Mandelbrot measure carried by $E_X(\alpha)$ and with the right dimension. It turns out that these methods are not sufficient to deal with the points of $\partial I$ when $d\ge 2$. Moreover, even if they are compatible with some relaxation of the assumption \eqref{finiteness0}, there is no expectation to  treat for instance the important case where $J$ is not empty  but there are $\alpha$ such that $\wt P^*(\alpha)\ge 0$ and  $\alpha$ is not in the closure of $\nabla \wt P (J)$  ($\alpha\in \partial\wt P\setminus \overline{\mathrm{range}(\nabla \wt P)}$, where $\partial \wt P$ stands for the subdifferential of $\wt P$), a situation typical in presence of a first order phase transition  as described in the foreword of this paper.

\subsubsection{ Main results for branching random walks}\label{mainr} In this paper, we adopt an approach based on a more sophisticated construction of inhomogeneous Mandelbrot measures than that considered in \cite{BJ}, and we do not assume restriction on the distribution of the branching random walk anymore. This approach is partially inspired by the study of vector Birkhoff averages on mixing subshifts of finite type and their geometric realizations on conformal repellers \cite{FW,FLW,BSS,Ol,Iommi,Ol2}.  It  makes it possible to conduct the calculation of the Hausdorff and packing dimension of far more general sets and obtain in the context of branching random walks the counterpart of the main results obtained in the papers mentioned above about the asymptotic behavior of Birkhoff averages.  Also, we obtain a general result about the asymptotic behavior of the free energy functions $P_n$ defined in \eqref{Pn}. One advantage of the method is that it makes it possible to treat all the levels sets in a unified way.  For instance, in the special situation described above under \eqref{finiteness0},  no difference is now made between the points of  $\widering I$ and those of~$ \partial I$.

\medskip

If $K$ is a  closed subset of $\R^d$, let 
$$
E_X(K)=\Big\{t\in\partial\TT: \bigcap_{N\ge 1}\overline{ \Big \{\frac{S_nX(t)}{n}:n\ge N\Big\}}=K\Big\},
$$
the set of those $t\in\partial\TT$ such that the set of limit points of $(S_nX(t)/n)_{n\ge 1}$ is equal to~$K$. We have the natural decomposition
\begin{equation}\label{decomp}
\partial \TT= \bigcup_{K\text{ closed subset of } \R^d} E_X(K).
\end{equation}

When $t\in\partial \TT$ is such that $\frac{S_{n+1}X(t)}{n+1}-\frac{S_nX(t)}{n}= \frac{nX_{t_1\cdots t_nt_{n+1}}-S_nX(t)}{n(n+1)}$ converge to 0, then if the set of limit points $ \bigcap_{N\ge 1}\overline{ \Big \{\frac{S_nX(t)}{n}:n\ge N\Big\}}$ is not empty (i.e. $\|S_nX(t)\|/n$ does not tend to $\infty$),  it belongs to the set  of closed subsets $K$ of $\R^d$ such that each  connected component of $K$ is at distance 0 of the union of the other ones. Moreover,  it is not hard, using a standard large deviations argument, to see that if $0$ is an interior point of $\dom\wt P$, then almost surely  $\frac{nX_{t_1\cdots t_nt_{n+1}}-S_nX(t)}{n(n+1)}$ converges to 0 uniformly in $t$ and $S_nX(t)/n$ is uniformly bounded in $n$ and $t$, so that $ \bigcap_{N\ge 1}\overline{ \Big \{\frac{S_nX(t)}{n}:n\ge N\Big\}}$ is compact and connected for all $t\in\partial \TT$.   In this case, our result provides a full description of $\partial \TT$ in terms of the decomposition \eqref{decomp}, where the sets $K$ have to be taken connected and compact. More generally,  our approach makes it possible to compute $\dim E_X(K)$ and $\Dim E_X(K)$ when $K\in \mathcal K$, where 
$$
\mathcal K=\{K\subset \R^d:\, K \text{ is closed, $K\neq\emptyset$, and $K_\delta$ is pathwise connected for all $\delta>0$}\},
$$
where $K_\delta=\{\alpha\in\R^d: d(\alpha, K)\le \delta\}$. Notice that $\mathcal K$ clearly contains all the closed connected subsets of $\R^d$, and if $K\in \mathcal K$ is compact, it is necessarily connected.

\begin{thm}\label{thm-1.1}
With probability 1, we have $\dim \partial \TT=\Dim\partial \TT= \wt P(0)=\sup_{\alpha\in I} \wt P^*(\alpha)$ and for all   $K\in \mathcal K$  we have $E_X(K)\neq \emptyset$ if and only if $K\subset  I $, and in this case 
\begin{enumerate}
\item  $\dim E_X(K)=\inf_{\alpha\in K}\widetilde P^*(\alpha)$;
\item 
if $K$ is compact then $\Dim E_X(K)=\sup_{\alpha\in K}\widetilde P^*(\alpha)$, otherwise  $\Dim E_X(K)=\Dim\partial \TT$. 
\end{enumerate}
\end{thm}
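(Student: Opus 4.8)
The plan is to establish the two dimension formulas by a combination of a lower bound, obtained by constructing suitable inhomogeneous Mandelbrot measures carried by $E_X(K)$, and an upper bound, obtained by a covering argument built on large deviations estimates for the branching random walk. First I would address the easy equivalence: $E_X(K)\ne\emptyset$ forces $K\subset I$, because if $t\in E_X(K)$ then for $\alpha\in K$ the branch $t$ witnesses, via a first-moment/Borel--Cantelli estimate on $\#\{u\in\TT_n:\ S_nX(u)/n\approx\alpha\}$, that $\wt P^*(\alpha)\ge 0$; the expected cardinality of such $u$ is $\exp(n(\wt P^*(\alpha)+o(1)))$, so $\wt P^*(\alpha)<0$ would make these sets eventually empty almost surely, uniformly on compact pieces of the complement of $I$ after a countable exhaustion. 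The statement $\dim\partial\TT=\Dim\partial\TT=\wt P(0)$ is classical for supercritical Galton--Watson trees (Hawkes, Lyons), and $\wt P(0)=\log\E(N)=\sup_\alpha\wt P^*(\alpha)=\sup_{\alpha\in I}\wt P^*(\alpha)$ follows from \eqref{duality} and the fact, to be proven when $I$ is determined, that the supremum of $\wt P^*$ is attained inside $I$.

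For the upper bound in (1), I would cover $E_X(K)$ for each fixed $\alpha\in K$: a point $t\in E_X(K)$ has $S_nX(t)/n$ within $\epsilon$ of $\alpha$ for infinitely many $n$, so $E_X(K)\subset\bigcap_{N}\bigcup_{n\ge N}\bigcup_{u\in\TT_n,\ S_nX(u)/n\in B(\alpha,\epsilon)}[u]$, and since the number of such cylinders of generation $n$ is almost surely $\le\exp(n(\wt P^*(\alpha)+\eta))$ for large $n$ (a uniform large deviations upper bound, provable by a Markov/Chernoff argument applied to $\sum_{u\in\TT_n}\exp(\langle q|S_nX(u)\rangle)$ whose expectation is $\exp(n\wt P(q))$, then optimizing over $q$), the Hausdorff $s$-premeasure at scale $e^{-N}$ is controlled for any $s>\wt P^*(\alpha)$. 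Taking the infimum over $\alpha\in K$ and a countable dense set of radii gives $\dim E_X(K)\le\inf_{\alpha\in K}\wt P^*(\alpha)$ simultaneously. For the packing upper bound in (2) when $K$ is compact, one covers $E_X(K)$ by the cylinders meeting $\bigcup_{\alpha\in K}B(\alpha,\epsilon)$ for all large $n$ at once (now $S_nX(t)/n$ stays near $K$ eventually), whose count is at most $\exp(n(\sup_{\alpha\in K}\wt P^*(\alpha)+\eta))$ by subadditivity of the counts over a finite $\epsilon$-net of $K$, giving the box-counting and hence packing upper bound $\sup_{\alpha\in K}\wt P^*(\alpha)$; when $K$ is unbounded, $\Dim E_X(K)\le\Dim\partial\TT$ trivially.

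The heart of the proof is the lower bound, and this is where I expect the main obstacle to lie. For (1), given $K\in\mathcal K$ with $K\subset I$, I want a measure $m$ on $E_X(K)$ with $\liminf_{n}-\frac{1}{n}\log m([t_{|n}])\ge\inf_{\alpha\in K}\wt P^*(\alpha)$ for $m$-a.e.\ $t$; for (2) with $K$ compact I want the complementary $\limsup$ to reach $\sup_{\alpha\in K}\wt P^*(\alpha)$ on a set of full $m$-measure, or rather a family of measures whose dimensions approach that supremum, forcing the branch to visit neighborhoods of the maximizing $\alp\in K$ infinitely often. The construction is the inhomogeneous Mandelbrot measure alluded to in the introduction: one fixes a sequence of ``targets'' $\alpha_k\in K$ (with associated $q_k$ satisfying $\alpha_k=\nabla\wt P(q_k)$, using a regularized/truncated $\wt P$ when $\alpha_k\in\partial I$ or outside $\mathrm{range}(\nabla\wt P)$) together with a rapidly growing sequence of generations $n_k$, and glues together on the dyadic-type blocks $(n_{k-1},n_k]$ pieces of the Mandelbrot measures $\mu_{q_k}$, conditioned appropriately so that the concatenated measure is still supported on a nonempty subtree and so that along $m$-a.e.\ branch $S_nX(t)/n$ spends block $k$ near $\alpha_k$. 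Choosing the $\alpha_k$ to be dense in $K$ (possible since $K_\delta$ is pathwise connected, which also lets us interpolate continuously between consecutive targets so that no spurious limit points appear and $\bigcap_N\overline{\{S_nX(t)/n:n\ge N\}}$ is exactly $K$) yields $E_X(K)$ as the support; choosing them so that the time-averaged free energy contribution stays close to $\inf_{\alpha\in K}\wt P^*(\alpha)$ for the $\liminf$ (resp.\ touches $\sup_{\alpha\in K}\wt P^*(\alpha)$ along $n_k$ for the $\limsup$) gives the dimension. The delicate points are: controlling non-degeneracy of the inhomogeneous martingale uniformly in the gluing so that $m\ne 0$ almost surely (this is where working without moment assumptions on $X$ bites — one must truncate and use the ``boundary case'' fixed points of the smoothing transform when $\wt P^*(\alpha_k)=0$), proving the uniform large deviations needed so that the local dimension lower bound holds for $m$-a.e.\ $t$ rather than merely in probability, and handling the transition regions between blocks so they contribute negligibly to both the walk average and the dimension count. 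I would organize these into: (a) a determination of $I$ and the regularity of $\wt P,\wt P^*$ (Proposition~\ref{detI}); (b) a general construction lemma producing, for any admissible target sequence, a measure $m$ on $\partial\TT$, a.s.\ positive, with prescribed local behavior of $S_nX/n$ and prescribed lower local dimension; (c) the two optimizations of the target sequence for (1) and (2); and (d) the covering upper bounds above. The unbounded-$K$ case of (2) would follow by exhausting $K$ from inside by compact connected pieces whose packing dimensions converge to $\Dim\partial\TT=\sup_{\alpha\in I}\wt P^*(\alpha)$.
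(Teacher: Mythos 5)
Your overall architecture (covering argument for the upper bounds, inhomogeneous Mandelbrot measures concatenated along a target sequence for the lower bounds, plus the observation that pathwise connectedness of $K_\delta$ is what lets the target sequence trace out exactly $K$) matches the paper's. However, there are three concrete gaps, two of which are outright wrong turns.

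First, your appeal to ``boundary case'' fixed points of the smoothing transform when $\wt P^*(\alpha_k)=0$ is exactly the route the paper explains does \emph{not} work here (it was used in \cite{BJ} for $d=1$ and the paper says these methods ``are not sufficient to deal with the points of $\partial I$ when $d\ge 2$'', and they cannot handle first-order phase transitions at all). The paper's replacement is the truncation scheme: it approximates $\wt P$ by the everywhere-finite, strictly convex functions $\wt P_A$, so that every target $\alpha\in I$ (even on $\partial I$, even outside the closure of $\mathrm{range}(\nabla\wt P)$) is approximated by $\nabla\wt P_{A_j}(q_j)$ with $q_j\in J_{A_j}$, i.e.\ with $\wt P_{A_j}^*(\nabla\wt P_{A_j}(q_j))>0$ strictly. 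Every block of the concatenated cascade is therefore honestly supercritical; no critical martingales ever appear. Mentioning truncation but then adding critical martingales as an extra ingredient misidentifies which of the two is doing the work.

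Second, your exhaustion argument for $\Dim E_X(K)$ when $K$ is unbounded does not work, because the sets $E_X(K')$ for distinct $K'$ are \emph{disjoint}: if $K_n\subsetneq K$ is compact, then $E_X(K_n)\cap E_X(K)=\emptyset$, so $\Dim E_X(K)$ is not a monotone limit of $\Dim E_X(K_n)$. The paper's Proposition~\ref{conclusion} instead modifies the target sequence $\varrho$ so that while its Ces\`aro set of limit points remains exactly $K$, it makes rare excursions towards points $\wt\gamma_m\in I$ with $\wt P^*(\wt\gamma_m)\to\Dim\partial\TT$. The crucial geometric input is that, since $K$ is unbounded, one can find a sequence $\wt\beta_m\in K$ going to infinity and segments $[\wt\beta'_m,\wt\gamma_m]$ that are uniformly separated (or, when $\wt P(0)$ is attained, lie along a ray of constancy of $\wt P^*$), so that the excursions introduce no new limit points. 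You need some such construction; the exhaustion idea fails at the first step.

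Third, and most structurally, your lower bound builds one measure $m$ per $K$, but the theorem asserts an almost-sure statement simultaneous over the uncountable family $\mathcal K$. You handle this for the upper bound (countable density of $\alpha,\epsilon$), but for the lower bound the exceptional null set must be made independent of $K$. This is the paper's main technical innovation: it introduces a \emph{compact} parameter set $\mathcal J$ of admissible target sequences (sequences constant on blocks $[M_{j-1}+1,M_j]$ with values in a finite set $D_j\subset J_{A_j}$), proves uniform (over $\varrho\in\mathcal J$) $L^1$ convergence of $Y_n(\varrho,u)$, and uniform versions of the key moment estimates (Lemma~\ref{dDimmurho}, Proposition~\ref{pp3}), so that on a single event of full probability every $\varrho\in\mathcal J$ yields a valid measure $\mu_\varrho$ with the predicted Hausdorff and packing dimensions; one then shows every $K\in\mathcal K$ with $K\subset I$ is realized by some $\varrho\in\mathcal J$. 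A single-$K$ construction as you sketch it leaves a genuine gap between the ``for each $K$ a.s.'' statement you would prove and the ``a.s.\ for all $K$'' statement the theorem makes.

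Finally, a small point: $\dim\partial\TT=\Dim\partial\TT=\wt P(0)$ is not simply quoted in the paper. The packing upper bound $\Dim\partial\TT\le\wt P(0)$ is proved directly, and the Hausdorff lower bound falls out of the same Mandelbrot-measure construction (take $K=\{\alpha\}$ with $\wt P^*(\alpha)$ close to $\wt P(0)$). This matters because the paper makes no moment assumptions beyond $\E(N)>1$, and in particular $\E(N\log N)<\infty$ is not assumed.
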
 
While it is expected that $\dim E_X(K)$ and $\Dim E_X(K)$ differ in general, the alternative which occurs for the packing dimension according to whether $K$ is compact or not is quite remarkable, and in some sense reminiscent from the same alternative met when studying large deviations principles in large deviations theory. 

\medskip

The first part of the next Theorem~\ref{thm-1.2} about multifractal analysis is a corollary of Theorem~\ref{thm-1.1}.  When the set $J$ defined in \eqref{J} is not empty and $\alpha$ takes the form $\nabla\wt P(q)$, $q\in J$, the second part of Theorem~\ref{thm-1.2} completes  the information provided by the approach consisting in putting on $E_X(\alpha)$ the Mandelbrot measure $\mu_q$ to get the dimension of  $E_X(\nabla\wt P(q))$. The third part, which uses the result of the second one,  provides a $0$-$\infty$ law for the Hausdorff and packing measures of the sets $E_X(\alpha)$ which are of positive but not maximal dimension. The level sets  of maximal Hausdorff dimension have a particular status. In particular when $\E(N\log (N))<\infty$, such a level set is unique and  carries a Mandelbrot measure of maximal Hausdorff  dimension on $\partial \TT$, namely the branching measure, and  the behaviors of its Hausdorff and packing measures turns out to differ from that of the other sets $E_X(\alpha)$ because they are closely related to those of the Hausdorff and packing measures of $\partial\TT$. We refer the reader to \cite{GMW,Liu0,Wa,Wa2} for the study of the Hausdorff measures and packing measures  of $\partial\TT$. We notice that   in the deterministic case, such a $0$-$\infty$ law has been obtained  in \cite{MWW} when $d=1$ for the sets $E_X(\alpha)$ seen as  Besicovich subsets of the attractor of an IFS of contractive similtudes of $\R$  satisfying the open set condition. 

The notions of generalized Hausdorff and packing measures with respect to a gauge function are recalled in Section~\ref{HandP}. 
\begin{thm}(Multifractal analysis)\label{thm-1.2}
With probability 1, 
\begin{enumerate}
\item for all $\alpha\in \R^d$, $E_X(\alpha)\neq\emptyset$ if and only if $\alpha\in I$, and in this case $\dim E_X(\alpha)=\Dim E_X(\alpha)=\widetilde P^*(\alpha)$; 

\item  if $\alpha\in I$, then $E_X(\alpha)$ carries uncountably many mutually singular inhomogeneous Mandelbrot measures of exact  dimension $\widetilde P^*(\alpha)$ (in particular $E_X(\alpha)$ is not countable when $\widetilde P^*(\alpha)=0$). 

\item for all $\alpha\in  I$ such that $0<\wt P^*(\alpha)<\dim\, \partial \TT$, for all gauge functions $g$, we have  $\mathcal H^g(E_X(\alpha))=\infty$ if $\limsup_{t\to 0^+}\log(g(t))/\log(t)\le \wt P^*(\alpha)$ and  $\mathcal H^g(E_X(\alpha))=0$ otherwise, while $\mathcal P^g(E_X(\alpha))=\infty$ if $\liminf_{t\to 0^+}\log(g(t))/\log(t)\le \wt P^*(\alpha)$ and  $\mathcal P^g(E_X(\alpha))=0$ otherwise. 
\end{enumerate}
\end{thm}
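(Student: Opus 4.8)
\emph{Proof plan.} Part (1) is simply the case $K=\{\alpha\}$ of Theorem~\ref{thm-1.1}: a singleton is a compact member of $\mathcal K$, $\{\alpha\}\subset I$ iff $\alpha\in I$, and then $\dim E_X(\alpha)=\inf_{\beta\in\{\alpha\}}\wt P^*(\beta)=\wt P^*(\alpha)=\sup_{\beta\in\{\alpha\}}\wt P^*(\beta)=\Dim E_X(\alpha)$. So the work is in (2) and (3).

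\textbf{Part (2).} I would reuse, and refine, the construction of the inhomogeneous Mandelbrot measures carried by $E_X(K)$ built for Theorem~\ref{thm-1.1}, specialised to $K=\{\alpha\}$. Fix $\alpha\in I$ and an auxiliary $\alpha^\sharp\in\widering I$; along the segment $[\alpha^\sharp,\alpha]$ there are $\alpha_j\to\alpha$ in $\widering I$ with $\wt P^*(\alpha_j)\to\wt P^*(\alpha)$ (concavity and upper semicontinuity of $\wt P^*$), each of the form $\alpha_j=\nabla\wt P(q_j)$. Over a fast increasing sequence of generations $0=N_0<N_1<\cdots$ one builds a measure $\nu$ by assigning, on the generations of the $j$-th block $[N_j,N_{j+1})$, the Mandelbrot weights and the closing martingale limit attached to $q_j$, exactly as in Theorem~\ref{thm-1.1}; with $N_{j+1}/N_j\to\infty$ one gets $S_nX(t)/n\to\alpha$ for $\nu$-a.e.\ $t$, so $\nu$ is carried by $E_X(\alpha)$, and the estimates of Theorem~\ref{thm-1.1} give that $\nu$ is exact dimensional of dimension $\wt P^*(\alpha)$ (hence non-atomic, which yields the parenthetical claim). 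To get uncountably many mutually singular such measures, encode $\varepsilon\in\{0,1\}^{\N}$ into the scheme via two sequences $\alpha_j^{(0)}\neq\alpha_j^{(1)}$, both tending to $\alpha$ along $[\alpha^\sharp,\alpha]$, using $q(\alpha_j^{(\varepsilon_j)})$ on block $j$; for $\varepsilon,\varepsilon'$ differing infinitely often — which leaves an uncountable family of parameters — a second‑moment/Borel--Cantelli estimate along the blocks where they differ, comparing the block averages of $S_nX$ after normalisation by $\|\alpha_j^{(0)}-\alpha_j^{(1)}\|$, shows that $\nu_\varepsilon$ and $\nu_{\varepsilon'}$ are carried by disjoint Borel sets.

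\textbf{Part (3).} Fix $\alpha\in I$ with $s:=\wt P^*(\alpha)\in(0,\dim\partial\TT)$ and a gauge $g$; set $h(n)=\log g(e^{-n})+ns$. \emph{Vanishing.} If $\limsup_{t\to0}\log g(t)/\log t>s$, pick $\eta>0$ and $t_k\to0$ with $g(t_k)\le t_k^{s+2\eta}$, and $n_k\sim\log(1/t_k)$; covering $\bigcap_{n\ge M'}\{t: S_nX(t)/n\in B(\alpha,1/M)\}$ at generation $n_k$ by the cylinders carrying the right empirical mean costs, by the branching‑random‑walk large deviation bound $\#\{u\in\TT_n: S_nX(u)/n\in B(\alpha,1/M)\}\le e^{n(\sup_{\bar B(\alpha,1/M)}\wt P^*+o(1))}$ (exponentially tilted Markov inequality and Borel--Cantelli; and $\sup_{\bar B(\alpha,1/M)}\wt P^*\downarrow s$ as $M\to\infty$), a total $g$‑mass $O(e^{n_k(-\eta+o(1))})\to0$ once $M$ is large; since $E_X(\alpha)$ lies in the countable union of these sets over $M'$, this forces $\mathcal H^g(E_X(\alpha))=0$. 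In the packing case $\liminf_{t\to0}\log g(t)/\log t>s$ forces $g(t)\le t^{s+2\eta}$ for all small $t$, whence $\mathcal P^g(E_X(\alpha))\le\mathcal P^{s+2\eta}(E_X(\alpha))=0$ because $\Dim E_X(\alpha)=s<s+2\eta$. \emph{Explosion.} Assume $\limsup_{t\to0}\log g(t)/\log t\le s$, i.e.\ $\liminf_n h(n)/n\ge0$ (Hausdorff), resp.\ $\liminf_{t\to0}\log g(t)/\log t\le s$, i.e.\ $h(n_k)=o(n_k)$ along some $n_k\uparrow\infty$ (packing). A gauge‑adapted variant of the construction of (2) — inserting into the block scheme sparse ``excursions'' of the approximating parameter toward a point of maximal exponent, of cumulative length $o(n)$ by generation $n$, which therefore perturb neither the limit of the averages nor the exact dimension while spreading the measure just enough — produces a measure $\nu_g$ carried by $E_X(\alpha)$ with $\nu_g([t_{|n}])\le C(t)\,g(e^{-n})$ for every $n$ (Hausdorff), resp.\ for infinitely many $n$ (packing), with $C(t)<\infty$ $\nu_g$‑a.s.; the mass distribution principle, resp.\ its packing analogue, then gives $\mathcal H^g(E_X(\alpha))>0$, resp.\ $\mathcal P^g(E_X(\alpha))>0$. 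Finally, positivity is promoted to $+\infty$ by statistical self‑similarity: $\mathcal H^g$ being a Borel measure, $\mathcal H^g(E_X(\alpha))=\sum_{u\in\TT_m}\mathcal H^g(E_X(\alpha)\cap[u])$ for every $m$, and since $S_mX(u)$ is irrelevant to the limit set of the averages, $\mathcal H^g(E_X(\alpha)\cap[u])=\mathcal H^{g_m}(E_X(\alpha;u))$ ($g_m(t):=g(e^{-m}t)$, $E_X(\alpha;u)$ the analogue of $E_X(\alpha)$ in $\TT(u)$) are, given $\TT_m$, independent copies of $\mathcal H^{g_m}(E_X(\alpha))$; as $g_m$ still satisfies the standing hypothesis, a version of the previous step uniform in $m$ gives $\P(\mathcal H^{g_m}(E_X(\alpha))\ge\delta)\ge\rho>0$ with $\delta,\rho$ independent of $m$, and the conditional Borel--Cantelli lemma applied along an infinite antichain of $\TT$ (which exists a.s.\ by supercriticality) forces infinitely many of these terms to be $\ge\delta$, so $\mathcal H^g(E_X(\alpha))=\infty$ a.s. (For $g=t^s$ one may instead invoke the distributional identity $A\overset{d}{=}e^{-s}\sum_{k=1}^N A_k$ for $A=\mathcal H^s(E_X(\alpha))$, a supercritical smoothing transform since $e^{-s}\,\E N=e^{\dim\partial\TT-s}>1$, whose only a.s.\ solutions in $[0,\infty]$ are $0$ and $\infty$.) The packing statements follow identically, and the passage from the fixed‑$\alpha$‑fixed‑$g$ statement to ``a.s., for all $\alpha$ and all $g$'' is the usual monotonicity‑and‑separability reduction, all constructions being carried out simultaneously in $\alpha$ as in Theorem~\ref{thm-1.1}.

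\textbf{Main obstacle.} The delicate point, for (2) and especially for the Hausdorff explosion in (3), is the flexible construction of the inhomogeneous Mandelbrot measures: one must steer the approximating sequence $(\alpha_j)$ and the block lengths $(N_j)$ so as to impose, \emph{simultaneously for all admissible gauges}, the one‑sided bound $\nu_g([t_{|n}])\le C(t)g(e^{-n})$ at \emph{every} scale, while keeping $\nu_g$ carried by $E_X(\alpha)$ and exact dimensional, and so as to keep these estimates quantitative \emph{uniformly in the rescaling parameter $m$} entering the self‑similarity step — this is where the machinery of Theorem~\ref{thm-1.1} is pushed to its limit.
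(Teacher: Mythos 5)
Parts (1) and (2) are on target. Part (1) is indeed just $K=\{\alpha\}$ in Theorem~\ref{thm-1.1}. For (2), your scheme of encoding $\varepsilon\in\{0,1\}^{\N}$ via two parameter sequences per block is precisely the paper's $\mathcal J_\alpha$; the paper proves mutual singularity by a Hellinger-distance estimate (Proposition~\ref{pro-2.4}(3)) rather than your block-averages/second-moment sketch, but both routes exploit the same independence structure and the difference is cosmetic. The vanishing halves of part (3) are also fine; the paper treats them as immediate.

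The genuine gap is in the explosion half of part (3). You produce a measure $\nu_g$ on $E_X(\alpha)$ with $\nu_g([t_{|n}])\le C(t)\,g(e^{-n})$ for a \emph{fixed} random constant $C(t)$, apply the mass distribution principle to get $\mathcal H^g(E_X(\alpha))>0$, and then try to promote positivity to $+\infty$ by the self-similarity decomposition $\mathcal H^g(E_X(\alpha))=\sum_{u\in\TT_m}\mathcal H^{g_m}(E_X(\alpha;u))$, $g_m(t)=g(e^{-m}t)$. The promotion step breaks: the bound $\P(\mathcal H^{g_m}(E_X(\alpha))\ge\delta)\ge\rho$ cannot be made uniform in $m$. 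Indeed $g_m(e^{-n})=g(e^{-(m+n)})\approx e^{-ms}g(e^{-n})$ (for $g$ near $t^s$), so the quantity $\mathcal H^{g_m}$ is, to leading order, scaled by $e^{-ms}$; a constant $\delta$ that works for $m=0$ does not survive as $m\to\infty$, and the conditional Borel--Cantelli along an infinite antichain never triggers. (Your parenthetical smoothing-transform argument for $g=t^s$ is sound, but it does not extend to an arbitrary gauge satisfying only $\limsup\log g(t)/\log t\le s$, which need not be regularly varying.)

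The paper avoids this entirely by making the constant decay in $n$. Pick $\varrho$ so that $\nabla\wt P_{A_{j_k}}(q_k)\to\alpha$ and $\wt P^*_{A_{j_k}}(\nabla\wt P_{A_{j_k}}(q_k))\downarrow\wt P^*(\alpha)$ slowly, ensuring $\wt P^*_{A_{j_k}}(\nabla\wt P_{A_{j_k}}(q_k))\ge(1+\eta_k)^2(\wt P^*(\alpha)+\theta_k)$ with $\eta_k\to0$, $n\eta_n\to\infty$, and $\theta$ read off from $g$. Then for $\mu_\varrho$-a.e. $t$ and all large $n$, $g(\mathrm{diam}([t_{|n}]))\ge\mu_\varrho([t_{|n}])^{1-\delta'_n}$ with $\delta'_n=\eta_n/(1+\eta_n)$ and moreover $\mu_\varrho([t_{|n}])\le e^{-n\wt P^*(\alpha)/2}$. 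Any covering of a positive-$\mu_\varrho$-measure subset by cylinders of generation $\ge n'_0$ then costs at least $e^{n'_0\delta'_{n'_0}\wt P^*(\alpha)/2}\mu_\varrho(E)$, and since $n\delta'_n\to\infty$, letting $n'_0\to\infty$ in the definition of Hausdorff (resp.\ prepacking) measure gives $+\infty$ in a single step, with no self-similar bootstrap. That slowly vanishing gap $\eta_n$ is the missing idea in your argument: you need $\nu_g([t_{|n}])\le c_n\,g(e^{-n})$ with $c_n\to0$, not merely $\le C(t)\,g(e^{-n})$.
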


We also obtain the following large deviations results. Recall the definition \eqref{Pn}.

\begin{thm}(Large deviations and free energy)\label{thm-1.3} Let 
$$
f(\alpha):=\begin{cases}\widetilde P^*(\alpha)&\text{if }\alpha\in I\\
-\infty&\text{otherwise}
\end{cases}\quad \text{and}\quad \widehat P(q):= \sup\{ \langle q|\alpha\rangle +f(\alpha):\alpha\in\R^d\}\quad (q\in\R^d). 
$$
\begin{enumerate}
\item With probability 1, for all $\alpha\in \R^d$, 
\begin{align*}
&\lim_{\epsilon\to 0^+}\liminf_{n\to\infty} n^{-1}\log \#\big \{u\in \TT_n: n^{-1} S_nX(u) \in B(\alpha,\epsilon)\big \}\\
=&\lim_{\epsilon\to 0^+}\limsup_{n\to\infty} n^{-1}\log \#\big \{u\in \TT_n: n^{-1} S_nX(u) \in B(\alpha,\epsilon)\big \}=f(\alpha). 
\end{align*}
\item 
\begin{enumerate}
\item For all $q\in\R^d$ we have 
$$
\widehat P(q)=
\displaystyle
\inf\Big \{\frac{\wt P(\theta q)}{\theta}: 0<\theta\le 1\Big \}.
$$ 

Set $\underline P=\liminf_{n\to\infty} P_n$ and $\overline P=\limsup_{n\to\infty} P_n$.
\item
For all $q\in\R^d$, with probability 1, we have $\lim_{n\to\infty} P_n(q)=\widehat P(q)$. 

\item With probability 1, we have $\underline P\ge \widehat P$ over $\R^d$, and $\lim_{n\to\infty} P_n=\widehat P$ over $\R^d\setminus \mathrm{rel}\,\partial (\dom \widehat P)$,  as well as over the points of $\mathrm{rel}\,\partial (\dom \widehat P)$ at which $\overline P$ is lower semi-continuous.

\item If $I$ is compact or $d=1$, with probability 1, we have $\lim_{n\to\infty} P_n=\widehat P$ over $\R^d$.  
\end{enumerate}
\end{enumerate} 
\end{thm}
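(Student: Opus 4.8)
\emph{Plan.} The plan is to reduce everything to (i) the dimension formula of Theorem~\ref{thm-1.1} (equivalently Theorem~\ref{thm-1.2}(1)), which is the only deep input, and (ii) two elementary devices: the first--moment identity $\E\big(\sum_{u\in\TT_n}e^{\langle q|S_nX(u)\rangle}\big)=e^{n\wt P(q)}$, valid for $q\in\dom\wt P$ by the branching property, and the submultiplicativity $\sum_u a_u^{1/\theta}\le\big(\sum_u a_u\big)^{1/\theta}$ for $0<\theta\le1$. I would establish the statements in the order: the deterministic convex--analytic identity 2(a); then the large deviations statement of part~1; then, for fixed $q$, the convergence $P_n(q)\to\widehat P(q)$ of 2(b); then the simultaneous--in--$q$ statements 2(c) and 2(d). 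Write $\Gamma_n(\alpha,\epsilon)=\#\{u\in\TT_n:n^{-1}S_nX(u)\in B(\alpha,\epsilon)\}$, and note that each $P_n$ is a finite convex function on $\R^d$ since $\TT_n$ is a.s.\ finite and non-empty.

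\emph{Part 1 (large deviations).} For the upper bound, if $u$ contributes to $\Gamma_n(\alpha,\epsilon)$ and $q\in\dom\wt P$, then $\langle q|S_nX(u)\rangle\ge n\langle q|\alpha\rangle-n\|q\|\epsilon$, whence $\Gamma_n(\alpha,\epsilon)\le e^{n(\|q\|\epsilon-\langle q|\alpha\rangle)}\sum_{u\in\TT_n}e^{\langle q|S_nX(u)\rangle}$ and $\E(\Gamma_n(\alpha,\epsilon))\le e^{n(\wt P(q)-\langle q|\alpha\rangle+\|q\|\epsilon)}$. A Markov--Borel--Cantelli argument over a countable dense set of triples $(\alpha,\epsilon,q)$, the monotonicity of $\Gamma_n$ in $(\alpha,\epsilon)$, and the fact that the infimum defining $\wt P^*(\alpha)$ may be restricted to $q\in\mathrm{ri}(\dom\wt P)$ (by \eqref{duality}) give, almost surely and for all $\alpha$, $\lim_{\epsilon\to0^+}\limsup_n n^{-1}\log\Gamma_n(\alpha,\epsilon)\le\wt P^*(\alpha)$; when $\alpha\notin I$ one has $\wt P^*(\alpha)<0$, so choosing $q$ with $\wt P(q)-\langle q|\alpha\rangle<0$ and then $\epsilon$ small makes $\E(\Gamma_n(\alpha,\epsilon))$ summable and $\Gamma_n(\alpha,\epsilon)=0$ eventually; covering the open set $\R^d\setminus I$ by countably many such neighbourhoods (Lindel\"of) yields $\lim_{\epsilon\to0^+}\limsup_n n^{-1}\log\Gamma_n(\alpha,\epsilon)=-\infty=f(\alpha)$ there. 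For the lower bound, $E_X(\alpha)\subset\liminf_n\bigcup\{[u]:u\in\TT_n,\ n^{-1}S_nX(u)\in B(\alpha,\epsilon)\}$, and one uses the elementary covering lemma: if $F\subset\partial\TT$ has $\dim F\ge s$ and $F\subset\liminf_n A_n$ with $A_n$ a union of $\Gamma_n$ cylinders of generation $n$, then $\liminf_n n^{-1}\log\Gamma_n\ge s$ (split $F$ according to the first generation from which $t\in A_n$, invoke countable stability of Hausdorff dimension, and estimate $\mathcal H^{s'}_{e^{-n}}$ by a generation-$n$ cylinder cover). Applied with $F=E_X(\alpha)$ and $s=\dim E_X(\alpha)=\wt P^*(\alpha)$ for $\alpha\in I$ (Theorem~\ref{thm-1.2}(1)), this gives, almost surely and for all $\alpha\in I$, $\lim_{\epsilon\to0^+}\liminf_n n^{-1}\log\Gamma_n(\alpha,\epsilon)\ge\wt P^*(\alpha)=f(\alpha)$. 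Combining the two bounds proves part~1.

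\emph{Parts 2(a) and 2(b).} For 2(a), the duality \eqref{duality} applied to the closed convex function $\wt P$ gives $\wt P(\theta q)/\theta=\sup_{\alpha}\big(\langle q|\alpha\rangle+\wt P^*(\alpha)/\theta\big)$, while $\widehat P(q)=\sup\{\langle q|\alpha\rangle+\wt P^*(\alpha):\wt P^*(\alpha)\ge0\}$; the min--max inequality together with $\inf_{0<\theta\le1}\wt P^*(\alpha)/\theta=\wt P^*(\alpha)$ if $\wt P^*(\alpha)\ge0$ and $=-\infty$ otherwise yields $\inf_{0<\theta\le1}\wt P(\theta q)/\theta\ge\widehat P(q)$. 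For the reverse, analyse the one--variable convex function $\phi(\theta)=\wt P(\theta q)$: its tangent intercept $\theta\mapsto\phi(\theta)-\theta\phi'(\theta)$ is non-increasing and equals $\wt P(0)=\log\E(N)>0$ at $\theta=0^+$, so either it stays positive on $(0,1]$ (equivalently $q\in J$), and then $\alpha=\nabla\wt P(q)\in I$ realizes $\widehat P(q)=\wt P(q)=\phi(1)$; or it vanishes first at some $\theta_c\in(0,1]$, and then $\theta\mapsto\phi(\theta)/\theta$ attains its infimum over $(0,1]$ at $\theta_c$ while $\alpha_c=\nabla\wt P(\theta_c q)$ lies on the relative boundary of $I$ and realizes $\widehat P(q)=\phi(\theta_c)/\theta_c$ (using the Fenchel equality $\wt P^*(\alpha_c)=\phi(\theta_c)-\theta_c\phi'(\theta_c)=0$); subgradients replace $\nabla\wt P$ at non-differentiability points, and directions $q$ with $\theta q\notin\dom\wt P$ for all $\theta$ are checked directly. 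For 2(b), the lower bound follows at once from $P_n(q)\ge n^{-1}\log\Gamma_n(\alpha,\epsilon)+\langle q|\alpha\rangle-\|q\|\epsilon$, part~1, and taking the supremum over $\alpha\in I$: almost surely $\underline P(q)\ge\widehat P(q)$ for every $q$, in fact simultaneously. The upper bound uses $P_n(q)\le\theta^{-1}P_n(\theta q)$ (the submultiplicativity inequality with $a_u=e^{\langle\theta q|S_nX(u)\rangle}$), together with $\limsup_n P_n(\theta q)\le\wt P(\theta q)$ almost surely whenever $\wt P(\theta q)<\infty$ (Markov--Borel--Cantelli on the first moment); taking the infimum over $\theta$ in a countable dense subset of $(0,1]$ and invoking 2(a) gives $\limsup_n P_n(q)\le\widehat P(q)$. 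Hence $P_n(q)\to\widehat P(q)$ almost surely for each fixed $q$.

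\emph{Parts 2(c), 2(d), and the main difficulty.} Since each $P_n$ is finite convex on $\R^d$, from 2(b) and a countable intersection the sequence $P_n$ converges a.s.\ to $\widehat P$ on a countable dense subset of $\mathrm{ri}(\dom\widehat P)$; the classical theorem on pointwise convergence of convex functions, applied within $\mathrm{aff}(\dom\widehat P)$ (see e.g.\ \cite[Theorem~10.8]{Roc}), upgrades this to $\lim_n P_n=\widehat P$ on all of $\mathrm{ri}(\dom\widehat P)$, and on $\R^d\setminus\overline{\dom\widehat P}$ both sides are $+\infty$ because $\underline P\ge\widehat P$; this gives $\lim_n P_n=\widehat P$ on $\R^d\setminus\mathrm{rel}\,\partial(\dom\widehat P)$. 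At a point $q_0\in\mathrm{rel}\,\partial(\dom\widehat P)$ at which $\overline P$ is lower semi-continuous, approach $q_0$ along a segment from $\mathrm{ri}(\dom\widehat P)$, where $\overline P=\widehat P$, and use that the closed convex function $\widehat P$ is continuous along that segment up to $q_0$ to get $\overline P(q_0)\le\widehat P(q_0)$; with $\underline P(q_0)\ge\widehat P(q_0)$ this gives $\lim_n P_n(q_0)=\widehat P(q_0)$, proving 2(c). For 2(d): if $I$ is compact then $\widehat P(q)\le\log\E(N)+h_I(q)$ with $h_I$ the (everywhere finite) support function of $I$, so $\dom\widehat P=\R^d$ and $\mathrm{rel}\,\partial(\dom\widehat P)=\emptyset$; if $d=1$, $\mathrm{rel}\,\partial(\dom\widehat P)$ consists of at most two finite endpoints, each treated individually --- either $\widehat P=+\infty$ there (handled by $\underline P\ge\widehat P$), or the infimum defining $\widehat P$ there is attained at some $\theta^*\in(0,1]$ with $\theta^*q\in\dom\wt P$, and then $P_n(q)\le\theta^{-1}P_n(\theta q)$ with $\theta\uparrow\theta^*$ along the ray, combined with $\limsup_n P_n(\theta q)\le\wt P(\theta q)$ and lower semicontinuity of $\wt P$ along the ray, forces $\limsup_n P_n(q)\le\widehat P(q)$. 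I expect the main obstacle to be precisely the behaviour on $\mathrm{rel}\,\partial(\dom\widehat P)$, where the simultaneous--in--$q$ convergence can genuinely fail (only $\underline P\ge\widehat P$ survives) and one must isolate the good cases as above; a secondary nuisance is the degenerate situations ($\E(N)=\infty$, or $\dom\wt P$ not full-dimensional), which are absorbed by reducing the dimension $d$ as in the discussion following \eqref{pP} or are trivially consistent. The genuinely deep ingredient, the dimension formula of Theorem~\ref{thm-1.1} valid even on the relative boundary of $I$, is of course taken as given.
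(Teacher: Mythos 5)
Your proof is correct and relies on the same deep input as the paper (the uniform dimension formula $\dim E_X(\alpha)=\wt P^*(\alpha)$ for all $\alpha\in I$, Theorem~\ref{thm-1.2}(1)), but it departs from the paper's route at two places, both in the direction of simplification. For 2(a), the paper first establishes the identity $\widehat P_A(q)=\inf_{0<\theta\le1}\theta^{-1}\wt P_A(\theta q)$ for the everywhere-finite, strictly convex, differentiable truncations $\wt P_A$ of Section~\ref{sec2.1}, and then carries it over to $\wt P$ by a delicate limit argument controlling the minimizers $\theta_A$; you analyse the one-variable convex function $\phi(\theta)=\wt P(\theta q)$ and its non-increasing tangent intercept directly, which bypasses the $A$-limit entirely, at the cost of having to handle carefully the boundary case in which the ray $(0,1]q$ exits $\dom\wt P$ before the derivative-computed intercept reaches zero: there the $\alpha$ realizing $\widehat P(q)$ is not $\nabla\wt P(\theta_{\max}q)$ but a specific element of the ray-shaped subdifferential $\partial\wt P(\theta_{\max}q)$, so the throwaway clause ``subgradients replace $\nabla\wt P$'' must be made to do real work if the proof is to be complete. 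For the lower bound $\underline P\ge\widehat P$ simultaneously in $q$ (the crux of 2(c)), the paper invokes Varadhan's integral lemma on the truncated trees (where $I_A$ is compact, so exponential tightness and the full LDP are available) and then passes to the limit in $A$; you obtain it directly from Part~1 through the elementary minorization $P_n(q)\ge n^{-1}\log\Gamma_n(\alpha,\epsilon)+\langle q\,|\,\alpha\rangle-\|q\|\epsilon$ followed by a supremum over $\alpha\in I$, which is a genuine simplification that removes both the large deviations machinery and the truncation step. The upper bound, the convexity upgrade from pointwise to simultaneous convergence on $\R^d\setminus\mathrm{rel}\,\partial(\dom\widehat P)$, and 2(d) match the paper in substance; for $d=1$, note that the paper simply observes that $\mathrm{rel}\,\partial(\dom\widehat P)$ is a finite set and applies 2(b) pointwise there, which is lighter than the ray argument you sketch, though both are valid.
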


According to the definition given in Section~\ref{foreword}, we say that there is a phase transition at $q\in\R^d$ if there is one at $\theta_c=1$ for the random energy model associated with the branching random walk $\langle q|S_nX\rangle$, whose moment generating function is $\theta\mapsto \wt P(\theta q)$. Equivalently, this corresponds to a linearization of $\widehat P$ at $q$ in the direction of $q$, with first order phase transition if $\widehat P$ is not differentiable at $q$ in the direction of $q$, and second order phase transition if it is.

Theorem~\ref{thm-1.3}(1) is known for $\alpha\in\nabla\wt P(J')$ when $J'$, the set of those points in $J$ such that there exists $\gamma>1$ for which $\E(S(q')^\gamma)<\infty$ in a neighborhood of $q$, is not empty (see \cite{Biggins2,A}); in the case $d=1$ and when $N$ is  bounded, the result follows from \cite{HoWa} (under the assumption that the $X_i$ are independent and bounded) and \cite{BJ} in absence of a first order phase transition. It also  follows from \cite{Biggins2,A} that almost surely  $\lim_{n\to\infty} P_n(q)=\wt P(q)$ for all $q\in J'$. When $d=1$, and $N$ is bounded, Theorem~\ref{thm-1.3}(2)(c) is known in absence of first order phase transition \cite{HoWa,CoKo1992,Franchi,Mol,OW}, and a weak version is obtained in \cite{Mol} in case of first order phase transition is the sense that the convergence is proved to hold along a subsequence (see also  Section~\ref{foreword}).

\medskip Theorems~\ref{thm-1.2} and~\ref{thm-1.3} have the following consequence on the growth of the minimal supporting subtree for the free energy of polymers on disordered trees, which completes the results  obtained in \cite{Moerters-Ortgiese} for regular trees in the case of no phase transition or second order phase transition; in order to simplify the statement, we will assume that $\E (N)<\infty$, i.e. $\widetilde P(0)<\infty$ (see Figure~\ref{Fig2} for an illustration of the free energy behavior in the different cases): 
\begin{thm}\label{thm-1.4}Suppose that $d=1$ and $\E (N)<\infty$. Let $q_c=\sup\{q> 0: \widetilde P(\widetilde P'(q^-))\ge 0\}$, with the convention $\sup\emptyset =0$. With probability~1:
\begin{enumerate}
\item If $0< q<q_c$, there are uncountably many trees $\widetilde {\TT}\subset \TT$ such that 
$$
\lim_{n\to\infty}  \frac{1}{n}\log  \#\widetilde \TT_n= \widetilde P^*(\widetilde P'(q))\text{ and }
\lim_{n\to\infty} \frac{1}{n}\log\sum_{u\in \widetilde \TT_n}e^{qS_n(u)}=\widetilde P(q);
$$
Moreover, for any sequence $(A_n)_{n\ge 1}$ such that $A_n\subset \TT_n$ for all $n\ge 1$ and $\limsup_{n\to\infty}  \dfrac{1}{n}\log  \#A_n< \widetilde P^*(\widetilde P'(q))$, one has 
\begin{equation}\label{subest}
\limsup_{n\to\infty} \frac{1}{n}\log\sum_{u\in A_n }e^{qS_n(u)}<\widetilde P(q).
\end{equation}
\item If $0<q_c<\infty $ and $\widetilde P'(q_c^-)=\widetilde P(q_c)/q_c$ (second order phase transition), then  there are uncountably many $t\in \partial\TT$ such that for any $q\ge q_c$ and sequence $(A_n)_{n\ge 1}$ such that $A_n\subset \TT_n$ and $t_{|n}\in A_n$ for all $n\ge 1$, one has  
$$
\lim_{n\to\infty} \frac{1}{n}\log\sum_{u\in A_n }e^{qS_n(u)}=  q\widetilde P'(q_c)=\frac{\wt P(q_c)}{q_c}q.
$$
\item   If $0<q_c<\infty $ and $\widetilde P(\widetilde P'(q_c^-))>0$ (first order phase transition), then for all $\alpha \in [\widetilde P'(q_c^-), \widetilde P(q_c)/q_c]$, there are uncountably many trees $\TT_\alpha\subset\TT$ such that $\lim_{n\to\infty} \dfrac{1}{n}\log  \# \TT_{\alpha,n}= \widetilde P^*(\alpha)$ and for any $q\ge q_c$, one has 
$$
\lim_{n\to\infty} \frac{1}{n}\log\sum_{u\in  \TT_{\alpha,n}}e^{qS_n(u)}= \frac{\wt P(q_c)}{q_c}q.
$$
If, moreover, $\widetilde P(q_c)=0$, the same conclusion as in 2. holds. 

\item If $q_c=0$ (degenerate case), then for all $\alpha \in \R_+$, there are uncountably many trees $\TT_\alpha\subset\TT$ such that $\lim_{n\to\infty} \dfrac{1}{n}\log  \# \TT_{\alpha,n}= \widetilde P(0)$ and for any $q\ge q_c$, one has 
$$
\lim_{n\to\infty} \frac{1}{n}\log\sum_{u\in  \TT_{\alpha,n}}e^{qS_n(u)}=\wt P(0)+\alpha q.
$$
\end{enumerate}
\end{thm}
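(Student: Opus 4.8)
The plan is to derive Theorem~\ref{thm-1.4} from Theorems~\ref{thm-1.2} and~\ref{thm-1.3}, together with the subtree construction underlying the proofs of Theorems~\ref{thm-1.1}--\ref{thm-1.2}; the only genuinely new ingredient is a Varadhan-type estimate yielding the strict inequality~\eqref{subest}. Since $q_c$ is invariant and all the identities in the statement are covariant under the translations $X_i\mapsto X_i-a$, we may work under the standing normalisation $0\in\widering I$, and in case~(4), after possibly one further translation by an element of $\widering I$, that $\widetilde P^*(\alpha)=\widetilde P(0)$ for all $\alpha\ge 0$ (this is what $q_c=0$ amounts to). We shall use the following deterministic facts, consequences of the strict convexity of $\widetilde P$ on $\dom\widetilde P$ and of Theorem~\ref{thm-1.3}(2)(a): the map $r\mapsto\widetilde P^*(\widetilde P'(r))=\widetilde P(r)-r\widetilde P'(r)$ is strictly decreasing on $(0,\infty)$, so $r\mapsto\widetilde P(r)/r$ is decreasing on $(0,q_c)$ and increasing on $(q_c,\infty)$; substituting $r=\theta q$ in Theorem~\ref{thm-1.3}(2)(a) gives $\widehat P(q)=q\inf_{0<r\le q}\widetilde P(r)/r$, which equals $\widetilde P(q)$ for $0<q\le q_c$ and equals $q\,\widetilde P(q_c)/q_c$ for $q\ge q_c$; and when $q_c<\infty$ the infimum defining $\widetilde P^*(\alpha)$ for $\alpha\ge\widetilde P'(q_c^-)$ is attained at the right endpoint $q_c$ of $\dom\widetilde P$, so $\widetilde P^*(\alpha)=\widetilde P(q_c)-q_c\alpha$ on $[\widetilde P'(q_c^-),\infty)$ — in particular $\widetilde P^*$ vanishes at $\alpha=\widetilde P(q_c)/q_c$, is positive on $[\widetilde P'(q_c^-),\widetilde P(q_c)/q_c)$, and this whole interval lies in $I$.

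For the upper bounds, any $(A_n)$ with $A_n\subset\TT_n$ satisfies $n^{-1}\log\sum_{u\in A_n}e^{qS_nX(u)}\le P_n(q)$, and by Theorem~\ref{thm-1.3}(2)(d) (this is where $d=1$ enters) almost surely $P_n\to\widehat P$ on $\R$ simultaneously for all $q$; with the formula for $\widehat P$ this gives the ``$\le$'' halves in cases~(2) and~(3). For the lower bounds I invoke the construction behind Theorems~\ref{thm-1.1}--\ref{thm-1.2}: for $\alpha\in I$ with $\widetilde P^*(\alpha)>0$ it produces, uncountably many at once, mutually singular inhomogeneous Mandelbrot measures carried by $E_X(\alpha)$, each supported on a random subtree $\widetilde\TT\subset\TT$ along which $n^{-1}S_nX(u)\to\alpha$ uniformly over $u\in\widetilde\TT_n$ and $n^{-1}\log\#\widetilde\TT_n\to\widetilde P^*(\alpha)$; and for $\alpha\in I$ with $\widetilde P^*(\alpha)=0$, Theorem~\ref{thm-1.2}(2) still gives that $E_X(\alpha)$ is uncountable. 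In cases~(1) (take $\alpha=\widetilde P'(q)$, so $\widetilde P^*(\alpha)>0$ since $r\mapsto\widetilde P^*(\widetilde P'(r))$ is strictly decreasing and nonnegative at $q_c>q$) and~(4) (any $\alpha\in\R_+$, with $\widetilde P^*(\alpha)=\widetilde P(0)$), such a $\widetilde\TT$ — taken as $\widetilde\TT_n$, resp.\ $\TT_{\alpha,n}$ — satisfies $n^{-1}\log\sum_{u\in\widetilde\TT_n}e^{qS_nX(u)}=n^{-1}\log\#\widetilde\TT_n+q\alpha+o(1)\to\widetilde P^*(\alpha)+q\alpha$ by the uniform speed control, which is $\widetilde P(q)$ in~(1) and $\widetilde P(0)+q\alpha$ in~(4), and the uncountably many choices give the uncountably many trees;~\eqref{subest} is proved below. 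In cases~(2) and~(3) put $\alpha^\ast=\widetilde P(q_c)/q_c$, so $\widetilde P^*(\alpha^\ast)=0$ (and $\alpha^\ast=\widetilde P'(q_c^-)$ in~(2) by hypothesis), and pick $t\in E_X(\alpha^\ast)$ (uncountably many choices). For~(2), and for~(3) when also $\widetilde P(q_c)=0$ (so $\alpha^\ast=0$), any $A_n\ni t_{|n}$ has $n^{-1}\log\sum_{u\in A_n}e^{qS_nX(u)}\ge q\,n^{-1}S_nX(t_{|n})\to q\alpha^\ast=\widehat P(q)$ for all $q\ge q_c$, matching the upper bound. For~(3) with a prescribed $\alpha\in[\widetilde P'(q_c^-),\widetilde P(q_c)/q_c]$, set $\TT_\alpha=\widetilde\TT^{(1)}\cup\{t^\ast_{|n}:n\ge0\}$, where $\widetilde\TT^{(1)}$ is a subtree of growth rate $\widetilde P^*(\alpha)$ carried in $E_X(\alpha)$ as above (reduce to $\{t^\ast_{|n}\}$ alone if $\widetilde P^*(\alpha)=0$) and $t^\ast\in E_X(\alpha^\ast)$; then $n^{-1}\log\#\TT_{\alpha,n}\to\widetilde P^*(\alpha)$ (one branch cannot change the growth rate) while $n^{-1}\log\sum_{u\in\TT_{\alpha,n}}e^{qS_nX(u)}\ge q\,n^{-1}S_nX(t^\ast_{|n})\to q\,\widetilde P(q_c)/q_c=\widehat P(q)$ for $q\ge q_c$, again meeting the upper bound; varying $t^\ast$ gives uncountably many $\TT_\alpha$.

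It remains to prove~\eqref{subest}. Fix $0<q<q_c$, $\alpha=\widetilde P'(q)$, and $(A_n)$ with $c:=\limsup_n n^{-1}\log\#A_n<\widetilde P^*(\alpha)$ (if $c<0$ then $A_n=\emptyset$ for large $n$). Covering a large ball $[-M,M]$ by finitely many $\epsilon$-intervals $B(\beta_j,\epsilon)$, using $\#\{u\in A_n:n^{-1}S_nX(u)\in B(\beta_j,\epsilon)\}\le\min\big(\#A_n,\#\{u\in\TT_n:n^{-1}S_nX(u)\in B(\beta_j,\epsilon)\}\big)$ and Theorem~\ref{thm-1.3}(1) to bound the second quantity by $e^{n(\widetilde P^*(\beta_j)+o(1))}$, the tail bound $\sum_{u\in\TT_n,\,S_nX(u)>nM}e^{qS_nX(u)}\le e^{-(q'-q)nM}\sum_{u\in\TT_n}e^{q'S_nX(u)}$ for a fixed $q'\in(q,q_c)$ (negligible for $M$ large, since $n^{-1}\log\sum_{u\in\TT_n}e^{q'S_nX(u)}=P_n(q')\to\widehat P(q')=\widetilde P(q')<\infty$), and $q>0$ for the opposite tail, one gets $\limsup_n n^{-1}\log\sum_{u\in A_n}e^{qS_nX(u)}\le\sup_{\beta\in\R}\big(\min(c,\widetilde P^*(\beta))+q\beta\big)$. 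Since $c<\widetilde P^*(\alpha)$, the set $\{\beta:\widetilde P^*(\beta)\ge c\}$ is a bounded interval containing $\alpha$ in its interior, with right endpoint $\beta_\sharp>\alpha$; on it $\min(c,\widetilde P^*(\cdot))+q\,\cdot$ equals $c+q\,\cdot$, and off it it equals $\widetilde P^*+q\,\cdot$, which (as $\alpha$ is the unique maximiser of $\widetilde P^*+q\,\cdot$ and $q>0$) stays strictly below $c+q\beta_\sharp$. Hence the right-hand side equals $c+q\beta_\sharp\le\widetilde P^*(\beta_\sharp)+q\beta_\sharp<\widetilde P^*(\alpha)+q\alpha=\widetilde P(q)$, the last step being strict by uniqueness of the maximiser (when $\widetilde P^*(\beta_\sharp)=c$) or because $\widetilde P(q)\ge\widetilde P^*(\beta_\sharp)+q\beta_\sharp$ with $\beta_\sharp\ne\alpha$ (when $\widetilde P^*(\beta_\sharp)>c$). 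This is~\eqref{subest}.

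The step I expect to demand the most care is the one taken for granted here: checking, inside the machinery of the body of the paper, that the inhomogeneous Mandelbrot measures of Theorem~\ref{thm-1.2}(2) are realised on subtrees with the stated \emph{uniform} speed control and exact logarithmic growth rate, and that these subtrees (and the branches used in cases~(2)--(3)) genuinely form an uncountable family; everything else is a routine blend of Theorems~\ref{thm-1.2}--\ref{thm-1.3} with the convexity and large-deviations bookkeeping sketched above.
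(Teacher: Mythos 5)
Your proof follows the same architecture as the paper's (which is only a brief sketch): for each $\alpha\in I$ extract from the inhomogeneous Mandelbrot measures of Theorem~\ref{thm-1.2} a random subtree $\TT_\alpha$ of logarithmic growth rate $\widetilde P^*(\alpha)$ with uniform speed control $|S_nX(u)-n\alpha|\le n\epsilon_n$, then combine with the large-deviations and free-energy bounds of Theorem~\ref{thm-1.3}. This is exactly what the paper does, and your final paragraph correctly flags the point that requires the most care.

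The substantive difference — and it is genuine value added — is in case~(3). The paper's sketch, read literally, builds for each $\alpha$ one tree with uniform speed control $\alpha$ and then asserts that ``the claims result from a direct estimate of the partition function restricted to $\TT_{\alpha,n}$''. But that direct estimate gives $\widetilde P^*(\alpha)+q\alpha=\widetilde P(q_c)+(q-q_c)\alpha$, which for $q>q_c$ and $\alpha<\widetilde P(q_c)/q_c$ is strictly below the claimed value $\frac{\widetilde P(q_c)}{q_c}q=\widetilde P(q_c)+(q-q_c)\frac{\widetilde P(q_c)}{q_c}$. Your union construction $\TT_\alpha=\widetilde\TT^{(1)}\cup\{t^*_{|n}:n\ge0\}$ with $t^*\in E_X(\widetilde P(q_c)/q_c)$ is precisely what is needed to make the partition function dominated by the single branch at speed $\alpha^*=\widetilde P(q_c)/q_c$ while the bulk $\widetilde\TT^{(1)}$ controls the cardinality; the inequality $(q-q_c)(\alpha^*-\alpha)\ge 0$ makes this work. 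The paper's sketch does not spell this out and taken at face value yields the wrong constant. Your version is the one that actually proves the statement. Likewise, for~\eqref{subest} the paper defers to Moerters--Ortgiese, whereas your Varadhan-type bound via Theorem~\ref{thm-1.3}(1)--(2) and the exponential tail estimate (exploiting $q<q_c$ so that some $q'\in(q,q_c)$ is available on the right, and $q>0$ plus $\widetilde P(0)<\infty$ on the left) is a correct and self-contained replacement. One tiny imprecision in your preliminary remarks: the identity $\widetilde P^*(\alpha)=\widetilde P(q_c)-q_c\alpha$ for $\alpha\ge\widetilde P'(q_c^-)$ relies on $q_c$ being the right endpoint of $\dom\widetilde P$, which is automatic in case~(3) (first-order transition) but need not hold in case~(2); there you only use it at the single point $\alpha^*=\widetilde P'(q_c^-)$, where it holds by the critical-point computation, so nothing breaks.
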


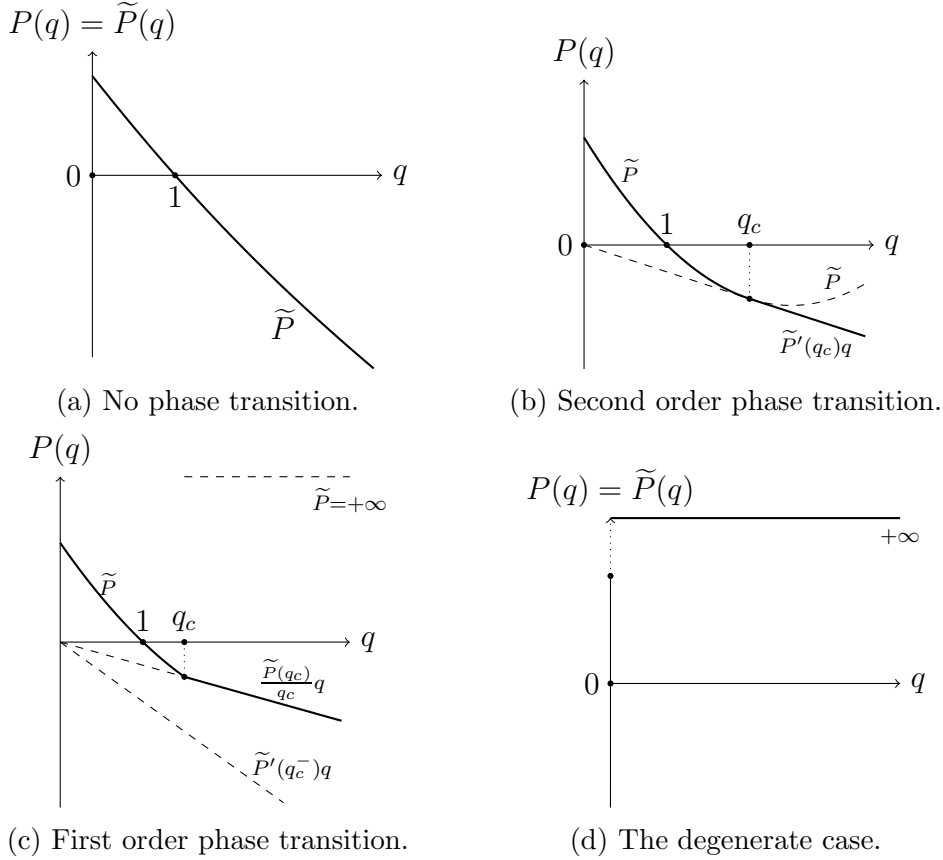
\begin{figure}[ht]
\begin{subfigure}[b]{0.45\textwidth}
\centering
\begin{tikzpicture}[xscale=1.1,yscale=1.1]
\draw [->] (0,-2.2) -- (0,1.5) node [above] {$P(q)=\wt P(q)$};
\draw [->] (0,0) -- (3.5,0) node [right] {$q$};
\draw [thick, domain=0:3.4] plot (\x, {-1.2*(\x-1-ln((1/3)*exp(\x*ln(3/2))+(2/3)*(exp(\x*ln(3/4))))});
\draw [fill] (0,0) circle [radius=0.03] node [left] {$0$};
\draw [fill] (1,0) circle [radius=0.03] node [below] {$1$};
\node at (2.3,-1.8) {$\wt P$};
\end{tikzpicture}
\caption{No phase transition.}\label{fig2a}
\end{subfigure}
\begin{subfigure}[b]{0.45\textwidth}
\centering
\begin{tikzpicture}[xscale=1.1,yscale=1.1]
\draw [->] (0,-1.5) -- (0,2) node [above] {$P(q)$};
\draw [->] (0,0) -- (3.5,0) node [right] {$q$};
\draw [thick, domain=0:2] plot (\x, {1.3*0.25*(\x-4)*(\x-1)});
\draw [thick, domain=2:3.4] plot (\x, {-1.3*0.25*\x});
\draw [dashed,domain=2:3.4] plot (\x, {1.3*0.25*(\x-4)*(\x-1)});
\draw [dashed,domain=0:2] plot (\x, {-1.3*0.25*\x});
\draw [fill] (0,0) circle [radius=0.03] node [left] {$0$};
\draw [fill] (2,0) circle [radius=0.03] node [above] {$q_c$};
\draw [fill] (1,0) circle [radius=0.03] node [above] {$1$};
\draw [fill] (2,-1.3*0.5) circle [radius=0.03];
\draw [dotted] (2,-1.3*0.5) -- (2,0);
\node at (2.8,-1.2) {${\scriptstyle\wt P'(q_c) q}$};
\node at (0.55,0.9) {${\scriptstyle\wt P}$};
\node at (3,-.4) {${\scriptstyle\wt P}$};
\end{tikzpicture}
\caption{Second order phase transition.}\label{fig2b}
\end{subfigure}
\begin{subfigure}[b]{0.45\textwidth}
\centering
\begin{tikzpicture}[xscale=1.1,yscale=1.1]
\draw [->] (0,-2) -- (0,2) node [above] {$P(q)$};
\draw [->] (0,0) -- (3.5,0) node [right] {$q$};
\draw [thick, domain=0:1.5] plot (\x, {1.2*0.2*(\x-5)*(\x-1)});
\draw [thick, domain=1.5:3.4] plot (\x, {-1.2*(0.7/3)*\x});
\draw [dashed, domain=0:1.5] plot (\x, {-1.2*(0.7/3)*\x});
\draw [dashed, domain=0:2.7] plot (\x, {-1.2*0.6*\x});
\draw [fill] (1.5,0) circle [radius=0.03] node [above] {$q_c$};
\draw [fill] (1,0) circle [radius=0.03] node [above] {$1$};
\draw [fill] (1.5,-1.2*0.35) circle [radius=0.03];
\draw [dotted] (1.5,0) -- (1.5,-1.2*0.35);
\draw [dashed] (1.5,2) -- (3.5,2) node [below] {${\scriptstyle \wt P=+\infty}$};
\node at (0.57,0.7) {${\scriptstyle\wt P}$};
\node at (2.8,-1.5) {${\scriptstyle\wt P'(q_c^-)q}$};
\node at (2.8,-0.45) {${\scriptstyle\frac{\wt P(q_c)}{q_c}q}$};
\end{tikzpicture}
\caption{First order phase transition.}\label{fig2c}
\end{subfigure}
\begin{subfigure}[b]{0.45\textwidth}
\centering
\begin{tikzpicture}[xscale=1.1,yscale=1.1]
\draw [dotted,->] (0,1) -- (0,2) node [above] {$P(q)=\wt P(q)$};
\draw (0,-1.5) -- (0,1.3);
\draw [->] (0,0) -- (3.5,0) node [right] {$q$};
\draw [thick] (0,2) -- (3.5,2) node [below] {${\scriptstyle +\infty}$};
\draw [fill] (0,1.3) circle [radius=0.03] node [left] {$$};
\draw [fill] (0,0) circle [radius=0.03] node [left] {$0$};
\draw [dotted] (0,-1) -- (0,-1.5);
\end{tikzpicture}
\caption{The degenerate case.}\label{fig2d}
\end{subfigure}
\caption{The free energy function $P=\lim_{n\to\infty}P_n$ over $\R_+$  in the four possible situations when $\wt P(0)=\log \E(N)<\infty$, and in the three first cases under the normalization $(\wt P(1)=0,\wt P'(1^-)<0$), which is necessary and sufficient for the associated Mandelbrot measure $\mu_1$ to be non degenerate.}\label{Fig2}
\end{figure}

We notice that in \cite{Moerters-Ortgiese}, parts 1. and 2. of the previous statement are obtained only for each fixed $0<q <q_c$ almost surely, and that the approach used there provides at most countably rather than uncountably many minimal supporting trees.

Before stating our result on Mandelbrot measures, we give some examples and other comments.

\subsection*{Examples and additional comments}  Let us  now describe  some explicit  situations. The case of a first order phase transition will be illustrated in the section dedicated to Mandelbrot measures. 
\medskip

\noindent
(1) ($I=\R^d$) This is the case when $\dom  \wt P=\{0\}$ or $\dom \wt P=\emptyset$. Then $I=\R^d$, and almost surely, for any  closed connected subset $K$ of ${\R^d}$, one has $\dim K=\Dim K=\dim\partial T=\widetilde P(0)$. 

\medskip

\noindent
(2) ($I$ is a half space). If $\dom  \wt P=\{q_0\}$ with $q_0\neq0$, then $I$ is the half-space $\{\alpha\in\R^d: \wt P(q_0)\ge \langle q_0|\alpha\rangle\}$,  and $\wt P^*(\alpha):  \wt P(q_0)-\langle q_0|\alpha\rangle$ for all $\alpha\in I$. In that case $\dim \partial \TT=\infty$ almost surely, and one meets all the possible cases  considered in Theorem~\ref{thm-1.1} regarding the dimensions of the sets $E_X(K)$. Here is an example.

For $k\ge 2$, let $p_k$ and $f(k)$ be defined by 
$$
\begin{cases}
\displaystyle p_{k}= \frac{c}{k^2} \text{ and } e^{f(k)}=\frac{1}{\log (k)  (\log\log (k))^2}&\text{if $k$ is even}\\
\displaystyle p_k= \frac{c}{k^2 (\log k)^2 (\log\log (k))^2}\text{ and } e^{f(k)}=\log (k) &\text{otherwise},
\end{cases}
$$
where $c$ is chosen so that $\sum_{k\ge 2} p_k=1$. 

Let $N$ be a random integer of distribution $\sum_{k \ge 2} p_k\delta_k$. Fix $Y_2,\ldots, Y_{d}$, $d-1$  independent copies of a real valued random variable $Y$ such that $\E(e^{qY})<\infty$ at $q=0$ only. Then, for all $i\ge 1$, let $X_i=(f(N),Y_2,\cdots,Y_{d})$. We have $\wt P(q)= \log (\E(N\exp(q_1f(N))) +\sum_{s=2}^{d} \log (\E(e^{q_sY}))$, which by construction is finite at $(1,0\ldots,0)$ only.

\medskip

\noindent
(3) ($I$ is compact). Suppose the $X_i$  are identically distributed Gaussian vectors independent of $N$, and $\E(N)<\infty$.  In this case, denoting by $\lambda_1,\ldots, \lambda_d$ the eigenvalues of  covariance matrix common to the $X_i$, in a suitable orthonormal basis we have $\displaystyle \wt P(q)=\log (\E (N))+ \sum_{i=1}^d\frac{\lambda_i q_i^2}{2}$,  and $\displaystyle\wt P^*(\alpha)=\log (\E (N))-\sum_{i=1}^d\frac{\alpha_i^2}{2\lambda_i}$; hence $I$ is the convex hull of  the ellipsoid $(\wt P^*)^{-1}(0)$.  

\medskip

\noindent
(4) ($I$ is not bounded, distinct from $\R^d$ and $\Dim\partial \TT<\infty$) Consider for the $X_i$ random vectors with $d$ i.i.d.  negative $\beta$-stable components  ($\beta\in(0,1)$), and independent of $N$.  There exists $c>0$ such that $\wt P(q)= \log (\E(N))-c \sum_{i=1}^d q_i^\beta$ for $q\in [0,\infty)^d$ and $\wt P(q)=\infty$ for $q\in\R^d\setminus [0,\infty)^d$. We have the explicit expression $\displaystyle \wt P^*(\alpha)= \log (\E(N)) -\frac{c(1-\beta)}{(c\beta)^{\beta/(\beta-1)}}\sum_{i=1}^d \alpha_i^{\beta/(\beta-1)}$ for $\alpha\in (0,\infty)^d$, and $\displaystyle \wt P^*(\alpha)=\infty$ for $\alpha\in \R^d\setminus (0,\infty)^d$, and $I$ is  unbounded. Moreover, it is easily checked that $I=\overline{\nabla \wt P(J)}$  (see \eqref{J} for the definition of $J$) and $\wt P^*$ is strictly concave in this case. 

However, $I$ can be unbounded, distinct from $\R^d$, and $\wt P^*$ constant over a non trivial convex subset of $I$. To be simple let $d=1$. Suppose that all the $X_i$ are identically distributed with $X$ and independent of $N$, and that $e^X$ has the density  $c \mathbf{1}_{\{u\ge 1\}} \frac{\mathrm{d}u}{u\log (u)^3}$. Then $P(q)$ is finite over $(-\infty,0]$, infinite over $(0,\infty)$, and since $\wt P'(0^-)<\infty$, we have $\wt P^*(\alpha)=\wt P(0)$ for all $\alpha\ge \wt P'(0^-)$. In this case, $I\supsetneq\overline{\nabla \wt P(J)}$. 

\medskip

\noindent
(5) If $\mathrm{int}(\dom \wt P)\neq\emptyset$,  the set $I$ is unbounded only if $0$ is not an interior point of ${\dom \wt P}$. Indeed, if $(\alpha_n)_{n\in\N}\in I^\N$ tends to $\infty$, at least one of its components tends to $\infty$ in absolute value. Without loss of generality assume it is the first one, and let $e_1$ stand for the first vector of the canonical basis. If $0$ is interior to $\dom \wt P$, then we can find $\epsilon_1>0$ such that both $\wt P(-\epsilon_1e_1)$ and  $\wt P(\epsilon_1 e_1)$ are finite, and one of the sequences $\wt P(-\epsilon_1e_1)-\langle -\epsilon_1e_1|\alpha_{n}\rangle$ or  $\wt P(\epsilon_1e_1)-\langle \epsilon_1e_1|\alpha_{n}\rangle$ must have a subsequence converging to $-\infty$. However, $\wt P^*(\alpha_{n})\le \min (\wt P(-\epsilon_1e_1)-\langle -\epsilon_1e_1|\alpha_{n}\rangle,\wt P(\epsilon_1e_1)-\langle \epsilon_1e_1|\alpha_{n}\rangle)$, so $\wt P^*(\alpha_n)<0$ for $n$ large enough,  which contradicts the fact that $(\alpha_n)_{n\in\N}\in I^\N$.

\medskip

\noindent
(6) Let us give other  consequences of our study. The first one concerns  the branching process itself: The previous results apply to  the natural branching random walk associated with the branching numbers, namely $S_nN(t)=N_{t_1}+N_{t_1t_2}+\cdots+N_{t_1\cdots t_n}$ and provide, if $N$ is not constant, geometric and large deviations information on the heterogeneity of the birth process along  different infinite branches. 

Information can also be derived for the branching random walk obtained from an homogeneous percolation process: the $X_{u}$, $u\in \bigcup_{n\ge 1}\N_+^n$ are independent copies of the same Bernoulli variable, and are independent of $\partial \TT$. The branching random walk $S_nX(t)$ must be interpreted as the  covering number of $t$ by the family of balls $[u]$ of generation not greater than $n$  such that $X_u=1$. Here our results cover and improve those of \cite{FK} about the random covering of dyadic tree, which only achieves the  multifractal analysis in the weak sense ``for each $\alpha\in\widering I$ almost surely''; however it is worth mentioning that in \cite{FK}  inhomogeneous Mandelbrot martingales are built individually to describe some fine subsets of the sets $E_X(\alpha)$ and show that the set of those points $t$ for  which $S_n(t)/n$ does not converge is almost surely of full Hausdorff dimension.  

Finally, our results also provide for instance a joint multifractal analysis of $S_nN(t)$ and $S_nX(t)$. 

\medskip

\noindent
(7) It follows from Theorems~\ref{thm-1.2} and \ref{thm-1.3} that $\dim E_X(\alpha)=\Dim E_X(\alpha)=\overline P^*(\alpha)$ almost surely for all $\alpha\in I$. In this sense the vector multifractal formalism considered in \cite{Pey4} is fulfilled by $S_nX(t)$. 

\medskip

\noindent
(8) Theorem~\ref{thm-1.1} should be compared to the results of \cite{FW} and \cite{Ol,Ol2}, and  Theorems~\ref{thm-1.2}(1) and~\ref{thm-1.3} to the results of \cite{FF}, obtained in the context of Birkhoff averages of continuous potentials  over a symbolic space endowed with the standard metric. However, in this context the set of possible levels for the averages is always compact. One meets unbounded such sets when one works on topological dynamical systems encoded by symbolic spaces over infinite alphabets \cite{Iommi}. However, in this last context, to our best knowledge only the multifractal analysis has been considered, and for $1$-dimensional potentials. 

\subsubsection{Application to Mandelbrot measures} Suppose that $d=1$,  $\dom \wt P$ contains a neighborhood of $0$,  and  $(N,X_1,X_2,\ldots)$ is normalized so that 
\begin{equation}\label{defMM}
\wt P(1)=0,\ \wt P'(1^-)<0, \ \text{and } \E\big(S(1) \log^+(S(1))\big)<\infty
\end{equation}
In particular $0<\wt P(0)<\infty$ . Also the fact that $0$ is in the interior of $\dom \wt P$ implies that $I$ is compact (see comment (5) above). Conditions \eqref{defMM} are the necessary and sufficient conditions for the Mandelbrot measure defined by \eqref{defmuq} in the case $q=1$ to be almost surely positive. We simply denote this measure $\mu_1$ by $\mu$.  

Recall that the interval $J$ is defined in \eqref{J}. Suppose in addition that $\P(N\ge 2)=1$ (this gives $\E(\|\mu\|^q)<\infty$ for all $q\in (\inf (\dom \wt P),0)$; see \cite[Theorem 2.4]{Liu3}), and 
$$
   \E(S(1)^q)<\infty\quad (\forall\, q\in J\cap (1,\infty))
$$
(this gives the finiteness of $\E(\|\mu\|^q)$ for $q\in J\cap (1,\infty)$, see \cite[Theorem 2.1]{Liu2}; notice that $J\cap (1,\infty)$ is empty if and only if $1=\max (\dom \wt P)$).
\begin{thm}\label{MM}
The results of Theorems~\ref{thm-1.1}, \ref{thm-1.2} and \ref{thm-1.3} hold if we replace $S_nX(t)$ by $\log (\mu([t_1\cdots t_n]))$ in the definition of the sets $E_X(K)$ and in the statements. 
\end{thm}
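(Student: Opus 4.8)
\medskip
\noindent\emph{Proof strategy.}
By the normalisation \eqref{defMM} we have $\wt P(1)=0$, hence for every $u\in\TT_n$
\[
\log\mu([u])=S_nX(u)+\log Z(1,u),
\]
and, writing $\mu_{q,n}$ for the Mandelbrot martingale measure of the scalar branching random walk $qS_nX-n\wt P(q)$ (so $\mu_{q,n}([u])=\exp(qS_nX(u)-n\wt P(q))$ and $\mu_q=\lim_n\mu_{q,n}$),
\[
\sum_{u\in\TT_n}\exp\bigl(q\log\mu([u])\bigr)=\sum_{u\in\TT_n}\mu([u])^q=e^{n\wt P(q)}\sum_{u\in\TT_n}\mu_{q,n}([u])\,Z(1,u)^q\qquad(q\in\R).
\]
The plan is to show that $\log Z(1,\cdot)$ is negligible at scale $n$, so that the level sets $E_X(K)$ and the free energy built from $\log\mu([t_{|n}])$ are governed by the very same functions $\wt P$, $\wt P^*$, $\widehat P$ and set $I$ as those built from $S_nX$; the proofs of Theorems~\ref{thm-1.1}--\ref{thm-1.3} then transfer with only notational changes. (The equalities $\dim\partial\TT=\Dim\partial\TT=\wt P(0)=\sup_{\alpha\in I}\wt P^*(\alpha)$ involve neither the walk nor $\mu$ and require nothing new.)

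\medskip
\noindent\emph{Step 1 (lower bounds for the dimensions).}
Since $0\in\mathrm{int}(\dom\wt P)$ and $\P(N\ge2)=1$, the total mass $\|\mu\|$ has a finite negative moment of some order $-s_0<0$, and $\E(\|\mu\|)=1$; hence $\sum_n\P(|\log Z(1)|>\epsilon n)<\infty$ for every $\epsilon>0$. A Borel--Cantelli argument carried along the Peyri\`ere-type measures attached to each of the (inhomogeneous) Mandelbrot measures $\eta$ constructed in the proofs of Theorems~\ref{thm-1.1} and~\ref{thm-1.2} then shows that $n^{-1}\log Z(1,t_{|n})\to0$ for $\eta$-almost every $t$ (using the decorrelation of $Z(1,u)$ from the generation-$\le n$ data that $\eta$ carries, together with the moments of $\|\mu\|$ and the $L^2$ estimates on the subtree factors of $\eta$ granted by the standing hypotheses). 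Consequently each such $\eta$ is carried by the level set of $\log\mu$ associated with the same $K$ (resp. $\alpha$), with unchanged exact dimension; this yields all the lower bounds in Theorems~\ref{thm-1.1} and~\ref{thm-1.2}(1)--(2), the existence of uncountably many mutually singular such measures, and the $\mathcal H^g$ half of the $0$-$\infty$ law in Theorem~\ref{thm-1.2}(3).

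\medskip
\noindent\emph{Step 2 (free energy and upper bounds).}
Whenever $\E(\|\mu\|^q)<\infty$ --- a set of $q$ containing a neighbourhood of $1$, the interval $(\inf\dom\wt P,0)$ (by $\P(N\ge2)=1$) and $J\cap(1,\infty)$ (by $\E(S(1)^q)<\infty$ there) --- conditioning on the $\sigma$-field $\F_n$ generated by the first $n$ generations gives $\E\bigl(\sum_{u\in\TT_n}\mu_{q,n}([u])Z(1,u)^q\,\big|\,\F_n\bigr)=\E(\|\mu\|^q)\sum_{u\in\TT_n}\mu_{q,n}([u])$, so that, combining with the analysis of $\sum_{u\in\TT_n}\mu_{q,n}([u])$ already performed in the proof of Theorem~\ref{thm-1.3} (convergence to $\|\mu_q\|>0$ for $q\in J$, sharper estimates in the linearised regime otherwise) and Borel--Cantelli, $P_n(q)\to\widehat P(q)$ for such $q$. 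For the remaining $q$, and for $\underline P\ge\widehat P$ on all of $\R$, one truncates: $Z(1,u)\wedge A$ for $q>0$ (resp. $Z(1,u)\vee\delta$ for $q<0$) are conditionally i.i.d.\ bounded variables, so Hoeffding's inequality and Borel--Cantelli keep the empirical average of their $q$-th powers over the subtrees $\wt\TT\subset\TT$ built in the proof of Theorem~\ref{thm-1.3} bounded away from $0$; letting $A\to\infty$ (resp. $\delta\to0$) yields $\underline P(q)\ge\widehat P(q)$ with no loss. The matching upper bounds, the large-deviations statement Theorem~\ref{thm-1.3}(1), and the implications (2)(c)--(d) follow exactly as for $S_nX$, the extra factors $Z(1,u)^q$ being absorbed by the above; the $\Dim$ upper bounds in Theorems~\ref{thm-1.1} and~\ref{thm-1.2} and the $\mathcal P^g$ half of the $0$-$\infty$ law follow from $P_n\to\widehat P$ by the same ``easy half'' covering argument used for $S_nX$; and the equivalence $E_X(K)\ne\emptyset\iff K\subset I$, together with the confinement of the limit points of $n^{-1}\log\mu([t_{|n}])$ to $I$, comes from the lower bounds combined with the truncated partition-function estimates.

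\medskip
\noindent\emph{Main obstacle.}
The genuine difficulty is the regime in which $\|\mu\|$ has only finitely many finite moments --- $1$ near or at an endpoint of $\dom\wt P$, or a second- or first-order phase transition --- because then $\sup_{u\in\TT_n}Z(1,u)$ can itself be exponentially large (the $e^{n\wt P(0)}$ nodes probe the heavy right tail of $\|\mu\|$), so \emph{no} uniform bound on $n^{-1}\log Z(1,u)$ is available and one must work throughout with truncated or $\eta$-averaged versions of $Z(1,\cdot)$, and re-check that each inhomogeneous-Mandelbrot-measure construction of Theorems~\ref{thm-1.1}--\ref{thm-1.3} --- which already carries its own subtree/martingale-limit normalisations --- survives multiplication by the additional i.i.d.-type weights $Z(1,u)$, with the negative-$q$ side covered by the negative moments supplied by $\P(N\ge2)=1$. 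This verification is where essentially all the work lies; everything else is a transcription of the arguments already given for $S_nX$.
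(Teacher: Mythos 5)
The high-level route is the same as the paper's (treat $\log\mu([t_{|n}])=S_nX(t)+\log Z(1,t_{|n})$ and show that $\log Z(1,\cdot)$ is a lower-order correction), but your Step~1 has a real gap. You invoke ``the decorrelation of $Z(1,u)$ from the generation-$\le n$ data that $\eta$ carries,'' but the measure $\eta=\mu_\varrho$ is \emph{not} determined by $\mathcal F_n$: the factor $Y(\varrho,u)$ in $\mu_\varrho([u])=\bigl(\prod_{k\le n}W_{\varrho,u_{|k}}\bigr)Y(\varrho,u)$ is itself a function of the subtree rooted at $u$, hence is correlated with $Z(1,u)$. So a Borel--Cantelli bound on $\sum_n\sup_\varrho\mu_\varrho\{t:|\log Z(1,t_{|n})|>n\delta_n\}$ cannot be obtained merely from the $\mathcal F_n$-measurability of the $W$-factors, the negative moments of $\|\mu\|$, and ``$L^2$ estimates on the subtree factors'' treated separately; one must control the \emph{joint} expectation $\E\bigl(\sup_\varrho Y(\varrho,u)\,|\log Z(1,u)|^{\kappa}\bigr)$. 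The paper does exactly this: it strengthens condition~\eqref{control2} on the defining sequence $(L_j)$ so as to absorb the factor $e'_j=\||\log Z|^3\|_{p_j/(p_j-1)}$, uses the resulting H\"older bound $\E\bigl(\sup_\varrho Y(\varrho,u)\,|\log Z(1,u)|^3\bigr)\le C_{\mathcal J}|u|^{\epsilon_{|u|}}$, and then Markov with $\delta_n=n^{-1/2}$ to conclude. Without the re-tuning of $(L_j)$ (which is legitimate since the measures $\mu_\varrho$ are still to be constructed) your Step~1 does not close.

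Your Step~2 is also organized differently from the paper. For the upper bounds and the free energy the paper simply repeats the estimate of Proposition~\ref{upb1}, noting that the extra factor in~\eqref{Pntheta} is $\E(Z^{\theta q})$, and verifies case by case that this is finite: for $q<0$ via $\P(N\ge2)=1$ and \cite{Liu3}; for $0\le q\le 1$ trivially since $\E(Z)=1$; for $q>1$ via the hypothesis $\E(S(1)^q)<\infty$ on $J\cap(1,\infty)$, taking $\theta\uparrow\theta_q$ so that $\theta q$ stays in $J$. Your conditioning-plus-Hoeffding-plus-truncation scheme is heavier machinery for the same conclusion; it may be made to work, but it sidesteps the clean $\E(Z^{\theta q})$ computation and would require separately re-establishing the monotone convergence $\theta\to\theta_q$ that the paper exploits. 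You correctly identify the limited-moment regime as the main obstacle, but the paper's actual remedy is cleaner and more specific than what you sketch: no uniform truncation of $Z$ is needed, only the $L_j$-modification for the lower bound and the $\E(Z^{\theta q})$ finiteness checks for the upper bound.
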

\noindent{\bf Comments.} (1) Under our assumptions, situations presenting a first order phase transition are now covered by our result. A particularly interesting case corresponds to $\max (\dom \wt P)=1$. In this case, the assumption about $q\in J\cap (1,\infty)$ is empty, and  the singularity spectrum of the continuous exact dimensional measure $\mu$, i.e. the mapping $\displaystyle \beta\mapsto \dim E_\mu(\beta)=f(-\beta)$, where $E_\mu(\beta)=\Big \{t\in\partial \TT: \lim_{n\to\infty}\frac{\log (\mu([t_1\cdots t_n]))}{-n}=\beta\Big \}$, has the property to be equal to the identity map over $[0,\dim \mu]$. This is a remarkable phenomenon, since for statistically self-similar measures, spectra whose graph contains a linear part starting from $(0,0)$ are usually associated with discrete measures obtained essentially  after subordinating the indefinite integral of a Mandelbrot measure to a stable L\'evy subordinator \cite{BaSe07}, in which case the slope is equal to the index of the L\'evy process, so is smaller than 1. We notice that the spectrum contains another (decreasing) linear part if a second second order phase transition occurs at $\inf(\dom \wt P)$).

\begin{figure}[ht]
\begin{subfigure}[b]{0.45\textwidth}
\centering
\begin{tikzpicture}[xscale=0.6,yscale=0.6]
\draw [->] (0,-1) -- (0,6) node [above] {$P(q)$};
\draw [->] (-3,0) -- (3.5,0) node [right] {$q$};
\draw [thick,domain=-2:1] plot (\x, {0.25*(\x-4)*(\x-1)});
\draw [thick,domain=-2:-2.6] plot (\x, {-(9/4)*\x)});
\draw [thick] (1,0) -- (3.5,0);
\draw [dashed, domain=-2:0] plot (\x,{(-9/4)*\x});
\draw [fill] (0,0) circle [radius=0.04] node [below left] {$0$};
\draw [fill] (1,0) circle [radius=0.04] node [above] {$1$};
\draw [fill] (-2,0) circle [radius=0.04] node [above] {$q_{-}$};
\draw [fill] (0,4.5) circle [radius=0.04] node [right] {$ P(q_{-})$};
\draw [fill] (0,-1) circle [radius=0.04] node [right] {$-1$};
\draw [fill] (-2,4.5) circle [radius=0.04];
\draw [dotted] (-2,0) -- (-2,4.5);
\draw [dotted] (-2,4.5) -- (0,4.5) ;
\end{tikzpicture}
\caption{The free energy function $P$.}\label{fig3a}
\end{subfigure}
\begin{subfigure}[b]{0.45\textwidth}
\centering
\begin{tikzpicture}[xscale=1.8,yscale=1.8]
\draw [->] (0,0) -- (0,1.5) node [above] {$P^*(-\beta)=\dim E_\mu(\beta)$};
\draw [->] (0,0) -- (3,0) node [right] {$\beta$};
\draw [thick, domain=3/4:9/4] plot (\x, {1-(\x-5/4)^2});
\draw [thick, domain=0:3/4] plot (\x,{\x});
\draw [fill] (0,0) circle [radius=0.02] node [left] {$0$};
\draw [fill] (9/4,0) circle [radius=0.02] node [below] {${\scriptstyle -\wt P(q_{-})}$};
\draw [fill] (5/4,0) circle [radius=0.02] node [below] {${\scriptstyle -\wt P'(0)}$};
\draw [fill] (3/4-.1,0) circle [radius=0.02] node [below] {${\scriptstyle -\wt P'(1^-)}$};
\draw [fill] (0,3/4-.1) circle [radius=0.02] node [left] {${\scriptstyle -\wt P'(1^-)}$};
\draw [fill] (0,1) circle [radius=0.02] node [left] {$\scriptstyle \wt P(0)$};
\draw [fill] (5/4,1) circle [radius=0.02];
\draw [fill] (3/4-.1,3/4-.1) circle [radius=0.02];
\draw [dotted] (5/4,0) -- (5/4,1) -- (0,1);
\draw [dotted] (0,3/4-.1) -- (3/4-.1,3/4-.1) -- (3/4-.1,0);
\end{tikzpicture}
\caption{The singularity spectrum of $\mu$.}\label{fig3b}
\end{subfigure}

\caption{Multifractal nature of the Mandelbrot measure $\mu$ when  a second order phase transition occurs at $q_->0$ and a first order phase transition occurs at $q_+=1$.}\label{Fig3}
\end{figure}
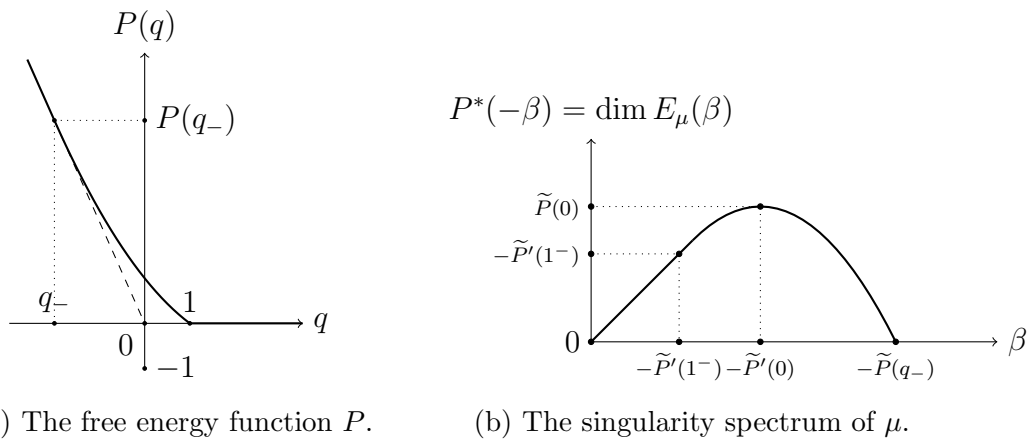

\medskip

\noindent (2) Let us explain how to build examples with a first order phase transition. To simplify the purpose, we suppose that $N=2$ almost surely. Let $Z$ be a real valued random variable such that  for $q>0$, $\mathbb E(e^{qZ})<\infty$ if and only if  $q\le 1$, and there exists $\epsilon>0$ such that $\mathbb E(e^{-\epsilon Z})<\infty$. Consider $(Z_i)_{i\ge 1}$ a collection of independent copies of $Z$, $\theta$ a Poisson random variable of positive expectation, and set $Y=Z_1+\cdots +Z_\theta$.   By construction, the law of $Y$ is infinitely divisible and $Y$ inherits the property that  for $q>0$, $\mathbb E(e^{qY})<\infty$ if and only if  $q\le 1$, and  $\mathbb E(e^{-\epsilon Y})<\infty$. Then let $Y_1$ and $Y_2$ be independent copies of $Y$. By adding the same  constant to $Y_1$ and $Y_2$ if necessary, we may assume that  $\E(e^{Y_i})=1$. For each $k\ge 1$, write $Y_i=X^{(k)}_{i,1}+\cdots +X^{(k)}_{i,1}$ where the $X^{(k)}_{i,j}$ are i.i.d. By construction, we have $\mathbb E(e^{X^{(k)}_{i,j}})=1$ and $\mathbb E(e^{X^{(k)}_{i,j}}X^{(k)}_{i,j})=k^{-1}\mathbb E(e^{Y}Y)$. Consequently, taking $X_i=X^{(k)}_{i,1}-\log(2)$ and $X=(X_1,X_2)$, we have $\widetilde P(-\epsilon)<\infty$, $\widetilde P(1)=0$, $\wt P'(1)=\mathbb E(e^{X^{(k)}_{1,1}}X^{(k)}_{1,1})-\log(2)$, which is negative for $k$ large enough, and $\wt P(q)=\infty$ for $q>1$.

\medskip

\noindent (3) When $N$ is constant and equal to a fixed integer of the form $b^m$ with $b\ge 2$, it is convenient to replace $d(s,t)$ by $b^{-|s\land t|}$ in accordance with the natural geometric realization of the measure $\mu$ on $[0,1]^m$. Then all the dimensions must be divided by $\log (b)$.  

\medskip

\noindent (4) In this section we discarded the case where $N=1$ with positive probability. This situation creates additional difficulties due to a more delicate control of the moments of negative orders of $\|\mu\|$. In \cite{Moerters} the special situation of the branching measure is considered and partial results are obtained for the multifractal analysis.  It would be interesting to try combining our  method with some ideas in  \cite{Moerters}  to deal with the general case.

\bigskip

Section~\ref{sec2} is dedicated to the proof of Theorem~\ref{thm-1.1}, and Section~\ref{other proofs} to those of Theorems \ref{thm-1.2} and \ref{thm-1.3} in the case $\dom\wt P\neq\emptyset$, as well as to the proof of Theorem~\ref{thm-1.4}. Then in Section~\ref{domvide} we explain how to extend Theorems \ref{thm-1.2} and \ref{thm-1.3} in the case $\dom\wt P=\emptyset$. Finally, Section~\ref{MMM} establishes Theorem~\ref{MM}. 

In a companion paper we will extend our results on sets $E_X(K)$ to the case where $\partial \TT$ is endowed with a  metric naturally built from a positive branching random walk. This requires additional technicalities, and the results take a slightly different form, but the main ideas  follow those developed in the present work.

\section{Dimensions of general level sets. Proof of Theorem~\ref{thm-1.1} when $\dom \wt P\neq\emptyset$}\label{sec2}

In this section, we assume that $\dom \wt P^*\neq\emptyset$. Hence $\wt P$ is a proper lower semi-continuous convex function, and $\wt P^*$ a proper upper semi-continuous function. However, the results of Sections~\ref{upb} and \ref{uppack} are trivially  valid without this assumption. 

\subsection{Approximations of $\wt P$}\label{sec2.1}

Recall that the extinction probability of a Galton-Watson process is the smallest fixed point of its generating function. 

For each $A\in\N_+$ let  $N_A=\sum_{i=1}^{N\land A} \mathbf{1}_{\{\|X_i\|\le A\}}$. 

\begin{prop}\label{extinction}
The extinction probability of a Galton-Watson process with offspring distribution given by the law of $N_A$ converges to $0$ as $A\to\infty$.  
\end{prop}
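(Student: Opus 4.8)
The plan is to produce an explicit upper bound on the extinction probability $q_A$ of the Galton-Watson process whose offspring law is that of $N_A$, namely $q_A\le \mathbb{P}(N_A=0)/\mathbb{P}(N_A\ge 2)$, and then to let $A\to\infty$.

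First I would record the elementary facts that make everything work. Since $N\in\N_+$ is a.s.\ finite and each $X_i$ has a.s.\ finite norm, the sequence $(N_A)_{A\ge 1}$ is nondecreasing in $A$ and $N_A\uparrow N$ almost surely. Consequently $\mathbb{P}(N_A=0)\downarrow\mathbb{P}(N=0)=0$ (recall $\mathbb{P}(N\ge 1)=1$), $\mathbb{P}(N_A\ge 2)\uparrow\mathbb{P}(N\ge 2)$, and, since $N_A\le A<\infty$, monotone convergence gives $\mathbb{E}(N_A)\uparrow\mathbb{E}(N)>1$. Two consequences will be used: on the one hand $\mathbb{P}(N\ge 2)>0$, for otherwise $N=1$ a.s.\ (using $\mathbb{P}(N\ge 1)=1$) and $\mathbb{E}(N)=1$, contradicting $\mathbb{E}(N)>1$; on the other hand, for every $A$ large enough $1<\mathbb{E}(N_A)<\infty$, so the $N_A$-process is strictly supercritical, whence $q_A<1$ and also $\mathbb{P}(N_A\ge 2)>0$ for such $A$.

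Next I would exploit that $q_A$ is a fixed point of the offspring generating function $f_A(s)=\mathbb{E}(s^{N_A})$ (recalled just before the statement). For $s\in[0,1]$ one has $s^k\le s^2$ for every integer $k\ge 2$, hence
$$f_A(s)\le \mathbb{P}(N_A=0)+s\,\mathbb{P}(N_A=1)+s^2\,\mathbb{P}(N_A\ge 2).$$
Taking $s=q_A=f_A(q_A)$ and substituting $\mathbb{P}(N_A=1)=1-\mathbb{P}(N_A=0)-\mathbb{P}(N_A\ge 2)$, the inequality rearranges to
$$q_A\,\mathbb{P}(N_A\ge 2)\,(1-q_A)\le \mathbb{P}(N_A=0)\,(1-q_A),$$
and dividing by $1-q_A>0$ (legitimate for $A$ large by the previous step) yields $q_A\le \mathbb{P}(N_A=0)/\mathbb{P}(N_A\ge 2)$. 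Letting $A\to\infty$, the right-hand side tends to $0/\mathbb{P}(N\ge 2)=0$, which is the claim.

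I do not expect a genuine obstacle; the only slightly delicate points are bookkeeping: securing $q_A<1$ for large $A$ so that the final division is allowed (handled via $\mathbb{E}(N_A)\uparrow\mathbb{E}(N)>1$ together with the classical fact that a supercritical Galton-Watson process has extinction probability $<1$), and noting $\mathbb{P}(N\ge 2)>0$. A longer alternative would be to observe that $q_A$ is nonincreasing in $A$ (because $f_A\ge f_{A+1}$ on $[0,1]$), hence $q_A\downarrow q_\infty$, and to pass to the limit in $q_A=f_A(q_A)$ using that $f_A\to\mathbb{E}(s^{N})$ pointwise on $[0,1)$ with a uniform Lipschitz bound on $[0,q_{A_0}]$ for some fixed $A_0$; this would show $q_\infty$ is a fixed point of $s\mapsto\mathbb{E}(s^{N})$ lying in $[0,1)$, hence $q_\infty=0$ since the $N$-process has extinction probability $0$. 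The direct quadratic bound is the cleaner route.
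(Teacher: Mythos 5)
Your proof is correct, but it takes a genuinely different route from the paper's. The paper argues \emph{qualitatively}: since $\mathbb P(N_A\ge k)\nearrow\mathbb P(N\ge k)$, the generating functions $g_A$ converge uniformly to $g$ on $[0,1]$; by strict convexity of $g$ (a consequence of $\E(N)>1$) its only fixed points are $0$ and $1$, and since the extinction probabilities $q_A$ are bounded away from $1$ for $A$ large, the uniform convergence forces $q_A\to 0$. Your argument is \emph{quantitative}: from the quadratic bound $f_A(s)\le \mathbb P(N_A=0)+s\,\mathbb P(N_A=1)+s^2\,\mathbb P(N_A\ge 2)$ and the fixed-point equation $q_A=f_A(q_A)$, you extract the explicit inequality $q_A\le \mathbb P(N_A=0)/\mathbb P(N_A\ge 2)$ and then let $A\to\infty$ using the monotone limits $\mathbb P(N_A=0)\downarrow 0$ and $\mathbb P(N_A\ge 2)\uparrow\mathbb P(N\ge 2)>0$. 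Both proofs hinge on $N_A\uparrow N$, $\mathbb P(N\ge 1)=1$, and $\E(N)>1$, and both need a small side check that $q_A<1$ for $A$ large (which you supply via $\E(N_A)\uparrow\E(N)>1$). What yours buys is an explicit upper bound on $q_A$ with no appeal to uniform convergence of generating functions or a compactness/monotonicity argument for the fixed points; what the paper's buys is immediate generality of the limit-of-fixed-points principle. Both are equally rigorous; your algebraic manipulation $q_A\mathbb P(N_A\ge 2)(1-q_A)\le\mathbb P(N_A=0)(1-q_A)$ checks out, and the division by $1-q_A$ is justified exactly as you say. (Minor remark: the paper's sentence ``$\mathbb P(N_A=0)$, the smallest fixed point of $g_A$'' appears to be a typo — it is the extinction probability, not $\mathbb P(N_A=0)=g_A(0)$, that is the smallest fixed point — but that does not affect the substance of either argument.)
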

\begin{proof} Since $N_A\nearrow N$, we have $ \P(N_A\ge k)\nearrow\P(N\ge k)$ for all $k\in\N$. This implies that   the generating function of $N_A$, namely $g_{A}=\sum_{k=0}^\infty \P(N_A=k)t^k=1-\sum_{k=1}^\infty \P(N_A\ge k)(t^{k-1}-t^{k})$ converges uniformly to  the generating function $g$ of $N$ as $A\to\infty$ over $[0,1]$. Moreover, our assumption $\mathbb E(N)>1$ implies that $g$ is strictly convex, hence has only $0$ and $1$ as fixed points. Consequently, $\P(N_A=0)$, the smallest fixed point of $g_A$, converges to that of $g$, i.e. 0. 
\end{proof}
Let $\wt A\in\N_+$ such that for all integers $A\ge \wt A$ we have $\E(N_A)>1$. Then for all  integers $A\ge \wt A$ set
\begin{equation}\label{wtPA}
\wt P_A:q\in\R^d\mapsto \displaystyle \log \mathbb{E}(S_A(q))\in \R,\text{ where } S_A(q)=\sum_{i=1}^{N\land A} \mathbf{1}_{\{\|X_i\|\le A\}}\exp (\langle q|X_i\rangle).
\end{equation}
Also, let $I_A=\{\alpha\in\R^d: \widetilde P_A^*(\alpha)\ge 0\}$ and  $J_A=\{q\in\R^d: \widetilde P_A^*(\nabla \widetilde P_A(q))>0\}$.

\begin{prop}\label{detI} 
\begin{enumerate}
\item  For all integers $A\ge \wt A$ the set $I_A$ is  convex, compact, non-empty, and included in $I$. 

\item There exists $A_0\ge \wt A$ such that for all $A\ge A_0$, $\wt P_A$ is strictly convex, $\widering I_A= \nabla \widetilde P_A (J_A)$, hence $I_A=\overline{ \{\nabla \widetilde P_A(q): q\in J_A\}}$. 
\end{enumerate}
\end{prop}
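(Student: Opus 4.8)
The plan is to first establish part (1), which is essentially soft, and then part (2), which is where the real work lies. For part (1), note that $\wt P_A$ is finite everywhere on $\R^d$ since $S_A(q) \le A e^{A\|q\|}$, so $\wt P_A$ is a finite convex function, hence continuous, and $\wt P_A^*$ is a proper upper semi-continuous concave function with $\wt P_A^*(\nabla \wt P_A(0)) = \wt P_A(0) - \langle \nabla\wt P_A(0)|\nabla\wt P_A(0)\rangle$; more to the point, $\wt P_A^*$ attains its maximum value $\wt P_A(0) = \log \E(N_A) > 0$ (for $A \ge \wt A$), so $I_A = \{\wt P_A^* \ge 0\}$ is non-empty. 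Since $\wt P_A$ is finite on all of $\R^d$, a standard fact (Rockafellar, \cite{Roc}) gives that $\wt P_A^*$ is coercive, i.e. $\wt P_A^*(\alpha) \to -\infty$ as $\|\alpha\| \to \infty$, whence $I_A$ is bounded; being a sublevel set of an upper semi-continuous concave function it is also closed and convex, hence compact and convex. Finally $I_A \subset I$: since $S_A(q) \le S(q)$ pointwise we have $\wt P_A \le \wt P$, so $\wt P_A^* \le \wt P^*$ (Legendre transform reverses the pointwise order), and therefore $\{\wt P_A^* \ge 0\} \subset \{\wt P^* \ge 0\}$.

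For part (2), I would argue as follows. First, strict convexity: property \eqref{pP} says no strict affine hyperplane contains all the $X_i$ a.s.; I would show that for $A$ large enough the truncated vectors $\mathbf 1_{\{\|X_i\|\le A\}} X_i$, $1 \le i \le N\wedge A$, also fail to lie in a common strict affine hyperplane (with positive probability of having at least two of them off any fixed hyperplane — which one gets from Proposition~\ref{extinction} together with \eqref{pP}, since the non-degeneracy of $X$ survives truncation once $A$ is large). Then the argument of Remark~\ref{sc} applies verbatim to $\wt P_A$ on $\R^d = \dom \wt P_A$, giving strict convexity, and hence $\nabla \wt P_A$ is injective and $\nabla \wt P_A$ is continuous (gradient of a finite differentiable convex function). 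The identity $\widering I_A = \nabla \wt P_A(J_A)$ then follows from convex duality: for $q \in J_A$ one has, by definition of $J_A$ and the fact that $\wt P_A^*(\nabla\wt P_A(q)) = \wt P_A(q) - \langle q|\nabla\wt P_A(q)\rangle$, that $\nabla\wt P_A(q)$ lies in $\{\wt P_A^* > 0\}$; conversely, if $\alpha \in \widering I_A$ then $\wt P_A^*(\alpha) > 0$ (the max of $\wt P_A^*$ is positive and $\wt P_A^*$ is concave u.s.c. with compact domain $I_A' \supseteq I_A$, so on the relative interior of $I_A$ it stays strictly above $0$ — here one uses that $\wt P_A^*$ is actually finite and, by the coercivity and strict convexity, its superlevel sets behave well), and since $\wt P_A$ is finite and strictly convex on $\R^d$ the supremum defining $\wt P_A^*(\alpha)$ is attained at a unique $q = q(\alpha)$ with $\nabla \wt P_A(q) = \alpha$; one checks $q(\alpha) \in J_A$. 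Taking closures and using compactness of $I_A$ gives $I_A = \overline{\nabla\wt P_A(J_A)}$.

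The main obstacle I anticipate is the precise handling of the boundary of $I_A$ in the identity $\widering I_A = \nabla\wt P_A(J_A)$: one must be careful that every interior point of $I_A$ has strictly positive $\wt P_A^*$ and is genuinely in the range of $\nabla \wt P_A$, and that no point of $J_A$ maps to the boundary. The cleanest route is probably to show $\{\alpha : \wt P_A^*(\alpha) > 0\} = \widering I_A$ first (a convexity statement: the superlevel set $\{\wt P_A^* > 0\}$ is open and convex, its closure is $I_A$ since $\wt P_A^*$ is continuous on a neighborhood of $I_A$ being finite everywhere, so it equals $\widering I_A$), and then observe that $\wt P_A^*$ being finite, strictly concave-ish and coercive forces its positive superlevel set to be exactly the image under $\nabla\wt P_A$ of the open set where $\wt P_A(q) - \langle q|\nabla\wt P_A(q)\rangle > 0$, which is $J_A$. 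Choosing $A_0 \ge \wt A$ large enough to simultaneously guarantee $\E(N_A) > 1$ and the non-degeneracy condition needed for strict convexity completes the proof.
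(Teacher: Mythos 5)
Your part~(1) is essentially correct and slightly more streamlined than the paper's. For boundedness of $I_A$ you invoke coercivity of $\wt P_A^*$, which indeed follows from $\dom\wt P_A=\R^d$ (Rockafellar Cor.~14.2.1, after converting the paper's concave transform to the standard convex one: $-\wt P_A^*$ is the convex conjugate of $\wt P_A$). The paper instead runs a direct compactness-by-contradiction argument with a sequence $\alpha_n\to\infty$; the two are equivalent, and your route is a clean shortcut.

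The genuine gap is in the strict-convexity step of part~(2). You assert that for $A$ large the truncated vectors $\mathbf 1_{\{\|X_i\|\le A\}}X_i$, $1\le i\le N\wedge A$, cannot all lie in a common affine hyperplane, and you attribute this to \eqref{pP} plus Proposition~\ref{extinction}. Proposition~\ref{extinction} is about extinction of the truncated Galton--Watson process and says nothing about linear degeneracy of the $X_i$; and the claim itself is not automatic: \emph{a priori} one could have, for each $A$, some $A$-dependent hyperplane $H_A$ containing all the relevant truncations (the normal vectors $h_A$ and constants $c_A$ produced by the Cauchy--Schwarz argument depend on $A$, and $c_A$ is not obviously bounded since the $X_i$ are not bounded). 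The paper closes this by an explicit limiting argument: if the Hessian of $\wt P_{A_\ell}$ degenerates along an increasing sequence $A_\ell$, the Cauchy--Schwarz equality case yields affine hyperplanes $H_{A_\ell}$ containing the truncated $X_i$ a.s.; then $H=\bigcup_{L\ge1}\bigcap_{\ell\ge L}H_{A_\ell}$ is a non-decreasing union of affine subspaces of dimension $\le d-1$, hence stabilizes at a strict affine subspace, and (since $\|X_i\|<\infty$ a.s.) contains every $X_i$ for $1\le i\le N$ almost surely, contradicting \eqref{pP}. You need to supply this (or an equivalent) argument; without it ``non-degeneracy survives truncation'' is only an assertion.

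Your duality argument for $\widering I_A=\nabla\wt P_A(J_A)$ is in substance the same as the paper's (identify $\{\wt P_A^*>0\}$ with $\widering I_A$ via concavity plus the facts that $\nabla\wt P_A$ is a local diffeomorphism and that the interior of $\dom\wt P_A^*$ is contained in the range of $\nabla\wt P_A$, Rockafellar \S24). One point worth making explicit, since you appeal to the infimum in $\wt P_A^*(\alpha)$ being attained at a unique $q(\alpha)$: this is valid precisely because $\widering I_A\subset\mathrm{int}(\dom\wt P_A^*)$, which is true since $\widering I_A$ is an open subset of $I_A\subset\dom\wt P_A^*$. Stating this removes the slight circularity in ``one checks $q(\alpha)\in J_A$''; it is exactly $\wt P_A^*(\alpha)>0$ that makes $q(\alpha)\in J_A$, and $\wt P_A^*>0$ on $\widering I_A$ is the content of the first half of the argument.
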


\begin{rem}{\rm The following proof shows that the same conclusions as in the proposition hold for $I$ and $\wt P$ if $\dom \wt P=\R^d$. 

}
\end{rem}
\begin{proof} Recall that we assumed  \eqref{pP} without loss of generality. Let $A\ge \wt A$. 

\medskip

\noindent
(1) Since $\wt P\ge \wt P_A$ we have $\wt P_A^*\le \wt P^*$, hence $I_A\subset I$. Now let us simply denote $I_A$ by $I$, $N_A$ by $N$ and $\widetilde P_A$ by $\wt P$. At first notice that $I$ contains $\nabla\wt P(0)$ ($\wt P^*(\nabla \wt P(0))=\wt P(0)= \log (\mathbb{E}(N))>0$). The convexity of $I$ comes from the concavity of the function $\widetilde P^*$. The fact that $I$ is closed results from the upper semi-continuity of $\widetilde P^*$.  It remains to show that $I$ is bounded. Suppose that this is not the case. Let  $(\alpha_{n})_{n\geq 1}\in I^{\N_+}$ which tends to $\infty$ as $n$ tends to $\infty$. Since $\mathbb S^{d-1}$ is compact, without loss of generality we can assume that $(\frac{\alpha_{n}}{\|\alpha_{n}\|})_{n\geq 1}$  converges to a limit $u\in \mathbb S^{d-1}$ as $n\to\infty$. 
Let $ \lambda > 0$. From the definition of $\widetilde P^*$, since $\widetilde P^*(\alpha)\ge 0$, we have 
\begin{eqnarray*}
0 \leq \widetilde{P}(\lambda u) - \lambda \langle u|  \alpha_{n}\rangle  &=& \widetilde{P}(\lambda u) - \lambda \langle u|  \|\alpha_{n}\| u\rangle +  \lambda \langle u|  \|\alpha_{n}\| u- \alpha_{n}\rangle\\\
&=& \widetilde{P}(\lambda u) - \lambda  \| \alpha_{n}\| \Big (1 + \Big \langle u \Big | u -\frac{ \alpha_{n}}{\| \alpha_{n}\|}\Big \rangle \Big ).\\ 
\end{eqnarray*} 
Since $\frac{ \alpha_{n}}{\| \alpha_{n}\|}$ converge to $u$ as $n\to\infty$, this yields  
 $$  \widetilde{P}(\lambda u) \geq \lambda  \| \alpha_{n_k}\| (1 + o(1)) \quad \text{as $n\to\infty$},
 $$
 hence $\widetilde P(\lambda u)=\infty$ for all $\lambda> 0$. This contradics the finiteness of $\widetilde P$ over $\R^d$.  

\medskip

\noindent (2) We first show that \eqref{pP}  implies the strict convexity of $\widetilde P_A$ for $A$ large enough, this coming from the strongest property that for $A$ large enough the second derivative of $\wt P_A$ is positive definite at every $q\in\R^d$. This implies $\nabla \widetilde P_A$ is everywhere a local diffeomorphism, so $\nabla \wt P_A(J_A)$ is open.  

Let $A\ge \wt A$. Suppose that the second derivative of $\wt P_A$ is not positive definite at $q\in\R^d$. A calculation shows that there exists $h\in\R^d\setminus \{0\}$ such that 
$$
e^{\wt P_A(q)}\E\Big (\sum_{i=1}^{N\land A}\mathbf{1}_{\{\|X_i\|\le A\}} e^{\langle q|X_i\rangle } \Big (\sum_{j=1}^dh_jX_{i,j}\Big )^2\Big )=\Big (\E\Big( \sum_{i=1}^{N\land A}\mathbf{1}_{\{\|X_i\|\le A\}} e^{\langle q|X_i\rangle } \sum_{j=1}^dh_jX_{i,j}\Big )\Big )^2.
$$
Notice that the expression in the second expectation is the scalar product of the vectors $(\mathbf{1}_{\{\|X_i\|\le A\}}e^{\langle q|X_i\rangle/2 } )_{1\le i\le N\land A}$ and $( \mathbf{1}_{\{\|X_i\|\le A\}}e^{\langle q|X_i\rangle/2 }  \sum_{j=1}^dh_jX_{i,j})_{1\le i\le N\land A}$ in $\R^{N\land A}$. Then applying successively the Cauchy-Schwarz inequality to the previous scalar product  in $\R^{N\land A}$ and to $\E(Z^{1/2} {Z'}^{1/2})$, where
$$
Z=\sum_{i=1}^{N\land A} \mathbf{1}_{\{\|X_i\|\le A\}}e^{\langle q|X_i\rangle} \text{ and }Z'= \sum_{i=1}^{N\land A} \mathbf{1}_{\{\|X_i\|\le A\}}e^{\langle q|X_i\rangle}(\sum_{i=j}^dh_jX_{i,j}\Big )^2,$$
we see that this forces the existence of a deterministic $c_A\in \R_+^*$ such that  $\langle  h|X_i\rangle=c_A$ almost surely for all $1\le i\le N\land A$ such that $\|X_i\|\le A$, hence there exists an affine hyperplane $H_A$ such that almost surely, if $1\le i\le N\land A$ and $\|X_i\|\le A$, we have $X_i\in H_A$. Suppose now that there exists an increasing sequence $(A_\ell)_{\ell\ge 1}$ such that for each $\ell$ the second derivative of $\wt P_{A_\ell}$ is not positive definite at some point. Let $H=\bigcup_{L\ge 1}\bigcap_{\ell\ge L} H_{A_\ell}$. By construction, we have $X_i\in H$ for all $1\le i\le N$ almost surely, and the set $H$ is a non decreasing union  of  strict affine subsets of $\R^d$, so it is a strict affine subset of $\R^d$, which contradicts \eqref{pP}. 

\medskip

The previous lines imply that $\widering I_A\neq\emptyset$ since it must contain $\nabla \wt P(J_A)$, which is an open set.  
Now, we use the general facts about the concave function $\widetilde P_A^*$: since $\wt P_A$ is differentiable, the domain of $\wt P_A^*$, i.e. $\{\alpha\in\R^d: \wt P_A^*(\alpha)>-\infty\}$, is included in the closure of the range of $\nabla \wt P_A$, and its interior is included in the image of $\nabla \wt P_A$ (see \cite[Sec. 24, p. 227]{Roc}). 

Suppose that $\alpha\in \widering I_A$ and there exists a sequence $(q_n)_{n\ge 1}$ of vectors in $\R^d$ such that $\alpha=\lim_{n\to\infty}\nabla \wt P_A(q_n)$ and $\wt P_A^*(\nabla \wt P_A(q_n))\le 0$ for all $n\ge 1$.  Then the concave function $\wt P_A^*$ being continuous at $\alpha$, we have $\wt P_A^*(\alpha)=0$. Since $\nabla \wt P_A(0)\in \widering I$ and $\wt P^*(\nabla \wt P_A(0))= \wt P_A(0)>0$ (or more generally since $J_A\neq\emptyset$), the concavity of $\wt P_A^*$  implies that $\wt P_A^*$ takes negative values over $\widering I_A$, which is excluded by definition of $I_A$.  Thus $\widering I_A\subset \nabla \wt P_A(J)$. Consequently, $ \widering I_A={\nabla \wt P_A(J)} $ and $I_A=\overline {\nabla \wt P_A(J)}$. 
\end{proof}

\begin{rem}\label{sc}
Two successive applications of the Cauchy-Schwarz inequality as above would get the strict convexity of $\wt P$ when $\dom \wt P$ is neither empty nor a singleton.
\end{rem}

Given a family of functions $f_\lambda:\R^d\to \R\cup\{\infty\}$, $\lambda>\lambda_0$, and $f:\R^d\to \R\cup\{\infty\}$, following \cite{Wi}, we say that $f_\lambda$ converges  infimally to $f$ as $\lambda\to\infty $ if 
$$
\lim_{\delta\to 0}\liminf_{\lambda\to\infty} \inf\{f_\lambda(q'):|q-q'|<\delta\}=\lim_{\delta\to 0}\limsup_{\lambda\to\infty} \inf\{f_\lambda(q'):|q-q'|<\delta\}=f(q).
$$
\begin{prop}\label{infim}
The family $(\wt P_A)_{A\ge A_0}$ converges infimally to $\wt P$ as $A\to\infty$. 
\end{prop}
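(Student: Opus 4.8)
The plan is to prove that $\wt P_A \to \wt P$ infimally by exploiting monotonicity of the truncated sums $S_A(q)$ in $A$ together with the monotone convergence theorem, and then converting pointwise monotone convergence into the required infimal (epigraphical) convergence statement. First I would record the two elementary facts: for every $q$, the sequence $S_A(q)=\sum_{i=1}^{N\wedge A}\mathbf{1}_{\{\|X_i\|\le A\}}e^{\langle q|X_i\rangle}$ is nondecreasing in $A$ and converges pointwise (a.s.) to $S(q)=\sum_{i=1}^N e^{\langle q|X_i\rangle}$, so by monotone convergence $\E(S_A(q))\nearrow \E(S(q))$, i.e. $\wt P_A(q)\nearrow \wt P(q)$ for every $q\in\R^d$ (with the value $\infty$ allowed in the limit). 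In particular $\wt P_A\le \wt P_{A'}\le \wt P$ for $A\le A'$.

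Next I would extract the infimal convergence from this. Since each $\wt P_A$ is convex and finite on $\R^d$ (hence continuous), and the $\wt P_A$ increase pointwise to the lower semicontinuous convex function $\wt P$, this is exactly the setting in which pointwise monotone increasing convergence of convex functions is equivalent to epi-convergence, which for convex functions coincides with infimal (Wijsman) convergence. Concretely: fix $q$ and $\delta>0$; by monotonicity $\inf\{\wt P_A(q'):|q-q'|<\delta\}$ is nondecreasing in $A$ and bounded above by $\inf\{\wt P(q'):|q-q'|<\delta\}$, so $\limsup_{A\to\infty}\inf_{|q-q'|<\delta}\wt P_A(q')\le \inf_{|q-q'|<\delta}\wt P(q')$, and letting $\delta\to 0$ gives, using lower semicontinuity of $\wt P$, that this quantity is $\le \wt P(q)$. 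For the reverse inequality I would use the convexity of $\wt P_A$: a uniform lower bound near $q$ follows from a supporting affine functional. More directly, fix $\epsilon>0$; by \eqref{duality} and the pointwise convergence there is $A_1$ and, for $A\ge A_1$, an affine minorant $\ell_A$ of $\wt P_A$ with $\ell_A(q)\ge \wt P(q)-\epsilon$ when $\wt P(q)<\infty$ (respectively $\ell_A(q)$ arbitrarily large when $\wt P(q)=\infty$), obtained by choosing $\theta$ with $\langle\theta|q\rangle+\wt P_A^*(\theta)$ close to $\wt P_A(q)$ and noting $\wt P_A^*(\theta)\ge \wt P^*(\theta)-$ error is not needed since $\wt P_A^*\le\wt P^*$ goes the wrong way; instead I would pick $\theta$ realizing $\wt P(q)$ (or a large value) in the duality for $\wt P$, observe $\wt P_A^*(\theta)\to\wt P^*(\theta)$? — this last convergence is the delicate point (see below). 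Then $\inf_{|q-q'|<\delta}\wt P_A(q')\ge \inf_{|q-q'|<\delta}\ell_A(q') \ge \ell_A(q)-\delta\|\theta\|$, and taking $\liminf_{A\to\infty}$ then $\delta\to 0$ yields $\ge \wt P(q)-\epsilon$; since $\epsilon$ is arbitrary we are done.

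The main obstacle is precisely controlling $\wt P_A$ from below near $q$ in the direction where $\wt P(q)=\infty$, i.e. showing $\inf_{|q-q'|<\delta}\wt P_A(q')\to\infty$ there: pointwise increase $\wt P_A(q)\to\infty$ is immediate, but one needs this to persist after the infimum over a small ball, uniformly in $A$ large. I would handle this via the supporting-hyperplane argument: since $\wt P_A$ is convex with $\wt P_A(q)\to\infty$, and since the sets $I_A$ are uniformly controlled (they increase to $I$ by Proposition~\ref{detI} combined with $\wt P_A\nearrow\wt P$, giving $\wt P_A^*\nearrow$ toward $\wt P^*$ on the relevant region), one extracts for each large $M$ an $A$ and a subgradient $\theta_A\in\partial\wt P_A(q)$ with $\wt P_A(q)\ge M$, whence $\wt P_A(q')\ge M - \delta\|\theta_A\|$; the remaining worry is that $\|\theta_A\|$ could blow up faster than $M$, which is ruled out because a subgradient at $q$ with $q\in\dom\wt P_A$ (all of $\R^d$) and $\wt P_A$ finite nearby is bounded by a difference quotient $\|\theta_A\|\le (\wt P_A(q+e)-\wt P_A(q))/1$ over unit directions $e$, and $\wt P_A(q+e)\le\wt P(q+e)$; one then only needs to position the ball so that $q\pm e$ stays in a region where $\wt P$ is finite, which is possible whenever $q\in\mathrm{int}(\dom\wt P)$ and is vacuous (nothing to prove, the infimum is $+\infty=\wt P(q)$ only on $\mathrm{int}(\dom\wt P)^c$, handled by $\wt P$ lower semicontinuous and the first half). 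I expect the clean write-up to reduce to: (i) monotone convergence for the pointwise limit, (ii) the easy $\limsup\le$ direction, (iii) convexity plus a uniform local Lipschitz/subgradient bound for the $\liminf\ge$ direction, invoking \cite{Wi} for the general equivalence if a softer argument is preferred.
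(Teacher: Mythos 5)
Your setup is right — monotone convergence gives $\wt P_A\nearrow\wt P$ pointwise, each $\wt P_A$ is finite on all of $\R^d$ hence continuous, and the $\limsup\le$ half follows easily — but the $\liminf\ge$ half is where the proof has to live, and your approach to it does not close. The duality/subgradient route runs into two concrete problems. First, you correctly flag as ``the delicate point'' whether $\wt P_A^*(\theta)\to\wt P^*(\theta)$: in the paper that convergence is exactly Corollary~\ref{convPA*}, which is derived \emph{from} Proposition~\ref{infim} via \cite[Theorem 6.2]{Wi}, so invoking it here would be circular. Second, the uniform subgradient bound $\|\theta_A\|\le \wt P_A(q+e)-\wt P_A(q)\le \wt P(q+e)-\wt P_A(q)$ requires $q+e\in\dom\wt P$ in every direction $e$, i.e. $q\in\mathrm{int}(\dom\wt P)$; but $q$ can lie on $\partial(\dom\wt P)$ with $\wt P(q)<\infty$, and your dismissal of the boundary as ``vacuous'' only covers the $\wt P(q)=\infty$ case, not this one. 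So as written there is a genuine hole for finite boundary points of $\dom\wt P$.

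The missing idea is that you do not need convexity or duality at all — monotonicity does the work. Fix $q$ and $\epsilon>0$ (or a level $p$ if $\wt P(q)=\infty$). Choose one $A$ with $\wt P_A(q)\ge\wt P(q)-\epsilon/2$ (resp.\ $\ge p$); since that single $\wt P_A$ is continuous, there is $\delta>0$ with $\wt P_A(q')\ge\wt P_A(q)-\epsilon/2$ for $|q'-q|\le\delta$; and then for every $A'\ge A$ and $|q'-q|\le\delta$, monotonicity gives $\wt P_{A'}(q')\ge\wt P_A(q')\ge\wt P(q)-\epsilon$ (resp.\ $\ge p-\epsilon$). Taking $\liminf_{A'\to\infty}$, then $\delta\to 0$, then $\epsilon\to 0$ gives $\liminf\ge\wt P(q)$ with no reference to $\dom\wt P$, its boundary, or subgradients. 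This is the argument the paper uses; it exploits that a nondecreasing family of continuous functions with an l.s.c.\ pointwise limit automatically epi-converges, and avoids the circularity and the boundary-case gap in your write-up.
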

\begin{proof} This is due to the fact that the function $\wt P$ is lower semi-continuous as logarithmic generating function, and it is the non-decreasing limit of the continuous functions $\wt P_A$ as $A\to\infty$: Let $q\in\R^d$. Suppose that $\wt P(q)<\infty$. Fix $\epsilon>0$.  Let $A$ be big enough so that $\wt P_A(q)\ge \wt P (q)-\epsilon/2$, and let $\delta>0$ such that $|q'-q|\le \delta$ implies $\wt P_A(q')\ge \wt P_A(q)-\epsilon/2$. Then, for  $A'\ge A$ and $|q'-q|\le \delta$ we have $\wt P_{A'}(q')\ge \wt P_A(q')\ge  \wt P (q)-\epsilon$. This yields $\lim_{\delta\to 0}\liminf_{A\to\infty} \inf\{\wt P_A(q'):|q-q'|<\delta\}\ge \wt P(q)$. The inequality $\lim_{\delta\to 0}\limsup_{A\to\infty} \inf\{\wt P_A(q'):|q-q'|<\delta\}\le \wt P(q)
$ just comes from the simple convergence of $\wt P_A$ to $\wt P$. If $\wt P(q)=\infty$, since $\wt P_A(q)$ converges to $\infty$ as $A\to\infty$ and the $\wt P_A$ are continuous, for any $p>0$ there exists $A\ge A_0$ and a neighborhood $U_A$ of $q$  such that $\wt P(q')\ge p$ for all $q'\in U_A$. Consequently, since $(\wt P_{A'})_{A'\ge A}$ is non decreasing, we have $\lim_{\delta\to 0}\liminf_{\lambda\to\infty} \inf\{f_\lambda(q'):|q-q'|<\delta\}\ge p$, where $p$ is arbitrary.
\end{proof}

\begin{cor}\label{convPA*}
 The non decreasing family $(\wt P_A^*)_{A\ge A_0}$ converges simply to $\wt P^*$ over $\mathrm{int} (\dom(-\wt P^*))$.
\end{cor}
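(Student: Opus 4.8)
\medskip
\noindent\textbf{Proof proposal.}
The plan is to deduce the corollary from the infimal convergence $\wt P_A\to\wt P$ of Proposition~\ref{infim} by passing to Legendre--Fenchel transforms, using the monotonicity $\wt P_A\uparrow\wt P$ to control the limit. First, since $\wt P_A\le\wt P_{A'}\le\wt P$ whenever $A_0\le A\le A'$, and the Legendre--Fenchel transform (in the concave convention of the paper) is order preserving, the family $(\wt P_A^*)_{A\ge A_0}$ is non-decreasing and dominated by $\wt P^*$; hence $g:=\lim_{A\to\infty}\wt P_A^*$ exists pointwise, is concave, and satisfies $g\le\wt P^*$ on $\R^d$. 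There remains to prove $g(\alpha)\ge\wt P^*(\alpha)$ for every $\alpha\in\mathrm{int}(\dom(-\wt P^*))$; note that on this open set $\wt P^*$ is finite, being concave and finite on a neighborhood, hence continuous there.

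The key input is the stability of infimal (epigraphic) convergence of convex functions under conjugation, going back to Wijsman \cite{Wi}: under the standing assumption $\wt P$ and all the $\wt P_A$ are proper lower semicontinuous convex functions, so Proposition~\ref{infim} yields that the convex conjugates $-\wt P_A^*$ of the $\wt P_A$ converge infimally to $-\wt P^*$. For a sequence of convex functions converging infimally to a proper limit, this convergence is in fact locally uniform --- hence in particular pointwise --- throughout the interior of the domain of the limit (a classical property of convergence of convex functions). Applied to $(-\wt P_A^*)$ this gives $\wt P_A^*(\alpha)\to\wt P^*(\alpha)$ for all $\alpha\in\mathrm{int}(\dom(-\wt P^*))$, which is the assertion; the monotonicity noted above shortens the last step, since one then only needs the ``recovery sequence'' half of the infimal convergence of the conjugates together with the convexity of $g$ near $\alpha$.

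I expect the only genuinely delicate point --- the one to be careful about --- to be the guarantee that $\wt P_A^*(\alpha)>-\infty$ for all large $A$ when $\alpha\in\mathrm{int}(\dom(-\wt P^*))$, i.e. that $\mathrm{int}(\dom(-\wt P^*))\subset\bigcup_{A\ge A_0}\dom(-\wt P_A^*)$; without it the increasing limit $g(\alpha)$ could remain $-\infty$. This is exactly what conjugation-stability of infimal convergence provides. Should one wish to avoid invoking it, the self-contained route is as follows: finiteness of $\wt P^*$ on a closed ball $\bar B(\alpha,r)\subset\mathrm{int}(\dom(-\wt P^*))$ forces $\wt P(q)-\langle q|\alpha\rangle\ge r|q|-M$ for all $q\in\R^d$, so $q\mapsto\wt P(q)-\langle q|\alpha\rangle$ is coercive and its infimum over $\R^d$ equals its infimum over some closed ball $\bar B(0,R)$; one then checks, using the recession functions of the $\wt P_A$ together with $\wt P_A\uparrow\wt P$, that for $A$ large the infimum defining $\wt P_A^*(\alpha)$ is also realized over $\bar B(0,R)$, after which interchanging the increasing limit in $A$ with the infimum over the compact ball $\bar B(0,R)$ is routine (each $\wt P_A$ is finite, hence continuous, and $\bar B(0,R)$ is compact). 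Either argument, combined with $g\le\wt P^*$, yields $g=\wt P^*$ on $\mathrm{int}(\dom(-\wt P^*))$, which is the corollary.
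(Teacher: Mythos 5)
Your proposal is correct and follows essentially the same route as the paper: both pass through Wijsman's Theorem~6.2 to transfer infimal convergence from $(\wt P_A)$ to the conjugates $(-\wt P_A^*)$, use the order-preservation of the Legendre--Fenchel transform to get monotone pointwise convergence to some concave $g\le\wt P^*$, and then invoke local uniform convergence of convex functions on $\mathrm{int}(\dom(-\wt P^*))$ to conclude $g=\wt P^*$ there. The only stylistic difference is that you cite the packaged result ``epi-convergence of convex functions is locally uniform on the interior of the limit's domain,'' whereas the paper rebuilds it by first showing $\mathrm{int}(\dom(-\wt P^*))\subset\mathrm{int}(\dom g)$ and then combining monotone pointwise convergence with the classical local-uniformity of convex-function convergence; your ``delicate point'' about domain coverage is exactly the step the paper addresses explicitly.
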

\begin{proof}
Due to Proposition~\ref{infim} and \cite[Theorem 6.2]{Wi}, since the functions $\wt P_A$, $A\ge A_0$ and $\wt P$ are proper and closed convex functions, $(-\wt P^*_A)_{A\ge A_0}$ converges infimally to $-\wt P^*$. Moreover,  $(-\wt P^*_A)_{A\ge A_0}$  is non increasing and lower bounded by $-\wt P^*$, so it converges to a convex function, say $g$. The infimal convergence of $-\wt P^*_A$ to $-\wt P^*$ imposes that the convex set $\dom (g)$ contains the convex set $\dom (-\wt P^*)$ in its closure. Consequently it contains $\mathrm{int} (\dom(-\wt P^*))$ in its interior. Now, over each compact ball $B$ of positive radius contained in $\mathrm{int} (\dom(-\wt P^*))$, the convergence of the family of convex functions $(-\wt P^*_A)_{A\ge A_0}$ is uniform, so $f$ must be equal to $-\wt P^*$ over $B$, for otherwise the infimal convergence of  $(-\wt P^*_A)_{A\ge A_0}$ to $-\wt P^*$ would be clearly violated in the interior of $B$.  Covering $\mathrm{int} (\dom(-\wt P^*))$ with such balls  yields the result.

\end{proof}

\begin{cor}\label{approxi}
For all integers $A\ge A_0$ we have $I=\overline{\bigcup\uparrow_{A'\ge A}I_{A'}}$. We can choose an increasing sequence of integers $A_j\ge A_0$, $j\ge 1$, and a sequence of finite sets $(D_j)_{j\ge 1}$ such that   for all $j\ge 1$ we have $D_j\subset  J_{A_j}$, $\#D_j\ge 2$, and for all $\alpha\in I\cap B(0,j)$, there exists $q_j\in D_j$ such that  $\|\wt P_{A_j} (q_j)-\alpha\|\le j^{-1}$ and $| \wt P_{A_j}^*(\nabla\wt P_{A_j}(q_j))-\wt P^*(\alpha)|\le j^{-1}$, where $B(0,j)$ stands for the closed Euclidean ball of radius $j$ centered at $0$ in $\R^d$. 
\end{cor}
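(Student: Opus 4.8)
The plan is to derive Corollary~\ref{approxi} from the three preceding results: the monotone approximation $\wt P_A \nearrow \wt P$ (Propositions~\ref{extinction}--\ref{infim}), the simple convergence $\wt P_A^* \nearrow \wt P^*$ on $\mathrm{int}(\dom(-\wt P^*))$ (Corollary~\ref{convPA*}), and the description $\widering I_{A} = \nabla \wt P_A(J_A)$, $I_A = \overline{\nabla \wt P_A(J_A)}$ for $A\ge A_0$ (Proposition~\ref{detI}(2)), together with $I_A \subset I$ and $I_A$ convex compact (Proposition~\ref{detI}(1)). So the corollary has two parts: first the set-theoretic identity $I = \overline{\bigcup\uparrow_{A'\ge A} I_{A'}}$, and then the construction of the sequences $(A_j)$, $(D_j)$.

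For the first part, I would argue that $\bigcup_{A'\ge A} I_{A'}$ is increasing (since $\wt P_{A'} \le \wt P_{A''}$ for $A'\le A''$ gives $\wt P_{A'}^* \le \wt P_{A''}^*$, hence $I_{A'}\subset I_{A''}$) and contained in $I$, so the closure of the union is contained in $I$ because $I$ is closed. For the reverse inclusion, take $\alpha \in I$, i.e. $\wt P^*(\alpha)\ge 0$. If $\alpha \in \mathrm{int}(\dom(-\wt P^*))$, the assumption that $0\in\widering I$ (stated in the excerpt, with $\wt P^*>0$ on $\widering I$) combined with concavity lets me find, for any $\epsilon>0$, a point $\alpha'$ on the segment $[0,\alpha]$ slightly past $\alpha$ toward $0$, or rather a point near $\alpha$ at which $\wt P^*$ is strictly positive — more carefully: since $\wt P^*(0)=\wt P(0)=\log\E(N)>0$ and $\wt P^*$ is concave and $\wt P^*(\alpha)\ge0$, every point of the open segment $(0,\alpha)$, and in fact a small ball around any point $\alpha_s=(1-s)\alpha$ with $s\in(0,1)$ small, satisfies $\wt P^*>0$; by Corollary~\ref{convPA*}, $\wt P_{A'}^*(\alpha_s)\to\wt P^*(\alpha_s)>0$, so $\alpha_s\in I_{A'}$ for $A'$ large, and $\alpha_s\to\alpha$ as $s\to 0$, giving $\alpha\in\overline{\bigcup I_{A'}}$. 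Boundary points of $I$ (where $\wt P^*(\alpha)=0$ or $\alpha\notin\mathrm{int}(\dom(-\wt P^*))$) are then handled by taking limits of interior points, using that $I$ is the closure of its relative interior (a standard convexity fact), and that $\widering I$ is nonempty. The case $\dom\wt P=\emptyset$ is excluded in this section, and if $\dom\wt P=\R^d$ the Remark after Proposition~\ref{detI} gives the conclusion directly with $A=A_0$.

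For the second part I would proceed by a diagonal/exhaustion argument. Fix $j\ge 1$. The set $I\cap B(0,j)$ is compact. By the first part and Corollary~\ref{convPA*}, for each $\alpha\in I\cap B(0,j)$ I can find $A$ large and a point $\beta$ close to $\alpha$ with $\beta\in\widering I_A = \nabla\wt P_A(J_A)$ and $|\wt P_A^*(\beta)-\wt P^*(\alpha)|$ small; writing $\beta = \nabla\wt P_A(q)$ with $q\in J_A$, this handles one $\alpha$. Then I cover $I\cap B(0,j)$ by finitely many balls of radius $1/(2j)$ centered at such points $\alpha$, take $A_j$ to be the maximum of the finitely many $A$'s used (and $\ge A_0$, and increasing in $j$), let $D_j$ be the corresponding finite set of $q$'s (adding one extra point of $J_{A_j}$ if needed so that $\#D_j\ge 2$, which is possible since $J_{A_j}$ is open nonempty), and check that every $\alpha\in I\cap B(0,j)$ lies within $1/(2j)$ of one of the chosen centers, whose associated $q_j\in D_j$ satisfies $\|\wt P_{A_j}(q_j)-\alpha\|\le 1/j$ and $|\wt P_{A_j}^*(\nabla\wt P_{A_j}(q_j))-\wt P^*(\alpha)|\le 1/j$ (using continuity of $\wt P_{A_j}$ and $\wt P_{A_j}^*$ to absorb the passage from the center to $\alpha$, possibly shrinking the covering radius). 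The main obstacle is the uniformity: getting, for a \emph{single} $A_j$, a finite set $D_j$ that works simultaneously for \emph{all} $\alpha\in I\cap B(0,j)$, including those near $\mathrm{rel}\,\partial I$ where $\wt P^*$ vanishes and the convergence $\wt P_A^*\to\wt P^*$ from Corollary~\ref{convPA*} is only guaranteed on $\mathrm{int}(\dom(-\wt P^*))$ — this is exactly why one approaches $\partial I$ from inside along segments toward $0$ and uses compactness to reduce to finitely many approximation statements. A minor point to be careful about is that $\nabla\wt P_{A_j}$ must indeed be defined and the notation $\wt P_{A_j}^*(\nabla\wt P_{A_j}(q_j))$ meaningful, which is guaranteed by Proposition~\ref{detI}(2) for $A_j\ge A_0$.
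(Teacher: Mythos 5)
Your treatment of the first identity, $I=\overline{\bigcup\uparrow_{A'\ge A}I_{A'}}$, is essentially the paper's argument: approach $\alpha$ radially along a segment towards a point of $\widering I$ (you use $0$, the paper uses a generic $\beta\in\widering I$ with $\wt P^*(\beta)>0$), use concavity/upper semi-continuity to get $\wt P^*(\alpha_s)\to\wt P^*(\alpha)$ along the segment, and then invoke Corollary~\ref{convPA*}. One small slip: you write $\wt P^*(0)=\wt P(0)$, but $\wt P(0)=\sup_\alpha\wt P^*(\alpha)$, not $\wt P^*(0)$; what the normalization $0\in\widering I$ actually gives (and what you need) is only $\wt P^*(0)>0$.

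The second part has a genuine gap. You cover the base $I\cap B(0,j)$ by finitely many $\R^d$-balls of radius $1/(2j)$ centered at points $\alpha_i$ and then want, for an arbitrary $\alpha$ in the ball around $\alpha_i$, the inequality $|\wt P_{A_j}^*(\nabla\wt P_{A_j}(q_{i,j}))-\wt P^*(\alpha)|\le 1/j$. By construction $\wt P_{A_j}^*(\nabla\wt P_{A_j}(q_{i,j}))$ is close to $\wt P^*(\alpha_i)$, so this requires $|\wt P^*(\alpha_i)-\wt P^*(\alpha)|$ to be small -- that is, continuity of $\wt P^*$. But $\wt P^*$ is only upper semi-continuous, and a closed concave function can genuinely be discontinuous at points of $\mathrm{rel}\,\partial(\dom\wt P^*)\cap I$: the limit of $\wt P^*$ along the segment toward $0$ equals $\wt P^*(\alpha)$, but along other directions it can drop. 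Your parenthetical ``using continuity of $\wt P_{A_j}$ and $\wt P_{A_j}^*$ to absorb the passage from the center to $\alpha$, possibly shrinking the covering radius'' does not address this: $\wt P_{A_j}^*$ is the approximant, not the target, and shrinking the radius cannot cure a jump in $\wt P^*$. The paper avoids this by covering not the base but the compact \emph{graph} $\overline{G_j}\subset\R^{d+1}$, where $G_j=\{(\alpha,\wt P^*(\alpha)):\alpha\in I\cap B(0,j)\}$, by finitely many $(1/2j)$-balls with centers $(\alpha_i,p_i^*)\in\overline{G_j}$; then every pair $(\alpha,\wt P^*(\alpha))$ lies near some $(\alpha_i,p_i^*)$, which by the first part (applied after passing to a point of $G_j$ close to $(\alpha_i,p_i^*)$) is approximable by $(\nabla\wt P_{A_j}(q_{i,j}),\wt P_{A_j}^*(\nabla\wt P_{A_j}(q_{i,j})))$. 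That graph-covering step is the missing idea in your proposal.
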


\begin{proof} Let  $\alpha\in I$.  Due to Proposition~\ref{wtPA}, since $\wt P^*\ge \wt P_A^*$ for all $A$, there exists $\beta\in \widering I$ such that $\wt P^*(\beta)>0$. Since $\wt P^*$ is upper semi-continuous and concave, it is continuous over the segment $[\alpha,\beta]$, hence $\alpha$ is limit of points $\beta_n\in I$, $n\ge 1$,  such that $\wt P^*(\beta_n)>0$ and $\lim_{n\to\infty} \wt P^*(\beta_n) =\wt P^*(\alpha)$.  Due to Corollary~\ref{convPA*}, for each such $\beta_n$ we have $\lim_{A\to\infty}\wt P^*_A(\beta_n)=\wt P^*(\beta_n)$, hence  for $A$ large enough $\wt P^*_{A}(\beta_n)>0$ and  $\beta_n\in I_{A}$. This implies $\alpha\in \overline{\bigcup_{A'\ge A}I_{A'}}$. Moreover, by Proposition~\ref{wtPA}(2), for each $\beta_n$, for $A$ large enough, we have $\beta_n\in \overline {\nabla\wt P_A (J_A)}$. In addition, we can require that $\beta_n$ be the limit of $\nabla \wt P_A(q_k)$, with $q_k\in J_A$ such that $\lim_{k\to\infty} \wt P_A^*(\nabla \wt P_A(q_k))=\wt P^*_{A}(\beta_n)$, by the same argument as that used to build $(\beta_n)_{n\ge 1}$ associated with $\alpha$, since $\widering I_A=\nabla\wt P_A(J_A)$. By construction of $\beta_n$, it follows that we can find an increasing sequence of integers $(A_j)_{j\ge 1}$ and for each $j\ge 1$ a vector $q_j\in J_{A_j}$ such that $\|\nabla\wt P_{A_j}(q_j)-\alpha\|\le 1/2j$, and $|\wt P_{A_j}^*(\nabla \wt P_{A_j}(q_j))-\wt P^*(\alpha)|\le 1/2j$. 

Now fix $j\ge 1$, and let $G_j=\{(\alpha,\wt P^*(\alpha)):\alpha\in B(0,j)\cap I\}$. Cover the compact set $\overline {G_j}$ by a finite collection of balls $\{B_i\}_{i\in\mathcal I_j}$ of radius $1/2j$ centered at $(\alpha_i, p^*_i) \in \overline {G_j}$ in $\R^d\times \R$. Thanks to the above lines and the finiteness of $\mathcal I_j$,  we can choose the sequence $(A_j)_{j\ge 1}$ such that for each $i\in\mathcal I_j$ there is $q_{i,j}\in J_{A_j}$ such that $\|\nabla\wt P_{A_j}(q_{i,j})-\alpha_{i}\|\le 1/2j$, and $|\wt P_{A_j}^*(\nabla \wt P_{A_j}(q_{i,j}))-p_i^*|\le 1/2j$. Setting $D_j=\{q_{i,j}: i\in\mathcal I_j\}$ we are done (adding one point in $D_j$ if necessary to make sure that $\#D_j\ge 2$). 

\end{proof}

\subsection{Upper bound for the Hausdorff  dimensions of the sets $E_X(K)$}\label{upb}

Let us consider the pressure like function  
\begin{equation}\label{pressure}
P(q)=\displaystyle\limsup_{n\rightarrow\infty}\frac{1}{n}\log\Big (\displaystyle\sum_{u\in \TT_n}  \exp ({\langle q|S_{n}X(u)\rangle})\Big) \quad (q\in\R^d).
\end{equation}
It is the function $\overline P$ defined in Theorem~\ref{thm-1.3}(2). The first part of the following proposition is inspired by the approach used in the literature to study the free energy function of Mandelbrot measures. 
\begin{prop}\label{upb1}
\begin{enumerate}
\item For all $q\in\R^d$, with probability 1, $P(q)\le \inf\{\theta^{-1} \wt P(\theta q):0< \theta\le 1\}\le \wt P(q)$. 

\item With probability 1, $\mathrm {cl}(P)(q)\le \widetilde P(q)$ for all $q\in \dom \wt P$, where $\mathrm {cl}(P)$ stands for the closure of $P$.
\end{enumerate}
\end{prop}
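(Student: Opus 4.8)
The plan is to prove part (1) by a first-moment (union bound) computation combined with a Borel--Cantelli argument, and then deduce part (2) by a density/continuity argument together with a countable intersection of almost-sure events.

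For part (1), fix $q\in\R^d$ and $\theta\in(0,1]$. The starting point is the elementary inequality
$$
\sum_{u\in\TT_n}\exp(\langle q|S_nX(u)\rangle)
=\sum_{u\in\TT_n}\big(\exp(\langle \theta q|S_nX(u)\rangle)\big)^{1/\theta}
\le\Big(\sum_{u\in\TT_n}\exp(\langle \theta q|S_nX(u)\rangle)\Big)^{1/\theta},
$$
which holds because $1/\theta\ge 1$ and the summands are nonnegative (superadditivity of $x\mapsto x^{1/\theta}$). Taking $n^{-1}\log$ gives
$$
P_n(q)\le\theta^{-1}P_n(\theta q),
$$
so it suffices to control $P_n(\theta q)$ with $\theta q$ such that $\wt P(\theta q)<\infty$ (if $\wt P(\theta q)=\infty$ there is nothing to prove for that $\theta$). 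Set $\bar q=\theta q$. By the branching structure, $\E\big(\sum_{u\in\TT_n}\exp(\langle \bar q|S_nX(u)\rangle)\big)=\big(\E(S(\bar q))\big)^n=e^{n\wt P(\bar q)}$: indeed this is the standard many-to-one / branching-recursion identity, proved by induction on $n$ using independence of the subtrees and the definition $S(\bar q)=\sum_{i=1}^N\exp(\langle\bar q|X_i\rangle)$. Then for any $\epsilon>0$, Markov's inequality gives
$$
\P\Big(\sum_{u\in\TT_n}\exp(\langle \bar q|S_nX(u)\rangle)\ge e^{n(\wt P(\bar q)+\epsilon)}\Big)\le e^{-n\epsilon},
$$
which is summable in $n$; Borel--Cantelli yields that almost surely $n^{-1}\log\sum_{u\in\TT_n}\exp(\langle\bar q|S_nX(u)\rangle)\le\wt P(\bar q)+\epsilon$ for all $n$ large. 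Letting $\epsilon\to 0$ along a countable sequence gives $\limsup_n n^{-1}\log\sum_{u\in\TT_n}\exp(\langle\bar q|S_nX(u)\rangle)\le\wt P(\bar q)$ almost surely, hence $P(q)\le\theta^{-1}\wt P(\theta q)$ almost surely. Now intersect over a countable dense set of $\theta\in(0,1]\cap\Q$; since $\theta\mapsto\theta^{-1}\wt P(\theta q)$ is lower semicontinuous (as $\wt P$ is lsc and this is the perspective-type function), the infimum over rational $\theta$ equals the infimum over all $\theta\in(0,1]$, which gives $P(q)\le\inf\{\theta^{-1}\wt P(\theta q):0<\theta\le 1\}$ almost surely; the bound $\le\wt P(q)$ is the case $\theta=1$.

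For part (2), the difficulty is that part (1) only gives an almost-sure bound \emph{for each fixed $q$}, whereas (2) asks for a bound \emph{simultaneously} at all $q\in\dom\wt P$, after passing to the lsc closure of $P$. The plan is: take a countable dense subset $Q_0$ of $\dom\wt P$ (or of $\mathrm{ri}(\dom\wt P)$), and on the almost-sure event where $P(q)\le\wt P(q)$ holds for every $q\in Q_0$ simultaneously, argue as follows. For any $q\in\dom\wt P$, by definition $\mathrm{cl}(P)(q)=\liminf_{q'\to q}P(q')$; restricting the liminf to $q'\in Q_0$ can only increase it or leave it unchanged among approaching sequences, but more carefully one uses that $\mathrm{cl}(P)\le\mathrm{cl}(P\,\mathbf 1_{Q_0}$-values$)$ is not immediate, so instead I would argue via convexity: $P$ is a $\limsup$ of convex functions $P_n$, hence $\mathrm{cl}(P)$ is convex; a finite convex function is continuous on the interior of its domain, and $\wt P$ restricted to $Q_0$ determines $\wt P$ on $\mathrm{ri}(\dom\wt P)$ by continuity; therefore $\mathrm{cl}(P)(q)\le\wt P(q)$ first on $\mathrm{ri}(\dom\wt P)$ by taking limits of the inequality along $Q_0$, then on all of $\dom\wt P$ by lower semicontinuity of $\mathrm{cl}(P)$ and of $\wt P$ together with \eqref{duality}-type boundary behaviour (using that $\wt P$ is closed, so its values on $\mathrm{rel}\,\partial(\dom\wt P)$ are recovered as $\liminf$ from the relative interior).

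The main obstacle I expect is precisely this last step: upgrading the fixed-$q$ almost-sure bound to a simultaneous bound on the closure. The key technical input that makes it work is that $P$ (via the $P_n$) is convex and the target $\wt P$ is a closed convex function, so pointwise control on a countable dense set plus semicontinuity propagates to the whole domain; without convexity this would fail. One must be a little careful at the relative boundary of $\dom\wt P$, where $\wt P$ may jump, but since $\mathrm{cl}(P)$ is by construction the lsc hull, the inequality on the relative interior automatically extends to the closure.
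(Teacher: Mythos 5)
Your strategy is essentially the paper's: bound $\sum_{u\in\TT_n}e^{\langle\theta q|S_nX(u)\rangle}$ by a first-moment argument, transfer to $q$ via the superadditivity inequality $\sum a_i^{1/\theta}\le(\sum a_i)^{1/\theta}$, and deduce part (2) from part (1) by convexity and lower semicontinuity of $\mathrm{cl}(P)$ and $\wt P$. The paper packages the Markov/Borel--Cantelli step a bit more compactly, by observing directly that $\E\bigl(\sum_{n\ge1}e^{-ns}\sum_{u\in\TT_n}e^{\langle\theta q|S_nX(u)\rangle}\bigr)<\infty$ for $s>\wt P(\theta q)$, hence $\sum_{u\in\TT_n}e^{\langle\theta q|S_nX(u)\rangle}=O(e^{ns})$ a.s.; your version is equivalent. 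For part (2) the paper gives only the one-line hint that $\mathrm{cl}(P)$ and $\wt P$ are convex and lower semicontinuous, and your longer paragraph is the right unwinding of that hint (countable dense subset, continuity on the relative interior, segment continuity of a closed convex function to reach the relative boundary).

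One justification in part (1) needs repair. You assert that $\inf_{\theta\in\mathbb{Q}\cap(0,1]}\theta^{-1}\wt P(\theta q)=\inf_{\theta\in(0,1]}\theta^{-1}\wt P(\theta q)$ because $\theta\mapsto\theta^{-1}\wt P(\theta q)$ is lower semicontinuous. Lower semicontinuity alone does not yield this: an lsc function on $(0,1]$ can take a strictly smaller value at a single irrational point and so have a strictly larger infimum over the rationals. What saves the conclusion here is the stronger fact that $\wt P$ is a \emph{closed} convex function, so $\theta\mapsto\wt P(\theta q)$ is actually continuous on the interval $D_q=\{\theta\in(0,1]:\theta q\in\dom\wt P\}$, including at its endpoints (segment continuity of closed convex functions); rationals then suffice provided $D_q$ is non-degenerate. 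Simpler and fully general: pick any deterministic countable minimizing sequence $(\theta_n)_n$ for $\inf_{\theta\in(0,1]}\theta^{-1}\wt P(\theta q)$ and intersect the almost-sure events over $n$; this also covers the corner case where $D_q$ is a single (possibly irrational) point, in which rationals miss the minimizer entirely. The paper's ``taking the infimum over $\theta$ yields the conclusion'' suppresses the same passage, so the omission is understandable, but the lsc reason you give is not correct as stated.
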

\begin{proof}
The  functions $\mathrm{cl}(P)$ and $\widetilde{P}$ being convex and lower semi-continuous, we only need to prove (1). 

Fix  $q\in\R^d$. If $\infty=\inf\{\theta^{-1} \wt P(\theta q):0< \theta\le 1\}\le \wt P(q)$, there is nothing to prove.  Otherwise, let $\theta\in (0,1]$ such that $\wt P(\theta q)<\infty$.  For $s>\widetilde{P}(\theta q)$ we have 
\begin{equation}\label{Pntheta}
\E(\displaystyle\sum_{n\geq1}e^{-ns}\displaystyle\sum_{u\in \TT_n}  \exp ({\langle \theta q|S_{n}X(u)\rangle})= \displaystyle\sum_{n\geq1}e^{-ns} \E (\sum_{i=1}^{N}   \exp ({\langle \theta q|X_i\rangle}))^n =\displaystyle\sum_{n\geq1}e^{n(\widetilde{P}(\theta q)-s)}.
\end{equation}
Consequently, $\displaystyle\sum_{n\geq1}e^{-ns}\displaystyle\sum_{u\in \TT_n}  \exp ({\langle \theta q|S_{n}X(u)\rangle})<\infty $, and  $\displaystyle\sum_{u\in \TT_n} \exp ({\langle \theta q|S_{n}X(u)\rangle})=O(e^{ns})$  almost surely. On the other hand,
\begin{eqnarray*}
\displaystyle\sum_{u\in \TT_n}  \exp ({\langle q|S_{n}X(u)\rangle})\Big )\le \Big (\displaystyle\sum_{u\in \TT_n}  \exp ({\langle  \theta q|S_{n}X(u)\rangle})\Big )^{1/\theta}=O(e^{n\theta ^{-1}s}),
\end{eqnarray*}
hence $P(q)\le \theta ^{-1} s$. Since $s>\wt P(\theta q)$ is arbitrary, we get $P(q)\le \theta^{-1}\wt P(\theta q)$ almost surely, and taking the infimum over $\theta$ yields the conclusion. 
\end{proof}

For $\alpha\in\R^d$ let 
$$
\widehat E_X(\alpha)=\Big\{t\in\partial\TT: \alpha\in \bigcap_{N\ge 1}\overline{ \Big \{\frac{S_nX(t)}{n}:n\ge N\Big\}}\Big\} .
$$ 

\begin{prop}\label{uHD}
With probability 1, for all  $\alpha\in \R^d$, $\dim \widehat E_X(\alpha)\le P^*(\alpha)$, a negative dimension meaning that $\widehat E_X(\alpha)$ is empty. 
\end{prop}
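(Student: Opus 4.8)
The plan is to prove a uniform (in $\alpha$) upper bound on the Hausdorff dimension of $\widehat E_X(\alpha)$ by a first-moment covering argument, using Proposition~\ref{upb1} to control the relevant partition functions. First I would fix a countable dense set $Q\subset \dom P$ (or work with a countable dense subset of $\R^d$ when $P$ is finite everywhere), and, using Proposition~\ref{upb1}(1) together with a countable intersection of probability-one events, place ourselves on an event of probability $1$ on which $P(q)\le \wt P(q)$ (indeed $P(q)\le \inf_{0<\theta\le 1}\theta^{-1}\wt P(\theta q)$) for all $q\in Q$, and on which, for each $q\in Q$ and each $s>\wt P(q)$ rational, $\sum_{u\in\TT_n}\exp(\langle q|S_nX(u)\rangle)=O(e^{ns})$. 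This is exactly the estimate extracted inside the proof of Proposition~\ref{upb1}, so nothing new is needed there.

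Next, the geometric heart: fix $\alpha\in\R^d$ with $P^*(\alpha)<\infty$ (otherwise there is nothing to prove), and fix $q\in Q$ and $\eta>0$. If $t\in\widehat E_X(\alpha)$, then $S_nX(t)/n$ is within, say, $\epsilon$ of $\alpha$ for infinitely many $n$; for such $n$ one has $\langle q|S_nX(t)\rangle \ge n(\langle q|\alpha\rangle - \|q\|\epsilon)$. Therefore the cylinders $[u]$, $u\in\TT_n$, with $\|S_nX(u)/n-\alpha\|\le \epsilon$ form, as $n$ ranges over an infinite set, a sequence of covers of $\widehat E_X(\alpha)$ by sets of diameter $e^{-n}$, and
\begin{align*}
\#\{u\in\TT_n:\ \|S_nX(u)/n-\alpha\|\le\epsilon\}\cdot e^{-ns_0}
&\le e^{-n(\langle q|\alpha\rangle-\|q\|\epsilon)}\sum_{u\in\TT_n}\exp(\langle q|S_nX(u)\rangle)\cdot e^{-ns_0}\\
&= O\big(e^{n(\wt P(q)+\eta-\langle q|\alpha\rangle+\|q\|\epsilon-s_0)}\big)
\end{align*}
for any $s_0$, taking $s=\wt P(q)+\eta$ in the $O$-estimate above. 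Choosing $s_0 > \wt P(q)-\langle q|\alpha\rangle+\|q\|\epsilon+\eta$ makes the $e^{-n}$-Hausdorff pre-measure at exponent $s_0$ tend to $0$ along the infinite set of $n$, so $\mathcal H^{s_0}(\widehat E_X(\alpha))=0$, i.e. $\dim\widehat E_X(\alpha)\le s_0$. Letting $\epsilon\to0$, $\eta\to0$, then optimizing over $q\in Q$ gives $\dim\widehat E_X(\alpha)\le \inf_{q\in Q}\big(\wt P(q)-\langle q|\alpha\rangle\big)$. Since $Q$ is dense in $\dom P$ and $\wt P$ is lower semi-continuous (hence $q\mapsto \wt P(q)-\langle q|\alpha\rangle$ is l.s.c.), this infimum over the dense set $Q$ equals $\inf_{q\in\R^d}(\wt P(q)-\langle q|\alpha\rangle)$ only if we are careful; to be safe, I would instead take the infimum over $q\in Q$ of $\mathrm{cl}(P)(q)-\langle q|\alpha\rangle$ and use Proposition~\ref{upb1}(2), getting $\dim\widehat E_X(\alpha)\le \inf_{q}(P(q)-\langle q|\alpha\rangle)=P^*(\alpha)$, where the last equality uses density of $Q$ together with lower semicontinuity of $\mathrm{cl}(P)$ and the fact that $P^*=(\mathrm{cl}(P))^*$.

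The main obstacle is the \emph{simultaneity in $\alpha$}: the argument above works for each fixed $\alpha$ on an $\alpha$-dependent event, but we need one probability-one event valid for all $\alpha$ at once. The fix is that the probability-one event I described (the $O$-estimates for all $q\in Q$ and all rational $s>\wt P(q)$) is itself countable and independent of $\alpha$, and the covering computation only ever invokes that event; $\epsilon$ and $\eta$ can also be restricted to rationals. So once we are on this single good event, the chain of inequalities goes through verbatim for every $\alpha\in\R^d$ simultaneously, yielding $\dim\widehat E_X(\alpha)\le P^*(\alpha)$ for all $\alpha$, with the convention that $P^*(\alpha)<0$ forces emptiness of $\widehat E_X(\alpha)$ (the Hausdorff pre-measure at any positive exponent already tends to $0$, and in fact $\#\{u:\|S_nX(u)/n-\alpha\|\le\epsilon\}\to 0$, i.e. is eventually $0$, along the relevant subsequence). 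A minor point to handle cleanly is the passage from "infinitely many $n$" to a genuine cover realizing the Hausdorff measure: since a $\delta$-cover is obtained for each large $n$ in the infinite set with $\delta=e^{-n}\to 0$, this is exactly what $\mathcal H^{s_0}_\delta$ with $\delta\to0$ requires, so no extra care beyond extracting the subsequence is needed.
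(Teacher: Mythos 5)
There is a genuine gap in the covering step, and your detour through Proposition~\ref{upb1} also weakens the conclusion. On the covering step: you claim that, for each large $n$ in some infinite set, the level-$n$ cylinders $\{[u]:u\in\TT_n,\ \|S_nX(u)/n-\alpha\|\le\epsilon\}$ form a cover of $\widehat E_X(\alpha)$ of diameter $e^{-n}$, and you reaffirm this in the closing remark by saying ``a $\delta$-cover is obtained for each large $n$ in the infinite set with $\delta=e^{-n}$.'' This is false: the infinite set of indices $n$ at which $\|S_nX(t)/n-\alpha\|\le\epsilon$ depends on $t$, so for a \emph{fixed} $n$ there may be $t\in\widehat E_X(\alpha)$ that lie in a level-$n$ cylinder $[u]$ violating the condition. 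No single level furnishes a cover. The correct cover (as the paper uses) is multi-level: for each $N$, $\widehat E_X(\alpha)\subset\bigcup_{n\ge N}\{t:|\langle q|S_nX(t)-n\alpha\rangle|\le n\|q\|\epsilon\}$, covered by all cylinders $[u]$, $u\in\TT_n$, $n\ge N$, with $\langle q|S_nX(u)-n\alpha\rangle+n\|q\|\epsilon\ge 0$; these have diameter $\le e^{-N}$, and one bounds $\mathcal H^s_{e^{-N}}$ by the \emph{sum over all $n\ge N$}. Your single-$n$ estimate is precisely the $n$-th summand of this geometric series; you need to sum it over $n\ge N$ and let $N\to\infty$, rather than letting $n\to\infty$ along a $t$-independent subsequence, which does not exist here.

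The second issue is the use of $\wt P$ in place of $P$. Invoking Proposition~\ref{upb1}(1) to obtain the estimate $\sum_{u\in\TT_n}\exp(\langle q|S_nX(u)\rangle)=O(e^{ns})$ for $s>\wt P(q)$ only delivers $\dim\widehat E_X(\alpha)\le\wt P^*(\alpha)$, and since $P\le\wt P$ gives $P^*\le\wt P^*$ under the paper's (concave) Legendre convention, this is strictly \emph{weaker} than the claimed $\le P^*(\alpha)$. The attempted repair at the end — replacing $\wt P$ by $\mathrm{cl}(P)$ and citing Proposition~\ref{upb1}(2) — is not supported by the covering computation you actually performed, which was with $\wt P$. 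More to the point, the detour is unnecessary: the definition of $P(q)$ as $\limsup_n n^{-1}\log\sum_u\exp(\langle q|S_nX(u)\rangle)$ already yields, for every fixed $\omega$ and every $s>P(q)$, the bound $\sum_u\exp(\langle q|S_nX(u)\rangle)=O(e^{ns})$, pathwise. The whole argument is then deterministic given $\omega$: no countable dense set $Q$, no union bound, no probability-one event, and the ``simultaneity in $\alpha$'' concern disappears since the chain of inequalities holds surely. This is exactly the structure of the paper's proof, and it gives the stated sharp bound by $P^*(\alpha)$.
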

\begin{proof}
We have 
\begin{eqnarray*}
\widehat E_X(\alpha)&=&\bigcap_{\epsilon>0}\bigcap_{N\ge 1}\bigcup_{n\ge N}\{t\in\partial\TT: \|S_nX(t)-n\alpha\|\le n\epsilon\}\\
&\subset& \bigcap_{q\in\R^d}\bigcap_{\epsilon>0}\bigcap_{N\ge 1}\bigcup_{n\ge N}\{t\in\partial\TT: |\langle q|S_nX(t)-n\alpha\rangle|\le n\|q\|\epsilon\}.
\end{eqnarray*}
Fix $q\in\mathbb{R}^d$ and $\epsilon>0$. For $N\ge 1$, the set $E(q,N,\epsilon,\alpha)=\bigcup_{n\ge N}\{t\in\partial\TT: |\langle q|S_nX(t)-n\alpha\rangle|\le n\|q\|\epsilon\}$ is covered by the union of those $[u]$ such that $u\in \bigcup_{n\ge N}\TT_n$ and $\langle q|S_nX(u)-n\alpha\rangle +n\|q\|\epsilon\ge 0$. Consequently, for $s\ge 0$,
$$
\mathcal H^s_{e^{-N}}(E(q,N,\epsilon,\alpha))\le \sum_{n\ge N}\sum_{u\in \TT_n} e^{-ns} \exp(\langle q|S_nX(u)-n\alpha\rangle +n\|q\|\epsilon).
$$
Thus, if $\eta>0$ and $s>P(q)+\eta-\langle q|\alpha\rangle+\|q\|\epsilon$, by definition of $P(q)$, for $N$ large enough  we have 
$$
\mathcal H^s_{e^{-N}}(E(q,N,\epsilon,\alpha))\le \sum_{n\ge N}e^{-n\eta/2}.
$$
This yields $\mathcal H^s(E(q,N,\epsilon,\alpha))=0$, hence $\dim E(q,N,\epsilon,\alpha)\le s$. Since this holds for all $\eta>0$, we get $\dim E(q,N,\epsilon,\alpha)\le P(q)-\langle q|\alpha\rangle+\|q\|\epsilon$. It follows that $\dim \widehat E_X(\alpha)\le \inf_{q\in\R^d} \inf_{\epsilon>0}P(q)-\langle q|\alpha\rangle+\|q\|\epsilon=P^*(\alpha)$. If $P^*(\alpha)<0$, we necessarily have $ \widehat E_X(\alpha)=\emptyset$. 
\end{proof}

\begin{cor}
With probability 1, for all subsets $K$ of $\R^d$, we have $E_X(K)= \emptyset$  if $K\not \subset I$, and $
\dim E_X(K)\le \inf_{\alpha\in K}\widetilde P^*(\alpha)$ otherwise. 
\end{cor}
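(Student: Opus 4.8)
The plan is to deduce the corollary from Propositions~\ref{upb1} and~\ref{uHD} together with the trivial inclusion $E_X(K)\subset\widehat E_X(\alpha)$, valid for every $\alpha\in K$: indeed, if $t\in E_X(K)$ then $K=\bigcap_{N\ge 1}\overline{\{S_nX(t)/n:n\ge N\}}$, so each $\alpha\in K$ is a limit point of $(S_nX(t)/n)_{n\ge 1}$, i.e. $t\in\widehat E_X(\alpha)$. Consequently $E_X(K)\subset\bigcap_{\alpha\in K}\widehat E_X(\alpha)$, and in particular $\dim E_X(K)\le\inf_{\alpha\in K}\dim\widehat E_X(\alpha)$.

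Next I would fix the almost sure event on which \emph{both} $\dim\widehat E_X(\alpha)\le P^*(\alpha)$ for every $\alpha\in\R^d$ (Proposition~\ref{uHD}, with a negative bound meaning $\widehat E_X(\alpha)=\emptyset$) and $\mathrm{cl}(P)(q)\le\widetilde P(q)$ for every $q\in\dom\widetilde P$ (Proposition~\ref{upb1}(2)). On this event I claim $P^*(\alpha)\le\widetilde P^*(\alpha)$ for all $\alpha$. This reduces to an elementary convex-analytic remark: $P$ is convex, being a $\limsup$ of the convex functions $q\mapsto n^{-1}\log\sum_{u\in\TT_n}e^{\langle q|S_nX(u)\rangle}$, hence it has the same Legendre--Fenchel transform as $\mathrm{cl}(P)$ by \cite[Theorem 12.2]{Roc}; since $\mathrm{cl}(P)\le\widetilde P$ on $\dom\widetilde P$ and trivially off it, restricting to $\dom\widetilde P$ the infimum defining $\widetilde P^*$ gives $\widetilde P^*(\alpha)=\inf_{q\in\dom\widetilde P}\widetilde P(q)-\langle q|\alpha\rangle\ge\inf_{q\in\R^d}\mathrm{cl}(P)(q)-\langle q|\alpha\rangle=P^*(\alpha)$. (One may avoid closures altogether by fixing a countable dense set $D\subset\mathrm{ri}(\dom\widetilde P)$, intersecting the almost sure events of Proposition~\ref{upb1}(1) over $q\in D$, and using continuity of $\widetilde P$ on $\mathrm{ri}(\dom\widetilde P)$ to reduce the infimum defining $\widetilde P^*$ to $D$.) The case $\dom\widetilde P=\emptyset$ is immediate, since then $\widetilde P^*\equiv+\infty$, $I=\R^d$, and there is nothing to prove.

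Putting the pieces together on that event: for every $\alpha$ one has $\dim\widehat E_X(\alpha)\le P^*(\alpha)\le\widetilde P^*(\alpha)$. If $K\not\subset I$, choose $\alpha\in K$ with $\widetilde P^*(\alpha)<0$; then $\widehat E_X(\alpha)=\emptyset$, so $E_X(K)\subset\widehat E_X(\alpha)=\emptyset$. If $K\subset I$, then $\dim E_X(K)\le\dim\widehat E_X(\alpha)\le\widetilde P^*(\alpha)$ for each $\alpha\in K$, hence $\dim E_X(K)\le\inf_{\alpha\in K}\widetilde P^*(\alpha)$. As the underlying almost sure event does not depend on $K$, both conclusions hold simultaneously for all subsets $K$ of $\R^d$.

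The argument is short, and the only step I would single out as the main (mild) obstacle is the passage from $P^*$ to $\widetilde P^*$: Proposition~\ref{upb1}(1) controls $P$ only at each fixed exponent $q$, so one genuinely needs the simultaneous-in-$q$ statement of Proposition~\ref{upb1}(2) (equivalently, a countable dense set of exponents together with the convexity of the objects involved) before one can compare the two concave conjugates and thereby obtain the dimension estimate uniformly over all $K$.
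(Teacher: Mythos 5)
Your proof is correct and follows essentially the same route as the paper: the inclusion $E_X(K)\subset\bigcap_{\alpha\in K}\widehat E_X(\alpha)$, Proposition~\ref{uHD} for the bound $\dim\widehat E_X(\alpha)\le P^*(\alpha)$, the identity $P^*=(\mathrm{cl}(P))^*$ from \cite[Theorem 12.2]{Roc}, and Proposition~\ref{upb1}(2) to compare with $\widetilde P^*$. The only elaboration you add that the paper leaves implicit is the explicit spelling-out of the convex-duality inequality $\mathrm{cl}(P)^*\le\widetilde P^*$, which is fine.
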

\begin{proof}
We have $\displaystyle E_X(K)\subset \bigcap_{\alpha\in K}\widehat E_X(\alpha)$. Consequently, due to the previous propositions,  if $K\not\subset I$,   $E_X(K)=\emptyset$, otherwise, since $P^*=(\mathrm{cl}(P))^*$ (\cite[Theorem 12.2]{Roc}), we have $\dim E_X(K)\le \inf_{\alpha\in K}\dim \widehat E_X(\alpha)\le  \inf_{\alpha\in K} \mathrm{cl}(P)^*(\alpha)\le  \inf_{\alpha\in K}\widetilde P^*(\alpha)$.
\end{proof}

\subsection{Upper bounds for the packing dimension of the sets $E_X(K)$}\label{uppack}
For reader's convenience we include the following proposition, which is standard.
\begin{prop}
With probability 1, $\Dim \partial\TT\le \wt P(0)$.
\end{prop}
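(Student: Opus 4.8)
The plan is to bound the packing dimension of $\partial\TT$ by exhibiting an economical covering of $\partial\TT$ at every scale $e^{-n}$, using the cylinders $[u]$ with $u\in\TT_n$, whose number is controlled by the first moment $\E(\#\TT_n)=\E(N)^n=e^{n\wt P(0)}$. More precisely, I would recall that the packing dimension of a set $E$ satisfies $\Dim E\le \inf\{\sup_i \dim_B F_i : E\subset\bigcup_i F_i\}$, the infimum being over countable covers, and in fact it suffices here to control the upper box dimension of $\partial\TT$ itself (or of suitable closed subsets whose union is $\partial\TT$), since $\Dim E\le \overline{\dim}_B E$ always, and for the ultrametric space $\N_+^{\N_+}$ the natural cover of $\partial\TT$ at scale $e^{-n}$ is by the cylinders of generation $n$ meeting $\TT$, which have diameter $e^{-n}$.

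The key steps, in order, would be: first, if $\wt P(0)=\log\E(N)=\infty$ there is nothing to prove, so assume $\E(N)<\infty$. Second, observe that the number $N_n:=\#\TT_n$ of nodes of generation $n$ satisfies $\E(N_n)=\E(N)^n$, so by Markov's inequality $\P(N_n\ge e^{n(\wt P(0)+\epsilon)})\le e^{-n\epsilon}$ for each $\epsilon>0$; by Borel--Cantelli, almost surely $N_n\le e^{n(\wt P(0)+\epsilon)}$ for all $n$ large enough. Third, since $\{[u]:u\in\TT_n\}$ is a cover of $\partial\TT$ by sets of diameter $e^{-n}$, the box-counting number $N_{e^{-n}}(\partial\TT)$ is at most $N_n$, so $\overline{\dim}_B\partial\TT=\limsup_n \frac{\log N_{e^{-n}}(\partial\TT)}{n}\le \wt P(0)+\epsilon$ almost surely. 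Letting $\epsilon\to 0$ through a countable sequence gives $\overline{\dim}_B\partial\TT\le\wt P(0)$ almost surely, hence $\Dim\partial\TT\le\overline{\dim}_B\partial\TT\le\wt P(0)$.

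I expect there to be essentially no serious obstacle here — this is flagged as ``standard'' in the text — but the one point to be careful about is the relationship between box dimension along the dyadic-type scales $e^{-n}$ and the genuine upper box (hence packing) dimension: one must check that restricting to scales $e^{-n}$, $n\in\N$, suffices, which is immediate in an ultrametric space where every ball of radius $r\in(e^{-(n+1)},e^{-n}]$ is exactly a generation-$n$ cylinder, so intermediate scales add nothing. A secondary minor point is that packing dimension is not bounded by upper box dimension of the whole set in general when the set is not totally bounded or not compact; but here $\partial\TT$ is almost surely compact (as recalled in the introduction), so $\Dim\partial\TT\le\overline{\dim}_B\partial\TT$ holds without difficulty. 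Alternatively, and perhaps more in the spirit of the packing-measure definition, one can argue directly: any packing of $\partial\TT$ by disjoint balls of radius $\le e^{-n}$ consists of at most $N_n$ balls (disjoint balls in an ultrametric space of a given radius are a subfamily of the generation-$n$ cylinders), and then the $s$-dimensional packing premeasure computation gives $\sum (\text{diam})^s\le N_n e^{-ns}\to 0$ for $s>\wt P(0)+\epsilon$, yielding $\mathcal P^s(\partial\TT)=0$ and $\Dim\partial\TT\le s$.
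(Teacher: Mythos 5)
Your main route is valid, but it differs from the paper's. You reduce the packing bound to upper box dimension (using that $\partial\TT$ is a.s.\ compact) and control $\#\TT_n$ by Markov plus Borel--Cantelli. The paper instead bounds the packing pre-measure directly: because an $e^{-N}$-packing of a cylinder $[u]$ is a disjoint family of cylinders of generations $\ge N$ inside $[u]$, one has
$$
\sup\sum_{j}|[v_j]|^{\wt P(0)+\epsilon}\le\sum_{n\ge N}\sum_{v\in\TT_n}e^{-n(\wt P(0)+\epsilon)},
$$
and the right-hand side has expectation $\sum_{n\ge N}e^{-n\epsilon}$. Hence the series $\sum_{n\ge 1}\sum_{v\in\TT_n}e^{-n(\wt P(0)+\epsilon)}$ is a.s.\ finite, its tail vanishes as $N\to\infty$, and $\overline{\mathcal P}^{\wt P(0)+\epsilon}([u])=0$ holds simultaneously for all (countably many) cylinders $u$, whence $\mathcal P^{\wt P(0)+\epsilon}(\partial\TT)=0$. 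This avoids the detour through box dimension and the a.s.\ compactness of $\partial\TT$, and it is exactly the template reused in the next proposition to bound $\Dim E_X(K)$; your version is perhaps more standard in flavour. Both rest on the same first-moment fact $\E(\#\TT_n)=e^{n\wt P(0)}$.

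One flaw in your closing ``alternative'': a packing of $\partial\TT$ by disjoint balls of radius \emph{at most} $e^{-n}$ is \emph{not} bounded by $N_n$ balls. Such a packing is a disjoint family of cylinders of various generations $m\ge n$, and a single generation-$n$ cylinder can contain arbitrarily many disjoint cylinders of higher generation (recall $N$ is an arbitrary $\N_+$-valued variable, possibly unbounded). Consequently the bound $\sum(\mathrm{diam})^s\le N_n e^{-ns}$ is false. The correct counting groups the packing by generation: there are at most $N_m$ balls of radius $e^{-m}$, so $\sum(\mathrm{diam})^s\le\sum_{m\ge n}N_m\,e^{-ms}$, which is precisely the quantity the paper's proof controls in expectation.
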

\begin{proof}
If $\wt P(0)=\infty$ we have nothing to prove. Suppose that $\wt P(0)<\infty$. Let $u\in\TT$ and $N\ge |u|$. Due to the ultra-metric nature of the distance $d_1$, an $e^{-N}$ packing of $[u]$ is a countable union of disjoint  cylinders $[v_j]$, $j\in \mathcal J$,  included in $[u]$. Let $\epsilon>0$.   We thus have 
$$
\sup_{\{v_j\}_{j\in\mathcal J}}\sum_{j\in\mathcal J}|[v_j]|^{\wt P(0) +\epsilon}\le \sum_{n\ge N} \sum_{v\in \TT_n} e^{-n( \wt P(0) +\epsilon)},
$$
where $\{v_j\}_{j\in\mathcal J}$ runs over the $e^{-N}$ packing of $[u]$,  and the expectation of the right hand side equals  $\sum_{n\ge N}e^{-n\epsilon}$, so that the right hand side converges to 0  exponentially fast almost surely. This yields $\overline P^{\wt P(0)+\epsilon}([u])=0$ for all $u$ almost surely. The set of cylinders being countable, we get $\overline {\mathcal P}^{\wt P(0)+\epsilon}([u])=0$ almost surely for all $u\in \TT$, and consequently $\mathcal P^{\wt P(0)+\epsilon}(\partial \TT)=0$ almost surely. Finally, $\Dim\partial \TT\le \wt P(0)+\epsilon$ for all $\epsilon>0$. 
\end{proof}

\begin{prop}
With probability 1, if $K\subset I$ is a non-empty compact subset of $I$, then  $\Dim E_X( K)\le \sup_{\alpha\in K} \wt P^*(\alpha)$.
\end{prop}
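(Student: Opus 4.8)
The plan is to fix a real number $s>\sup_{\alpha\in K}\wt P^*(\alpha)$, prove that $\Dim E_X(K)\le s$ almost surely, and then let $s$ decrease to $\sup_{\alpha\in K}\wt P^*(\alpha)$. The core ingredient is a large deviations estimate for the number of branches of $\TT_n$ whose average stays close to $K$. Since $\wt P^*$ is upper semi-continuous and $K$ is compact, one chooses $s'\in(\sup_{\alpha\in K}\wt P^*(\alpha),s)$, and for each $\alpha\in K$ a vector $q_\alpha$ with $\wt P(q_\alpha)-\langle q_\alpha|\alpha\rangle<s'$ (possible since $\wt P^*(\alpha)=\inf_{q}\wt P(q)-\langle q|\alpha\rangle$), so that $q_\alpha\in\dom\wt P$; as $\beta\mapsto \wt P(q_\alpha)-\langle q_\alpha|\beta\rangle$ is affine, a small enough radius $\rho_\alpha>0$ makes $\sup_{\|\beta-\alpha\|\le 2\rho_\alpha}\big(\wt P(q_\alpha)-\langle q_\alpha|\beta\rangle\big)+2\|q_\alpha\|\rho_\alpha<s$. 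Extracting a finite subcover $\{B(\alpha_i,\rho_i)\}_{i=1}^{m}$ of $K$ from $\{B(\alpha,\rho_\alpha)\}_{\alpha\in K}$ and setting $\rho=\min_i\rho_i$, every point of $K_\rho$ lies within $2\rho_i$ of some $\alpha_i$, and on $\{\|S_nX(u)/n-\alpha_i\|\le 2\rho_i\}$ one has $1\le \exp(\langle q_i|S_nX(u)\rangle-n\langle q_i|\alpha_i\rangle+2n\rho_i\|q_i\|)$. Combining this with the identity $\E\big(\sum_{u\in\TT_n}e^{\langle q|S_nX(u)\rangle}\big)=e^{n\wt P(q)}$ already used in \eqref{Pntheta} gives $\E\big(\#\{u\in\TT_n:\ S_nX(u)/n\in K_\rho\}\big)\le \sum_{i=1}^{m}e^{n(\wt P(q_i)-\langle q_i|\alpha_i\rangle+2\rho_i\|q_i\|)}\le m\,e^{n(s-\zeta)}$ for some $\zeta>0$; Markov's inequality and Borel--Cantelli then yield that almost surely $\#\{u\in\TT_n:\ S_nX(u)/n\in K_\rho\}\le e^{n(s-\zeta/2)}$ for all large $n$.

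Next I treat the branches whose averages stay bounded. If $t\in E_X(K)$ and $\limsup_n\|S_nX(t)/n\|<\infty$, then $(S_nX(t)/n)_n$ is bounded and has all its limit points in the closed set $K$, so it lies in the closed neighbourhood $K_\rho$ for all $n\ge N$, some $N=N(t)$; hence this part of $E_X(K)$ is contained in $\bigcup_{N\ge1}F_N$, where $F_N=\{t\in\partial\TT:\ S_nX(t)/n\in K_\rho\ \text{for all}\ n\ge N\}$. For each $N$ and each $n\ge N$, $F_N$ is covered at scale $e^{-n}$ by the cylinders $[u]$ with $u\in\TT_n$ and $S_nX(u)/n\in K_\rho$, so by the previous paragraph its $e^{-n}$-covering number is $\le e^{n(s-\zeta/2)}$ for $n$ large; therefore $\overline{\dim}_B F_N\le s$, and a fortiori $\Dim F_N\le s$. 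By the countable stability of packing dimension, $E_X(K)\cap\{t:\limsup_n\|S_nX(t)/n\|<\infty\}$ has packing dimension at most $s$.

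The main obstacle is the set of $t\in E_X(K)$ with $\limsup_n\|S_nX(t)/n\|=\infty$, which no $F_N$ captures. Here one exploits that every finite limit point of $(S_nX(t)/n)_n$ lies in the compact $K$: for every threshold $M$ and every $\epsilon\in(0,\rho]$ there is $N$ with $S_nX(t)/n\in K_\epsilon\cup\{\alpha:\|\alpha\|>M\}$ for all $n\ge N$. One must then supplement the counting estimate above with control of the branches whose average has wandered far from $K$ but still returns to $K_\epsilon$ infinitely often; this is the delicate point, especially when $0$ is not an interior point of $\dom\wt P$ (when it is, a standard argument shows $S_nX(t)/n$ is uniformly bounded over all $t$ and all large $n$, so these branches do not exist and the two preceding paragraphs already conclude). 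The mechanism is that each return to $K_\epsilon$ after a far excursion is itself dimension-costly, so the exponential growth rate of the relevant branch counts stays $\le s$; making this rigorous — roughly, by decomposing a branch at its last visit to $K_\epsilon$ before generation $n$, controlling the initial portion by the estimate of the first paragraph, and showing the subsequent excursion-and-return cannot enlarge the count — is where one again invokes the approximations $\wt P_A$ of Section~\ref{sec2.1} and the upper semi-continuity of $\wt P^*$.
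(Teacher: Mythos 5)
Your first two paragraphs are, in substance, the paper's own argument. The paper first establishes the estimate \eqref{LD'} --- almost surely, \emph{simultaneously} for all $\alpha\in I$, $\lim_{\epsilon\to0}\limsup_{n\to\infty}n^{-1}\log\#\{v\in\TT_n:\|S_n(v)-n\alpha\|\le n\epsilon\}\le\wt P^*(\alpha)$ --- by dominating the count by $\sum_u\exp(\langle q|S_n(u)-n\alpha\rangle+n\|q\|\epsilon)$ and invoking the a.s.\ inequality $\mathrm{cl}(\overline P)\le\wt P$ of Proposition~\ref{upb1}; it then extracts a finite cover $K\subset\bigcup_iB(\alpha_i,\epsilon_i)$ and writes $E_X(K)\subset\bigcup_NF_N$ with $F_N=\bigcap_{n\ge N}\bigcup_i\{t:\|S_n(t)-n\alpha_i\|\le n\epsilon_i\}$, controlling the packing premeasure of each $F_N$. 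Your route via the moment/Chernoff bound and upper box dimension is equivalent, with one caveat: your Borel--Cantelli step is applied to a \emph{fixed} $K$ and $\rho$, whereas the statement is ``with probability~1, for all compact $K\subset I$''. The paper obtains the needed uniformity in $K$ by first proving the large-deviations estimate for all $\alpha$ at once on a single a.s.\ event (through $\overline P\le\wt P$), and only then choosing the cover of $K$ deterministically. Your argument should be reorganized in the same spirit.

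Your third paragraph isolates the genuinely delicate step, and in fact points at something the paper's text does not address. The inclusion $E_X(K)\subset\bigcup_{N\ge1}F_N$ requires that $S_nX(t)/n$ eventually enter and stay in a fixed bounded neighbourhood of $K$; if $0\notin\mathrm{int}(\dom\wt P)$, a trajectory can have $\|S_{n_k}X(t)/n_k\|\to\infty$ along a subsequence while its set of accumulation points in $\R^d$ is exactly the compact set $K$ (the escaping times contribute no finite accumulation points), and such a $t$ lies in no $F_N$. The published proof simply asserts the inclusion ``Now we notice that $E_X(K)\subset\bigcup_{N\ge1}F_N$'' without comment, so the concern you raise applies equally to the paper's argument. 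Your proposed remedy --- replacing $K_\rho$ by $K_\epsilon\cup\{\|\alpha\|>M\}$, decomposing a branch at its last return to $K_\epsilon$, and using the truncations $\wt P_A$ together with upper semi-continuity of $\wt P^*$ --- is a sensible direction, but it is only a sketch; as written neither your proposal nor the paper's text carries out the estimate that would show the escaping part of $E_X(K)$ has packing dimension at most $\sup_{\alpha\in K}\wt P^*(\alpha)$. This is the one place where a real additional argument is needed.
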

\begin{proof}
At first we note that setting for $\alpha\in I$, $n\ge 1$ and $\epsilon>0$
$$
f(n,\alpha,\epsilon)=\frac{\log \#\{v\in \TT_n:\, \|S_n(v)-n\alpha\|\le n\epsilon\}}{n},
$$
we have:
\begin{equation}\label{LD'} 
\text{Almost surely, } \forall\, \alpha\in I,\ \lim_{\epsilon\to 0}\limsup_{n\to\infty} f(n,\alpha,\epsilon)\le \wt P^*(\alpha).
\end{equation}
Indeed,  we clearly have almost surely for all $\alpha\in I$, $q\in\R^d$, $n\ge 1$ and $\epsilon>0$
$$
\#\{v\in \TT_n:\, \|S_n(v)-n\alpha\|\le n\epsilon\}\le \sum_{u\in \TT_n} \exp(\langle q|S_nX(u)-n\alpha\rangle +n\|q\|\epsilon), 
$$
hence $
\limsup_{n\to\infty} f(n,\alpha,\epsilon)\le P(q)-\langle q|\alpha\rangle + \|q\|\epsilon
$, by definition of $P$. Consequently, $\lim_{\epsilon\to 0}\limsup_{n\to\infty} f(n,\alpha,\epsilon)\le P(q)-\langle q|\alpha\rangle$ for all $q\in\R^d$, hence the desired upper bound since $P^*\le \widetilde P^*$. 

\medskip

Now let $K$ be a non-empty compact subset of $I$. Let $\eta>0$. For each $\alpha\in K$, let $\epsilon_\alpha>0$ and $n_\alpha\ge 1$ such that $f(n,\alpha,\epsilon_\alpha)\le \wt P^*(\alpha)+\eta$ for all $n\ge n_\alpha$. Since $K$ is compact, we can find $\alpha_1,\ldots,\alpha_\ell$ in $K$ such that $K\subset \bigcup_{i=1}^\ell B(\alpha_i,\epsilon_i)$, where $\epsilon_i$ stands for $\epsilon_{\alpha_i}$. Now we notice that 
$
E_X(K)\subset\bigcup_{N\ge 1}F_N 
$, 
where 
$$
F_{N}=\bigcap_{n\ge N} \bigcup_{i=1}^{\ell}\{t: \|S_n(t)-n\alpha_i\|\le n\epsilon_i\}.
$$
Fix $N\ge 1$. Let $u\in\TT$ with $|u|\ge N$. For each $N'\ge \max(|u|, n_{\alpha_1},\ldots,n_{\alpha_\ell})$, any $e^{-N'}$-packing of $[u]\cap F_{N}$ is included in $\bigcup_{n\ge N'} \bigcup_{i=1}^{\ell}\bigcup_{\substack{v\in \TT_n,\\
 \|S_n(v)-n\alpha_i\|\le n\epsilon}}[v]$. Consequently, setting $s=\sup_{\alpha\in K}\wt P^*(\alpha)+2\eta$, we have 
\begin{eqnarray*}
&&\sup_{\substack{\{[v_j]\}_{j\in\mathcal J}, \, \text{$e^{-N'}$-packing  of $[u]\cap F_{N}$} }}\sum_{j\in\mathcal{J}}|[v_j]|^s\le \sum_{n\ge N'}\sum_{i=1}^\ell \sum_{\substack{v\in \TT_n,\\
 \|S_n(v)-n\alpha_i\|\le n\epsilon_i}} |[v]|^s\\
 &\le&\sum_{n\ge N'}\sum_{i=1}^\ell |[v]|^sf(n,\alpha,\epsilon_i) \le \sum_{n\ge N'}\sum_{i=1}^\ell e^{-n (s-\wt P^*(\alpha_i)-\eta)}=\ell \sum_{n\ge N'}e^{-n\eta}.
 \end{eqnarray*}
It follows that $\overline{\mathcal P}^s([u]\cap F_{N})=0$. This holds for all $u$ of generation bigger than $N$, so $\mathcal P^s(F_{N})=0$. Since this is true for all $N\ge 1$, we get $\mathcal P^s(E_X(K))=0$ and $\Dim E_X(K)\le  \sup_{\alpha\in K}\wt P^*(\alpha)+2\eta$, for all $\eta>0$. 

\end{proof}
\subsection{Lower bounds for the dimensions of the sets $E_X(K)$ via inhomogeneous Mandelbrot measures}\label{LB}

\subsubsection{A family of inhomogeneous Mandelbrot martingales} \label{sec-2.2.1}

\subsubsection*{The set of parameters} Choose $(D_j)_{j\ge 1}$ as in Corollary~\ref{approxi}. Let $(L_j)_{j\ge 0}$ be a sequence of integers such that $L_0=0$, and that we will specify below. Then let $(M_{j})_{j\geq 0}$ be the  increasing sequence defined as 
\begin{equation}\label{Mj}
M_{j}=\displaystyle\sum_{k=1}^{j}L_{k} \text{ for all $j\ge 0$}.
\end{equation}
For  $n\in \N$, let   $j_{n}$ denote  the unique integer  satisfying 
  $$M_{j_{n}-1}+1\le  n \le  M_{j_{n}}.$$

Let
\begin{equation}\label{setJ}
{\mathcal {J}}=\{ \varrho=(q_k)_{k\geq 1} : \forall j \geq 1,  q_{M_{j-1} +1} =  q_{M_{j-1} +2} = \cdots =  q_{M_{j }} \in D_{j} \}.
\end{equation}
Since each $D_j$ is finite for all $j\ge 1$, the set $\mathcal J $ is compact for the metric 
$$
d(\varrho=(q_k)_{k \geq 1} ,\varrho'= (q'_k)_{k \geq 1}) = \displaystyle \sum_{k \geq 1} 2^{-k} \frac{ |q_k- q'_k |}{1+|q_k- q'_k |}.
$$

For $\varrho=(q_k)_{k\ge 1} \in \mathcal{J}$ and $n\ge 1$ we will denote by $\varrho_{|n}$ the sequence $(q_k)_{1\le k\le n}$. 

We denote 
$$
d_j= \prod_{j'=1}^{j}\# D_{j'}.
$$
By construction, for all $n\ge 1$ we have 
$$
\#\{\varrho_{|n}: \varrho\in\mathcal J\}=d_{j_n}.
$$

We are going to build on $\partial \TT$ a family of measures $\{\mu_\varrho\}_{\varrho\in\mathcal J}$. All these measures will be positive conditionally on the event $\Omega_{A_0}$ of non extinction of the subset $\partial\widetilde \TT$ of $\partial\TT$ made of the infinite branches $u$ such that $u_{n}\le N_{u_{|n-1}}\land A_{j_n}$ and $\|X_{u_1\ldots u_n}\|\le A_{j_n}$ for all $n\ge 1$.  Moreover, on this event these measures will lead to the expected lower bounds for the Hausdorff and packing dimensions of the sets $E_X(K)$. Moreover, by construction $\partial\widetilde \TT$ contains the set of  the infinite branches $u$ such that $u_{n}\le N_{u_{|n-1}}\land A_{0}$ and $\|X_{u_1\ldots u_n}\|\le A_{0}$ for all $n\ge 1$, that is the boundary of a Galton-Watson tree with offspring distribution that of $N_{A_0}$, whose extinction probability  tends to 0 as $A_0$ tends to $\infty$ by Proposition~\ref{extinction}. Consequently  the probability of the event $\Omega_{A_0}$ tends to 1 as $A_0$ tends to $\infty$, and this will give the conclusion.

\subsubsection*{ Inhomogeneous Mandelbrot martingales indexed by $\mathcal J$}

For $n\ge 1$ we set ${\mathcal{F}}_{n}=\sigma \Big ((N_{u},X_{u1},X_{u2},\ldots): u\in \bigcup_{k=0}^{n-1} \N_+^{k}\Big )$ and $\mathcal G_n=\sigma \Big ((N_{u},X_{u1},X_{u2},\ldots): u\in \bigcup_{k\ge n} \N_+^{k}\Big )$.   We also denote by ${\mathcal{F}}_{0}$ the trivial $\sigma$-field.

\medskip

For $u\in \bigcup_{n\ge 0} \N_+^n$, $1\le i\le N_u$  and $\varrho=(q_k)_{k\ge 1} \in {\mathcal J}$ we define 
\begin{eqnarray*}
W_{\varrho,ui} =  \mathbf{1}_{\{1\le i\le N_u\land A_{ j_{|u|+1}}\}}  \mathbf{1}_{\{\|X_{ui}\|\le A_{j_{|u|+1}}\}}  \exp \big (\langle q_{|u|+1}|X_{ui}\rangle-\wt P_{A_{j_{|u|+1}}}(q_{|u|+1})\big ).
\end{eqnarray*}

\medskip

For $\varrho= (q_k)_{k\geq 1} \in {\mathcal J}$, $u\in \bigcup_{n\ge 0}\N_+^n$ and $n\ge 0$ we define
$$
Y_n(\varrho, u)= \displaystyle\sum_{v_1\cdots v_n \in \TT_n(u)} \displaystyle\prod_{k=1}^n W_{\varrho,u\cdot v_1\cdots v_k}.
$$
When $u=\emptyset $ this quantity will be denoted by $Y_n( \varrho)$, and when $n=0$, its value equals 1.

\subsubsection*{Specification of the sequence $(L_j)_{j\ge 1}$} The following requirements about the sequence $(L_j)_{j\ge 0}$ can be skipped until the proofs of the next section. 
\medskip

The functions $\widetilde P_{A_j}$, $j\ge 1$ are analytic. We denote by $H_j$ the Hessian matrix of $\wt  P_{A_j}$. For each $j\ge 1$, 
\begin{equation}\label{mj}
m_{1,j}=\displaystyle\sup_{t\in [0,1]}\sup_{v\in \mathbb S^{d-1}} \sup_{1\le j'\le j}\sup_{q\in D_{j'}} {}^tv H_{j'}(q+tv)v
\end{equation}
and 
\begin{equation}\label{tmj}
m_{2,j}=\displaystyle\sup_{t\in [0,1]}\sup_{p\in [1, 2]}\sup_{1\le j'\le j} \sup_{q\in D_{j'}} {}^tq H_{j'}(q+t(p-1)q)q
\end{equation}
are finite. Then let 
\begin{equation}\label{mjj}
m_j=\max (m_{1,j},m_{2,j}).
\end{equation}
\medskip

Next we notice that  due to \eqref{pP}, two applications of the Cauchy-Schwarz inequality as in the proof of Proposition~\ref{detI} show that, by choosing $A_0$ larger if necessary, for $j\ge 1$ we have 
\begin{equation}\label{sing}
c_j(q,q')=\mathbb{E}\Big (\sum_{i=1}^{N\land A_j}\mathbf{1}_{\{\|X_i\|\le A_j\}}\exp\Big [\frac{1}{2} (\langle q|X_i\rangle -\wt P_{A_j}(q))\Big ]\exp\Big [\frac{1}{2} (\langle q'|X_i\rangle-\wt P_{A_j}(q') )\Big ]\Big)<1
\end{equation}
if $q\neq q'\in\R^d$. For $j\ge 1$ let 
\begin{equation}\label{sing'}
c_{j}=\sup_{1\le j'\le j}\sup_{q\neq q'\in D_{j'}} c_{j'}(q,q')<1.
\end{equation}

Let $(\gamma_j)_{j\ge 1}\in (0,1]^{\N_+}$ be a positive  sequence converging to 0 such that 
\begin{equation}\label{gammajcj}
\lim_{j\to\infty}\gamma_jm_j=0\quad\text{and}\quad \gamma_{j}^2m_{j}=o(\log c_{j}) \text{ as } j\to\infty. 
\end{equation}

\medskip

For $j\ge 1$ we set 
$$
\varphi_j:(q,p)\in \R^d\times\R \mapsto \wt P_{A_j}(pq)-p\wt P_{A_j}(q).
$$ 
For each $q\in J_{A_j}$  there exists  a real number  $1 < p_q< 2 $ such that 
$\displaystyle\varphi_j (p,q) <  0$ for all $p\in (1,p_q)$. Indeed, since   $\widetilde{P}_{A_j}^{*}(\nabla\widetilde{P}_{A_j}(q)) > 0$ one has   $\frac{\partial \phi_j}{\partial p}  (q, 1^+) < 0$. 

\medskip

For all $j\ge 1$ we set 
$$
p_j= \inf_{1\le j'\le j}\inf_{q\in D_{j'}} p_q, \quad e_j=\exp\Big (\frac{p_j}{p_j-1}\Big )\quad\text{and}\quad a_j=\sup_{1\le j'\le j}\sup_{q\in D_{j'}} \displaystyle\varphi (p_{j},q).
$$
 By construction, we have $a_j<0$. Then let 
\begin{equation}\label{rj}
s_j=\max\left \{\frac{ \|S_{A_j}(q)\|_{p_j}}{ \|S_{A_j}(q)\|_{1}}: q\in D_{j}\right\}\quad \text{and}\quad r_j=\frac{a_j}{p_j}.
\end{equation}
\medskip

Recall that $L_0=0$. For $j\ge 2$ choose inductively a positive integer $L_{j-1}>L_{j-2}$ big enough so that 

\begin{equation}\label{control}
\frac{d_j e_{j}  s_{j}}{ 1-\exp( r_{j})}\exp (L_{j-1} r_{j})\le j^{-2}
\end{equation}
(which is possible since $r_{j}<0$), 
\begin{equation}\label{control2}
\frac{d_j e_{j} s_{j}}{ 1-\exp( r_{j})}+\frac{d_{j+1} e_{j+1} s_{j+1}}{ 1-\exp( r_{j+1})}\le C_0\exp (\sqrt{L_{j-1}} \gamma_{j}^2 m_{j})
\end{equation}
with $C_0=\displaystyle \frac{ e_1s_{1}}{ 1-\exp( r_{1})}+\frac{2 e_2 s_{2}}{1-\exp( r_{2})}$,
\begin{equation}\label{control0}
L_{j-1}\ge \max (\gamma_{j}^{-8}m_j^{-4},5\log (d_j)),
\end{equation}
\begin{equation}\label{sing2}
(d_{j})^2 c_{j}^{L_{j-1}/2}\le j^{-2}
\end{equation}
and 
\begin{equation}\label{control2bis}
\Big (\sum_{k=1}^{j-1}L_k\Big )\max_{1\le j'\le j-1}\wt m_{j'} \le j^{-2} L_j \widehat m_j,
\end{equation}
where 
\begin{align*}
\wt m_j&=  \max(1,\max\{ \|\nabla \widetilde P_{A_{j}}(q)\|, \, \widetilde P_{A_{j}}^*(\nabla \widetilde P_{A_{j}}(q)): q\in D_{j}\})\\
\widehat m_j&=\min (1, \{ \|\nabla \widetilde P_{A_j}(q)\|, \,  \widetilde P_{A_{j}}^*(\nabla \widetilde P_{A_{j}}(q)): q\in D_{j}\}).
\end{align*}

\subsubsection{ A family of inhomogeneous Mandelbrot measures}

If $u\in \bigcup_{n\ge 0}\N_+^n$, let $\partial \widetilde \TT(u)=\{v\in \TT(u):\, \forall\, k\ge 1, \, 1\le v_k\le  N_{uv_{|k-1}}\land A(j_{|u|+k}) \text{ and } \|X_{uv_{|k}}\|\le A(j_{|u|+k})\}$; we have  $\partial \widetilde \TT(\emptyset)=\partial \widetilde \TT$, the subset of $\partial \TT$ defined at the beginning of  the previous section. 

\begin{prop}\label{pro-2.4} $\ $
\begin{enumerate}
\item For all $u\in \bigcup_{n\ge 0}\N_+^n$, the sequence of continuous functions $Y_n(\cdot,u)$ converges uniformly on $\mathcal J$, almost surely and in $L^1$ norm. Moreover the events $\{Y(\cdot,u)>0\}$ and  $\partial \widetilde \TT(u)\neq\emptyset$ differ by a set of probability 0. 

\item With probability 1, for all $\varrho\in \mathcal J$, the mapping 
\begin{equation}\label{defmurho}
\mu_\varrho ([u])= \Big (\displaystyle\prod_{k=1}^n W_{\varrho,u_1\cdots u_k} \Big )   Y(\varrho,u), \quad u\in \TT_n.
\end{equation}
defines  a  measure supported on  $\partial\TT$. Moreover, on $\{\partial \wt \TT\neq \emptyset\}$  this measure is supported on $ \partial \widetilde\TT$.
\medskip

\item With probability 1, for all $(\varrho,\varrho')\in \mathcal J^2$, if the measures $\mu_\varrho$ and $\mu_{\varrho'}$ do not vanish, they are absolutely continuous with respect to each other or mutually singular according to whether  $\varrho$ and $\varrho'$ coincide ultimately or not. 

\end{enumerate}
\end{prop}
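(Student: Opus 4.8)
The plan is to establish the three assertions of Proposition~\ref{pro-2.4} in order, exploiting the fact that the $W_{\varrho,ui}$ are built from truncated branching random walks whose logarithmic generating function at parameter $q\in D_j$ is exactly $\wt P_{A_j}(q)$, so that each $Y_n(\cdot,u)$ is a (vector-indexed family of) nonnegative martingale of mean one with respect to $(\F_n)_{n\ge 0}$.

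\textbf{Step 1: uniform convergence of the martingales.} For fixed $u$ and fixed $\varrho$, the branching property gives $\E(Y_{n+1}(\varrho,u)\mid \F_n)=Y_n(\varrho,u)$ because for a node $v$ of generation $n$ below $u$ one has, conditionally on $\F_n$, $\E\big(\sum_{i=1}^{N_{uv}} W_{\varrho,uvi}\mid \F_n\big)=\E(S_{A_{j_{|uv|+1}}}(q_{|uv|+1})) e^{-\wt P_{A_{j_{|uv|+1}}}(q_{|uv|+1})}=1$. Thus $Y_n(\varrho,u)$ is a nonnegative $L^1$-bounded martingale and converges almost surely. To upgrade this to uniform convergence on the compact metric space $\mathcal J$ and to $L^1$-convergence of the whole family, I would use a standard moment estimate: since $\mathcal J$ is built from the finite sets $D_j$, for each $n$ there are only $d_{j_n}$ distinct truncations $\varrho_{|n}$, so $Y_n(\cdot,u)$ is constant on the finitely many cylinders of $\mathcal J$ determined by $\varrho_{|n}$; one shows $\sup_{\varrho}\E\big((Y_{n}(\varrho,u)-Y_{n-1}(\varrho,u))^2\big)$ is summable using the conditions \eqref{control0} and \eqref{control2} on the $L_j$ (which force the increments to be controlled via the Hessian bounds $m_j$ and the contraction constants $c_j$), hence the martingale family is Cauchy in $L^2$ uniformly in $\varrho$, giving both the uniform a.s.\ limit (after passing to a subsequence and using monotonicity/continuity in $\varrho$) and the $L^1$ convergence. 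The identification $\{Y(\cdot,u)>0\}=\{\partial\wt\TT(u)\neq\emptyset\}$ up to a null set follows from the classical Kesten--Stigum/Biggins dichotomy: $Y_n(\varrho,u)$ vanishes for large $n$ precisely when the subtree $\partial\wt\TT(u)$ of admissible branches dies out, and when it survives the limit is positive because the family is uniformly integrable (this is where the finiteness of the relevant exponential moments, built into the truncation and into the choice of $A_0$, is used).

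\textbf{Step 2: the measures $\mu_\varrho$.} Given the a.s.\ simultaneous convergence of $Y(\cdot,u)$ for all $u$ and all $\varrho$ from Step~1, one checks the Kolmogorov consistency relation $\mu_\varrho([u])=\sum_{i=1}^{N_u}\mu_\varrho([ui])$ directly from the definition \eqref{defmurho}: indeed $\big(\prod_{k=1}^n W_{\varrho,u_{|k}}\big)Y(\varrho,u)=\big(\prod_{k=1}^n W_{\varrho,u_{|k}}\big)\sum_{i=1}^{N_u}W_{\varrho,ui}\,Y(\varrho,ui)=\sum_{i=1}^{N_u}\big(\prod_{k=1}^{n+1}W_{\varrho,(ui)_{|k}}\big)Y(\varrho,ui)$, using the one-step relation $Y(\varrho,u)=\sum_i W_{\varrho,ui} Y(\varrho,ui)$ inherited from the uniform limit. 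Carath\'eodory's extension theorem then yields a Borel measure on $\partial\TT$. That $\mu_\varrho$ is carried by $\partial\wt\TT$ on $\{\partial\wt\TT\neq\emptyset\}$ is immediate from the indicator factors $\mathbf 1_{\{i\le N_u\land A_{j}\}}\mathbf 1_{\{\|X_{ui}\|\le A_j\}}$ in each $W_{\varrho,ui}$: any cylinder $[u]$ with $u\notin\wt\TT$ receives mass $0$.

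\textbf{Step 3: absolute continuity versus mutual singularity.} If $\varrho$ and $\varrho'$ ultimately coincide, say $q_k=q'_k$ for $k>n_0$, then for $|u|\ge n_0$ the ratio $\mu_\varrho([u])/\mu_{\varrho'}([u])$ equals the fixed deterministic-in-$u$-shape factor $\prod_{k\le n_0}(W_{\varrho,u_{|k}}/W_{\varrho',u_{|k}})$ times $Y(\varrho,u)/Y(\varrho',u)$; but for $|u|>n_0$ the two limit variables $Y(\varrho,u)$ and $Y(\varrho',u)$ are built from the \emph{same} weights, hence are equal, so the Radon--Nikodym densities along the filtration on $\partial\TT$ form a bounded martingale and $\mu_\varrho\ll\mu_{\varrho'}\ll\mu_\varrho$. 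For the singular case, the key tool is the contraction estimate \eqref{sing}--\eqref{sing'}: when $q_{|u|+1}\neq q'_{|u|+1}$, the conditional expectation $\E\big(\sqrt{W_{\varrho,ui}W_{\varrho',ui}}\mid\F_{|u|}\big)$, summed over children, equals $c_{j}(q_{|u|+1},q'_{|u|+1})<1$; iterating along the infinitely many generations at which $\varrho$ and $\varrho'$ disagree (there are infinitely many such generations precisely because $\varrho,\varrho'$ do not ultimately coincide and the $q$'s are constant on blocks of length $L_j\to\infty$), and using \eqref{sing2} to make these contraction factors summable across blocks, one shows that the Hellinger-type integrand $\int \sqrt{d\mu_\varrho\, d\mu_{\varrho'}}$ over $\partial\TT$ vanishes, whence $\mu_\varrho\perp\mu_{\varrho'}$. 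A Borel--Cantelli argument over the countable set $\mathcal J^2\cap(\mathcal J\times\mathcal J)$ of pairs with rational-type coincidence pattern, combined with continuity in $\varrho$, upgrades all three statements to hold simultaneously for all pairs with probability one.

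\textbf{Main obstacle.} The delicate point is Step~1: getting \emph{uniform} (over $\varrho\in\mathcal J$) $L^1$-convergence and the positivity of the limit on survival, for an \emph{inhomogeneous} Mandelbrot martingale whose branching mechanism changes at every block. The usual one-parameter Mandelbrot martingale $L^p$-boundedness arguments do not apply verbatim; one must carefully exploit the precise growth rates imposed on $(L_j)$ through \eqref{control}--\eqref{control2bis} (balancing the number $d_j$ of admissible parameter sequences against the Hessian bounds $m_j$, the contraction constants $c_j$, and the decay rates $r_j$) to keep the second-moment increments summable uniformly in $\varrho$, and to ensure no degeneracy of the limit. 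The later parts (Steps 2 and 3) are, by comparison, fairly mechanical consequences of this uniform control together with the contraction inequality~\eqref{sing}.
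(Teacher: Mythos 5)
Your Steps 2 and 3 match the paper's strategy, but Step 1 contains two genuine gaps.

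First, you propose to control the increments $Y_n - Y_{n-1}$ via their second moment and conclude that the family is Cauchy in $L^2$. This would fail in general: the $L^2$ norm of $Y_n(\varrho,u)$ grows like $\prod_{k\le n}\exp\big(\wt P_{A_{j_k}}(2q_k)-2\wt P_{A_{j_k}}(q_k)\big)$, and there is no guarantee that $\wt P_{A_j}(2q)-2\wt P_{A_j}(q)<0$ for $q\in D_j\subset J_{A_j}$. The only thing the defining condition $\wt P_{A_j}^*(\nabla\wt P_{A_j}(q))>0$ gives you is that $\frac{\partial}{\partial p}\big(\wt P_{A_j}(pq)-p\wt P_{A_j}(q)\big)\big|_{p=1^+}<0$, i.e., negativity for $p$ slightly above $1$; the exponent $p_q\in(1,2)$ and the quantities $e_j, s_j, r_j$ in \eqref{rj} are introduced precisely to exploit this, and the key moment estimate (Lemma~\ref{lem-2.2}) is obtained from the von Bahr--Esseen inequality (Lemma~\ref{lem-2.1}) at the exponent $p=p_{j_n}\in(1,2)$, not $p=2$. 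Your second-moment argument would require unavailable hypotheses.

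Second, to identify $\{Y(\cdot,u)>0\}$ with $\{\partial\wt\TT(u)\neq\emptyset\}$ up to a null set, you invoke the Kesten--Stigum/Biggins dichotomy and uniform integrability. That argument is fine for a single fixed $\varrho$, but $\mathcal J$ is uncountable (each $D_j$ has at least two points), so the union $\bigcup_{\varrho}\{Y(\varrho,u)=0\}$ of probability-zero events could a priori be non-negligible. The paper handles this via a conditional $0$--$1$ law: on $\{\partial\wt\TT\neq\emptyset\}$, the event $\{\exists\,\varrho:\, Y(\varrho)=0\}$ is shown to be a tail event for $(\mathcal G_n)$ using the branching identity $Y(\varrho,u)=\sum_i W_{\varrho,ui}Y(\varrho,ui)$; if it had full conditional probability, one would extract a deterministic $\gamma\in\mathcal J$ with $Y(\gamma)=0$ a.s.\ by a diagonal/compactness argument and contradict $\E(Y(\gamma))=1$ coming from the $L^1$ convergence. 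You need some version of this argument, or another device, to pass from pointwise positivity to simultaneous positivity; the dichotomy alone does not suffice.

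For Step 3, the Hellinger contraction mechanism you describe is the right one, and your treatment of the absolutely continuous case is essentially correct (the ratio of the first $n_0$ weights is bounded above and below because $\|X_{u_{|k}}\|\le A_{j_k}$ on $\partial\wt\TT$). The paper avoids a Borel--Cantelli step over a countable dense set by bounding $\E\big(\sup_{(\varrho,\varrho')\in\wt{\mathcal J}}A(\varrho,\varrho')\big)$ directly by a tail sum $\sum_{j\ge\ell}\mathcal C_{\mathcal J}/j^2$ valid for every $\ell$, which forces the expectation to vanish; either route works.
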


The measures $\mu_\varrho$ will be used to approximate from below the Hausdorff dimensions of the sets $E_X(K)$ in the next section. 
\begin{lem}\cite{vB-E}\label {lem-2.1}
 Let $(X_j)_{j\geq 1}$ be a sequence of centered independent  real valued  random variables. For every finite  $I\subset\N_+$ and $p\in (1, 2)$ 
$$\E\Big (\Big |\displaystyle\sum_{i\in I} X_{i}\Big|^p\Big ) \leq 2^{p-1}  \sum_{i\in I}  \E( \left|X_{i}\right|^p).$$  
\end{lem}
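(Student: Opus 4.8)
```latex
\textbf{Plan of proof.} The statement to be proved is the von Bahr--Esseen type inequality in Lemma~\ref{lem-2.1}: for centered independent real random variables $(X_i)_{i\in I}$ and $p\in(1,2)$,
$$
\E\Big (\Big |\sum_{i\in I}X_i\Big |^p\Big )\le 2^{p-1}\sum_{i\in I}\E(|X_i|^p).
$$
The plan is to prove this by induction on $\#I$, using the standard conditioning trick together with an elementary pointwise inequality for the function $x\mapsto |x|^p$ when $1\le p\le 2$. The key analytic input is the following: for all real $a,b$ and $p\in[1,2]$,
\begin{equation}\label{eq-pointwise}
|a+b|^p\le |a|^p+p\,|a|^{p-1}\operatorname{sgn}(a)\,b+2^{p-1}|b|^p .
\end{equation}
I would establish \eqref{eq-pointwise} first: fix $a$ and consider $\psi(b)=|a|^p+p|a|^{p-1}\operatorname{sgn}(a)b+2^{p-1}|b|^p-|a+b|^p$; then $\psi(0)=0$, and one checks that $\psi$ has a minimum value $\ge 0$ by a one-variable convexity/derivative computation, reducing to the scalar inequality $|1+t|^p\le 1+pt+2^{p-1}|t|^p$ after dividing by $|a|^p$ (the case $a=0$ being trivial since $|b|^p\le 2^{p-1}|b|^p$). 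This scalar inequality is classical; one splits according to the sign of $t$ and uses that $p\le 2$ so that $2^{p-1}\ge 1$ absorbs the remainder in the Taylor expansion of $(1+t)^p$.

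Granting \eqref{eq-pointwise}, the induction runs as follows. The base case $\#I=1$ is immediate since $2^{p-1}\ge 1$. For the inductive step, write $I=I'\cup\{i_0\}$, set $S'=\sum_{i\in I'}X_i$ and $b=X_{i_0}$, and apply \eqref{eq-pointwise} with $a=S'$:
$$
\E(|S'+X_{i_0}|^p)\le \E(|S'|^p)+p\,\E\big(|S'|^{p-1}\operatorname{sgn}(S')\,X_{i_0}\big)+2^{p-1}\E(|X_{i_0}|^p).
$$
Because $X_{i_0}$ is independent of $S'$ and centered, the cross term vanishes: $\E(|S'|^{p-1}\operatorname{sgn}(S')X_{i_0})=\E(|S'|^{p-1}\operatorname{sgn}(S'))\,\E(X_{i_0})=0$ (all moments finite since we may assume $\E(|X_i|^p)<\infty$, otherwise the right side is $+\infty$ and there is nothing to prove). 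Hence $\E(|S'+X_{i_0}|^p)\le \E(|S'|^p)+2^{p-1}\E(|X_{i_0}|^p)$, and the induction hypothesis applied to $I'$ gives $\E(|S'|^p)\le 2^{p-1}\sum_{i\in I'}\E(|X_i|^p)$, whence the claimed bound for $I$.

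The main obstacle is purely the elementary pointwise inequality \eqref{eq-pointwise}: getting the constant to be exactly $2^{p-1}$ (rather than a worse constant) requires a careful case analysis in the scalar reduction, in particular handling $t\le -1$, $-1\le t\le 0$, and $t\ge 0$ separately and checking the worst case occurs at $t=-1$. Once \eqref{eq-pointwise} is in hand with the sharp constant, the probabilistic part is routine. Since the paper merely cites \cite{vB-E} for this lemma, it is legitimate to present only the reduction to \eqref{eq-pointwise} and refer to \cite{vB-E} for the scalar estimate, but the self-contained argument above is short enough to include in full.
```
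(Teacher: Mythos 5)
The paper gives no proof of this lemma (it is simply cited from \cite{vB-E}), so the only question is whether your argument works. Your inductive scheme is sound in structure: condition, use independence and $\E X_{i_0}=0$ to kill the linear term, and induct. The problem is that the pointwise inequality you rely on, namely $|a+b|^p\le |a|^p+p|a|^{p-1}\operatorname{sgn}(a)b+2^{p-1}|b|^p$, is \emph{false} as stated. Take $p=1.05$, $a=1$, $b=-3$: the left side is $2^{1.05}\approx 2.071$, while the right side is $1-3.15+2^{0.05}\cdot 3^{1.05}\approx 1.131$. Your heuristic that ``the worst case occurs at $t=-1$'' in the scalar reduction $|1+t|^p\le 1+pt+C|t|^p$ is exactly what goes wrong: at $t=-1$ the ratio $(|1+t|^p-1-pt)/|t|^p$ equals $p-1<1\le 2^{p-1}$, which is harmless, but the supremum over $t$ is attained at some $t$ well below $-1$, and as $p\downarrow 1$ that supremum tends to $2$ (already at $p=1$ the ratio tends to $2$ as $t\to-\infty$). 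So no constant smaller than $2$ can work pointwise uniformly in $p\in(1,2)$.

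Your induction does go through with the corrected constant: one can check that $|1+t|^p\le 1+pt+2|t|^p$ for all $t\in\R$ and $p\in[1,2]$ (your convexity argument handles $t\ge0$; for $-1\le t\le 0$ one shows the ratio is monotone on that interval and hence bounded by $p-1$; for $t\le -1$, set $s=-t$ and verify that $2s^p-(s-1)^p+1-ps$ is positive at $s=1$ and increasing on $[1,\infty)$). Feeding this into your induction yields $\E\big|\sum_{i\in I}X_i\big|^p\le 2\sum_{i\in I}\E|X_i|^p$, which is the constant actually established in \cite{vB-E} (their sharper form is $2-(\#I)^{-1}$, not $2^{p-1}$). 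The $2^{p-1}$ in the statement of Lemma~\ref{lem-2.1} therefore appears to be a slip; this is harmless for the paper's purposes, since Lemma~\ref{lem-2.2} only needs some absolute constant, but your proof as written would need the constant replaced by $2$.
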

\begin{lem} \label{lem-2.2}
Let $\varrho\in\mathcal J$. Define  $Z_n(\varrho)=  Y_n(\varrho)- Y_{n-1}(\varrho)$ for $n\ge 0$. For every $p\in (1,2)$ we have 
\begin{equation} \E( |Z_n(\varrho)|^{p}) \leq 2^p \frac{\mathbb{E}(S_{A_{j_n}}(q_n)^p)}{\mathbb{E}(S_{A_{j_n}}(q_n))^p}\displaystyle\prod_{k=1}^{n-1}\exp \big( \widetilde P_{A_{j_k}} (pq_k)-p\widetilde P_{A_{j_k}} (q_k) \big ).
\end{equation}
\end{lem}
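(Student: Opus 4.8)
The plan is to compute the $p$-th moment of the martingale increment $Z_n(\varrho) = Y_n(\varrho) - Y_{n-1}(\varrho)$ by first conditioning on $\mathcal F_{n-1}$ and exploiting the branching structure. Write $Y_n(\varrho) = \sum_{v\in\TT_{n-1}} \big(\prod_{k=1}^{n-1} W_{\varrho, v_{|k}}\big) \cdot S_n(\varrho,v)$, where $S_n(\varrho,v) = \sum_{i:\, vi\in\TT_n} W_{\varrho,vi}$ is, conditionally on $\mathcal F_{n-1}$ (which determines $N_v$ and the $X_{vi}$), built from the offspring data at generation $n-1$; and $Y_{n-1}(\varrho) = \sum_{v\in\TT_{n-1}} \prod_{k=1}^{n-1} W_{\varrho, v_{|k}}$. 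Hence $Z_n(\varrho) = \sum_{v\in\TT_{n-1}} \big(\prod_{k=1}^{n-1} W_{\varrho, v_{|k}}\big)\,(S_n(\varrho, v) - 1)$. The key observation is that, by definition of $W$, $\E(S_n(\varrho,v)\mid \mathcal F_{n-1}^{\mathrm{anc}}) = 1$ where the relevant conditioning leaves the variables $(N_v, X_{v1}, X_{v2},\ldots)$ free; so the terms $(\prod_{k=1}^{n-1} W_{\varrho,v_{|k}})(S_n(\varrho,v)-1)$, indexed by $v\in \N_+^{n-1}$ (with the convention that the term vanishes off $\TT_{n-1}$), are, conditionally on $\mathcal F_{n-1}$, centered and independent across $v$ — independence because the offspring data $(N_v, X_{vi})_i$ at distinct $v$ are independent, and the prefactor $\prod_{k=1}^{n-1} W_{\varrho,v_{|k}}$ is $\mathcal F_{n-1}$-measurable.

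Next I would apply Lemma~\ref{lem-2.1} (the von Bahr–Esseen inequality) conditionally on $\mathcal F_{n-1}$ to the finite sum $Z_n(\varrho) = \sum_{v\in\TT_{n-1}} c_v\,\xi_v$ with $c_v = \prod_{k=1}^{n-1} W_{\varrho,v_{|k}}$ and $\xi_v = S_n(\varrho,v)-1$, obtaining
\begin{equation*}
\E(|Z_n(\varrho)|^p \mid \mathcal F_{n-1}) \le 2^{p-1} \sum_{v\in\TT_{n-1}} c_v^p\, \E(|\xi_v|^p\mid\mathcal F_{n-1}).
\end{equation*}
Taking expectations and using that $\E(|\xi_v|^p\mid\mathcal F_{n-1})$ has the same law as $\E(|S_{A_{j_n}}(q_n)/\E(S_{A_{j_n}}(q_n)) - 1|^p)$ (here $S_{A}(q)$ is the truncated sum from \eqref{wtPA}, and $q_n = q_{M_{j_n-1}+1} = \cdots$ is the common value of $\varrho$ on block $j_n$), together with the elementary bound $\E(|W-1|^p) \le 2^{p-1}(\E(W^p)+1) \le 2^{p-1}\cdot 2\,\E(W^p)$ valid for $W\ge 0$ with $\E(W)=1$ and $p\ge 1$ (since $1 = \E(W)^p \le \E(W^p)$ by Jensen), one gets
\begin{equation*}
\E(|\xi_v|^p\mid\mathcal F_{n-1}) \ \text{(in law)} \ \le\ 2^p\, \frac{\E(S_{A_{j_n}}(q_n)^p)}{\E(S_{A_{j_n}}(q_n))^p}.
\end{equation*}
It remains to control $\E\big(\sum_{v\in\TT_{n-1}} c_v^p\big)$. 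By the branching independence (the standard many-to-one / $\mathcal F_{n-1}$-martingale computation), $\E\big(\sum_{v\in\TT_{n-1}} c_v^p\big) = \prod_{k=1}^{n-1} \E\big(\sum_{i=1}^{N} W_{\varrho, i}^p\big)$ where the $k$-th factor uses block index $j_k$ and exponent vector $q_k$; and by definition of $W$,
\begin{equation*}
\E\Big(\sum_{i=1}^{N\land A_{j_k}} \mathbf 1_{\{\|X_i\|\le A_{j_k}\}} e^{p(\langle q_k|X_i\rangle - \wt P_{A_{j_k}}(q_k))}\Big) = e^{-p\wt P_{A_{j_k}}(q_k)}\, \E(S_{A_{j_k}}(pq_k)) = \exp\big(\wt P_{A_{j_k}}(pq_k) - p\wt P_{A_{j_k}}(q_k)\big).
\end{equation*}
Multiplying the three bounds gives exactly the claimed inequality.

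The main obstacle, and the step requiring the most care, is the conditional independence/centering argument: one must set up the filtration and the indexing by $v\in\N_+^{n-1}$ so that (i) each summand is genuinely $\mathcal F_{n-1}$-conditionally centered — this relies on $\E(W_{\varrho,vi}\mid \text{data strictly above } v) = 1$, which is immediate from the normalization by $\wt P_{A_{j_{|v|+1}}}$ — and (ii) the summands are conditionally independent, which needs the independence of the offspring vectors $(N_v, X_{v1},X_{v2},\ldots)$ over distinct $v$ and the $\mathcal F_{n-1}$-measurability of the weights $c_v$. Once that structure is in place, the rest is the routine chain of von Bahr–Esseen, the elementary $\E(|W-1|^p)\le 2^p\E(W^p)$ estimate, and the product formula for $\E(\sum_{v\in\TT_{n-1}} c_v^p)$; one should also note the harmless factor discrepancy (the lemma's $2^{p-1}$ combined with the $2\cdot 2^{p-1}$ from $|W-1|^p$ yields $2^{2p-1}$, which is bounded by the stated $2^p$ only after one is slightly more careful — in fact $\E(|W-1|^p)\le \E(W^p)+1 \le 2\E(W^p)$ and then von Bahr–Esseen contributes $2^{p-1}$, giving $2^p$, so the bookkeeping does close as stated).
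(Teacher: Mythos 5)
Your proof is correct and follows essentially the same route as the paper's: write $Z_n(\varrho)$ as a sum over $\TT_{n-1}$ of $\mathcal F_{n-1}$-measurable weights times conditionally centered, independent increments, apply von Bahr--Esseen conditionally on $\mathcal F_{n-1}$, bound $\E(|B-1|^p)\le 2\,\E(B^p)$ via $|B-1|^p\le B^p+1$ and Jensen, and evaluate $\E\big(\sum_{u\in\TT_{n-1}}\prod_k W_{\varrho,u_{|k}}^p\big)$ by the branching recursion. Your care with indexing over $v\in\N_+^{n-1}$ (with vanishing off $\TT_{n-1}$) and the final bookkeeping that lands on $2^p$ rather than $2^{2p-1}$ match what the paper does implicitly.
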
 
\begin{proof}
Fix $p\in (1,2]$. By using the branching property we can write 
$$
Z_n(\varrho) = \displaystyle\sum_{u\in \TT_{n-1}} \prod_{k=1}^{n-1} W_{\varrho,u_1\cdots u_k} \Big (\displaystyle\sum_{i=1}^{N_u} W_{\varrho,ui} - 1\Big ).
$$
Let  $B_u= \displaystyle\sum_{i=1}^{N_u} W_{\varrho,ui} $. By construction, the random variables $(B_u-1)$,  ${u\in \TT_{n-1}}$, are centered and i.i.d., and  independent of $\mathcal{F}_{n-1}$ (in particular $\E(Z_n(\varrho)=0$ so that $\E(Y_n(\varrho))=\E(Y_{n-1}(\varrho))=\ldots=\E(Y_{0}(\varrho)=1$). Consequently, conditionally on $\mathcal{F}_{n-1}$, we can apply Lemma~\ref{lem-2.1} to the family $\{ B_u \prod_{k=1}^{n-1}  W_{\varrho,u_1\cdots u_k} \}_{u\in \TT_{n-1}}$. Noticing that the $B_u$, $u\in \TT_{n-1}$, have the same distribution, this yields
$$
\E(\left|Z_n(\varrho)\right|^{p}| {\mathcal{F}}_{n-1} \leq 2^{p-1}\mathbb{E}(|B-1|^p) \sum_{u\in T_{n-1}}\prod_{k=1}^{n-1} W_{\varrho,u_1\cdots u_k} ^p,
$$
where $B$ stands for any of the identically distributed variables $B_u$. Since $\mathbb{E}(B)=1$ and $B\ge 0$, convexity inequalities yield $\mathbb{E}(|B-1|^p) \le 2\mathbb{E}(B^p)$. Moreover,  $B\sim\frac{S_{A_{j_n}}(q_n)}{\mathbb{E}(S_{A_{j_n}}(q_n))}$, so $2^{p-1}\mathbb{E}(|B-1|^p) \le 2^p \frac{\mathbb{E}(S_{A_{j_n}}(q_n)^p)}{\mathbb{E}(S_{A_{j_n}}(q_n))^p}$. Also, a recursion using of the branching property and the independence of the random vectors $(N_u,X_{u1},\ldots)$ used in the constructions yields, setting 
$$
W_{q_k,i}=  \mathbf{1}_{\{1\le i\le N\land A_{j_k}\}}\mathbf{1}_{\{\|X_i\|\le A_{j_k}\}} \exp (\langle q_{k}|X_{i}\rangle -\wt P_{A_{j_k}}(q_k)):$$
$$
\mathbb{E}\Big (\sum_{u\in \TT_{n-1}}\prod_{k=1}^{n-1} W_{\varrho,u_1\cdots u_k} ^p\Big )=\prod_{k=1}^{n-1}\mathbb{E}\Big (\sum_{i=1}^N W_{q_k,i}^p\Big )=\prod_{k=1}^{n-1}\exp \big (\widetilde P_{A_{j_k}}(pq_k)-p\widetilde P_{A_{j_k}}(q_k)\big ).
$$
Collecting the previous estimates yields the conclusion.
 \end{proof}
 
 \medskip
 
 \noindent
 {\it Proof of Proposition~\ref{pro-2.4}}. (1) Recall the definitions of the paragraph of section~\ref{sec-2.2.1} in which  the parameter set $\mathcal J$ is defined.
 
 Let us first assume that $u=\emptyset$ and observe that if $n\ge 1$,  it is easily seen from its construction that $Y_n(\cdot)=Y_n(\cdot,\emptyset)$ is a continuous function, constant over the set of those $\varrho$ sharing the same $n$ first components. 
 
 For $n\ge 1$ and $\varrho\in\mathcal J$,  we have $M_{j_n-1}+1\le n\le M_{j_n}$, and Lemma~\ref{lem-2.2} applied with $p=p_{j_n}$ provides us with the inequality
\begin{multline*}
\|Y_n(\varrho)-Y_{n-1}(\varrho)\|^{p_{j_n}}_{p_{j_n}}\le 2^{p_{j_n}}\frac{\mathbb{E}(S_{A_{j_n}}(q_n)^{p_{j_n}})}{\mathbb{E}(S_{A_{j_n}}(q_n))^{p_{j_n}}}\prod_{k=1}^{n-1}\exp \big (\widetilde P_{A_{j_k}}(p_{j_n}q_k)-p_{j_n}\widetilde P_{A_{j_k}}(q_k)\big )\\
= 2^{p_{j_n}}\frac{\mathbb{E}(S_{A_{j_n}}(q_n)^{p_{j_n}})}{\mathbb{E}(S_{A_{j_n}}(q_n))^{p_{j_n}}}\prod_{k=1}^{n-1}\exp \big( \varphi_{j_k}(p_{j_n}, q_k)\big ) \le  2^{p_{j_n}}s_{j_n}^{p_{j_n}}\prod_{k=1}^{n-1} \exp \big(\sup_{q\in D_{j_k}} \varphi_{j_k}(p_{j_n}, q)\big )\\
\le  2^{p_{j_n}}s_{j_n}^{p_{j_n}} \exp ((n-1)p_{j_{n}}r_{j_n})\quad(\text{due to  \eqref{rj}}).
\end{multline*}
 Recall that by the definition of $\mathcal J$, the cardinality of $\{\varrho_{|n}: \varrho\in \mathcal{J}\}$ is equal to that of $\prod_{j=1}^{j_n}D_j$, i.e. $d_{j_n}$, and  $Y_n(\varrho)-Y_{n-1}(\varrho) $ only depends on $(q_1,\cdots,q_n)$. Consequently,
\begin{eqnarray*}
 \|\sup_{\varrho\in \mathcal J}|Y_n(\varrho)-Y_{n-1}(\varrho)|\|_1\le \sum_{\{\varrho_{|n}: \varrho\in \mathcal{J}\}}\|Y_n(\varrho)-Y_{n-1}(\varrho)\|_{p_{j_n}}\le2 d_{j_n}s_{j_n}  \exp ((n-1)r_{j_n}).
 \end{eqnarray*}
 We deduce from this that 
\begin{multline*}
 \sum_{n\ge 1}\|\sup_{\varrho\in \mathcal J}|Y_n(\varrho)-Y_{n-1}(\varrho)|\|_1\le \sum_{j\ge 1}\sum_{M_{j-1}+1\le n\le M_{j}} 2 d_js_{j}  \exp ((n-1)r_{j})\\
 \le \sum_{j\ge 1} 2 d_js_{j}  \frac{\exp (M_{j-1}r_{j})}{1-\exp (r_{j})}
\le\frac{2s_1}{1-\exp (r_{1})}+  \sum_{j\ge 2} 2 d_js_{j}  \frac{\exp (L_{j-1}r_{j})}{1-\exp (r_{j})}\\\le  \frac{2s_1}{1-\exp (r_{1})}+2\sum_{j\ge 2}j^{-2}<\infty,
 \end{multline*}
where we have used \eqref{control}.  The convergence of the above series gives the  desired uniform convergence, almost surely and in $L^1$ norm, of $Y_n$ to a function $Y$. 
 
\medskip
Now let $u\in \bigcup_{n\ge 1}\N_+^n$. By using the same calculations as above, for all $n\ge 1$ we get  
$$
\sup_{\varrho\in \mathcal J} \|Y_n(\varrho,u)-Y_{n-1}(\varrho,u)\|_{p_{j_{|u|+n}}}\le 2 s_{j_{|u|+n}}  \exp ((n-1)r_{j_{|u|+n}})
$$
and then 
$$
 \|\sup_{\varrho\in \mathcal J}|Y_n(\varrho,u)-Y_{n-1}(\varrho,u)|\|_1\le2 \,d_{|u|+n}\,s_{j_{|u|+n}}  \exp ((n-1)r_{j_{|u|+n}}).
$$
Thus
\begin{eqnarray*}
 &&\sum_{n\ge 1}\|\sup_{\varrho\in \mathcal J}|Y_n(\varrho,u)-Y_{n-1}(\varrho,u)|\|_1\\
 &\le& \sum_{ |u|+n\le M_{j_{|u|}}} 2 d_{j_{|u|}}s_{j_{|u|}}  \exp ((n-1)r_{j_{|u|}})+  \sum_{j\ge j_{|u|}+1}\sum_{M_{j-1}+1\le |u|+n\le M_{j}} 2 d_js_{j}  \exp ((n-1)r_{j})\\
 &\le & \frac{2 d_{j_{|u|}}s_{j_{|u|}}}{1-\exp(r_{j_{|u|}})}+ \sum_{j\ge j_{|u|}+1} 2 d_js_{j}  \frac{\exp ((M_{j-1}-|u|)r_{j})}{1-\exp (r_{j})}\\
 &\le&  \sum_{j_{|u|}\le j\le j_{|u|}+1} \frac{2 d_js_{j}}{1-\exp (r_{j})}+ \sum_{j\ge j_{|u|}+2} 2 d_js_{j}  \frac{\exp ((M_{j-1}-|u|)r_{j})}{1-\exp (r_{j})}.
 \end{eqnarray*}
 Consequently, since $M_{j_{|u|}-1}+1\le |u|\le M_{j_{|u|}}$, for  $j\ge j_{|u|}+2$ we have  $M_{j-1}-|u|\ge L_{j-1}$ and  
 \begin{eqnarray*}
&&\sum_{n\ge 1}\|\sup_{\varrho\in \mathcal J}|Y_n(\varrho,u)-Y_{n-1}(\varrho,u)|\|_1
\label{sharp}\\&\le& \sum_{j_{|u|}\le j\le j_{|u|}+1} \frac{2 d_js_{j}}{1-\exp (r_{j})}+ \sum_{j\ge j_{|u|}+2} 2 d_js_{j}  \frac{\exp (L_{j-1}r_{j})}{1-\exp (r_{j})}\\
\nonumber & \le& 2 C_0\exp (\sqrt{L_{j_{|u|}-1}}\gamma_{j_{|u|}}^2m_{j_{|u|}})+2 \sum_{j\ge j_{|u|}+2}j^{-2},
 \end{eqnarray*}
where we have used \eqref{control} and \eqref{control2}. This yields the desired convergence to a limit $Y(\cdot, u)$. Moreover, since $\bigcup_{k\ge 0}\N_+^k$ is countable, the convergence holds almost surely, simultaneously for all $u$. Then the proof finishes as for $u=\emptyset$.

Let us put the previous upper bound in a form that will be useful. Let $\epsilon_k=\gamma_{j_{k}}^2m_{j_{k}}$ for all $k\ge 0$. It follows from the above calculations, the fact that $Y_0(\cdot, u)=1$ for all $u\in \bigcup_{k\ge 0}\N_+^k$, and the fact that $|u|\ge L_{j_{|u|}-1}$ that there exists a constant $C_{\mathcal J}$ such that:
\begin{equation}\label{control3}
\|\sup_{\varrho\in \mathcal J}Y(\varrho, u)\|_1\le C_{\mathcal J} \exp (\epsilon_{|u|} \sqrt{L_{j_{|u|}-1}} ) \le C_{\mathcal J} \exp (\epsilon_{|u|} \sqrt {|u|}) \quad (\forall\ u\in \bigcup_{k\ge 0}\N_+^k).
\end{equation}

In fact by our choice of $(L_j)_{j\ge 0}$ we even have 
\begin{equation}\label{control33}
\E\Big (\sup_{\varrho\in \mathcal J}Y(\varrho, u)\exp (\sqrt{\log (Y(\varrho,u)+3)})\Big ) \le C_{\mathcal J} \exp (\epsilon_{|u|} \sqrt{|u|}) \quad (\forall\ u\in \bigcup_{k\ge 0}\N_+^k),
\end{equation}
with a possibly different constant $C_{\mathcal J}$. 

Indeed, setting $f(x)=xg(x)$ with $g(x)=\exp (\sqrt{\log (x+3)})$ for $x\ge 0$, we have 
$$
\E( \sup_{\varrho\in \mathcal J}f(Y(\varrho,u)))\le \liminf_{n\to\infty}\E( \sup_{\varrho\in \mathcal J}f(Y_n(\varrho,u))),
$$
and 
\begin{eqnarray*}
\E( \sup_{\varrho\in \mathcal J}f(Y_n(\varrho,u)))&\le&  \E( \sup_{\varrho\in \mathcal J}f(Y_{n-1}(\varrho,u)))+\E(\sup_{\varrho\in \mathcal J}|f(Y_n(\varrho,u))-f(Y_{n-1}(\varrho,u))|)\\
&\le&f(1)+ \sum_{k=1}^n\E(\sup_{\varrho\in \mathcal J}|f(Y_k(\varrho,u))-f(Y_{k-1}(\varrho,u))|)\\
&\le & f(1)+ \sum_{k=1}^n d_{j_{|u|+k}} \, \sup_{\varrho\in \mathcal J}\E(|f(Y_k(\varrho,u))-f(Y_{k-1}(\varrho,u))|). 
\end{eqnarray*}
Noting that $f'(x)\le 2 g(x)$ for all $x\ge 0$ and that $g$ is monotonic, we can apply the mean value theorem and  get 
$$
 \E(|f(Y_k(\varrho,u))-f(Y_{k-1}(\varrho,u))|\le 2 \E\big (|Y_k(\varrho,u)-Y_{k-1}(\varrho,u)|  \max (g(Y_k(\varrho,u)),g(Y_{k-1}(\varrho,u)))\big ).
 $$
Now, H\"older's inequality gives 
$$
 \E(|f(Y_k(\varrho,u))-f(Y_{k-1}(\varrho,u))|\le M(\varrho, u) \|Y_k(\varrho,u)-Y_{k-1}(\varrho,u)\|_{p_{j_{|u|+k}}},
 $$
where 
$$
M(\varrho, u)=2\|\max (g(Y_k(\varrho,u)),g(Y_{k-1}(\varrho,u)))\|_{\frac{p_{j_{|u|+k}}}{p_{j_{|u|+k}}-1}}.
$$
We notice that if $x\ge \exp \Big (\frac{p_{j_{|u|+k}}^2}{(p_{j_{|u|+k}}-1)^2}\Big )$, then $g(x)^{\frac{p_{j_{|u|+k}}}{p_{j_{|u|+k}}-1}}\le x$. Since $\E(Y_k(\varrho,u))=\E(Y_{k-1}(\varrho,u))=1$, this implies that $
M(\varrho, u)\le 2 \Big (\exp \Big (\frac{p_{j_{|u|+k}}}{p_{j_{|u|+k}}-1}\Big )+1\Big ).
$
Finally,  recalling that $e_{j_{|u|+k}}=\exp \Big (\frac{p_{j_{|u|+k}}}{p_{j_{|u|+k}}-1}\Big )$, 
$$
E( \sup_{\varrho\in \mathcal J}f(Y_n(\varrho,u)))\le f(1)+ \sum_{k=1}^\infty 4d_{j_{|u|+k}}  e_{j_{|u|+k}}\sup_{\varrho\in \mathcal J} \|Y_k(\varrho,u)-Y_{k-1}(\varrho,u)\|_{p_{j_{|u|+k}}},
$$
and \eqref{control33} comes from similar estimations as those leading to \eqref{control3} by using \eqref{control} and \eqref{control2}.

\medskip 

Let us show that $Y$ does not vanish on $\mathcal J$, conditionally on $\Omega_{A_0}=\{\widetilde\TT\neq\emptyset\}$.  
 For each $n\ge 1$ let $\mathcal J_n=\{\varrho_{|n}:\varrho\in\mathcal J\}$, and for $\gamma=(\gamma_1,\ldots,\gamma_n)\in \mathcal J_n$ define the event $E_\gamma = \{ \omega\in\{\widetilde\TT\neq\emptyset\}: \exists \,  \varrho\in \mathcal J,\  Y(\varrho)=0,\ \varrho_{|n}=\gamma \}$.  Let $E=\{ \omega\in\{\widetilde\TT\neq\emptyset\}: \exists  \varrho\in \mathcal J,\  Y(\varrho)=0\}$. By construction, since for all $u\in\bigcup_{n\ge 0}\N_+^n$ we have the branching property 
 \begin{equation}\label{BrPr}
 Y(\varrho,u)=\sum_{i=1}^{N_u}W_{\varrho,ui} Y(\varrho,ui)= \sum_{i=1}^{N_u\land A_{j_{|u|+1}}}\mathbf {1}_{\{\|X_{ui}\|\le A_{j_{|u|+1}}\}} W_{\varrho,ui} Y(\varrho,ui),
 \end{equation}
 conditionally on $\{\widetilde\TT\neq\emptyset\}$, the event  $\{\exists\,   \varrho\in \mathcal J,\  Y(\varrho)=0\}$ equals $\{\exists \,  \varrho\in \mathcal J,\, \forall \, n\ge 1, \forall\, u\in\widetilde\TT_n,\,   Y(\varrho,u)=0\}$, which  is a tail event conditionally on $\{\widetilde\TT\neq\emptyset\}$ for the filtration $\big (\{\{\widetilde\TT \neq\emptyset\}\cap B: B\in\mathcal G_n\})_{n\ge 1}$.   Since the events of $\bigcap_{n\ge 1}\mathcal G_n$ have probability 0 or 1, it follows that $E$  has probability 0 or $\P(\{\widetilde\TT\neq\emptyset\})$. The same property holds for the events $E_\gamma$, $\gamma\in \bigcup_{n\ge 1}\mathcal J_n$. 
 
 Suppose that $E$ has probability $\P(\{\widetilde\TT\neq\emptyset\})$. Since $E=\bigcup_{\varrho_1\in \mathcal J_1} E_{\gamma_1}$, necessarily, there exists $\gamma_1\in\mathcal J_1$ such that $\mathbb P(E_{\gamma_1})>0$, and so $\mathbb P(E_{\gamma_1})=\P(\{\widetilde\TT\neq\emptyset\})$. Iterating this remark we can build an infinite deterministic sequence $\gamma=(\gamma_k)_{k\ge 1}\in \mathcal J$ such that $\mathbb P(E_{(\gamma_1,\ldots,\gamma_n)})=\P(\{\widetilde\TT\neq\emptyset\})$ for all $n\ge 1$. This means that almost surely, for all $n\ge 1$, there exists $\varrho^{(n)}\in  \mathcal J$ such that $\varrho^{(n)}_{|n}=(\gamma_1,\ldots,\gamma_n)$ and $Y(\varrho^{(n)})=0$. But $\varrho^{(n)}_{|n}=(\gamma_1,\ldots,\gamma_n)$ implies that $\varrho^{(n)}$ converges to $\gamma$. Hence, by continuity of $Y(\cdot)$ at $\gamma$, we get $Y(\gamma)=0$ on $\{\widetilde\TT\neq\emptyset\}$ almost surely, hence almost surely, since $\{\widetilde\TT=\emptyset\}\subset \{Y(\cdot)=0\}$. However, a consequence of our convergence result for $Y_n$ is that the martingale $Y_n(\gamma)$ converges in $L^1$ to $Y(\gamma)$, so that $\mathbb{E}(Y(\gamma))=1$. This is a contradiction. Thus $\mathbb{P}(E)=0$ and  $\P(\{\widetilde\TT\neq\emptyset\}\Delta \{Y(\cdot) >0\})=0$. Similarly, $\P(\{\widetilde\TT(u)\neq\emptyset\}\Delta \{Y(\cdot,u) >0)=0$ for all $u\in \bigcup_{n\ge 1}\N_+^n$.

\medskip

\noindent
(2) This is a direct consequence of the branching property \eqref{BrPr}.

\medskip

\noindent
(3) We will use the notion of Hellinger distance between probability measures  (it was already used in the context of Mandelbrot martingales in \cite{LiuR} to prove the mutual singularity of the branching and visibility measures on $\partial\TT$). 

For $j\ge 1$ let 
\begin{equation*}
\begin{split}
\widetilde {\mathcal J}_j&=\{(\varrho,\varrho')\in \mathcal J\times \mathcal J: q_k\neq q'_k, \ \forall\ M_{j-1}+1\le k\le M_j\},\text{ and }\\
\widetilde {\mathcal J}&=\bigcap_{\ell \ge 1}\bigcup_{j\ge \ell}\widetilde {\mathcal J}_j=\{(\varrho,\varrho')\in \mathcal J\times \mathcal J: q_k\neq q'_k\text{ for infinitely many $k$}\}.
\end{split}
\end{equation*}
For $n\ge 1$ and $(\varrho,\varrho')\in\widetilde {\mathcal J}$ let 
$
A_n(\varrho,\varrho')=\sum_{u\in\TT_n} \mu_\varrho([u])^{1/2}\mu_{\varrho'}([u])^{1/2}.
$
Notice that $(A_n(\varrho,\varrho'))_{n\ge 1}$ is non increasing. Let $A(\varrho,\varrho')$ denote  its limit. If we show that $A(\varrho,\varrho')=0$, then if $\mu_\varrho\neq0\neq\varrho'$ by definition the Hellinger distance between $\mu_\varrho/\|\mu_\varrho\|$ and $\mu_{\varrho'}/\|\mu_{\varrho'}\|$ is 1, i.e. $\mu_\varrho$ and $\mu_{\varrho'}$ are mutually singular. Of course, we want $A(\varrho,\varrho')=0$ almost surely, conditionally on $\wt T\neq\emptyset$, simultaneously for all $(\varrho,\varrho')\in \mathcal J$.  We notice that 
for every $j\ge \ell\ge 1$ and $(\varrho,\varrho')\in \widetilde {\mathcal J}_j$, we have 
$
A(\varrho,\varrho')\le \sum_{u\in\TT_{M_j}} \mu_\varrho([u])^{1/2}\mu_{\varrho'}([u])^{1/2}.
$
Consequently, for all $\ell\ge 1$
$$
A=\sup_{(\varrho,\varrho')\in \widetilde {\mathcal J}}A(\varrho,\varrho')\le\sum_{j\ge\ell} \sup_{(\varrho,\varrho')\in \widetilde {\mathcal J}_j}A(\varrho,\varrho')\le  \sum_{j\ge \ell}\sup_{(\varrho,\varrho')\in \widetilde {\mathcal J}_j} \sum_{u\in\TT_{M_j}} \mu_\varrho([u])^{1/2}\mu_{\varrho'}([u])^{1/2},
$$
so 
\begin{eqnarray*}
\E (A)
&\le &\sum_{j\ge \ell} \mathbb{E}\Big (\sup_{(\varrho,\varrho')\in\widetilde {\mathcal J}_j}\sum_{u\in\TT_{M_j}} \mu_\varrho([u])^{1/2}\mu_{\varrho'}([u])^{1/2}\Big )\\
&\le &\sum_{j\ge \ell} \E \Big ( \sup_{(\varrho,\varrho')\in\widetilde {\mathcal J}_j}\sum_{u\in\TT_{M_j}}\Big (\prod_{k=1}^{M_{j}}W_{\varrho,u_1\cdots u_k}^{1/2}W_{\varrho',u_1\cdots u_k}^{1/2}\Big ) Y(\varrho,u)^{1/2}Y(\varrho',u)^{1/2}\Big )\\
&\le& \sum_{j\ge \ell} \E \Big ( \sup_{(\varrho,\varrho')\in\widetilde {\mathcal J}_j}\sum_{u\in\TT_{M_j}}\Big (\prod_{k=1}^{M_{j}}W_{\varrho,u_1\cdots u_k}^{1/2}W_{\varrho',u_1\cdots u_k}^{1/2}\Big ) \sup_{\varrho\in  {\mathcal J}}Y(\varrho,u)\Big )\\
&\le& \sum_{j\ge \ell}  \sum_{(\varrho_{M_j},\varrho'_{M_j}):(\varrho,\varrho') \in\widetilde {\mathcal J}_j}\E \Big (\sum_{u\in\TT_{M_j}}\Big (\prod_{k=1}^{M_{j}}W_{\varrho,u_1\cdots u_k}^{1/2}W_{\varrho',u_1\cdots u_k}^{1/2}\Big ) \sup_{\varrho\in  {\mathcal J}}Y(\varrho,u)\Big )\\
&\le& \sum_{j\ge \ell}  \sum_{(\varrho_{M_j},\varrho'_{M_j}):(\varrho,\varrho') \in\widetilde {\mathcal J}_j}\E \Big (\sum_{u\in\TT_{M_j}}\Big (\prod_{k=1}^{M_{j}}W_{\varrho,u_1\cdots u_k}^{1/2}W_{\varrho',u_1\cdots u_k}^{1/2}\Big )\Big )\E( \sup_{\varrho\in  {\mathcal J}}Y(\varrho,u(j))\\
&\le& \sum_{j\ge \ell} (\# \{\varrho_{|M_j}: \varrho\in {\mathcal J}\})^2\sup_{(\varrho,\varrho')\in\widetilde {\mathcal J}_j}\E \Big ( \sum_{u\in\TT_{M_j}}\prod_{k=1}^{M_{j}}W_{\varrho,u_1\cdots u_k}^{1/2}W_{\varrho',u_1\cdots u_k}^{1/2}\Big) \E(\sup_{\varrho\in \mathcal J} Y(\varrho,u(j)))\\
&\le &\sum_{j\ge \ell} d_j^2 \Big (\sup_{(\varrho,\varrho')\in\widetilde {\mathcal J}_j}\prod_{k=1}^{M_{j}} c(q_k,q_k')\Big ) \E\big (\sup_{\varrho\in \mathcal J} Y(\varrho,u(j))\big ),
\end{eqnarray*}
where $u(j)$ is any word in $\N_+^{M_{j}}$, $c(q_k,q_k')$ is defined in \eqref{sing}, and we used the fact that 
$$
\E \Big ( \sum_{u\in\TT_{M_j}}\prod_{k=1}^{M_{j}}W_{\varrho,u_1\cdots u_k}^{1/2}W_{\varrho',u_1\cdots u_k}^{1/2}\Big)=\prod_{k=1}^{M_{j}} c(q_k,q_k').
$$
Moreover, $
\prod_{k=1}^{M_{j}} c(q_k,q_k')\le \prod_{k=M_{j-1}+1}^{M_j}c(q_k,q_k')
$
since $c(q_k,q_k')$ is always bounded by~1. Consequently, due to the definition of  $\widetilde {\mathcal J}_j$ and $c_{j}$ in \eqref{sing'}, and recalling \eqref{control3} we have
$$
\E (A)\le \sum_{j\ge \ell}\mathcal C_{\mathcal J} d_j^2 \exp (\sqrt{L_{j-1}}\gamma_{j}^2m_{j}) c_{j}^{L_{j}} .$$
This implies, for $\ell$ large enough, that $\E (A)\le \sum_{j\ge \ell}\mathcal C_{\mathcal J}/j^2$ due to \eqref{gammajcj} and \eqref{sing2}, and  yields $A=0$ almost surely.

\hfill $\square$

\subsubsection{ Hausdorff and packing dimensions of the measures $\{\mu_\varrho\}_{\varrho\in\mathcal J}$}

The main result of this section is the following proposition, which provides almost surely, simultaneously for all $\varrho\in \mathcal J$,  the Hausdorff and packing dimension of $\mu_\varrho$ whenever this measure does not vanish. For individual inhomogeneous Mandelbrot measures, such a result was obtained in \cite{B3} under different assumptions. 
\begin{prop}\label{lb}
With probability 1, if $\partial \wt\TT\neq\emptyset$,  for all $\varrho=(q_k)_{k\ge 1}\in \mathcal J$, 
$$
\dim (\mu_\varrho)= \liminf_{n\to\infty}n^{-1}\sum_{k=1}^n\wt P^*(\nabla\wt P_{A_{j_k}} (q_k)) \text{ and } \Dim (\mu_\varrho)= \limsup_{n\to\infty}n^{-1}\sum_{k=1}^n\wt P_{A_{j_k}}^*(\nabla\wt P_{A_{j_k}} (q_k)). 
$$ 
\end{prop}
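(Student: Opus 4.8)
The plan is to reduce the statement, via the definitions recalled in Section~\ref{HandP} relating the Hausdorff and packing dimensions of a measure to its local scaling exponents, to the computation of $\liminf_{n\to\infty}\frac{-\log\mu_\varrho([t_{|n}])}{n}$ and $\limsup_{n\to\infty}\frac{-\log\mu_\varrho([t_{|n}])}{n}$ for $\mu_\varrho$-a.e.\ $t$ (in $(\partial\TT,d_1)$ the ball $B(t,e^{-n})$ is the cylinder $[t_{|n}]$, so these are genuinely the lower and upper local dimensions of $\mu_\varrho$); both will turn out to be \emph{deterministic}, which makes the passage from local exponents to $\dim\mu_\varrho$, $\Dim\mu_\varrho$ automatic. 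By \eqref{defmurho}, and using $\|X_{t_{|k}}\|\le A_{j_k}$ on the support $\partial\wt\TT$ of $\mu_\varrho$, one has the exact identity
\[
-\log\mu_\varrho([t_{|n}])=\sum_{k=1}^n\big(\wt P_{A_{j_k}}(q_k)-\langle q_k|X_{t_{|k}}\rangle\big)-\log Y(\varrho,t_{|n}),
\]
while $q_k\in D_{j_k}\subset J_{A_{j_k}}$, the smoothness and strict convexity of $\wt P_{A_{j_k}}$ give the Legendre identity $\wt P_{A_{j_k}}(q_k)-\langle q_k|\nabla\wt P_{A_{j_k}}(q_k)\rangle=\wt P_{A_{j_k}}^*(\nabla\wt P_{A_{j_k}}(q_k))$. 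So everything comes down to proving, with probability one and simultaneously for all $\varrho=(q_k)_{k\ge 1}\in\mathcal J$ with $\partial\wt\TT\neq\emptyset$,
\[
\tfrac1n\log Y(\varrho,t_{|n})\to 0\quad\text{and}\quad\tfrac1n\sum_{k=1}^n\langle q_k\,|\,\nabla\wt P_{A_{j_k}}(q_k)-X_{t_{|k}}\rangle\to 0\quad(n\to\infty),\ \text{for }\mu_\varrho\text{-a.e.\ }t;
\]
indeed these force $\tfrac1n(-\log\mu_\varrho([t_{|n}]))-\tfrac1n\sum_{k=1}^n\wt P_{A_{j_k}}^*(\nabla\wt P_{A_{j_k}}(q_k))\to 0$ $\mu_\varrho$-a.e., and the $\liminf$, resp.\ $\limsup$, of the two sequences then coincide, so both the lower \emph{and} the matching upper bounds for $\dim\mu_\varrho$ and $\Dim\mu_\varrho$ come out together (no covering argument needed).

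Both a.e.\ limits will be obtained by estimating $\E\big(\int_{\partial\TT}(\,\cdot\,)\,\ud\mu_\varrho\big)$, a Borel--Cantelli argument on the measure, and Fubini. For $\tfrac1n\log Y(\varrho,t_{|n})$ the key is that for $u\in\TT_n$ the factor $Y(\varrho,u)$ is independent of $\big(\prod_{k\le n}W_{\varrho,u_{|k}},\,\mathcal F_n\big)$ and that $\E\big(\sum_{u\in\TT_n}\prod_{k\le n}W_{\varrho,u_{|k}}\big)=1$; hence $\E\big(\mu_\varrho(\{Y(\varrho,t_{|n})\le e^{-n\epsilon}\})\big)\le e^{-n\epsilon}$ trivially, while the refined integrability \eqref{control33} (with $\epsilon_{|u|}\to 0$) gives $\E\big(\mu_\varrho(\{Y(\varrho,t_{|n})\ge e^{n\epsilon}\})\big)\le C\exp(-\tfrac12\sqrt{n\epsilon})$. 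These decays absorb the union over the $d_{j_n}$ distinct prefixes $\varrho_{|n}$ \emph{at the block endpoints $n=M_j$}, thanks to $L_{j-1}\ge 5\log(d_j)$ from \eqref{control0} and the very fast growth of $(L_j)$ (so $\log d_j=o(\sqrt{M_j})$); the resulting control at the $M_j$ is then propagated to all intermediate $n$ via the branching relation \eqref{BrPr} and the block-domination \eqref{control2bis}. Positivity of $\mu_\varrho$ on $\{\partial\wt\TT\neq\emptyset\}$, needed for these statements to have content, is Proposition~\ref{pro-2.4}.

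For the increment sum, the conceptual reason $\nabla\wt P_{A_{j_k}}(q_k)$ is the right centring is that under the (unnormalised) Peyri\`ere measure $Q_\varrho=\mu_\varrho\otimes\P$ the conditional law of $X_{t_{|k}}$ given the past is the $q_k$-exponential tilt $\E\big(\sum_{i=1}^{N\wedge A_{j_k}}\mathbf{1}_{\{\|X_i\|\le A_{j_k}\}}e^{\langle q_k|X_i\rangle-\wt P_{A_{j_k}}(q_k)}\,\delta_{X_i}\big)$, of mean $\nabla\wt P_{A_{j_k}}(q_k)$; technically I would instead run an exponential Chebyshev estimate block by block inside $\E\big(\int(\,\cdot\,)\,\ud\mu_\varrho\big)$. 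On a block $\{M_{j-1}+1,\dots,M_j\}$ the exponent $q_k$ is frozen at some $q\in D_j$; bounding $\mathbf{1}_{\{x>y\}}\le e^{s(x-y)}$ ($0<s\le 1$), factoring out the mean-$1$ terms $Y(\varrho,u)$, and using the branching structure give $\E\big(\int e^{s\sum_{M_{j-1}<k\le n}\langle q|X_{t_{|k}}\rangle}\,\ud\mu_\varrho\big)=\prod_{M_{j-1}<k\le n}e^{\wt P_{A_j}((1+s)q)-\wt P_{A_j}(q)}$, finite since $\wt P_{A_j}$ is finite on all of $\R^d$, which a second-order Taylor expansion with the Hessian bound $m_j$ of \eqref{mj}--\eqref{mjj} dominates by $\exp\!\big(s\langle q|\nabla\wt P_{A_j}(q)\rangle(n-M_{j-1})+\tfrac{s^2}{2}m_j(n-M_{j-1})\big)$. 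Optimising in $s$, a union bound over the at most $L_j$ values of $n$ and over the $d_j$ prefixes, and Borel--Cantelli then yield $\max_{M_{j-1}<n\le M_j}\big|\sum_{M_{j-1}<k\le n}\langle q_k|X_{t_{|k}}-\nabla\wt P_{A_{j_k}}(q_k)\rangle\big|=o(L_j)$ eventually, provided $(L_j)$ grows fast enough relative to $m_j$ and $\log d_j$ — exactly what \eqref{control0} and its companions arrange; summing over consecutive blocks and invoking \eqref{control2bis} gives $\tfrac1n\sum_{k=1}^n\langle q_k|\nabla\wt P_{A_{j_k}}(q_k)-X_{t_{|k}}\rangle\to 0$ $\mu_\varrho$-a.e.

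The main obstacle is precisely this last, \emph{inhomogeneous} strong law of large numbers: since $q_k$ and the truncation level $A_{j_k}$ both vary with $k$ and $A_{j_k}\to\infty$, the increments $\langle q_k|X_{t_{|k}}\rangle$ are not uniformly bounded and no off-the-shelf strong law applies, so the estimate must be run block by block and must exploit the intricate calibration of the $(L_j)$ so that the Hessian- and moment-type error bounds still sum while earlier, coarser blocks contribute a negligible fraction of the running pressure sum $\sum_{k\le n}\wt P_{A_{j_k}}^*(\nabla\wt P_{A_{j_k}}(q_k))$. Interlocked with this is the requirement to have all estimates hold \emph{simultaneously} over the uncountable compact family $\mathcal J$ assuming only $\partial\wt\TT\neq\emptyset$; here the finiteness of each $D_j$, the identity $\#\{\varrho_{|n}:\varrho\in\mathcal J\}=d_{j_n}$, the $\sup_{\varrho\in\mathcal J}$-controls \eqref{control3}--\eqref{control33}, and the continuity in $\varrho$ of the finite-level quantities $Y(\cdot,u)$ are the ingredients. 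For a single fixed $\varrho$ such a dimension computation is known under other hypotheses (see \cite{B3}); what is new is the uniformity in $\varrho$ and the total absence of restrictions on the branching random walk.
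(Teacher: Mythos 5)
Your plan is the one the paper follows: by \eqref{defmurho} one has the exact identity (on the support)
\[
-\log\mu_\varrho([t_{|n}])=\sum_{k=1}^n\big(\wt P_{A_{j_k}}(q_k)-\langle q_k|X_{t_{|k}}\rangle\big)-\log Y(\varrho,t_{|n}),
\]
and after inserting the Legendre identity $\wt P_{A_{j_k}}(q_k)-\langle q_k|\nabla\wt P_{A_{j_k}}(q_k)\rangle=\wt P_{A_{j_k}}^*(\nabla\wt P_{A_{j_k}}(q_k))$ the whole statement reduces to the two $\mu_\varrho$-a.e.\ Cesàro limits you write, which is exactly Lemma~\ref{dDimmurho}; the proposition is then an immediate consequence, since $B(t,e^{-n})=[t_{|n}]$ and the candidate liminf/limsup are deterministic. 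Your mechanism for proving those limits — exponential Chebyshev inside $\E\big(\int\cdot\,\ud\mu_\varrho\big)$, Taylor expansion of $\wt P_{A_j}$ with the Hessian bound $m_j$, the factorisation of $Y$ via \eqref{control3}--\eqref{control33}, Borel--Cantelli for $\mu_\varrho$, and absorption of the supremum over $\varrho$ using $\#\{\varrho_{|n}\}=d_{j_n}$ together with \eqref{control0} — is also the paper's.

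One sketch-level slip worth flagging. For $\tfrac1n\log Y(\varrho,t_{|n})$ you propose to obtain the estimate only at the block endpoints $n=M_j$ and then ``propagate to all intermediate $n$ via the branching relation \eqref{BrPr} and the block-domination \eqref{control2bis}.'' This does not give what you need: \eqref{BrPr} expresses $Y(\varrho,u)$ as a \emph{sum} over all children weighted by the (unbounded) $W$'s, so knowing $Y(\varrho,t_{|M_j})$ on the single branch $t$ controls neither side of $\log Y(\varrho,t_{|n})$ for $M_{j-1}<n<M_j$; and \eqref{control2bis} is a constraint on the growth of $(L_j)$ relative to the sizes $\wt m_j,\widehat m_j$, designed to make earlier blocks negligible in the Cesàro averages of $\nabla\wt P_{A_{j_k}}(q_k)$ and $\wt P_{A_{j_k}}^*(\cdot)$ — it has nothing to say about $Y$ between block endpoints. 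The fix is immediate and is in fact what you already do for the increment sum: run the Markov/Borel--Cantelli bound at \emph{every} level $n$ (a union over the $\le L_{j_n}$ intra-block values of $n$ and over the $d_{j_n}$ prefixes), which is summable because $d_{j_n}\le e^{n^{1/5}}$ while the per-$n$ decay from \eqref{control33} and your choice of $\delta_n$ is at least $e^{-n^{1/4}}$-type. With that change your proposal is the paper's proof of Lemma~\ref{dDimmurho} and hence of the proposition.
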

The proposition easily follows from the next lemma, \eqref{defmurho}, and the fact that $\log |[t_n]|=-n$ for all $t\in \partial \TT$ and $n\ge 1$. 
\begin{lem}\label{dDimmurho}
There exists a positive sequence $(\delta_n)_{n\ge 1}$ converging to 0 as $n\to\infty$ such that with probability 1, conditionally on $\partial \widetilde \TT\neq\emptyset$, for all $\varrho\in \mathcal J$, for $\mu_\varrho$-almost every $t$,  for $n$ large enough, we have 
\begin{equation}\label{dimmurho1}
n^{-1}\Big |\sum_{k=1}^n (\langle q_k|X_{t_1\cdots t_k}\rangle- \wt P_{A_{j_k}}(q_k))+ \wt P_{A_{j_k}}^*(\nabla \wt P_{A_{j_k}}(q_k))\Big |\le \delta_n 
\end{equation}
and 
\begin{equation}\label{dimmurho2}
n^{-1}|\log Y(\varrho, t_1\cdots t_n)| \le \delta_n.
\end{equation}
\end{lem}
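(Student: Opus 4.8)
\emph{Overall approach.} The plan is to analyse each $\mu_\varrho$ through its Peyrière (spine) measure: I will read off \eqref{dimmurho1} from a law of large numbers for a martingale carried by the spine, and \eqref{dimmurho2} from first and second moment estimates on the branching factors $Y(\varrho,\cdot)$, the whole point being to make all of this uniform in $\varrho\in\mathcal J$, which is precisely what the inductive choice of the block lengths $(L_j)_{j\ge1}$ in \eqref{control}--\eqref{control2bis} is designed to allow. Throughout I argue conditionally on $\{\partial\wt\TT\ne\emptyset\}$, on which every $\mu_\varrho$ is a genuine measure carried by $\partial\wt\TT$ (Proposition~\ref{pro-2.4}).

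\emph{Step 1: algebraic reduction of \eqref{dimmurho1}.} Since $\wt P_{A_{j_k}}$ is, for $A_0$ large, strictly convex and differentiable (Proposition~\ref{detI}), the infimum defining $\wt P_{A_{j_k}}^*(\nabla\wt P_{A_{j_k}}(q_k))$ is attained at $q_k$, so
$$\wt P_{A_{j_k}}^*(\nabla\wt P_{A_{j_k}}(q_k))=\wt P_{A_{j_k}}(q_k)-\langle q_k|\nabla\wt P_{A_{j_k}}(q_k)\rangle,$$
and the bracket in \eqref{dimmurho1} equals $n^{-1}\big|\sum_{k=1}^n\xi_k\big|$ with $\xi_k=\langle q_k|X_{t_1\cdots t_k}-\nabla\wt P_{A_{j_k}}(q_k)\rangle$. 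It thus suffices to control the partial sums of $\xi_k$ along $\mu_\varrho$-typical branches.

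\emph{Step 2: the spine measure.} For fixed $\varrho$, consider on $\Omega\times\partial\TT$ the probability $\mathbb Q_\varrho$ with $\mathrm{d}\mathbb Q_\varrho=\mu_\varrho(\mathrm{d}t)\,\P(\mathrm{d}\omega)$ (a probability because $\E(\|\mu_\varrho\|)=\E(Y(\varrho))=1$). The usual many-to-one computation shows that under $\mathbb Q_\varrho$, conditionally on the spine $t_1t_2\cdots$, the increments $X_{t_1\cdots t_k}$ are independent with law obtained by weighting that of the $X_i$'s by $\mathbf 1_{\{\|\cdot\|\le A_{j_k}\}}e^{\langle q_k|\cdot\rangle}$; hence $\E_{\mathbb Q_\varrho}(X_{t_1\cdots t_k}\mid\mathrm{spine})=\nabla\wt P_{A_{j_k}}(q_k)$, so $\big(\sum_{k=1}^n\xi_k\big)_n$ is a $\mathbb Q_\varrho$-martingale with centred increments bounded by $2A_{j_k}\max_{q\in D_{j_k}}\|q\|$ and conditional variances bounded by the $m_j$ of \eqref{mj}--\eqref{mjj}. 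Moreover $Y(\varrho,t_1\cdots t_n)$ is, under $\mathbb Q_\varrho$, a size-biased copy of $Y(\varrho^{(n)})$, $\varrho^{(n)}=(q_{n+1},q_{n+2},\dots)$, so $\E_{\mathbb Q_\varrho}(\mathbf 1_{\{Y(\varrho,t_{|n})>x\}})=\E(Y(\varrho^{(n)})\mathbf 1_{\{Y(\varrho^{(n)})>x\}})$, while directly $\E\big(\int Y(\varrho,t_{|n})^{-\beta}\,\mu_\varrho(\mathrm{d}t)\big)=\E(Y(\varrho^{(n)})^{1-\beta})\le1$ for $\beta\in(0,1)$ (by Jensen), using that $\int Y(\varrho,t_{|n})^{-\beta}\,\mu_\varrho(\mathrm{d}t)=\sum_{|u|=n}\big(\prod_{k=1}^nW_{\varrho,u_1\cdots u_k}\big)Y(\varrho,u)^{1-\beta}$.

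\emph{Step 3: the two estimates for a fixed $\varrho$, then uniformity.} For a fixed $\varrho$ both conclusions follow by Borel--Cantelli. Splitting $\sum_{k=1}^n\xi_k$ along the blocks $[M_{j-1}+1,M_j]$, a Hoeffding/Bernstein bound per block --- using the variance bound $m_j$ and the constraint \eqref{control0}, which (with $(\gamma_j)$ chosen small enough in \eqref{gammajcj}) makes $L_{j-1}$, hence the block length, enormous compared with $m_j$ and $\max_{D_j}\|q\|$, and \eqref{control2bis}, which makes the blocks of index $<j$ negligible once block $j$ is reached --- yields $n^{-1}|\sum_{k=1}^n\xi_k|\le\delta_n$ eventually for a deterministic $\delta_n\to0$; this is \eqref{dimmurho1}. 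For \eqref{dimmurho2}, Markov applied to $\E(\mu_\varrho(\{Y(\varrho,t_{|n})>e^{n\delta_n}\}))=\E(Y(\varrho^{(n)})\mathbf 1_{\{Y(\varrho^{(n)})>e^{n\delta_n}\}})$, bounded via the $L\log L$-type estimate \eqref{control33}, and to $\E\big(\int Y(\varrho,t_{|n})^{-\beta}\,\mu_\varrho(\mathrm{d}t)\big)\le1$, makes $\sum_n\E\big(\mu_\varrho(\{|\log Y(\varrho,t_{|n})|>n\delta_n\})\big)$ finite; Fubini and Borel--Cantelli then give $n^{-1}|\log Y(\varrho,t_{|n})|\le\delta_n$ for $\mu_\varrho$-a.e.\ $t$. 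To pass to ``for all $\varrho$ simultaneously'', I use that every quantity at level $n$ depends on $\varrho$ only through $\varrho_{|n}$, which takes $d_{j_n}$ values: at the block ends $n=M_j$ a crude union bound over the $d_j$ values is harmless, because \eqref{control0} and \eqref{control2bis} force $M_{j-1}=o(M_j)$ whence $d_j\le e^{L_{j-1}/5}=e^{o(M_j)}$; inside a block, where $\varrho_{|n}$ is frozen except for the current $q^{(j)}\in D_j$, the increment from $n=M_{j-1}$ is controlled by the block-martingale concentration of Step~3 together with the block-end estimates and the moment bounds \eqref{control33}--\eqref{control2bis}, again invoking the fact that $L_{j-1}$ is huge. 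This is where the book-keeping in the definitions of $m_j$, $c_j$, $\wt m_j$, $\widehat m_j$, $p_j$, $s_j$, $r_j$, and all of \eqref{control}--\eqref{control2bis}, is consumed.

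\emph{Main obstacle.} The delicate point is exactly this uniformity: producing a single deterministic sequence $\delta_n\to0$ that works for all of the uncountably many $\varrho\in\mathcal J$ and for $\mu_\varrho$-a.e.\ $t$, when the only integrability available is the $L\log L$ control \eqref{control33} on $\sup_\varrho Y(\varrho,\cdot)$ --- no moment assumption on $X$ is in force, which is the whole spirit of the paper, and is why a naive union bound at a generic level $n$ (costing $d_{j_n}$, which can be $e^{\Theta(n)}$) is too lossy and must be replaced by the block-ends/within-block dichotomy forced by the rapid growth of $(L_j)$. The reduction to $\log|[t_{|n}]|=-n$ and \eqref{defmurho} that deduces Proposition~\ref{lb} from the lemma is, by contrast, routine.
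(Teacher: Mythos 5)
Your spine/Peyrière-measure framework is a genuinely different toolbox from the paper's, which works directly with moment estimates of weighted sums, and your Step~1 reduction and Step~2 size-biasing identities are correct per fixed $\varrho$. But your account of the uniformity in $\varrho$ contains a misunderstanding that invalidates the ``main obstacle'' paragraph and would leave a gap if carried out. You assert that a naive union bound at a generic level $n$ ``costs $d_{j_n}$, which can be $e^{\Theta(n)}$'' and is therefore too lossy, and you propose a block-ends/within-block dichotomy to circumvent it. In fact the inductive constraint \eqref{control0} on $(L_j)$ is designed precisely so that $d_{j_n}\le\exp(n^{1/5})$, i.e.\ sub-exponential in $n$; this is used verbatim in the paper's proof, and it makes the straightforward union bound $\sup_{\varrho_{|n}}\le\sum_{\varrho_{|n}}$ at \emph{every} $n$ affordable. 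The counterweight is not a block-wise Hoeffding/Bernstein, but a Chernoff bound with a \emph{shrinking} tilting parameter $\gamma=\gamma_{j_n}$ (and $\delta_n=3\gamma_{j_n}m_{j_n}$), a second-order Taylor expansion bounding the cost by $n\gamma_{j_n}^2m_{j_n}$, and the summability $\sum_n d_{j_n}\exp(-n\gamma_{j_n}^2m_{j_n})\le\sum_n\exp(n^{1/5}-n^{3/4})<\infty$. Your proposed dichotomy is therefore not needed, and as stated it is also not correct: within block $j$, the finite sequence $\varrho_{|n}$ is entirely frozen (not ``frozen except for the current $q^{(j)}$''), and the number $d_{j_n}$ of admissible $\varrho_{|n}$ is constant throughout the block, so the block-end versus within-block distinction does not reduce the union-bound cost.

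There is a second, smaller gap in your treatment of \eqref{dimmurho2}. The bound $\E(Y(\varrho^{(n)})^{1-\beta})\le 1$ obtained from Jensen is correct per fixed $\varrho$, but it does not by itself yield a bound on $\E(\sup_{\varrho\in\mathcal J}\mu_\varrho(F^{-1}_{\varrho,n}))$: after the union bound over $\varrho_{|n}$ and the factorization by independence, what is actually needed is a control on $\E(\sup_{\varrho\in\mathcal J}Y(\varrho,u)^{1/2})$, which the paper extracts from the $L^1$-type supremum estimate \eqref{control3}, not from negative moments. Likewise for $\lambda=1$ the paper uses the tailor-made $L\log^{1/2}L$ estimate \eqref{control33} rather than size-biasing; your proposal invokes \eqref{control33} but routes it through $\mathbb Q_\varrho$, which forces you to prove a uniform-in-$\varrho$ version of the size-biasing identity that you have not established. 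In short: the spine framework could in principle be made to work, but only after repairing the uniformity mechanism (replace the block dichotomy by the paper's union bound, justified by $d_{j_n}\le e^{n^{1/5}}$) and after replacing the per-$\varrho$ moment identities by the supremum-over-$\varrho$ moment estimates \eqref{control3}--\eqref{control33} that the construction of $(L_j)$ was engineered to deliver.
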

\begin{proof} We first prove \eqref{dimmurho1}. Fix a positive sequence $(\delta_n)_{n\ge 1}$ converging to 0 as $n\to\infty$ to be specified. For  $\varrho\in \mathcal J$ and  $n \geq 1$ we set ~:
\begin{equation*}
\begin{split}
E_{\varrho,n}^{1}&= \Big \{t\in \partial \wt \TT :\sum_{k=1}^n (\langle q_k|X_{t_1\cdots t_k}\rangle- \wt P_{A_{j_k}}(q_k))+ \wt P_{A_{j_k}}^*(\nabla \wt P_{A_{j_k}}(q_k)) \ge   n \delta_n \Big \},\\
E_{\varrho,n}^{-1}&=\Big \{t\in \partial\wt \TT :\sum_{k=1}^n (\langle q_k|X_{t_1\cdots t_k}\rangle- \wt P_{A_{j_k}}(q_k))+ \wt P_{A_{j_k}}^*(\nabla \wt P_{A_{j_k}}(q_k)) \le   -n \delta_n\Big \}.
\end{split}
\end{equation*}
Suppose we have shown that for $\lambda\in\{-1,1\}$  we have 
\begin{equation} \label{eq222}
\E\Big (\displaystyle \displaystyle  \sup_{\varrho \in {\mathcal J} } \sum_{n\geq 1} \mu_\varrho (E_{\varrho,n}^{\lambda})\Big )< \infty. 
\end{equation}
Then, with probability 1, for all $\varrho\in\mathcal J$ and $\lambda\in \{-1,1\}$ we have $ \sum_{n\geq 1} \mu_\varrho (E_{\varrho,n}^{\lambda})<\infty$,  hence conditionally on $\partial \wt \TT\neq\emptyset$, since $\mu_\varrho\neq 0$,  $\eqref{dimmurho1}$ follows from the Borel-Cantelli lemma applied to $\mu_\varrho$. 

We prove \eqref{eq222} for $\lambda=1$,  the case $\lambda=-1$ being similar. Let   $\varrho \in {\mathcal J}$. For  every $\gamma>0$ have 
$$
\mu_\varrho (E_{\varrho,n}^{1})\leq  f_{n,\gamma}(\varrho),
$$
where
$$
 f_{n,\gamma}(\varrho)=\displaystyle\sum_{u\in \TT_n} \mu_\varrho ([u]) \prod_{k=1}^{n} \exp \big (\gamma \langle q_k|X_{u_1\cdots u_k}\rangle- \gamma \wt P_{A_{j_k}}(q_k))+\gamma \wt P_{A_{j_k}}^*(\nabla \wt P_{A_{j_k}}(q_k))- \gamma   \delta_n\big ).
 $$
The function $ f_{n,\gamma}$ can be rewritten, using \eqref{defmurho}:
$$
 f_{n,\gamma}(\varrho)=\sum_{u\in\TT_n}Y(\varrho,u)\Pi(\varrho,u)\le \sum_{u\in\TT_n} M(u)\Pi(\varrho,u),
 $$
 with $M(u)= \sup_{\varrho'\in\mathcal J}Y(\varrho',u)$ and 
 $$
 \Pi(\varrho, u)=\prod_{k=1}^{n} \exp \big (\langle (1+\gamma)q_k|X_{u_1\cdots u_k}\rangle- (1+\gamma) \wt P_{A_{j_k}}(q_k))+\gamma \wt P_{A_{j_k}}^*(\nabla \wt P_{A_{j_k}}(q_k))- \gamma   \delta_n\big ).
$$
Thus, $f_{n,\gamma}(\cdot)$ depending only on the the finite sequence $\varrho_{|n}$, we have, by using the independence between the  $\sigma(\sup_{\varrho\in\mathcal J}Y(\varrho,u): |u|\ge n)$, and $\mathcal F_n$, and denoting by $u_n$ any element of $\N_+^n$, 
\begin{multline*}
\mathbb{E}(\sup_{\varrho\in\mathcal J}  f_{n,\gamma}(\varrho))\le  \mathbb{E}(\sup_{\varrho_{|n}:\varrho\in\mathcal J}\sum_{u\in\TT_n} M(u)  \Pi(\varrho, u))\le  \mathbb{E}(\sum_{\varrho_{|n}:\varrho\in\mathcal J}\sum_{u\in\TT_n} M(u)  \Pi(\varrho, u))\\\le \sum_{\varrho_{|n}:\varrho\in\mathcal J}   \E(M(u_n))  \E(\sum_{u\in\TT_n} \Pi(\varrho,u))\le  \E(M(u_n)) \, d_{j_n} \sup_{\varrho_{|n}:\varrho\in\mathcal J} \E(\sum_{u\in\TT_n}\Pi(\varrho,u)),
\end{multline*}
since $\#\{\varrho_{|n}:\varrho\in\mathcal J\}=d_{j_n}$. We have 
$$
\E(\sum_{u\in\TT_n}\Pi(\varrho,u))=\prod_{k=1}^{n}\exp \big (\wt P_{A_{j_k}}((1+\gamma) q_k)-(1+\gamma)\wt P_{A_{j_k}}(q_k)+\gamma \wt P_{A_{j_k}}^*(\nabla \wt P_{A_{j_k}}(q_k))- \gamma   \delta_n\big ).
$$
Then, using for each $1\le k\le n$ the Taylor expansion with integral rest of $\gamma\mapsto  \wt P_{A_{j_k}}((1+\gamma) q_k)-(1+\gamma)\wt P_{A_{j_k}}(q_k)$ at $0$, and taking $\gamma=\gamma_{j_n}$, we get, using \eqref{tmj} and \eqref{mjj}
$$
 \wt P_{A_{j_k}}((1+\gamma) q_k)-(1+\gamma)\wt P_{A_{j_k}}(q_k)+\gamma \wt P_{A_{j_k}}^*(\nabla \wt P_{A_{j_k}}(q_k))- \gamma   \delta_n\le -\gamma_{j_n} (\delta_n- \gamma_{j_n}m_{j_n}).
 $$ 
Combining this with \eqref{control3} (in which we bound $\sqrt{n}$ by $n$) we obtain
$$
\mathbb{E}(\sup_{\varrho\in\mathcal J}  f_{n,\gamma}(\varrho))\le C_{\mathcal J}d_{j_n} \exp (-n(\delta_n-2\gamma_{j_n}m_{j_n})). 
$$
Now, notice that due to  \eqref{control0}:  $d_{j_n}\le \exp (L_{j_n-1}^{1/5})\le \exp (n^{1/5})$ and $\gamma_{j_{n}}^2 m_{j_n}\ge L_{j_n-1}^{-1/4}\ge n^{-1/4}$. 

Let $\delta_n=3\gamma_{j_n}m_{j_n}$. 
We thus have 
$$
\mathbb{E}(\sum_{n\ge 1}\sup_{\varrho\in\mathcal J}  f_{n,\gamma_{j_n}}(\varrho))\le C_{\mathcal J} \sum_{n\ge 1}d_{j_n}\exp (-n\gamma^2_{j_n} m_{j_n})\le C_{\mathcal J} \sum_{n\ge 1}\exp (n^{1/5}-n^{3/4})<\infty.
$$
\medskip

Now we prove \eqref{dimmurho2}. Let $(\delta_n)_{n\ge 1}$ be a positive sequence converging to 0. For  $\varrho\in \mathcal J$ and  $n \geq 1$ we set ~:
\begin{equation*}
\begin{split}
F_{\varrho,n}^{1}= \Big \{t\in \partial \wt \TT : \, Y(\varrho, u)\ge  \exp( n \delta_n) \Big \}\quad\text{and}\quad 
F_{\varrho,n}^{-1}=\Big \{t\in \partial\wt \TT :\, Y(\varrho, u) \le   \exp (-n \delta_n)\Big \}.
\end{split}
\end{equation*}
It is enough to show that for $\lambda\in\{-1,1\}$  we have 
\begin{equation} \label{eq2222}
\E\Big (\displaystyle \displaystyle  \sup_{\varrho \in {\mathcal J} } \sum_{n\geq 1} \mu_\varrho (F_{\varrho,n}^{\lambda})\Big )< \infty. 
\end{equation}
Let us start with the case $\lambda=1$. Let $g(x)=\exp (\sqrt{\log (x+3)})$. We have 
$$
\mu_\varrho (F_{\varrho,n}^1)\le \sum_{u\in\TT_n} \mu_\varrho([u]) g(Y(\varrho,u)) g(\exp( n \delta_n))^{-1}.
$$
Consequently, using \eqref{defmurho} and \eqref{control33}
\begin{eqnarray*}
\E(\sup_{\varrho\in\mathcal J}\mu_\varrho (F_{\varrho,n}^1))&\le& g(\exp( n \delta_n))^{-1}   \E\big (\sup_{\varrho\in\mathcal J} Y(\varrho,u)g(Y(\varrho,u))\big )d_{j_n}\sup_{\varrho_\in\mathcal J}\E (\sum_{u\in\TT_n} W_{\varrho,u_1\cdots u_n})\\
&=&d_{j_n} g(\exp( n \delta_n))^{-1}   \E\big (\sup_{\varrho\in\mathcal J} Y(\varrho,u)g(Y(\varrho,u))\big )\\
&\le & C_{\mathcal J} d_{j_n} g(\exp( n \delta_n))^{-1} \exp (\gamma_{j_n}^2 m_{j_n}\sqrt{n}) .
\end{eqnarray*}
Since $g(\exp( n \delta_n))^{-1}\le \exp (-\sqrt{n\delta_n})$, taking $\delta_n=(2\gamma_{j_n}^2 m_{j_n})^2$ yields (using again that $d_{j_n}\le \exp (n^{1/5})$ and $\gamma_{j_n}^2 m_{j_n}\ge n^{-1/4}$)
$$
\E(\sum_{n\ge 1}\sup_{\varrho\in\mathcal J}\mu_\varrho (F_{\varrho,n}^1))\le C_{\mathcal J} \sum_{n\ge 1} d_{j_n} \exp (-\gamma_{j_n}^2 m_{j_n}\sqrt{n})\le C_{\mathcal J} \sum_{n\ge 1} \exp (n^{1/5}-n^{1/4})<\infty.
$$
The case $\lambda=-1$ is simpler. Indeed, since 
$$
\mu_\varrho (F_{\varrho,n}^{-1})\le \sum_{u\in\widetilde \TT_n} \mu_\varrho([u]) Y(\varrho,u)^{-1/2}\exp ( -n \delta_n/2),
$$
we have, using computations similar to those used above, 
\begin{eqnarray*}
\E(\sup_{\varrho\in\mathcal J}\mu_\varrho (F_{\varrho,n}^1))&\le& d_{j_n}\exp ( -n \delta_n/2)  \E\big (\sup_{\varrho\in\mathcal J} Y(\varrho,u)^{1/2})\\
&\le& C_{\mathcal J}\exp ( -n \delta_n/2) d_{j_n}
\exp (\gamma_{j_n}^2 m_{j_n}\sqrt{n}/2).
\end{eqnarray*}
Then, taking $\delta_n= 3\gamma_{j_n}^2 m_{j_n}/\sqrt{n}$ yields
$$
\E(\sum_{n\ge 1}\sup_{\varrho\in\mathcal J}\mu_\varrho (F_{\varrho,n}^{-1}))\le C_{\mathcal J}  \sum_{n\ge 1}  d_{j_n}
\exp (-\gamma_{j_n}^2 m_{j_n}\sqrt{n}) \le  C_{\mathcal J}\sum_{n\ge 1} \exp (n^{1/5}-n^{1/4})<\infty.
$$ 
\end{proof}

\subsubsection{Lower bounds for the Hausdorff and packing dimensions of the sets $E_X(K)$}

The sharp lower bound estimates  for the Hausdorff and  packing dimensions of the set $E_X(K)$ are direct consequences of Proposition~\ref{lb} and the following last two propositions.
\begin {prop} \label{pp3}
There is a positive sequence $(\delta_n)_{n\ge 1}$ converging to 0 such that, with probability  $1$, for all $\varrho =(q_k)_{k\geq 1} \in {\mathcal J}$,  for $\mu_\varrho$-almost all  $t \in\partial\wt \TT$, for $n$ large enough, we have  
 $$n^{-1}\Big |\displaystyle S_nX(t) - \displaystyle\sum_{k=1}^{n} \nabla \widetilde P_{A_{j_k}}(q_k) \Big |  \le \delta_n.$$
\end{prop}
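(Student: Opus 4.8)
The plan is to mimic the argument just used for Lemma~\ref{dDimmurho}, replacing the scalar quantity controlled there by the $\R^d$-valued partial sum $S_nX(t)-\sum_{k=1}^n\nabla\widetilde P_{A_{j_k}}(q_k)$. Fix a unit vector; more precisely, since it suffices to control each of the $d$ coordinates separately (up to replacing $\delta_n$ by $d\,\delta_n$), I would first reduce to estimating, for a fixed $v\in\mathbb S^{d-1}$ (ranging over a fixed finite $\epsilon$-net of $\mathbb S^{d-1}$, or simply over $\{\pm e_1,\dots,\pm e_d\}$), the quantity $\big\langle v\,\big|\,S_nX(t)-\sum_{k=1}^n\nabla\widetilde P_{A_{j_k}}(q_k)\big\rangle$. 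For $\varrho\in\mathcal J$, $\lambda\in\{-1,1\}$ and $n\ge 1$ set
$$
G_{\varrho,n}^{\lambda}=\Big\{t\in\partial\widetilde\TT:\ \lambda\Big\langle v\,\Big|\,S_nX(t)-\sum_{k=1}^n\nabla\widetilde P_{A_{j_k}}(q_k)\Big\rangle\ge n\delta_n\Big\},
$$
and the goal is to show $\E\big(\sup_{\varrho\in\mathcal J}\sum_{n\ge 1}\mu_\varrho(G_{\varrho,n}^{\lambda})\big)<\infty$ for a suitable positive sequence $\delta_n\to 0$; the Borel--Cantelli lemma applied to $\mu_\varrho$ (which is nonzero conditionally on $\partial\widetilde\TT\neq\emptyset$ by Proposition~\ref{pro-2.4}) then finishes the proof.

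To bound $\E(\sup_\varrho\mu_\varrho(G_{\varrho,n}^{1}))$ I would run the exponential Chebyshev / Laplace-transform argument exactly as for $E^1_{\varrho,n}$: for $\gamma>0$,
$$
\mu_\varrho(G_{\varrho,n}^{1})\le \sum_{u\in\TT_n}\mu_\varrho([u])\exp\Big(\gamma\big\langle v\,\big|\,S_nX(u)-\textstyle\sum_{k=1}^n\nabla\widetilde P_{A_{j_k}}(q_k)\big\rangle-\gamma n\delta_n\Big),
$$
expand $\mu_\varrho([u])$ via \eqref{defmurho}, bound $Y(\varrho,u)\le M(u)=\sup_{\varrho'\in\mathcal J}Y(\varrho',u)$, use the independence of $\sigma(\sup_\varrho Y(\varrho,u):|u|\ge n)$ from $\mathcal F_n$, the bound $\#\{\varrho_{|n}:\varrho\in\mathcal J\}=d_{j_n}$, and \eqref{control3}. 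This reduces matters to estimating, for fixed $\varrho_{|n}$,
$$
\E\Big(\sum_{u\in\TT_n}\prod_{k=1}^n\exp\big(\langle q_k+\gamma v\,|\,X_{u_1\cdots u_k}\rangle-\widetilde P_{A_{j_k}}(q_k)-\gamma\langle v|\nabla\widetilde P_{A_{j_k}}(q_k)\rangle\big)\Big)=\prod_{k=1}^n\exp\big(\widetilde P_{A_{j_k}}(q_k+\gamma v)-\widetilde P_{A_{j_k}}(q_k)-\gamma\langle v|\nabla\widetilde P_{A_{j_k}}(q_k)\rangle\big).
$$
Now a first-order Taylor expansion of $\gamma\mapsto\widetilde P_{A_{j_k}}(q_k+\gamma v)$ at $\gamma=0$ shows that the $k$-th exponent equals $\tfrac{\gamma^2}{2}\,{}^tvH_{j_k}(q_k+t\gamma v)v\le \gamma^2 m_{j_k}$ for some $t\in[0,1]$, by definition \eqref{mj}--\eqref{mjj} of $m_j$ (this is the direct analogue of the use of \eqref{tmj} in the proof of Lemma~\ref{dDimmurho}, with $m_{1,j}$ now playing the role of $m_{2,j}$, which is exactly why $m_j$ was defined as the max of the two). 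Choosing $\gamma=\gamma_{j_n}$ and $\delta_n=3\gamma_{j_n}m_{j_n}$ (or whatever harmless multiple makes the arithmetic work out), one obtains $\E(\sup_\varrho\mu_\varrho(G_{\varrho,n}^1))\le C_{\mathcal J}\,d_{j_n}\exp(-n(\delta_n-2\gamma_{j_n}m_{j_n}))\le C_{\mathcal J}\,d_{j_n}\exp(-n\gamma_{j_n}^2m_{j_n})$, and then the bounds $d_{j_n}\le\exp(n^{1/5})$ and $\gamma_{j_n}^2m_{j_n}\ge n^{-1/4}$ coming from \eqref{control0} give a summable series; summing over the finitely many $v$ and over $\lambda\in\{-1,1\}$ (the case $\lambda=-1$ is symmetric, using $-v$) completes the estimate. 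Finally one takes the pointwise maximum over $n$ of the $\delta_n$ produced here and those from Lemma~\ref{dDimmurho} so that a single sequence $\delta_n\to0$ works, and concludes with Borel--Cantelli applied simultaneously to all $\varrho\in\mathcal J$ (the exceptional set has probability $0$ because of the $\sup_{\varrho\in\mathcal J}$ inside the expectation). The only mild subtlety — hardly an obstacle — is making sure the reduction to coordinate directions is legitimate and that the constant $A_0$ has already been enlarged (as in Proposition~\ref{detI}) so that the Hessians $H_j$, hence the $m_j$, are finite; no new choice of $(L_j)$ is needed since \eqref{control0} is exactly the condition exploited above.
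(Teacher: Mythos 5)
Your proposal is correct and follows essentially the same route as the paper's own proof: reduce to coordinate directions $v$ in the canonical basis, apply the exponential Chebyshev inequality after inserting $\mu_\varrho([u])=\prod_k W_{\varrho,u_{|k}}\cdot Y(\varrho,u)$, bound $Y(\varrho,u)$ by $M(u)=\sup_{\varrho'}Y(\varrho',u)$, use independence and \eqref{control3} to isolate a deterministic product, control each factor by a second-order Taylor remainder bounded via $m_{1,j}\le m_j$ from \eqref{mj}--\eqref{mjj}, choose $\gamma=\gamma_{j_n}$ and $\delta_n=3\gamma_{j_n}m_{j_n}$, and close with summability from \eqref{control0} followed by Borel--Cantelli uniformly over $\varrho\in\mathcal J$. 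The only cosmetic differences (mentioning an $\epsilon$-net before settling on $\{\pm e_1,\dots,\pm e_d\}$, and describing the Taylor step as ``first-order'' when the relevant object is the order-two remainder involving the Hessian) do not affect the argument.
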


\begin{proof} Fix  a positive sequence $(\delta_n)_{n\ge 1}$ converging to 0, to be specified later. 
 Let  $v$ a be vector of the canonical basis $\mathcal B$ of  $\R^d$. For  $\varrho\in \mathcal J$, $n \geq 1$  we set ~:
\begin{equation*}
\begin{split}
E_{\varrho,n}^{1}(v)&= \Big \{t\in \partial\wt\TT :\Big  \langle v \Big | \displaystyle S_nX(t) - \displaystyle\sum_{k=1}^{n} \nabla \widetilde P_{A_{j_k}}(q_k)\Big \rangle \ge   n \delta_n\Big \},\\
E_{\varrho,n}^{-1}(v)&=\Big \{t\in \partial\wt\TT:\Big  \langle v \Big | S_nX(t) - \displaystyle\sum_{k=1}^{n} \nabla \widetilde P_{A_{j_k}}(q_k)\Big \rangle \le   -n \delta_n \Big \}.
\end{split}
\end{equation*}
Suppose we have shown that for all $\lambda\in\{-1,1\}$ and $v\in\mathcal B$ we have 
\begin{equation} \label{eq2}
\E\Big (\displaystyle \displaystyle  \sup_{\varrho \in {\mathcal J} } \sum_{n\geq 1} \mu_\varrho (E_{\varrho,n}^{\lambda}(v))\Big )< \infty. 
\end{equation}
Then,  with probability  $1$, for all  $\varrho \in {\mathcal J} $, $\lambda\in\{-1,1\}$ and $v\in \mathcal B$, $\displaystyle\sum_{n\geq 1} \mu_\varrho (E_{\varrho,n,\epsilon}^{\lambda}(v))<\infty$. Consequently, by the Borel-Cantelli lemma applied to $\mu_\varrho$ whenever this measure does not vanish, for $\mu_\varrho$-almost every $t\in\partial\wt \TT$, for all $v\in\mathcal B$,  for $n$ large enough, we have 
$$
\Big |\Big  \langle v \Big | n^{-1}\Big( \displaystyle S_nX(t)  - \displaystyle\sum_{k=1}^{n} \nabla \widetilde P_{A_{j_k}}(q_k)\Big )\Big \rangle\Big |\le \delta_n, 
$$
which yields the desired result. 

Now we prove \eqref{eq2} when $\lambda=1$ (the case $\lambda=-1$ is similar). Let   $\varrho \in {\mathcal J}$ and $n\ge 1$. For  every $\gamma>0$ have 
\begin{eqnarray*}
\mu_\varrho (E_{\varrho,n}^{1}(v))\leq f^v_{n,\gamma}(\varrho):= \displaystyle\sum_{u\in \TT_n} \mu_\varrho ([u]) \prod_{k=1}^{n} \exp \big (\gamma \langle v |X_{u_{|k}}\rangle - \gamma  \langle v|\nabla \widetilde P_{A_{j_k}}(q_k)\rangle - \gamma   \delta_n\big ),
\end{eqnarray*} 
and due to \eqref{defmurho}, $ f_{n,\gamma}(\varrho)$ can be written
$$
 f^v_{n,\gamma}(\varrho)=\displaystyle\sum_{u\in \TT_n }  Y(\varrho,u) \prod_{k=1}^{n} \exp \big ( \langle q_k +\gamma v|X_{u_{|k}}\rangle - \widetilde P_{A_{j_k}}(q_k) - \langle \gamma   v|\nabla \widetilde P(q_k)\rangle - \gamma   \delta_n\big ).
 $$
We have 
\begin{eqnarray*}
\sup_{\varrho\in \mathcal J}  f^v_{n,\gamma}(\varrho)\le \sup_{\varrho_{|n}: \varrho\in\mathcal J}\displaystyle\sum_{u\in \TT_n }M(u)  \prod_{k=1}^{n} \exp \big ( \langle q_k+\gamma v |X_{u_{|k}}\rangle - \widetilde P_{A_{j_k}}(q_k) - \langle \gamma   v|\nabla \widetilde P_{A_{j_k}}(q_k)\rangle - \gamma   \delta_n\big ),
\end{eqnarray*}
with $M(u)=\sup_{\varrho\in \mathcal J}Y(\varrho,u)$. Consequently, since $\mathbb{E} (M(u))\le C_{\mathcal J}\exp (\epsilon_{|u|} |u|) $ by \eqref{control3}, we have (taking into account the independences and bounding $ \sup_{\varrho_{|n}: \varrho\in\mathcal J}$ by $ \sum_{\varrho_{|n}: \varrho\in\mathcal J}$)
\begin{eqnarray*}
&&\E (\sup_{\varrho\in \mathcal J} f^v_{n,\gamma} (\varrho))\\
&\le & C_{\mathcal J}\exp (n\epsilon_n)  \sum_{\varrho_{|n}: \varrho\in\mathcal J}\mathbb{E}\Big ( \sum_{u\in T_n} \prod_{k=1}^{n} \exp\big (\langle q_k+\gamma v|X_{u_{|k}}\rangle- \widetilde P_{A_{j_k}}(q_k) - \langle \gamma   v|\nabla \widetilde P_{A_{j_k}}(q_k)\rangle - \gamma   \delta_n\big )\Big )\\
&=& C_{\mathcal J}\exp(n\epsilon_n) \sum_{\varrho_{|n}: \varrho\in\mathcal J}\prod_{k=1}^{n} \exp\big (\widetilde P_{A_{j_k}}( q_k+\gamma v)- \widetilde P_{A_{j_k}}(q_k) - \langle \gamma   v|\nabla \widetilde P_{A_{j_k}}(q_k)\rangle - \gamma   \delta_n\big ).
\end{eqnarray*} 
Writing for each  $1\le k\le n$ the Taylor expansion with integral rest of order 2 of $\gamma\mapsto \widetilde P_{A_{j_k}}( q_k+\gamma v)- \widetilde P_{A_{j_k}}(q_k) - \langle \gamma   v|\nabla \widetilde P(q_k)\rangle$ at $0$, taking  $\gamma=\gamma_{j_n}$, and using \eqref{mj} and \eqref{mjj} we get 
$$
\sum_{k=1}^n \wt P_{A_{j_k}}( q_k+\gamma v)- \widetilde P_{A_{j_k}}(q_k) - \langle \gamma   v|\nabla \widetilde P_{A_{j_k}}(q_k)\rangle\le n\gamma_{j_n}^2m_{j_n}
$$
uniformly in $\varrho\in \mathcal J$. Consequently, using that $\epsilon_n=\gamma_{j_n}^2m_{j_n}$ and $\#(\{\varrho_{|n}: \varrho\in\mathcal J\})=d_{j_n}$, we get
$$
\E (\sup_{\varrho\in \mathcal J} f^v_{n,\gamma_{j_{n}}} (\varrho)\le C_{\mathcal J} d_{j_n}\exp  (-n\gamma_{j_n}(\delta_n-2\gamma_{j_n}m_{j_n}) ).
$$
Now, take $\delta_n=3\gamma_{j_n}m_{j_n}$. Since by  \eqref{control0} we have   $d_{j_n}\le \exp (L_{j_n-1}^{1/5})\le \exp (n^{1/5})$ and $\gamma_{j_{n}}^2m_{j_n}\ge L_{j_n}^{-1/4}\ge n^{-1/4}$, we get
$$
\E \big (\sum_{n\ge 1}\sup_{\varrho\in \mathcal J} f^v_{n,\gamma_{j_{n}}} (\varrho))\le  C_{\mathcal J} \sum_{n\ge 1} \exp (n^{1/5}-n^{3/4})<\infty.
$$
\end{proof}

\begin{prop}\label{conclusion}
With probability 1, conditionally on $\partial \wt \TT\neq\emptyset$, for every $K\in \mathcal K$ such that $K\subset I$,  there exists $\varrho=(q_k)_{k\ge 1}\in \mathcal J$ such that 
$$
\begin{cases}
 \displaystyle \bigcap_{N\ge 1}\overline{ \Big \{n^{-1}\sum_{k=1}^n\nabla\wt P (q_k):n\ge N\Big\}}=K\\
 \dim \mu_\varrho= \inf\{\wt P^*(\alpha):\alpha\in K\},\ \Dim\mu_\varrho= \sup\{\wt P^*(\alpha):\alpha\in K\}.
\displaystyle
\end{cases}.
$$
If, moreover, $K$ is not compact, $\varrho$ can be chosen so that $\Dim\mu_\varrho=\dim \partial\TT$. 
\end{prop}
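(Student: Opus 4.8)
Together with the upper bounds of Sections~\ref{upb}--\ref{uppack} and with Propositions~\ref{lb} and~\ref{pp3}, this proposition yields the sharp dimension estimates for $E_X(K)$, so the whole task is to produce, conditionally on $\partial\wt\TT\neq\emptyset$, a single $\varrho=(q_k)_{k\ge1}\in\mathcal J$ with the stated properties. Recall that $\mu_\varrho$ does not vanish on $\{\partial\wt\TT\neq\emptyset\}$, and that by Propositions~\ref{pp3} and~\ref{lb} the only quantities attached to $\varrho$ that matter are its \emph{itinerary} $v^{(j)}:=\nabla\wt P_{A_j}(q_k)$ (for $k$ in the $j$-th macro-block $\{M_{j-1}+1,\dots,M_j\}$) and the numbers $\wt P^*(v^{(j)})$, $\wt P_{A_j}^*(v^{(j)})$, since $\mu_\varrho$-almost every $t$ satisfies $n^{-1}S_nX(t)=n^{-1}\sum_{k\le n}v^{(j_k)}+o(1)$, $\dim\mu_\varrho=\liminf_n n^{-1}\sum_{k\le n}\wt P^*(v^{(j_k)})$ and $\Dim\mu_\varrho=\limsup_n n^{-1}\sum_{k\le n}\wt P_{A_{j_k}}^*(v^{(j_k)})$. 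A macro-block is fixed by a single element of $D_j$, so the construction reduces to prescribing a sequence of \emph{targets} $\tau^{(j)}\in I$ with $\|\tau^{(j)}\|<j$ and then choosing through Corollary~\ref{approxi} some $q_k\in D_j$ with $\|v^{(j)}-\tau^{(j)}\|\le1/j$ and $|\wt P_{A_j}^*(v^{(j)})-\wt P^*(\tau^{(j)})|\le1/j$; hence $\wt P^*(\tau^{(j)})-1/j\le\wt P_{A_j}^*(v^{(j)})\le\wt P^*(v^{(j)})$ and $\wt P_{A_j}^*(v^{(j)})\le\wt P^*(\tau^{(j)})+1/j$. Finally, since \eqref{control2bis} forces $M_{j-1}=o(L_j)$, i.e.\ each block dominates all the previous ones together, one checks that at $n=M_j$ the three averages above equal $v^{(j)}+o(1)$, $\wt P^*(v^{(j)})+o(1)$, $\wt P_{A_j}^*(v^{(j)})+o(1)$, that $\liminf_n n^{-1}\sum_{k\le n}\wt P^*(v^{(j_k)})=\liminf_j\wt P^*(v^{(j)})$ and $\limsup_n n^{-1}\sum_{k\le n}\wt P_{A_{j_k}}^*(v^{(j_k)})=\limsup_j\wt P_{A_j}^*(v^{(j)})$, and that the limit set of $(n^{-1}\sum_{k\le n}v^{(j_k)})_n$ is the Kuratowski limit of $\bigcup_{j\ge N}[v^{(j)},v^{(j+1)}]$ as $N\to\infty$. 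So everything comes down to designing the sequence of targets.

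\textbf{Compact $K$.} This is the by-now classical construction of sets with a prescribed set of limit points, cf.\ \cite{FW,Ol2,FF}, transposed to the block structure of $\mathcal J$. Fix $\epsilon_m\downarrow0$; for each $m$ choose $\tau_m,\sigma_m\in K$ with $\wt P^*(\tau_m)\le\inf_{K}\wt P^*+\epsilon_m$ and $\wt P^*(\sigma_m)\ge\sup_{K}\wt P^*-\epsilon_m$ ($\sigma_m$ exists since $\wt P^*$ is upper semicontinuous and $K$ compact), and a finite $\epsilon_m$-net $\mathcal N_m$ of $K$. Run the targets through successive super-blocks; super-block $m$ (started when $j$ is large enough that $1/j<\epsilon_m$, that $1/j$ lies below the upper-semicontinuity radius of $\wt P^*$ at $\tau_m$, and that $K\subset B(0,j)$) visits in turn each point of $\mathcal N_m$, then $\tau_m$, then $\sigma_m$, each passage from one such point to the next being realised by following a $\delta$-chain contained in $K_{\epsilon_m}$ (these exist because $K\in\mathcal K$), arranged, by projecting onto the convex set $I$ if necessary, to lie in $I$ and to keep $\wt P^*$ above $\inf_K\wt P^*-\epsilon_m$. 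Letting $m\to\infty$: the nets $\mathcal N_m$ make every point of $K$ a limit point of $(n^{-1}\sum_{k\le n}v^{(j_k)})_n$, while $v^{(j)}\in K_{o(1)}$ with consecutive $v^{(j)}$ being $o(1)$-close, so the segments $[v^{(j)},v^{(j+1)}]$ eventually lie in $K_{o(1)}$ and no point outside $K$ is a limit point; thus the limit set is exactly $K$. Moreover $\wt P^*(v^{(j)})\ge\inf_K\wt P^*-o(1)$ for all $j$, and along the blocks with target $\tau_m$ one has $\wt P^*(v^{(j)})\le\wt P^*(\tau_m)+\epsilon_m\le\inf_K\wt P^*+2\epsilon_m$, whence $\dim\mu_\varrho=\inf_K\wt P^*$; symmetrically $\wt P_{A_j}^*(v^{(j)})\le\wt P^*(\tau^{(j)})+o(1)\le\sup_K\wt P^*+o(1)$ for all $j$ (each target lies in $K$ or within $o(1)$ of it, where $\wt P^*\le\sup_K\wt P^*+o(1)$ by upper semicontinuity and compactness), while along the blocks with target $\sigma_m$ one has $\wt P_{A_j}^*(v^{(j)})\ge\wt P^*(\sigma_m)-o(1)\ge\sup_K\wt P^*-o(1)$, whence $\Dim\mu_\varrho=\sup_K\wt P^*$.

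\textbf{Non-compact $K$.} The Hausdorff statement and $\Dim\mu_\varrho=\sup_K\wt P^*$ are obtained as above; the addendum asks in addition for a $\varrho$ with $\Dim\mu_\varrho=\dim\partial\TT=\wt P(0)$. Since $\wt P_{A_j}^*\le\wt P_{A_j}(0)=\log\E(N_{A_j})\uparrow\wt P(0)$, the bound $\Dim\mu_\varrho\le\wt P(0)$ is automatic, so it is enough to push $\limsup_j\wt P_{A_j}^*(v^{(j)})$ up to $\wt P(0)$ without disturbing the limit set. Fix $\alpha^*$ with $\wt P^*(\alpha^*)$ close to $\wt P(0)=\sup_{\R^d}\wt P^*$; as $I$ is closed, convex and unbounded it possesses a nonzero recession vector, and for any recession vector $w$ of $I$ the function $t\mapsto\wt P^*(\alpha^*+tw)$ is concave and nonnegative on $[0,\infty)$, hence nondecreasing, so $\wt P^*(\alpha^*+tw)\ge\wt P^*(\alpha^*)$ for all $t\ge0$. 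Choose $\kappa_m\in K$ with $\|\kappa_m\|\to\infty$ and $\kappa_m/\|\kappa_m\|\to w$ (a subsequential limit, automatically a recession vector of $I$), put $\beta_m:=\alpha^*+\|\kappa_m\|w\in I$, and insert into super-block $m$ — once $j$ is also large enough that $\|\beta_m\|<j$ — the excursion: travel along $K_{\epsilon_m}$ out to $\kappa_m$, then along the segment $[\kappa_m,\beta_m]\subset I$ (convexity of $I$) to $\beta_m$, dwell one block at $\beta_m$, then return along $[\beta_m,\kappa_m]$ and resume the ordinary tour. The dwell block gives $\wt P_{A_j}^*(v^{(j)})\ge\wt P^*(\beta_m)-o(1)\ge\wt P(0)-\epsilon_m-o(1)$, so $\Dim\mu_\varrho=\wt P(0)$. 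The key point is that the only off-$K$ points used, those on the segments $[\kappa_m,\beta_m]$ (including $\beta_m$), all have norm $\ge\frac13\|\kappa_m\|\to\infty$ for $m$ large — because both $\kappa_m/\|\kappa_m\|$ and $\beta_m/\|\beta_m\|$ tend to $w$, so the whole segment stays in a thin cone around $w$ at scale $\|\kappa_m\|$ — hence they create no finite limit point other than those of $K$; and these segments lie in $I$ with $\wt P^*\ge\min(\wt P^*(\kappa_m),\wt P^*(\beta_m))\ge\inf_K\wt P^*-o(1)$ (concavity of $\wt P^*$), so they neither enlarge the limit set nor lower the $\liminf$, and $\dim\mu_\varrho=\inf_K\wt P^*$ still holds.

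\textbf{Expected main difficulty.} Granting the domination $M_{j-1}=o(L_j)$, the re-steering of the running averages is instantaneous, which makes the mechanics light; the real work is the bookkeeping — interleaving the net-tours, the single blocks at $\tau_m$ and $\sigma_m$, and (for non-compact $K$) the excursions, so that the few evaluation times attached to $\tau_m$, $\sigma_m$ and the dwell blocks exactly pin $\liminf$ and $\limsup$ of the two averages while $\bigcup_{j\ge N}[v^{(j)},v^{(j+1)}]$ converges to $K$ — and making all of this consistent with the two-sided control of Corollary~\ref{approxi} and with the mere upper semicontinuity (not continuity) of $\wt P^*$ on $I$, which is why the points $\tau_m,\sigma_m$ and the tolerances $\epsilon_m$ have to be introduced. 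I expect the genuinely delicate step to be the non-compact case: designing the ``excursions at a diverging scale'', and in particular verifying that the segments $[\kappa_m,\beta_m]$ stay uniformly bounded away from the origin, is the new ingredient compared with the classical divergence-point constructions.
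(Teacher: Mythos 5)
Your compact‐$K$ construction is the paper's: nested $\epsilon_m$-nets of $K$ together with the approximate minimizer/maximizer $\tau_m,\sigma_m$, traversed via $\delta$-chains, with the pointwise dimensions read off from Propositions~\ref{lb} and~\ref{pp3} and the last-block domination \eqref{control2bis}. One small presentational point there: rather than taking a chain in $K_{\epsilon_m}$ and ``projecting onto $I$'' (which, since $\wt P^*$ is only upper semi-continuous, need not preserve the lower bound $\wt P^*\ge\inf_K\wt P^*-\epsilon_m$), do as the paper does and choose the chain points \emph{in} $K$ itself — $K\in\mathcal K$ lets you pick a $\delta$-chain in $K_{\delta/3}$ and then move each node to a $\delta/3$-close point of $K$, giving a $\delta$-chain of points of $K$ on which $\wt P^*\ge\inf_K\wt P^*$ holds exactly.

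For the non-compact addendum your argument is genuinely different from, and neater than, the paper's. The paper splits into two cases according to whether $\sup_{I}\wt P^*$ is attained, and the second case further branches on $\mathrm{int}(\dom\wt P^*)$; in each branch it laboriously produces an unbounded sequence $\wt\gamma_m$ with $\wt P^*(\wt\gamma_m)\to\wt P(0)$ (via the normal direction $u$ of a supporting hyperplane of $\dom\wt P$ at $0$ and the subdifferential $\partial\wt P(0)$ when the sup is attained), and then shows, by a compactness/subsequence argument, that the excursion segments $(\wt\beta'_m,\wt\gamma_m]$ can be made mutually $\epsilon_0$-separated so as not to create new limit points. You replace all of this by one observation: a nonnegative concave function on $[0,\infty)$ is nondecreasing, hence for any recession vector $w$ of the closed convex set $I$ and any $\alpha^*\in I$, $t\mapsto\wt P^*(\alpha^*+tw)$ is nondecreasing; taking $w$ as a subsequential limit of $\kappa_m/\|\kappa_m\|$ for unbounded $\kappa_m\in K$ makes $w$ automatically a recession vector of $I$, and $\beta_m=\alpha^*_m+\|\kappa_m\|w$ then satisfies $\wt P^*(\beta_m)\ge\wt P^*(\alpha^*_m)$, with all points of $[\kappa_m,\beta_m]$ of norm $\|\kappa_m\|(1+o(1))\to\infty$ so that no new finite limit point is created — replacing the paper's ``mutually distant segments'' criterion by the coarser but sufficient ``segments escape to infinity''. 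This collapses the paper's whole case analysis into a single argument, handles simultaneously the situations $\wt P(0)<\infty$ and $\wt P(0)=\infty$ (take $\alpha^*_m$ with $\wt P^*(\alpha^*_m)\to\sup\wt P^*$ and $\|\alpha^*_m\|=o(\|\kappa_m\|)$ along a subsequence), and does not use duality between $\wt P$ and $\wt P^*$. What the paper's approach buys is an explicit half-line $\{\alpha_0+\lambda u:\lambda\ge 0\}$ on which $\wt P^*$ is identically $\wt P(0)$, a slightly stronger structural statement; your argument gives only monotonicity, which is all the proposition needs.
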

\begin{proof}
Let $m_0\in\N_+$ such that $K\cap B(0,m_0)\neq\emptyset$. For all $m\ge m_0$ let $\beta_m,\gamma_m\in K\cap B(0,m)$ such that $\wt P^*(\beta_m)\le \inf_{\alpha\in K\cap B(0,m)}\wt P^*(\alpha)+1/m$ and $\wt P^*(\gamma_m)= \sup_{\alpha\in K\cap B(0,m)}\wt P^*(\alpha)$ (recall that $\wt P^*$ is upper-semi continuous so it reaches its supremum over compact sets).  Let $(B(\alpha_{m,\ell},1/m))_{1\le \ell\le b_m}$ be an $1/m$ centered covering of $K\cap B(0,m)$ such that $\beta_m$ and $\gamma_m$ belong to the collection $\{\alpha_{m,\ell}:1\le \ell\le b_m\}$ and $\{\alpha_{m-1,\ell}:1\le \ell\le b_{m-1}\}\subset\{\alpha_{m,\ell}:1\le \ell\le b_m\}$ for $m\ge m_0+1$. Since $K\in\mathcal K$, we can find  $(j_m)_{m\ge m_0}$, an increasing sequence of integers, such that for each $m$ we have $j_m\ge m$ and we can complete the collection  $(B(\alpha_{m,\ell},1/m))_{1\le \ell\le b_m}$ into a finite collection $(B(\alpha_{m,\ell},1/m))_{1\le \ell\le \ell_m}$ such that each $\alpha_{m,\ell}$ belongs to $K\cap B(0,j_m)$ and $\{\alpha_{m,\ell}: 1\le \ell\le \ell_m\}$ is $1/m$-chained, that is for any pair $\{\beta,\beta'\}$ of points in $\{\alpha_{m,\ell}: 1\le \ell\le \ell_m\}$ one can find $\beta_0=\beta, \beta_1,\ldots\beta_p=\beta'$ in $\{\alpha_{m,\ell}: 1\le \ell\le \ell_m\}$ such that $B(\beta_{i},1/m)\cap B(\beta_{i+1},1/m)\neq\emptyset$ for all $0\le i<p$.

Without loss of generality we can then assume that $B(\alpha_{m,\ell},1/m)\cap B(\alpha_{m,\ell+1},1/m)\neq \emptyset$ for all $1\le \ell\le \ell_m-1$, and $B(\alpha_{m+1,1},1/m)\cap B(\alpha_{m,\ell_m},1/m)\neq\emptyset$. We can also assume that $\ell_m\ge j_{m+1}-j_m+1$ by repeating some ball in the collection $(B(\alpha_{m,\ell},1/m))_{1\le \ell\le \ell_m}$ if necessary.

By construction of the sequence $(D_j)_{j\ge 1}$ in Proposition~\ref{approxi}, for each $m\ge m_0$, for all $j\ge j_{m}$ and $1\le \ell\le \ell_m$, we can fix $q_{j,m,\ell}\in D_j$ such that $\|\nabla\wt P_{A_{j}}(q_{j,m,\ell})-\alpha_{m,\ell}\|\le  1/j\le  1/m$ and $|\wt P_{A_j}^*(\nabla\wt P_{A_{j}}(q_{j,m,\ell}))-\wt P^*(\alpha_{m,\ell})|\le1/j\le  1/m$.

\medskip

We build $\varrho$ as follows. We take any sequence $(q_k)_{1\le k\le M_{j_{m_0}-1}}$. Then, for all $1\le \ell\le \ell_{m_0}$ we set $q_k=q_{j_{m_0}-1+\ell,m_0,\ell}$ for all $k\in [M_{j_{m_0}-1+\ell-1}+1,M_{j_{m_0}-1+\ell}]$.  This yields $q_k$ for $M_{j_{m_0}-1}+1\le k\le M_{j_{m_0}-1+\ell_{m_0}}$.  Set $J_{m_0}=j_{m_0}-1+\ell_{m_0}$.  Then we define recursively the sequence $q_k$ by taking, for $m\ge m_0+1$, for all $1\le \ell\le \ell_m$,  $q_k=q_{J_{m-1}+\ell,m,\ell}$ for all $k\in [M_{J_{m-1}+\ell}+1,M_{J_{m-1}+\ell}]$, where $J_m= J_{m-1}+ \ell_m $. Notice that the property 
$\ell_m\ge j_{m+1}-j_m+1$ implies that $J_{m-1}\ge j_m$, hence the choice of $q_{J_{m-1}+\ell,m,\ell}$ is possible.  

Now, if  $n$ is large enough, $j_{n-1}$ takes the form  $j_{n-1}= J_{m-1}+\ell-1$. Recall that we have $\|\alpha_{m,\ell}-\alpha_{m,\ell-1}\|\le 1/m$ if $\ell\ge 2$ and $\|\alpha_{m,1}-\alpha_{m-1,\ell_{m-1}}\|\le 2/(m-1)$ if $\ell=1$, and $\|\nabla\wt P_{A_{j}}(q_{j,m,\ell})-\alpha_{m,\ell}\|\le  1/m$ for each $j\ge j_m$ and $1\le \ell\le \ell_m$. Then, it is an exercise to check that due to \eqref{control2bis}, we have 
$$
\Big \|\alpha_{m,\ell}- n^{-1}\sum_{k=1}^n \nabla\wt P_{A_{j_k}} (q_k) \Big\|= O(1/m+ 1/(J_{m-1}+\ell-1))
$$ (basically, due to \eqref{control2bis}, in $n^{-1}\sum_{k=1}^n \nabla\wt P_{A_{j_k}} (q_k)$ only the terms corresponding to $j_k=j_n$ and $j_k=j_n-1$ can have a significative contribution, and they differ from each other and $\alpha_{m,\ell}$ by at most $1/(m-1)$). 
Since by construction $\{\alpha_{m,\ell}: \, m\ge 1,\, 1\le \ell \le b_m\} $ is a dense subset of $K$ and each vector $\alpha_{m,\ell}$ occurs in  all the sets $\{\alpha_{m',\ell}: 1\le \ell \le b_{m'}\} $ with $m'>m$, we have $\displaystyle \bigcap_{N\ge 1}\overline{ \Big \{n^{-1}\sum_{k=1}^n\nabla\wt P (q_k):n\ge N\Big\}}=K$. 

Also, still due to \eqref{control2bis}, taking $n=M_{J_{m-1}+\ell}$, we have 
$$
\Big |\wt P^*(\alpha_{m,\ell})- n^{-1}\sum_{k=1}^n \wt P^*(\nabla\wt P_{A_{j_k}} (q_k)) \Big|=O( 1/m + 1/(J_{m-1}+\ell)).
$$ 
Then, remembering that the vectors $\beta_m$ and $\gamma_m$ belong to the collection of the $\alpha_{m,\ell}$, we get $\displaystyle \liminf_{n\to\infty}n^{-1}\sum_{k=1}^n\wt P^*(\nabla\wt P_{A_{j_k}} (q_k)) =\inf_{\alpha\in K}\wt P^*(\alpha)$ and $\displaystyle\limsup_{n\to\infty}n^{-1}\sum_{k=1}^n\wt P^*(\nabla\wt P_{A_{j_k}} (q_k))= \sup_{\alpha\in K}\wt P^*(\alpha)$. Then the conclusion about the dimensions of $\mu_\varrho$ follows from Proposition~\ref{lb}. 

Notice that at this step, taking $K=\{\alpha\}$ we get $\mu_\varrho(E_X(\alpha))=\|\mu_\varrho\|$ and $\dim E_X(\alpha)\ge \wt P^*(\alpha)$, hence $\dim \partial \TT\ge \dim\partial \wt \TT\ge \sup_{\alpha\in I}\wt P^*(\alpha)=\wt P(0)$, conditionally on $\{\partial \wt \TT\neq\emptyset\}$.

\medskip

Suppose now that $K$ is not compact.  

{\bf Case 1}:  $\dim \partial\TT=\sup_{\alpha\in I}\wt P^*(\alpha)$ is not reached in $I$. 

We can suppose that $\sup_{\alpha\in K}\wt P^*(\alpha)<\dim \partial\TT$, otherwise there is nothing to prove. Let us explain how to modify $\varrho$ so that $\Dim\mu_\varrho=\dim \partial\TT$. 

Consider an unbounded sequence of points $(\wt \gamma_m)_{m\ge 1}$ in $I$, such that $\wt P^*(\wt \gamma_m)>\sup\{\wt P^*(\alpha):\alpha\in K\}$, $\wt P^*(\wt \gamma_m)$ converges increasingly to $\dim \partial\TT$, and $\wt P^*$ is continuous at $\wt \gamma_m$. This is possible, for otherwise due to the concavity and upper semi-continuity of $\wt P^*$, and the definition of $I$,  $I$ should be compact. Also, let $(\wt\beta_{m})_{m\ge 1}$ an unbounded sequence in $K$ such that $\|\wt \beta_m\|\ge m \|\wt \gamma_m\|$ for each $m\ge 1$. Due to our assumption about the $\wt\gamma_m$ and the concavity of $\wt P^*$, for each $m\ge 1$ there is a point $\wt \beta'_{m}$ in $K\cap  [\wt \beta_m,\wt\gamma_m]$ such that $K\cap (\wt \beta'_{m}, \wt\gamma_m]=\emptyset$.

Suppose that for all $\epsilon>0$, there exists $M\ge 1$ such that for all $m\ge M$, we have $d([\wt \beta'_{m},\wt\gamma_m],\bigcup_{k=1}^M[\wt \beta'_{k},\wt\gamma_k])\le \epsilon$. Apply this with $\epsilon=1$. There exists $R>0$ such that $\bigcup_{k=1}^M[\wt \beta'_{k},\wt\gamma_k]\subset B(0,R)$, hence   $B(0,R+1)\cap   [\wt \beta'_{m},\wt\gamma_{m}])\neq\emptyset$ for all $m\ge 1$. We have $\sup_{\alpha\in B(0,R+1)}\wt P^*(\alpha)<\sup_{\alpha\in I}\wt P^*(\alpha)$. For each $m\ge 1$ let $\delta_m\in B(0,R+1)\cap   [\wt \beta'_{m},\wt\gamma_{m}]$. By construction, we have $\delta_m= (1-\lambda_m)\wt \beta_m+\lambda_m\wt \gamma_m$ with $\lambda_m\to 1$ since $\|\delta_m-\wt \gamma_m\|\sim \|\wt \gamma_m\|=o(\|\wt\beta_m\|)$.  However, we have $\wt P^*(\delta_m)\le \sup_{\alpha\in B(0,R+1)}\wt P^*(\alpha) <\sup_{\alpha\in I}\wt P^*(\alpha)=\lim_{m\to\infty}  (1-\lambda_m)\wt P^*(\wt \beta_m)+\lambda_m\wt P^*(\wt \gamma_m)$, which contradicts the concavity of $\wt P$ by considering a large enough integer $m$. Consequently, by extracting a subsequence if necessary, we can assume that the semi-open segments $(\wt\beta'_{m}, \wt\gamma_m]$, $m\ge 1$,  are distant from each other by a constant $\epsilon_0>0$ independent of~$m$.  

Now we modify the construction of $\varrho$ as follows. At first, for all $m\ge 1$, we consider  a finite sequence of balls $(B(\wt\alpha_{m,\ell},1/m))_{1\le \ell\le \wt \ell_m}$ with the following properties: the balls are centered on $[\wt \beta'_m,\wt\gamma_m]$ and form a covering of $[\wt \beta'_m,\wt\gamma_m]$; $\wt\alpha_{m,1}=\wt\beta'_m=\wt\alpha_{m,\wt\ell_m}$; $\wt\gamma_m$ belongs to  $\{\wt\alpha_{m,\ell}: 1\le \ell\le \wt \ell_m\}$; the balls corresponding to two consecutive indices are not disjoint. 

The sequence $\{\wt\alpha_{m,\ell}:1\le \ell\le \wt\ell_m\}$ can be inserted in the sequence $\{\alpha_{m,\ell}:1\le \ell\le \ell_m\}$, by increasing $j_m$ if necessary. Then the construction of $\varrho$ is done exactly as above. As a result we still have $\displaystyle \liminf_{n\to\infty}n^{-1}\sum_{k=1}^n\wt P^*(\nabla\wt P_{A_{j_k}} (q_k)) =\inf_{\alpha\in K}\wt P^*(\alpha)$ but $\displaystyle\limsup_{n\to\infty}n^{-1}\sum_{k=1}^n\wt P^*(\nabla\wt P_{A_{j_k}} (q_k))= \lim_{m\to\infty} \wt P^*(\wt\gamma_m)=\dim\partial T$. On the other hand, the set of limit points of $n^{-1}\sum_{k=1}^n \nabla\wt P_{A_{j_k}} (q_k)$ is still equal to $K$, because by construction, for each $m> \epsilon_0^{-1}$, the set $\{\alpha\in\R^d : d(\alpha,K)> 1/m\}\cap \bigcup_{\ell=1}^{\wt\ell_m} B(\wt\alpha_{m,\ell},1/m)$ does not intersect $\bigcup_{m'>m}\bigcup_{\ell=1}^{\wt\ell_{m'}} B(\wt\alpha_{m',\ell},1/m')$. 

\medskip

{\bf Case 2}:  $\dim \partial\TT=\sup_{\alpha\in I}\wt P^*(\alpha)=\wt P(0)$ is reached at, say, $\alpha_0$.  We exclude the obvious case $\dom \wt P=\{0\}$ for which $\wt P^*$ is constant.  

Since $K$ is not bounded, so is $I$. Suppose first that $\mathrm{int}(\dom \wt P^*)\neq \emptyset$. Then $0$ is in the  boundary of $\dom \wt P$, for otherwise $I$ is compact by comment (5) of Section~\ref{mainr}. Then let $u\in\R^d\setminus \{0\}$ such that the convex set $\dom \wt P$ belongs to the half-space $\langle q|u\rangle \le 0$. Since $\wt P^*(\alpha_0)=\wt P(0)$, we have $\alpha_0\in \partial \wt P(0)$, and the fact that $\langle q|u\rangle \le 0$ for $q\in \dom \wt P$ implies that the half-line $\{\alpha_0+\lambda u: \lambda\ge 0\}$ belongs to the subdifferential $\partial \wt P(0)$, giving $\wt P^*(\alpha_0+\lambda u)=\wt P(0)$ for all $\lambda \ge 0$. 

Now let $(\widetilde \beta_m)_{m\ge 1}$ be an unbounded sequence in $K$. By extracting a subsequence if necessary, we can assume that $\widetilde \beta_m/\|\widetilde \beta_m\|$ converges to $v\in\mathbb S^{d-1}$. If $v=-u$, since $I$ is convex and closed and $(\widetilde \beta_m)_{m\ge 1}$ unbounded, the half line $\{\alpha_0-\lambda u: \lambda\ge 0\}$ also belongs to $I$. If $\wt P^*(\alpha_0-\lambda u)=\wt P(0)$ for all $\lambda<0$, we must have $\wt P(q)=+\infty$ outside $\{0\}$, which is a contradiction; otherwise, taking $q$ in $\mathrm{int}(\dom \wt P)$ such that $\langle q|u\rangle <0$ and letting $\lambda$ tend to $-\infty$ gives a new contradiction. Consequently, the function $\wt P^*$ takes values smaller than $\wt P(0)$ over $\{\alpha_0-\lambda u: \lambda>0\}$, and since it is concave and positive over the whole line $\{\alpha_0-\lambda u: \lambda\ge 0\}\subset I$ this is a new contradiction. 
Thus $v\neq -u$.  It is then easy,  by extracting a subsequence of $(\wt \beta_m)_{m\ge 1}$ if necessary, to choose $(\widetilde \beta_m)_{m\ge 1}$ tending to $\infty$ as well as $(\wt \gamma_m)_{m\ge 1}$ a sequence in $\{\alpha_0+\lambda u: \lambda\ge 0\}$ such that $d([\wt \beta_i, \wt \gamma_i], [\wt \beta_j, \wt \gamma_j])\ge 1$ for all $i\neq j$. Recalling that we only have to consider the case $\sup_{\alpha\in K}\wt P^*(\alpha)<\dim\partial \TT$, the proof ends as in the first case after considering the points $\wt \beta_m'$, $m\ge 1$, and $\epsilon_0=1$. 

\medskip

Now suppose that $\mathrm{int}(\dom \wt P^*)= \emptyset$, and consider a vector $u\in\R^{d}\setminus\{0\}$ orthogonal to the affine subspace spanned by $\dom \wt P$.  We have $\wt P^*(\alpha_0+\lambda u)=\wt P(0)$ for all $\lambda\in\R$.  Considering $(\wt\beta_m)_{m\ge 1}$ as above and assume without loss of generality that $\widetilde \beta_m/\|\widetilde \beta_m\|$ converges to $v\in\mathbb S^{d-1}$. We have $v\neq u$ or $v\neq -u$ so that we can conclude as above.  
 
\end{proof}

\section{Multifractal analysis, large deviations and free energy. Proofs of Theorem~\ref{thm-1.2} and~\ref{thm-1.3} when $\dom\wt P\neq\emptyset$}\label{other proofs}
\subsection{Multifractal analysis, and $0$-$\infty$ law for Hausdorff and packing measures of the level sets $E_X(\alpha)$. Proof of Theorem~\ref{thm-1.2}}\label{mutualsing}

(1) It is a corollary of Theorem~\ref{thm-1.1}.

\smallskip

\noindent
(2) Given $\alpha\in I$, for each $j\ge 1$, fix in $D_{j}$  two distinct points $q^{(0)}_{\alpha,j}$ and $q^{(1)}_{\alpha,j}$ among those $q$ such that $(\nabla\wt P_{A_j}(q), \wt P^*(\nabla\wt P_{A_j}(q))$ is as close as possible to $(\alpha,\wt P^*(\alpha))$. Now let $\mathcal J_\alpha$ stand for the set of those sequences $\varrho$ of the form $(\underbrace{q^{(\mathbf{1}_C(j))}_{\alpha,j},\cdots,q^{(\mathbf{1}_C(j))}_{\alpha,j}}_{L_j})_{j\ge 1}$, where $C$ runs in the set of equivalent classes of subsets of $\N_+$ under the relation $S\sim S'$ if $\mathbf{1}_S(j)= \mathbf{1}_{S'}(j)$ for $j$ large enough. By construction, $\mathcal J_\alpha$ is not countable and two distinct of its elements do not coincide ultimately.  Now, it follows from Propositions~\ref{pp3}, \ref{lb} and \ref{pro-2.4}(3) that, with probability 1, conditionally on $\{\partial \wt \TT\neq\emptyset\}$,  for all $\alpha\in I$, for all $\varrho\in \mathcal J_\alpha$ the measure $\mu_\varrho$ is carried by $E_X(\alpha)$ and has exact dimension $\wt P^*(\alpha)$, and if $\varrho$ and $\varrho'$ are two distinct elements of 
 $\varrho\in \mathcal J_\alpha$, the measures $\mu_\varrho$ and $\mu_{\varrho'}$ are mutually singular. 
 
\smallskip

\noindent
(3) Let us start with Hausdorff measures. Fix $A_0$ like in Section~\ref{LB} and a set $\Omega(A_0)$ of full probability in $\{\partial \wt\TT\neq\emptyset\}$, over which the conclusions  of Lemma~\ref{dDimmurho} and Proposition~\ref{pp3} hold with the same sequence $(\delta_n)_{n\ge 1}$. 

Fix $\omega\in\Omega(A_0)$. Let $\alpha\in I$ such that  $0<\wt P^*(\alpha)<\dim \partial \TT=\wt P(0)$,  as well as $g$,  a gauge function satisfying the property $\limsup_{r\to 0^+}\log (g(r))/\log (r)\le \wt P^*(\alpha)$ (the fact that $\mathcal H^g(E_X(\alpha))=0$ if this does not hold  is obvious). Let us show that there exists $\varrho\in  \mathcal J$  such that $\mu_\varrho(E_X(\alpha))>0$ and there is  a positive sequence $(\delta'_n)_{n\ge 1}$ such that $\lim_{
n\to\infty} n\delta'_n=\infty$ and   for $\mu_\varrho$-almost every $t$, for $n$ large enough, we have $g(\mathrm{diam}([t_{|n}])\ge  \mu_\varrho([t_{|n}])^{1-\delta'_n}$.

There is a non decreasing function $\theta$ such that  $g(r)\ge r^{\wt P^*(\alpha)+\theta(r)}$, with $\lim_{r\to0^+}\theta(r)=0$. Let $\theta_n=\theta (e^{-n})$.

Let $(\eta_k)_{k\ge 1}$ be a decreasing sequence converging to 0, and such that $\lim_{n\to\infty} n\eta_n=\infty$ and $n \delta_n=o\Big (\sum_{k=1}^n\eta_k\Big )$ as $n\to\infty$. Fix $\varrho\in \mathcal J$ such that $\nabla \wt P_{A_{j_k}}(q_k)$ converges to $\alpha$ and $\wt P^*_{A_{j_k}}(\nabla \wt P_{A_{j_k}}(q_k))$  converges to $\wt P^*(\alpha)$ from above as $k$ tends to $\infty$, but enough slowly so that for $k$ large enough we have $\wt P^*_{A_{j_k}}(\nabla \wt P_{A_{j_k}}(q_k))\ge (1+\eta_k)^2 (\wt P^*(\alpha)+\theta_k)$. Lemma~\ref{dDimmurho} applied with $\mu_\varrho$ combined with \eqref{defmurho} then yields, for $\mu_\varrho$-almost every $t$, for $n$ large enough,
\begin{eqnarray*}
\mu_\varrho([t_{|n}])&\le& \exp \Big (2n\delta_n -\sum_{k=1}^n(1+\eta_k)^2 (\wt P^*(\alpha)+\theta_k)\Big )\\
&\le & \exp \Big (2n\delta_n -(1+\eta_n)(\wt P^*(\alpha)+\theta_n) \sum_{k=1}^n(1+\eta_k) \Big ) .
\end{eqnarray*}
Since  $n\delta_n=o\Big (\sum_{k=1}^n\eta_k\Big )$,  for $n$ large enough we have $2n\delta_n - (1+\eta_n)(\wt P_{\alpha}^*(0) +\theta_n) \sum_{k=1}^n\eta_k\le - (1+\eta_n)(\wt P_{\alpha}^*(0) +\theta_n) n \delta_n$, so that 
$$
\mu_\varrho([t_{|n}])\le  \exp \big (-(1+\eta_n)(\wt P^*(\alpha)+\theta_n)  (n\delta_n +n )\big ).
$$
On the other hand,  $\mathrm{diam}([t_{|n}])=e^{-n}$, so 
\begin{multline*}
\mu_\varrho([t_{n}])\le (\mathrm{diam}([t_{|n}]))^{(1+\eta_n)(\wt P^*(\alpha)+\theta_n)}\\\le (\mathrm{diam}([t_{|n}]))^{(1+\eta_n)(\wt P^*(\alpha)+\theta (\mathrm{diam}([t_{|n}]))}\le   g(\mathrm{diam}([t_{|n}]))^{(1+\eta_n)}.
\end{multline*}
Finally, the sequence $\delta'_n= \eta_n/(1+\eta_n)$ is as desired. 

Now, let $n_0\in\N_+$ and $E\in E_X(\alpha)$ such that $\mu_\varrho (E)>0$ and for all $n\ge n_0$ and $t\in E$ we have $g(\mathrm{diam}([t_{|n}])\ge  \mu_\varrho([t_{|n}])^{1-\delta'_n}$ and $\mu_\varrho ([t_{|n}])\le e^{-n\wt P^*(\alpha)/2}$.  To compute $\mathcal H^g(E)$, due to the ultrametric nature of the distance we use, we can use coverings by cylinders.  For any $e^{-n'_0}$ covering $\{[u_i]\}_{i\in\mathcal I}$ of $E$ by such cylinders intersecting $E$, with $n'_0\ge n_0$, we have 
\begin{multline*}
\sum_{i\in\mathcal I} g(\mathrm{diam}([u_i])\ge \sum_{i\in\mathcal I}  \mu_\varrho([u_i])^{1-\delta'_{|u_i|}}\\\ge \sum_{i\in\mathcal I}  \mu_\varrho([u_i]) e^{|u_i|\delta'_{|u_i|}\wt P^*(\alpha)/2}\ge e^{n'_0\delta'_{n'_0}\wt P^*(\alpha)/2}  \sum_{i\in\mathcal I}  \mu_\varrho([u_i]) \ge e^{n'_0\delta'_{n'_0}\wt P^*(\alpha)/2}\mu_\varrho(E).
\end{multline*}
As $\lim_{n\to\infty} n\delta'_n=\infty$ and $ P^*(\alpha)>0$ , we get $\mathcal H^g(E_X(\alpha))\ge \mathcal H^g(E)=\infty$. 

This yields the desired result on $\{\partial\wt \TT\neq\emptyset\}$, uniformly in $\alpha$ and $g$, and  since the set $\{\partial\wt \TT\neq\emptyset\}$ (depending on $A_0$) increases to a set of probability 1 as $A_0$ tends to $\infty$, we have the result claimed in our statement. 

\medskip

Now let us consider packing measures. We still fix $\omega\in \Omega(A_0)$ as above. Let $\alpha\in I$ such that  $0<\wt P^*(\alpha)<\dim \partial \TT=\wt P(0)$,  as well as $g$,  a gauge function satisfying the property $\liminf_{r\to 0^+}\log (g(r))/\log (r)\le \wt P^*(\alpha)$ (the simple proof of the fact that $\mathcal P^g(E_X(\alpha))=0$ if this does not hold is left to the reader).
There exist a decreasing sequence $(r_j)_{j\ge 1}\in (0,1)^{\N_+}$ converging to 0 as well as  a non increasing positive sequence $(\theta_j)_{j\ge 1}$ converging to 0 such that $g(r_j)\ge r_j^{\wt P^*(\alpha)+\theta_j} $.  Setting $n_j=\lfloor \log (r_j^{-1})\rfloor$ and redefining $\theta_j$ as being $\theta_j+n_j^{-1}(\wt P^*(\alpha_j)+\theta_j)$, we still have $g(r_j)\ge r_j^{\wt P^*(\alpha)+\theta_j} $ with $r_j=e^{-n_j}$, since $g$ is non decreasing. Now take $(\eta_n)_{n\ge 1}$, $\mu_\varrho$, and $(\delta'_n)_{n\ge 1}$ as above. By construction, for $\mu_\varrho$-almost every $t$, for $j$ large enough, we have  $g(\mathrm{diam}([t_{|n_j}])\ge  \mu_\varrho([t_{|n_j}])^{1-\delta'_{n_j}}$. Now let  $E$ be a subset of $E_X(\alpha)$ as above. Since for $j$ large enough we have $n_j\ge n_0$, considering the packing of $E$ by the cylinders of generation $n_j$ which intersect $E$ as well as the same estimate as above yields $\overline P^g(E)\ge e^{n_j\delta'_{n_j}\wt P^*(\alpha)/2}\mu_\varrho(E)$ for all $j\ge 1$, hence $\overline P^g(E)=\infty$. Since any countable covering of $E_X(\alpha)$ must contain a set of positive $\mu_\varrho$-measure, by definition of $\mathcal P^g$ we get $\mathcal P^g(E_X(\alpha))=\infty$. 


\subsection{Large deviations and free energy. Proof of Theorem~\ref{thm-1.3}}

(1) This follows from  Theorem~\ref{thm-1.2}(1), \eqref{LD'}, and the standard fact that $\dim E_X(\alpha)\le \lim_{\epsilon\to 0}\liminf_{n\to\infty} f(n,\alpha,\epsilon)$. 

\medskip

\noindent
(2)$(a)$ We use the approximation of $\wt P$ by the functions $\wt P_A$, $A\ge A_0$, introduced in Section~\ref{sec2.1}. We first establish the result for $\wt P_A$. 

We have proved that for such a function the set $I_A=\{\alpha\in\R^d: \wt P_A^*(\alpha)\ge 0\}$ is compact. Set 
$$
f_A(\alpha):=\begin{cases}\widetilde P_A^*(\alpha)&\text{if }\alpha\in I_A\\
-\infty&\text{otherwise}
\end{cases}\quad \text{and}\quad \widehat P_A(q):= \sup\{ \langle q|\alpha\rangle +f_A(\alpha):\alpha\in\R^d\}\quad (q\in\R^d). 
$$
By construction $\widehat  P_A(q)$ is 
finite over $\R^d$. Let us show that
\begin{equation}\label{hatPA}
\widehat P_A(q)=\sup\{\langle q|\alpha\rangle +f_A(\alpha):\alpha\in\R^d\}=\inf\{\theta^{-1}\wt P_A(\theta q): \theta \in (0,1]\}.  
\end{equation}
Let $q\in\R^d$.  Due to Proposition~\ref{detI} and Corollary~\ref{approxi}, there exists a sequence $(q_n)_{n\ge 1}$ in $J_A^{\N_+}$ such that $\widehat   P_A(q)=\lim_{n\to\infty} \langle q|\nabla \wt P_A(q_n) \rangle +\wt P_A^*(  \nabla \wt P_A(q_n) )$. 

If $q\in \overline J_A$, due to the duality between $\wt P_A$ and $\wt P_A^*$ (see \eqref{duality}), we have $\wt P_A(q)= \sup\{ \langle q|\alpha\rangle +\wt P^*_A(\alpha):\alpha\in\R^d\}$ so that the sequence $(q_n)_{n\ge 1}$ can be taken equal to $q$ and we have $\widehat P_A(q)=\widetilde P_A(q)$. Moreover, an elementary study of the function $\theta\in(0,1]\mapsto \frac{ \wt P_A(\theta q)}{\theta}$ shows that it reaches its infimum at $\theta =1$, because $\wt P_A^*( \nabla \wt P_A(q))\ge 0$. 

If $q\notin \overline J_A$, there exists a unique $\theta_0\in (0,1]$ such that $\theta_0 q\in  \partial J_A$: since $\wt P_A$ is strictly convex, $\theta_0$ is the unique $\theta \in (0,1]$ at which $\theta\in(0,1] \mapsto \frac{ \wt P_A(\theta q)}{\theta}$ reaches its minimum. Then, again by duality we have $\langle \theta_0 q|\nabla \wt P_A(q_n) \rangle +\wt P_A^*(  \nabla \wt P_A(q_n) ) \le \wt P_A(\theta_0 q)$, and since  $\wt P_A^*(  \nabla \wt P_A(q_n) )\ge 0$ we also have $\langle \theta_0 q|\nabla \wt P_A(q_n) \rangle\le \wt P_A(\theta_0 q)$. This implies that $\langle q|\nabla \wt P_A(q_n) \rangle +\wt P_A^*(  \nabla \wt P_A(q_n) )\le \frac{ \wt P_A(\theta_0 q)}{\theta}$ hence $\widehat   P_A(q)\le \frac{ \wt P_A(\theta_0 q)}{\theta_0}$. On the other hand, by definition $\widehat   P_A(q)\ge  \langle q|\nabla \wt P_A(\theta_0 q) \rangle +\wt P_A^*(  \nabla \wt P_A(\theta_0 q) )=\langle q|\nabla \wt P_A(\theta_0 q)\rangle=\frac{ \wt P_A(\theta_0 q)}{\theta_0}$. Consequently \eqref{hatPA} holds.

Now, on the one hand, we notice that since,  as $A\to\infty$, the non decreasing sequence $(\wt P_A^*)$ converges to $\wt P^*$ over $\mathrm{int} (\dom \wt P^*)$, the sequence $(f_A)$ is non decreasing and converges  to $f$ over $\mathrm{int} (\dom \wt P^*)=\mathrm{int} (\dom f)$, so that $\widehat P_A$ is also non decreasing in $A$ and it converges to $\widehat P$ (here we used \eqref{duality}).  

On the other hand,  since the non-decreasing  sequence  $\wt P_A$, $A\ge A_0$, converges to the closed convex function $\wt P$, for each $q\in \R^d\setminus \{0\}$, we have the following alternative, denoting by $\theta_A$ the unique $\theta\in (0,1]$ at which $\frac{ \wt P_A(\theta q)}{\theta}$ reaches its minimum: either $\wt P(\theta q)=\infty$ for all $\theta\in (0,1]$, in which case we clearly have both $\inf\{\theta^{-1}\wt P(\theta q): \theta \in (0,1]=\infty$ and $\lim_{A\to\infty} \frac{ \wt P_A(\theta_A q)}{\theta_A}=\infty$, or there exists $\theta\in (0,1]$ such that $\wt P(\theta q)<\infty$. In this case, since $\wt P$ is strictly convex, there is a unique $\theta_0$ at which $\frac{ \wt P(\theta q)}{\theta}$ reaches its minimum. 

We have $\limsup_{A\to\infty}  \frac{ \wt P_A(\theta_A q)}{\theta_A}\le \frac{ \wt P(\theta_0 q)}{\theta_0}$, otherwise there exists $\epsilon>0$ such that for infinitely many $A$ we have $\wt P_A(\theta_0 q)\ge \theta_0 \frac{ \wt P_A(\theta_A q)}{\theta_A}> \theta_0 (\frac{ \wt P(\theta_0 q)}{\theta_0}+\epsilon)= \wt P(\theta_0 q)+\theta_0\epsilon$, which is a contradiction. 

Now suppose that there exists a sequence $(A_n)_{n\ge 1}$ such that $\ell=\lim_{n\to\infty} \frac{ \wt P_{A_n}(\theta_{A_n} q)}{\theta_{A_n}} =\frac{ \wt P(\theta_0 q)}{\theta_0}-\epsilon$ with $\epsilon>0$. Clearly $(\theta_{A_n})_{n\ge 1}$ cannot be  ultimately equal to a constant $\theta_0'$, for otherwise this would contradict the definition of $\theta_0$. Suppose that a subsequence $(\theta_{A_{n_j}})_{j\ge 1}$ is increasing and denote its limit by $\theta$. We have $\ell= \lim_{j\to\infty} \frac{ \wt P_{A_{n_j}}(\theta_{A_{n_j}} q)}{\theta_{A_{n_j}}} \le\frac{ \wt P(\theta_0 q)}{\theta_0}-\epsilon\le   \frac{ \wt P(\theta q)}{\theta}-\epsilon$.  Since $\wt P_{A_{n_j}}(\theta q)\nearrow \wt P (\theta q)$, we can fix  $j_0\ge 1$ such that $j\ge j_0$ implies that $\wt   P_{A_{n_j}}(\theta q) \ge (\ell+\epsilon/2)\theta$.  For $j$ large enough, we have $\wt P_{A_{n_j}}(\theta_{A_{n_j}}q)\sim \theta_{A_{n_j}} \ell\sim \theta \ell <\wt P_{A_{n_{j_0}}}(\theta_{A_{n_j}}q)$ since $\wt P_{A_{n_{j_0}}}(\theta_{A_{n_j}}q)$ tends to  $\wt P_{A_{n_{j_0}}}(\theta q)$.  This contradicts the fact that $\wt P_{A_{n_j}}\ge  \wt P_{A_{n_{j_0}}}$. If a subsequence $(\theta_{A_{n_j}})_{j\ge 1}$ is decreasing, a similar arguments yields a new contradiction. In conclusion, we get that $\liminf_{A\to\infty}  \frac{ \wt P_A(\theta_A q)}{\theta_A}\ge \frac{ \wt P(\theta_0 q)}{\theta_0}$, hence $\lim_{A\to\infty}  \frac{ \wt P_A(\theta_A q)}{\theta_A}= \frac{ \wt P(\theta_0 q)}{\theta_0}$. 

Then, since  we already know \eqref{hatPA} and  $\widehat P(q)=\lim_{A\to\infty}\widehat P_A(q)$, we get the desired conclusion for $q\neq 0$. The case $q=0$ is trivial.

\medskip

\noindent(2)$(b)$ Let $q\in\R^d$. The inequality $\widehat P(q)\le \underline P(q)$ will be proved to hold almost surely with $(2)(c)$.  On the other hand, Proposition~\ref{upb1}(1) implies that $\overline P(q)\le \widehat P(q)$ almost surely. 

\medskip

\noindent
(2)$(c)$ 
Let us first assume that  $\E(N\log (N))<\infty$ and $I$ is compact, and directly show (2)$(d)$ in this case. The first assumption ensures that the martingale $\frac{\#\TT_n}{\E(N)^n}$ converges to a positive limit almost surely (\cite{Biggins1}). Then, for $n\ge 1$ define the Borel probability measure on $\R^d$: $\displaystyle 
\nu_n=\frac{1}{\#\TT_n}\sum_{u\in \TT_n}\delta_{S_n(u)/n}.
$
Part (1) of the theorem implies that almost surely, for all $\alpha\in I$ we have 
$$
\lim_{\epsilon\to 0^+}\liminf_{n\to\infty} n^{-1}\log \nu_n(B(\alpha,\epsilon))=\lim_{\epsilon\to 0^+}\limsup_{n\to\infty} n^{-1}\log \nu_n(B(\alpha,\epsilon))=f(\alpha)-\log(\E(N)). 
$$
Now, using that the balls $B(\alpha,\epsilon)$ form a basis of the Euclidean space $\R^d$ topology,  we can apply  \cite[Theorem 4.1.11]{De-Zei} to conclude that the family $\{\nu_n\}_{n\ge 1}$ satisfies the weak large deviations principle with good rate function $\log(\E(N))-f$. However, the fact that $I$ is compact implies that the family $\{\nu_n\}_{n\ge 1}$ is exponentially tight so that the associated strong large deviations holds. Then Varadhan's integral lemma  \cite[Theorem 4.3.1]{De-Zei} directly yields that $\lim_{n\to\infty} P_n=P$ almost surely over $\R^d$. To see that $\{\nu_n\}_{n\ge 1}$ is exponentially tight, let $R>0$ such that $I\subset B(0,R/2)$. Let us show that, with probability 1, for $n$ large enough, $\{u\in \TT_n: n^{-1}S_n(u)\not\in B(0,R)\}=\emptyset$. For each $\alpha\in \partial B(0,R)$, we have $\wt P^*(\alpha)<0$. Let $q_\alpha$ such that $\wt P(q_\alpha)-\langle q_\alpha|\alpha\rangle <0$. By continuity of $\alpha\mapsto P(q_\alpha)-\langle q_\alpha|\alpha\rangle$, there is $0<\epsilon_\alpha<R/4$ such that $P(q_\alpha)-\langle q_\alpha|\alpha'\rangle <0$ for all $\alpha'\in B(\alpha,\epsilon_\alpha)$. By $\partial B(0,R)$ compactness, let $\{\alpha_i\}_{i\in \mathcal I}$ be a finite collection of $\alpha$' in $\partial B(0,R)$ such that $\partial B(0,R)\subset \bigcup_{i\in\mathcal I} B(\alpha_i,\epsilon_i)$, where $\epsilon_i=\epsilon_{\alpha_i}$. Set $q_i=q_{\alpha_i}$, and recall that without loss of generality we can assume that $0\in I$. Since for every $i\in\mathcal I$, we have $0\in I\subset \{\alpha: \wt P(q_i)-\langle q_i|\alpha\rangle \ge 0\}$, we necessarily have $\langle q_i|\lambda \alpha\rangle\ge \langle q_i|\alpha\rangle >0$ for all $\alpha\in B(\alpha_i,\epsilon_i)\cap \partial B(0,R)$ and $\lambda>1$. Let $n\ge 1$. Define $U_{n,i}=\{u\in\TT_n: R^{-1}n^{-1}S_n(u) \in B(\alpha_i,\epsilon_i)\}$. By construction, since due to the previous remark we have $\langle q_i|n^{-1}S_n(u)\rangle \ge s_i=\sup\{\langle q_i|\alpha\rangle: \alpha\in  B(\alpha_i,\epsilon_i)\}$ for all $u\in U_{n,i}$,  we also have $
\# U_{n,i}\le \sum_{u\in U_{n,i}} \exp (\langle q_i|S_n(u)\rangle -ns_i).
$
It follows that 
$$
\E(\#\{u\in\TT_n: n^{-1}S_n(u)\not\in B(0,R)\})\le \E(\sum_{i\in\mathcal I}\# U_{n,i})\le \sum_{i\in\mathcal I} \exp (n(\wt P(q_i)-s_i)).
$$
Since by construction we have $\wt P(q_i)-s_i=\sup\{\wt P(q_i)-\langle q_i|\alpha\rangle: \alpha\in  B(\alpha_i,\epsilon_i)\}<0$ for each $i\in\mathcal I$, we deduce that $\E\Big (\sum_{n\ge 1}\#\{u\in\TT_n: n^{-1}S_n(u)\not\in B(0,R)\}\Big )<\infty$, hence the desired result. 

\medskip

We now consider the general case. We use again the approximation of $\wt P$ by the functions $\wt P_A$, $A\ge A_0$, introduced in Section~\ref{sec2.1}. For each $A\ge A_0$, let $\partial \TT_A$ be the subset of $\partial \TT$ defined as $\{u_1\cdots u_n\cdots \in \N_+^{N_+}: 1\le u_k \le N_{u_1\cdots u_{k-1}}\land A\text{ and } \|X_{u_1\cdots u_k}\|\le A,\, \forall \, k\ge 1\}$. It is  the boundary of the Galton-Watson subtree $\TT_A$  of $\TT$ obtained as $\bigcup_{n\ge 1} \TT_{A,n}$, where $\TT_{A,n}=\{u_1\cdots u_n \in \N_+^{n}: 1\le u_k \le N_{u_1\cdots u_{k-1}}\land A\text{ and } \|X_{u_1\cdots u_k}\|\le A,\, \forall \, 1\le k\le n\}$. Since $N\land A$ is bounded and $I_A$ is compact, conditionally on $\{\partial \TT_A\neq\emptyset\}$, we can apply what preceeds to  $S_{n}(t)$ on $\partial\TT_A$ and get,  with probability 1, conditionally on  $\{\partial \TT_A\neq\emptyset\}$, for all $q\in \R^d$ 
$$
\lim_{n\to\infty}n^{-1}\sum_{u\in \TT_{A,n}}\exp (\langle q|S_n(u)\rangle)=\widehat P_A(q),
$$
where $\widehat P_A$ has been introduced to prove $(2)(a)$. This yields, conditionally on $\{\partial \TT_A\neq\emptyset\}$, $\underline P(q)\ge \widehat P_A(q)$.   Since $\widehat P$ is the limit of the  sequence $(\widehat  P_A)_{A\ge A_0}$ as $A\to\infty$ and  the events $\{\partial \TT_A\neq\emptyset\}$ are non decreasing and converge to a set of probability 1, we conclude that, with probability 1, we have $\underline P\ge \widehat P$ on $\R^d$. 

Now, since for each $q\in\R^d$ we have $\overline P(q)=\widehat P(q)$ almost surely and both $\overline P$ and $\widehat P$ are convex, we conclude that almost surely, for all $q\in\R^d$, $\overline P(q)=\widehat P(q)$ over  $\R^d\setminus \mathrm{rel}\,\partial (\dom \widehat P)$. Moreover, since $\widehat P$ is lower semi-continuous, the equality extends to those points of $\mathrm{rel}\,\partial (\dom \widehat P)$ at which $\overline P$ is lower semi-continuous. 

\medskip

\noindent
(2)$(d)$ If $I$ is compact we have $\dom \widehat P=\R^d$ so the conclusion comes from (2)$(c)$. If $d=1$, $\mathrm{rel}\,\partial (\dom \widehat P)$ consists of at most two points, so the conclusion comes from (2)$(d)$ over $\R\setminus \mathrm{rel}\,\partial (\dom \widehat P)$ and (2)$(b)$ over $ \mathrm{rel}\,\partial (\dom \widehat P)$. 

\subsection{Minimal supporting subtree for the free energy. Proof of Theorem~\ref{thm-1.4}}

Property \eqref{subest} follows from arguments similar to those developed in \cite{Moerters-Ortgiese}. The proof of the other properties is based on two facts: (1) the proof of Theorem~\ref{thm-1.2} provides, with probability~1, for all $\alpha\in I$,  uncountably sequences $\varrho_\alpha$ for which the measure $\mu_{\varrho_\alpha}$ is of Hausdorff dimension $\wt P^*(\alpha)$ and there exists a sequence $(\epsilon_n)_{n\ge 1}$ such that $\mu_{\varrho_\alpha}(\bigcup_{N\ge 1}\bigcap_{n\ge N}E_{X,n}(\alpha))=\|\mu_{\varrho_\alpha}\|$, where $E_{X,n}(\alpha)=\{t\in \partial \TT: |S_nX(t)-n\alpha|\le n\epsilon_n\}$; (2) Theorem~\ref{thm-1.3} on the large deviations properties. Indeed, on a set of probability equal to 1, for all $\alpha\in I$, for each $\varrho_\alpha$ as above, we can fix an integer $N_\alpha$ such that $\mu_{\varrho_\alpha}(\bigcap_{n\ge N_\alpha}E_{X,n}(\alpha))>0$. This implies that there exists $u_\alpha\in \TT_{N_\alpha}$ such that $[u_\alpha]\bigcap \bigcap_{n\ge N_\alpha}E_{X,n}(\alpha))$  has positive $\mu_{\varrho_\alpha}$-measure, hence a Hausdorff dimension larger than or equal to $\wt P^*(\alpha)$. Denoting by $\TT_\alpha$ the associated tree whose generation $n$ is given by $\TT_{\alpha,n}=\{u_{\alpha,1}\cdots u_{\alpha,n}\}$ if $n\le |u_\alpha|$ and  $\TT_{\alpha,n}=\{w\in \TT_n:[w]\cap [u_\alpha]\bigcap \bigcap_{n\ge N_\alpha}E_{X,n}(\alpha)\neq\emptyset\}$ if $n>|u_\alpha|$, we get that $\liminf_{n\to\infty}\frac{1}{n}\log \# \TT_{\alpha,n}\ge \wt P^*(\alpha)$ since $\bigcup_{w\in \TT_{\alpha,n}}[w]$ is an $e^{-n}$ covering of $[u_\alpha]\bigcap \bigcap_{n\ge N_\alpha}E_{X,n}(\alpha)$ for all $n\ge 1$; moreover, by construction, for all $w\in \TT_{\alpha,n}$, one has $|S_nX(w)-n\alpha|\le n\epsilon_n$. On the other hand, the result on the large deviations spectrum implies that $\limsup_{n\to\infty}\frac{1}{n}\log \# \TT_{\alpha,n}\le \wt P^*(\alpha)$. Also, all the trees so built can be chosen pairwise distinct  since the measures $\mu_{\varrho_\alpha}$ can be chosen to be pairwise mutually singular. Then, the claims result from a direct estimate of the partition function restricted to the elements of~$\TT_{\alpha,n}$. 


\section{The case  $\dom\wt P^*=\emptyset$} \label{domvide}
We do not have to consider Theorem \ref{thm-1.2}(3). We can approximate from below the function $\wt P^*$ by the convex proper functions 
$$
\wt {\wt P}_B(q)=\begin{cases} \wt P_B(0)=\log \E(N\land B)&\text{if } q=0\\
\infty&\text{otherwise}
\end{cases} \quad (B\in\N_+),
$$
for which  $\wt P^*$ equals $\log \E(N\land B)$ everywhere. Then, each function $\wt {\wt P}_B$ is the non decreasing limit of the functions
$$
{\wt P}_{A,B}(q)=\log \E\Big (\sum_{i=1}^{N\land B} \mathbf{1}_{\{\|X_i\|\le A\}} \exp (\langle q|X_i\rangle )\Big )  \quad (A\in\N_+),
$$
which for $A$ and $B$ large enough are strictly convex and analytic over $\R^d$. By using the same strategy as in the previous sections, we can find a family $\Omega_{A,B}$ of subsets of $ \Omega$, as well as random subsets $\partial \wt \TT_{A,B}$ of $\partial \TT$ defined on  the sets $\Omega_{A,B}$,  such the lower bounds for Hausdorff dimensions of the sets $E_X(K)\cap \partial \wt \TT_{A,B}$ for $\mathcal K\ni K\subset \dom \wt {\wt P}_B^*=\R^d$ are given by $\inf_{\alpha\in K} \wt {\wt P}_B^*(\alpha)=\wt {\wt P}_B(0)=\log \E(N\land B)$, and $\P(\Omega_{A,B})$ tends to 1 as $A,B$ tend to $\infty$. Then, since $\wt {\wt P}_B(0)=\log \E(N\land B)$ tends to $\wt P(0)=+\infty$, this yields the results regarding the dimensions of the sets. Also, going to the proof of Theorem~\ref{thm-1.3} in the case $\dom \wt P^*\neq\emptyset$, we see that part (1) of the theorem does not depend on this assumption, and for part (2), mimicking the arguments we see that on subsets $\Omega_B$ of $\Omega$ of probability arbitrarily close to 1, we have $\underline P(q)\ge \wt {\wt P}_B(q)$ for all $q\in\R^d$, hence $\underline P(q)\ge \wt P(q)=\infty$ almost surely over $\R^d$.

\section{Proof of Theorem~\ref{MM} about Mandelbrot measures}\label{MMM} With the notations of the introduction, denote $Z(1,u)=Z(u)$ and $Z(\emptyset)=Z=\|\mu\|$. The result follows from a series of remarks.

\subsection{Upper bounds for Hausdorff and packing dimensions}

For $n\ge 1$ and $q\in \R$ let  
$$
P_{\mu,n}(q)= n^{-1}\log \sum_{u\in\TT_n} \mu ([u])^q=n^{-1}\log \sum_{u\in\TT_n} \exp (q S_nX(u
)) Z(u)^q.$$

The free energy $\limsup_{n\to\infty} P_{\mu,n}$ is also called $L^q$-spectrum of $\mu$. With respect to $P_n(q)$, we replaced $S_nX(u)$ by $S_nX(u)+ \log (Z(u))$. Then, with respect to the  estimates achieved in the proof of Proposition~\ref{upb1},  we first have an additional factor $\E(Z^{\theta q})$ in \eqref{Pntheta}. If $q$ is negative, we know that $\wt P(\theta q)<\infty $ implies that $\E(Z^{\theta q})<\infty$. Hence the estimate $\limsup_{n\to\infty} P_{\mu,n}(q)\le \inf\{\theta^{-1}\wt P(\theta q):0<\theta \le 1\}$ holds almost surely. If $0\le q\le 1$, the same argument works. Suppose that $q>1$. It is not hard to check that  $\inf\{\theta^{-1}\wt P(\theta q):0<\theta \le 1\}$ is reached at $\theta_q=1$ if $q\in J$ or $\theta_q=\sup (J)/q$ otherwise. Under our assumptions, for all $0<\theta<\theta_q$ we have $\theta q\in J$ hence $\E(Z^{\theta q})<\infty$, so that $\limsup_{n\to\infty} P_{\mu,n}(q)\le \theta^{-1}\wt P(\theta q)$ almost surely, and consequently $\limsup_{n\to\infty} P_{\mu,n}(q)\le \theta_0^{-1}\wt P(\theta_0 q)$. It follows that all the estimates of dimensions or large deviations spectrum from above obtained in Sections~\ref{upb} and \ref{uppack} for the sets $E_X(K)$ hold if we replace $S_nX(t)$ by $\log (\mu([t_{|n}]))$. 

\subsection{The estimates of Proposition~\ref{pp3}, lower bounds for dimensions, and $0$-$\infty$ laws when $\log (\mu([t_{|n}]))$ replaces $S_nX(t)$} 

We have seen after the statement of Theorem~\ref{MM} that under our assumptions $Z$ possesses finite moments of negative orders. Since $\E(Z)=1$, it follows that all the moments of positive orders of $|\log (Z)|$ are finite. 

Now, for $j\ge 1$ let $e'_j=\|(|\log (Z)|^3)\|_{\frac{p_j}{p_j-1}}$ and replace the condition \eqref{control2} on the sequence $(L_j)_{j\ge 1}$ by 
\begin{equation*}
\frac{d_j (e_{j}+e'_j) s_{j}}{ 1-\exp( r_{j})}+\frac{d_{j+1} e_{j+1} s_{j+1}}{ 1-\exp( r_{j+1})}\le C_0\exp (\log({L_{j-1}}) \gamma_{j}^2 m_{j})
\end{equation*}
for $j\ge 2$, with  $C_0=\displaystyle \frac{ (e_1+e'_1)s_{1}}{ 1-\exp( r_{1})}+\frac{2 (e_2+e'_2) s_{2}}{1-\exp( r_{2})}$. Then \eqref{control2} is still valid, so that all the estimates achieved in until now about properties related with the measures $\mu_\varrho$ still hold. Moreover, mimicking the estimates leading to \eqref{control3} and \eqref{control33} we can get $C_{\mathcal J}$ such that for all $|u|\ge 1$ 
\begin{eqnarray*}
&&\E(\sup_{\varrho\in\mathcal J}Y(\varrho, u) |\log(Z(u))|^3)\\
&\le& \E(|\log(Z(u))|^3)+\sum_{n\ge 1} \|\sup_{\varrho\in\mathcal J}|Y_n(\varrho,u)-Y _{n-1}(\varrho,u)\|_{p_{j_{|u|+n}}}   \|(|\log (Z)|^3)\|_{\frac{p_{j_{|u|+n}}}{p_{j_{|u|+n}}-1}}\\
&\le& C_{\mathcal J} |u|^{\epsilon_{|u|}}.
\end{eqnarray*}
Then, mimicking the estimates achieved to prove Lemma~\ref{dDimmurho} and Proposition~\ref{pp3}, we can get 
\begin{multline*}
\E\Big (\sum_{n\ge 1}\sup_{\varrho\in\mathcal J}\mu_\varrho\Big( \Big \{t\in\partial\wt\TT: \frac{|\log (Z(t_1\cdots t_n))|}{n}\ge n^{-1/2}\Big \}\Big ) \Big )\\
\le \sum_{n\ge 1} d_{j_n} n^{-3/2} \E(\sup_{\varrho\in\mathcal J}Y(\varrho, u) |\log(Z(u))|^3)\le C_{\mathcal J}  \sum_{n\ge 1} n^{1/5-3/2+\epsilon_n}<\infty.
\end{multline*}
Consequently, with probability 1, conditionally on $\{\partial\wt \TT\neq\emptyset\}$, for all $\varrho\in\mathcal J$, for $\mu_\varrho$-almost every $t$, for $n$ large enough we have $ \frac{|\log (Z(t_1\cdots t_n))|}{n}\le \delta_n =n^{-1/2}$. This is enough to extend Propositions~\ref{pp3} and \ref{conclusion}, as well as Theorem~\ref{thm-1.2}(3).

\subsection{Large deviations and $L^q$-spectrum}
The extension of Theorem~\ref{thm-1.3} with  $\log (\mu([t_{|n}]))$ instead of $S_nX(t)$ is now direct.

\section{Appendix: Hausdorff and packing measures and dimensions}\label{HandP}
Given a subset $K$ of $\N_+^{\N_+}$ endowed with a metric $d$ making it $\sigma$-compact,  $g:\R_+\to\R_+$ a continuous non-decreasing function near 0 and such that $g(0)=0$,  and $E$ a subset of $K$, the Hausdorff measure of $E$ with respect to the gauge function $g$ is defined as 
$$
\mathcal H^g(E)=\lim_{\delta\to 0^+}\inf\Big\{\sum_{i\in\N}g(\mathrm{diam}(U_i))\Big\},
$$
the infimum being taken over all the countable coverings $(U_i)_{i\in\N}$ of $E$ by subsets of $K$ of diameters less than or equal to $\delta$. 

If $s\in\R_+^*$ and $g(u)=u^s$, then $\mathcal H^g(E)$ is also denoted $\mathcal H^s(E)$ and called the $s$-dimensional Hausdorff measure of $E$. Then, the Hausdorff dimension of $E$ is defined as 
$$
\dim E=\sup\{s> 0: \mathcal H^s(E)=\infty\}=\inf\{s> 0: \mathcal H^s(E)=0\},
$$
with the convention $\sup \emptyset= 0$ and $\inf \emptyset=\infty$.  

Packing measures and dimensions are defined as follows. Given $g$ and $E\subset K$ as above, one first defines 
$$
\overline {\mathcal{P}}^g (E)= \lim_{\delta\to 0^+} \sup\Big\{\sum_{i\in\N}g(\mathrm{diam}(B_i))\Big\},
$$
the supremum being taken over all the packings $\{B_i\}_{i\in\N}$ of $E$ by balls centered on $E$ and with diameter smaller than or equal to $\delta$. Then, the packing measure of $E$ with respect to the gauge $g$ is defined as
$$
{\mathcal{P}}^g (E)=\lim_{\delta\to 0^+}\inf\Big\{\sum_{i\in\N}\overline {\mathcal{P}}^g (E_i)\Big\},
$$
the infimum being taken over all the countable coverings $(E_i)_{i\in\N}$ of $E$ by subsets of $K$ of diameters less than or equal to $\delta$. If $s\in\R_+^*$ and $g(u)=u^s$, then $\mathcal P^g(E)$ is also denoted $\mathcal P^s(E)$ and called the $s$-dimensional measure of $E$. Then, the packing  dimension of $E$ is defined as 
$$
\Dim E=\sup\{s> 0: \mathcal P^s(E)=\infty\}=\inf\{s> 0: \mathcal P^s(E)=0\},
$$
with the convention $\sup \emptyset= 0$ and $\inf \emptyset=\infty$.  For more details the reader is referred to \cite{Falconer,Mattila}.

If  $\mu$ is a positive and finite Borel measure supported on $K$, then its lower Hausdorff and packing dimensions is defined as 
\begin{equation*}
\begin{split}
\underline{\dim} (\mu)&=\inf\big \{\dim\, F: \ F\  \mathrm{ Borel},\ \mu(F)>0\big \}\\
\underline{\Dim} (\mu)&=\inf\big \{\Dim\, F: \ F\  \mathrm{ Borel},\ \mu(F)>0\big \},
\end{split}
\end{equation*}
and its upper Hausdorff and packing dimensions are defined as  
\begin{equation*}
\begin{split}
\overline{\dim} (\mu)&=\inf\big \{\dim\, F: \ F\  \mathrm{ Borel},\ \mu(F)=\|\mu\|\big \} \\
\overline{\Dim} (\mu)&=\inf\big \{\Dim\, F: \ F\  \mathrm{ Borel},\ \mu(F)=\|\mu\|\big \}.
\end{split}
\end{equation*}
We have (see~\cite{Cutler,Fan}) 
$$
\underline{\dim} (\mu)={\mathrm{ess}\,\inf}_\mu \ \liminf_{r\to 0^+}\frac{\log \mu(B(t,r))}{\log (r)}, \ \underline{\Dim} (\mu)={\mathrm{ess}\,\inf}_\mu \ \limsup_{r\to 0^+}\frac{\log \mu(B(t,r))}{\log (r)}  
$$
and 
$$
\overline{\dim} (\mu)={\mathrm{ess}\,\sup}_\mu \ \liminf_{r\to 0^+}\frac{\log \mu(B(t,r))}{\log (r)}, \ \overline{\Dim} (\mu)={\mathrm{ess}\,\sup}_\mu \ \limsup_{r\to 0^+}\frac{\log \mu(B(t,r))}{\log (r)},  
$$
where $B(t,r)$ stands for the closed ball of radius $r$ centered at $t$. If $\underline{\dim} (\mu)=\overline{\dim} (\mu)$ (resp. $\underline{\Dim} (\mu)=\overline{\Dim} (\mu)$), this common value is denoted $\dim \mu$ (resp. $\Dim (\mu)$), and if $\dim \mu=\Dim\mu$, one says that $\mu$ is exact dimensional.


\end{document}